\documentclass[11pt]{article}

\usepackage{amsmath, amsfonts, amssymb, amsthm}
\usepackage{mathrsfs}
\usepackage{xcolor}
\usepackage{enumitem}

\usepackage{newtxtext}
\usepackage{subcaption}
\usepackage{tikz}
\usepackage[plain,noend]{algorithm2e}
\usepackage[left=1in,top=1in,right=1in,bottom=1in,head=.1in]{geometry}
\usepackage{graphicx}
\graphicspath{{./figs/}}

\usepackage{hyperref}
\usepackage[numbers]{natbib}

\usepackage[textsize=scriptsize, textwidth=2cm, shadow]{todonotes}

\newtheorem{theorem}{Theorem}
\newtheorem{lemma}{Lemma}
\newtheorem{proposition}{Proposition}
\newtheorem{corollary}{Corollary}

\theoremstyle{definition}
\newtheorem{definition}{Definition}

\theoremstyle{remark}

\DeclareMathOperator{\var}{var}
\DeclareMathOperator{\cov}{cov}
\DeclareMathOperator{\cor}{cor}

\DeclareMathOperator{\tr}{tr}

\def\ind{\begin{picture}(9,8)
    \put(0,0){\line(1,0){9}}
    \put(3,0){\line(0,1){8}}
    \put(6,0){\line(0,1){8}}
    \end{picture}
   }

\def\top{{ \mathrm{\scriptscriptstyle T} }}

\begin{document}

\title{Identifiability of Treatment Effects with Unobserved Spatially Varying Confounders}

\author{
Tommy Tang\thanks{Department of Statistics, University of Illinois Urbana-Champaign, 
\texttt{tommymt2@illinois.edu}}
\and
Xinran Li\thanks{Department of Statistics, University of Chicago,
\texttt{xinranli@uchicago.edu}}
\and
Bo Li\thanks{Department of Statistics and Data Science, Washington University,
\texttt{bol@wustl.edu}}
}

\date{}  %

\maketitle

\begin{abstract}
The study of causal effects in the presence of unmeasured spatially varying confounders has garnered increasing attention. However, a general framework for identifiability, which is critical for reliable causal inference from observational data, has yet to be advanced. 
In this paper, we study a linear model with various parametric model assumptions on the covariance structure between the unmeasured confounder and the exposure of interest. 
We establish identifiability of the treatment effect for many commonly used spatial models for both discrete and continuous data, under mild conditions on the structure of observation locations and the exposure-confounder association. We also emphasize models or scenarios where identifiability may not hold, under which statistical inference should be conducted with caution.
\end{abstract}

\section{Introduction}
Causal inference in spatial setting has become increasingly important due to the rapid growth of spatial data in various fields such as environmental science and epidemiology \citep{reich2021review, gilbert2021approaches, schnell2020mitigating,guan2023spectral}.
Although the literature for causal inference with independent observations has been advanced extensively \citep[see, e.g.,][]{imbens2015causal, hernan2020causal}, methods and theory for spatial causal inference are more reserved, especially in the presence of unmeasured confounding.  
On the other hand, the spatial structure of the observations has also brought new opportunities for dealing with unmeasured confounding, as explored in the recent literature \citep{bind2019causal, gao2022causal, davis2019addressing}.

Conceptually, 
there are at least two views on modelling the unmeasured spatial confounder.
One is to treat the unmeasured confounder as a fixed but unknown function of its spatial location, and to model it using smooth functions such as spatial splines or other basis functions \citep{dupont2022spatial+, gilbert2021approaches}.
The other is to treat the unmeasured confounder as a random process given spatial locations, and to model it using, say, 
a Gaussian process with a certain spatial covariance structure \citep{schnell2020mitigating,guan2023spectral}. 
These two views on spatially varying confounders have rather different implications on the identifiability of the treatment effect of interest. 
Specifically, when the unmeasured confounder is a fixed but unknown function of spatial location, the exposure is unconfounded once given the observed spatial location. Thus, in this case, there is essentially ``no unmeasured confounding'', despite that the functional relation between confounder and spatial location may be challenging to estimate. %

In this paper, we will focus on the case where the unmeasured confounder is a random process given spatial locations. In this case, the treatment effect is generally not identifiable without additional assumptions \citep{schnell2020mitigating}. 
Recent work by \citet{schnell2020mitigating} and \citet{guan2023spectral} has shown that, under appropriate spatial model assumptions on the exposure-confounder association, the treatment effect can be identified. 
Our paper is along this line of research. 
Distinct from the existing literature, we study identifiability based on finite observation locations
and emphasize the critical role of the neighborhood structure of the locations, given the specified class of models.  
In particular, 
our results also 
generalize the identifiability result in \citet{schnell2020mitigating} 
from requiring a ring structure to a more general spatial structure.

Throughout the paper, we will first consider conditional autoregressive models that are often used for areal/aggregated data %
 and then models commonly used for 
 observations over a continuous spatial domain. 
We identify general and mild conditions on the neighborhood structure or distance matrices that ensure identifiability of the treatment effect under these models. We also establish the general non-identifiability of the linear model of coregionalization. 
In the absence of identifiability, 
statistical inference can be unstable, 
with Bayesian inference in particular being sensitive to the choice of prior.

\section{Spatial Model with Unmeasured Confounder}\label{sec:model}
We consider the following spatial model for observations at $n$ locations, denoted by $s_{1:n} = (s_1,\ldots, s_n)$: 
\begin{equation}\label{eq:sp-conf-model}
 Y = f(X_{1:n}, s_{1:n}) + Z \beta_Z + U + \epsilon,
\end{equation}
where $Y \in \mathbb{R}^n$ consists of the outcomes of interest at the $n$ locations, $X_{1:n} = (X_1, \ldots, X_n)$ with $X_i\in \mathbb{R}^p$ containing the measured covariates of dimension $p$ at the $i$th location, $f $ is an unknown function of measured covariates and observed locations that maps into $\mathbb{R}^n$, $Z\in \mathbb{R}^n$ is the exposure of interest, $U \in \mathbb{R}^n$ denotes the unmeasured confounder, $\epsilon \in \mathbb{R}^n$ is the Gaussian white noise independent of $(X_{1:n}, Z, U)$, 
and $\beta_Z \in \mathbb{R}$ is the coefficient of  $Z$.
We are interested in $\beta_Z$, which represents the treatment/causal effect of the exposure under certain assumptions; see \citet{guan2023spectral} for a description of such assumptions using the potential outcomes framework. 
We 
further assume
\begin{equation}\label{eq:ZU_X}
    \begin{pmatrix}
        U\\
        Z
    \end{pmatrix}
    \mid  X_{1:n} 
    \sim \mathcal{N}
    \left(  
    \begin{pmatrix}
        0\\
          g(X_{1:n}, s_{1:n})
    \end{pmatrix}, 
    \Sigma
    \right),  
    \quad 
    \text{where }
    \Sigma \equiv
    \begin{pmatrix}
        \Sigma_{UU} & \Sigma_{UZ}\\
        \Sigma_{ZU} & \Sigma_{ZZ}
    \end{pmatrix}
    \text{ is positive definite,}
\end{equation}
and $g$ is a function of the covariates and locations that maps into $\mathbb{R}^n$.
In \eqref{eq:ZU_X}, 
we assume that the conditional mean of $U$ given $X_{1:n}$ and $s_{1:n}$ is zero; this does not lose generality since we can always incorporate the part of $U$ that depends on $X_{1:n}$ and $s_{1:n}$
to the covariate term $f(X_{1:n}, s_{1:n})$ in \eqref{eq:sp-conf-model}. 

Since the unobserved $U$ can be correlated with $Z$, the coefficient of $Z$ from the regression of $Y$ on only $(Z,X)$ will generally be biased due to the omission of $U$.
Whether we can mitigate this bias depends on the covariance structure $\Sigma$ in \eqref{eq:ZU_X}.
As discussed in \citet{schnell2020mitigating}, as long as we can identify $\Sigma$ from the observed data, we can remove the omitted variable bias and obtain unbiased or consistent estimation for the treatment effect; in other words, the treatment effect is identifiable from the observed data. 
In the remaining of the paper, we will focus on the identifiability of the treatment effect $\beta_Z$ under different model assumptions on $\Sigma$. 
As demonstrated in the supplementary material, it suffices to study the identifiability of $\beta$ in the following simplified model that does not involve the observed covariates $X_{1:n}$:
\begin{equation}\label{eq:simp-model}
    Y = Z \beta + U + \epsilon, 
    \quad \epsilon \sim \mathcal{N}(0, \sigma_{\epsilon}^2 I_n), 
    \quad 
    \begin{pmatrix}
        U\\
        Z
    \end{pmatrix} 
    \sim \mathcal{N}
    \left(  
    \begin{pmatrix}
        0\\
        0
    \end{pmatrix}, 
    \begin{pmatrix}
        \Sigma_{UU} & \Sigma_{UZ}\\
        \Sigma_{ZU} & \Sigma_{ZZ}
    \end{pmatrix}
    \right). 
\end{equation}

Throughout the paper, we say $\beta$ in \eqref{eq:simp-model}
is identifiable from the observed data if it can be uniquely determined by the observed data distribution, i.e., the distribution of $(Y,Z)$. 
Equivalently, if there are two distinct values of $\beta$ that can lead to the same distribution of $(Y,Z)$, then $\beta$ is not identifiable from the observed data. 
In this case, we do not expect an unbiased or consistent estimator for $\beta$. The identifiability for other model parameters are defined analogously.

\section{Conditional Autoregressive Model}
\subsection{Schnell \& Papadogeorgou's Conditional Autoregressive Model}
\label{sec:car-papadogeorgu}

Let $W$ be an $n$ by $n$ symmetric \emph{proximity matrix} representing the 
strength
of connection among $n$ spatial locations, where $W_{ij} \ge 0$ for all $i,j$ and  $W_{ii}=0$ for all $i$ \citep{banerjee2003hierarchical}.  
For example, $W_{ij}$ can be a binary indicator denoting whether locations $i$ and $j$ are neighbors. $W_{ij}$ can also be the reciprocal of the distance between locations $i$ and $j$. 
Let $D$ be a diagonal matrix whose $i$th diagonal element is the sum of the entries in the $i$th column of $W$, representing the total strength of connection from all other locations. 
When $W_{ij}$s are binary denoting the neighboring relation, the $i$th diagonal element of $D$ represents the total number of neighbors for location $i$. 
In this subsection, we assume that all diagonal elements of $D$ are positive, which is necessary for the model introduced below to be well-defined. 

Following \citet{schnell2020mitigating}, we assume a conditional autoregressive (CAR) structure \citep{gelfand2010handbook} on the conditional distribution of $U$ given $Z$ and vice versa: 
\begin{align*}
    \var(U\mid Z)^{-1} = \tau_U(D-\varphi_U W), \  
    \var(Z \mid U)^{-1} = \tau_Z(D-\varphi_Z W), 
\end{align*}
where $\tau_U, \tau_Z > 0$ and $-1 < \varphi_U, \varphi_Z < 1$,
and a cross-Markov property with constant correlation for all $i$:
\begin{align*}
    Z_i \ind U_{-i} \mid Z_{-i}, U_i, \ \ 
    \text{and}  \ \ 
    \cor(U_i, Z_i\mid U_{-i}, Z_{-i}) = \rho, 
    \quad \text{where }  \rho \in (-1,1).  
\end{align*}
These assumptions equivalently impose the following model on the covariance matrix $\Sigma$ in \eqref{eq:simp-model}:
\begin{align}\label{eq:Sigma_SP}
    \Sigma = 
    \begin{pmatrix} 
        \tau_U(D-\varphi_U W) & -\rho\sqrt{\tau_U\tau_Z} D \\
        -\rho\sqrt{\tau_U\tau_Z} D & \tau_Z(D-\varphi_Z W)
    \end{pmatrix}^{-1}.
\end{align}
We refer readers to \citet{schnell2020mitigating} for a comprehensive discussion of the model’s motivation and interpretation, as well as the derivation of \eqref{eq:Sigma_SP} for the form of the covariance structure.
Moreover, to avoid technical clutter, we restrict the model parameters to only values that ensure the resulting covariance matrix $\Sigma$ is positive definite unless otherwise stated, i.e., the model is well-defined. This also applies to all other models studied in the remainder of the paper.

\citet{schnell2020mitigating} considered the case where $W_{ij}$s are binary representing neighboring relation. They showed the identifiability of the treatment effect $\beta$ under the conditions that $\varphi_U \ne 0$, the neighborhood structure forms a ring graph, and the sample size $n$ tends to infinity.
Below we generalize this identification result to 
allow for both a general proximity structure 
and a finite sample size.

To facilitate the discussion, we first introduce a partition of the locations based on their proximity matrix $W$. 
Specifically, we partition the $n$ locations, $s_1,...,s_n$, into disjoint \emph{connected components} $\{\mathcal{M}_b\}_{b=1}^B$ in the following way. 
First, $\mathcal{M}_b$s are mutually exclusive, and their union consists of all the $n$ locations. 
Second, for any two locations $s_i$ and $s_j$ in the same component $\mathcal{M}_b$, where $1\le b \le B$, 
there exists a sequence of locations $s_{l_0}, s_{l_1}, \ldots, s_{l_m}$ such that $l_0 = i$, $l_m = j$, and $W_{l_{r-1}, l_{r}}>0$ for all $1\le r\le m$. 
That is, there is a path connecting $s_i$ and $s_j$. 
Third, for any two locations that are in distinct components, there is no path connecting them. 
In short, each component $\mathcal{M}_b$ can be viewed as an island consisting of locations that are connected either directly or indirectly, and these components are isolated from each other.

For each component $\mathcal{M}_b$, we introduce $W_{[b]}$ and $D_{[b]}$ to denote the submatrices of $W$ and $D$, respectively, containing only the rows and columns associated with locations in $\mathcal{M}_b$. Equivalently, $W_{[b]}$ is the proximity matrix for locations in $\mathcal{M}_b$, and $D_{[b]}$ is defined analogously based on the column sums of $W_{[b]}$.

\begin{theorem}\label{thm:car-identifiability}
Consider $n$ spatial locations with a proximity matrix $W$. 
Let 
$\mathcal{M}_1,...,\mathcal{M}_B$ be the disjoint connected components, which form a partition of all the locations. 
Assume that the data generating process follows \eqref{eq:simp-model} and \eqref{eq:Sigma_SP}. 
If $\varphi_U
\neq
0$ and there exists one component $\mathcal{M}_b$ such that 
either the corresponding proximity matrix $W_{[b]}$ has at least three distinct eigenvalues or the corresponding $D_{[b]}$ contains distinct diagonal elements, then all the model parameters $\tau_U, \tau_Z, \varphi_U, \varphi_Z, \rho, \sigma_\epsilon^2$ and $\beta$ are identifiable. 
\end{theorem}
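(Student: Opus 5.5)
The plan is to work with the joint Gaussian law of $(Y,Z)$ through three pieces: the marginal law of $Z$, the conditional mean $E(Y\mid Z)$, and the conditional variance $\var(Y\mid Z)$. Write $Q$ for the precision matrix in \eqref{eq:Sigma_SP}. Standard Gaussian conditioning gives the following.
\begin{align*}
  E(Y\mid Z) &= A Z, \qquad A := \beta I + \kappa (D-\varphi_U W)^{-1}D = \beta I + \kappa (I-\varphi_U P)^{-1},\\
  \var(Y\mid Z) &= V := \tau_U^{-1}(D-\varphi_U W)^{-1} + \sigma_\epsilon^2 I,\\
  \Sigma_{ZZ}^{-1} &= \tau_Z(D-\varphi_Z W) - \rho^2\tau_Z\, D(D-\varphi_U W)^{-1}D .
\end{align*}
Here $\kappa = \rho\sqrt{\tau_Z/\tau_U}$ and $P = D^{-1}W$. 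The observed distribution determines $A$, $V$ and $\Sigma_{ZZ}$. So it suffices to show that these three matrices determine all parameters. Since $W$ and $D$ are block diagonal over the components, every identity below can be restricted to the distinguished component $\mathcal{M}_b$. That component has at least two locations, since otherwise $W_{[b]}=0$ has one eigenvalue and $D_{[b]}$ has one entry; hence $W_{[b]}\neq 0$.

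\emph{Step 1: $(\tau_U,\varphi_U,\sigma_\epsilon^2)$ from $V$.} Suppose two parameter values give the same $V$. Set $N=\tau_U(D-\varphi_U W)$ and $N'=\tau_U'(D-\varphi_U'W)$, and let $\delta=\sigma_\epsilon^{\prime 2}-\sigma_\epsilon^2$.

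If $\delta=0$, then $N=N'$. The diagonal gives $\tau_U=\tau_U'$, and then $\varphi_U W=\varphi_U' W$ with $W\neq0$ gives $\varphi_U=\varphi_U'$.

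If $\delta\neq0$, then $N^{-1}=N'^{-1}+\delta I$. This rearranges to $N'-N=\delta NN'$. The left side is a linear combination of $D$ and $W$, while the right side contains the term $\tau_U\tau_U'\varphi_U\varphi_U' W^2$. With $\varphi_U\neq0$ (and likewise for the competing value, which I treat as lying in the same model class), we obtain
\[
W^2 = a D^2 + b_1 DW + b_2 WD + c D + e W
\]
on $\mathcal{M}_b$ for some scalars $a,b_1,b_2,c,e$. I must show that this is impossible under the hypothesis on $\mathcal{M}_b$.

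If $D_{[b]}=d_0 I$ is constant, the identity reads $W_{[b]}^2=\alpha W_{[b]}+\gamma I$. Then the minimal polynomial of $W_{[b]}$ has degree at most $2$, so $W_{[b]}$ has at most two distinct eigenvalues, a contradiction.

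If $D_{[b]}$ is not constant, connectivity produces an edge $(i,j)$ with $D_i\neq D_j$. Hence $DW\neq WD$, and symmetry of $W^2$ forces $b_1=b_2=:b$. I would then compare entries. The off-diagonal entries give
\[
\sum_k W_{ik}W_{kj} = \{b(D_i+D_j)+e\}W_{ij}\quad (i\neq j).
\]
Applying the identity to the vector $1$, and using $W1=d$ with $d$ the degree vector, gives
\[
(1-b)Wd=(a+b)d\circ d + (c+e)d .
\]
I expect this case to be the \textbf{main obstacle}: getting a clean contradiction from non-constant degrees in full generality. I would first try to exploit a vertex of maximal degree together with a neighbor of strictly smaller degree, combining the diagonal equations $\sum_j W_{ij}^2 = aD_i^2+cD_i$ with the off-diagonal equations. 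If that fails, I would fall back to using $\Sigma_{ZZ}$ as an additional constraint.

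\emph{Step 2: $\beta$ and $\kappa$ from $A$.} With $\varphi_U$ known, $R:=(I-\varphi_U P)^{-1}$ is a known matrix. It is not a multiple of $I$, because $\varphi_U\neq0$ and $W_{[b]}\neq0$. Therefore $A=\beta I+\kappa R$ determines $\kappa$ from any nonzero off-diagonal entry of $R$, and then determines $\beta$ from the diagonal.

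\emph{Step 3: the remaining parameters from $\Sigma_{ZZ}$.} We have $\rho^2\tau_Z=\kappa^2\tau_U$, which is now known. So the matrix $\Sigma_{ZZ}^{-1}+\kappa^2\tau_U\,D(D-\varphi_UW)^{-1}D=\tau_Z(D-\varphi_ZW)$ is known. Its diagonal gives $\tau_Z$ and its off-diagonal gives $\varphi_Z$. Finally $\rho=\kappa\sqrt{\tau_U/\tau_Z}$, which completes identification of all the parameters in Theorem~\ref{thm:car-identifiability}.
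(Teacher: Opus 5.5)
Your $\delta=0$ case, the constant-degree case of Step 1 (the degree-two minimal polynomial argument, which is cleaner than the paper's ratio argument), and Steps 2 and 3 are sound. One small point in Step 2: the reason $R$ has a nonzero off-diagonal entry is that $R^{-1}=I-\varphi_UP$ is not diagonal, not merely that $R\neq cI$.

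\textbf{The gap is the non-constant-degree case of Step 1.} You leave it open, and the route you sketch cannot close it. Once you replace the coefficients by arbitrary scalars, the identity $W^2=aD^2+b(DW+WD)+cD+eW$ is satisfiable on a connected component with non-constant degrees when $W$ is weighted, which the theorem allows. For example, take three locations with $W_{12}=W_{13}=1$ and $W_{23}=2$, so $D=\mathrm{diag}(2,3,3)$. Then one checks $W^2=\tfrac23D^2-\tfrac32(DW+WD)-\tfrac13D+\tfrac{19}{2}W$. So no argument using only the abstract identity (maximal-degree vertex, the row-sum relation, and so on) can produce a contradiction. The information you need lies in the coefficients you discarded: $b_1=1/\varphi_U$, $b_2=1/\varphi_U'$ and $a=-b_1b_2$.

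\textbf{A second, smaller gap: you may not assume $\varphi_U'\neq0$.} The theorem's condition refers only to the true parameter, so you must rule out competitors with $\varphi_U'=0$. That case is easy. Then $N^{-1}=N'^{-1}+\delta I$ is diagonal, so $N$ is diagonal. This forces $\varphi_UW=0$, which is a contradiction.

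\textbf{How to close the non-constant case.} Suppose $\delta\neq0$. Since $N'-N=\delta NN'$ is symmetric, $NN'$ is symmetric, so $N$ and $N'$ commute. But $[N,N']=\tau_U\tau_U'(\varphi_U-\varphi_U')[D,W]$. On $\mathcal{M}_b$ the commutator $[D,W]$ is nonzero, because its $(i,j)$ entry is $(D_i-D_j)W_{ij}$ on your edge. Hence $\varphi_U'=\varphi_U$; this is your observation $b_1=b_2$, read with the actual coefficients. Now write $M=D-\varphi_UW$, which is invertible. The identity becomes $(\tau_U'-\tau_U)M=\delta\tau_U\tau_U'M^2$, which forces $M$ to be a scalar multiple of $I$. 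That is impossible, since $\varphi_UW_{[b]}\neq0$.

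\textbf{How the paper does it.} The paper conjugates $\var(Y\mid Z)$ by $D^{1/2}\Gamma$, where $\Gamma$ diagonalizes $D^{-1/2}WD^{-1/2}$. The confounder part then becomes the diagonal matrix $\tau_U^{-1}(I-\varphi_U\Lambda)^{-1}$ for every parameter value. The noise part becomes $\sigma_\epsilon^2\Gamma^{\top}D\Gamma$, which has a nonzero off-diagonal entry precisely because $D_{[b]}$ does not commute with $D_{[b]}^{-1/2}W_{[b]}D_{[b]}^{-1/2}$. That off-diagonal entry identifies $\sigma_\epsilon^2$ directly, after which $\tau_U$ and $\varphi_U$ follow.
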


In Theorem \ref{thm:car-identifiability} and the later theorems, the parameters in the conditions, such as $\varphi_U$ here, refer to the true data generating parameters. 
The identification in Theorem \ref{thm:car-identifiability} needs $\varphi_U$ to be nonzero, which essentially requires the unmeasured confounder $U$ to be spatially correlated across the locations conditional on the observed exposure $Z$. 
Thus, the spatial dependence among unmeasured confounders is critical for causal identification, a departure from settings with no unmeasured confounding.
As shown in the supplementary material, $\varphi_U\ne 0$ is not only sufficient but also necessary for the identification of the treatment effect. 
One key reason is that the observable quantity $E(Y \mid Z)$ is a mixture of $Z$ and $E(U \mid Z)$. When $\varphi_U = 0$, $E(U\mid Z)$ is proportional to $Z$, making them indistinguishable based on observed data alone. 

Whether $\varphi_U\ne 0$ can be determined from the observed data distribution; 
specifically, if $E(YZ^\top)E(ZZ^\top)^{-1}$ is not proportional to the identity matrix, then $\varphi_U \ne 0$; otherwise, it must be that either $\varphi_U = 0$ or $\rho = 0$. 
The conditions on the proximity matrix often hold in practice and can be verified from the known locations.
When $W$ is a binary matrix, we obtain even simpler sufficient conditions on $W$.

\begin{corollary}\label{cor:car-identifiability-binary}
Consider the same model setup as in Theorem \ref{thm:car-identifiability}. If $\varphi_U
\neq
0$, all entries of $W$ are binary, and 
there exists a pair of locations $(i,j)$ within the same connected component such that $W_{ij}=0$ (i.e., they are indirect neighbors), 
then all the model parameters $\tau_U, \tau_Z, \varphi_U, \varphi_Z, \rho, \sigma_\epsilon^2$ and $\beta$ are identifiable from the observed data distribution.  
\end{corollary}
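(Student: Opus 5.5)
We reduce Corollary \ref{cor:car-identifiability-binary} to Theorem \ref{thm:car-identifiability}. Let $\mathcal{M}_b$ be the connected component containing the pair $(i,j)$ with $W_{ij}=0$ and $i\neq j$. It suffices to show that $W_{[b]}$ has at least three distinct eigenvalues or $D_{[b]}$ has distinct diagonal elements. Since $\varphi_U\neq 0$ is assumed, the theorem then gives identifiability of all parameters.

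We argue by contraposition. Suppose $D_{[b]}$ has all diagonal entries equal, say to $d$, so that $W_{[b]}$ is the adjacency matrix of a connected $d$-regular graph on $m=|\mathcal{M}_b|$ vertices. Suppose also that $W_{[b]}$ has at most two distinct eigenvalues. Because $W_{[b]}$ is symmetric, it is diagonalizable, so its minimal polynomial has degree at most two.

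\emph{Degree one.} Then $W_{[b]}=cI$, and the zero diagonal forces $W_{[b]}=0$. A connected component with a pair $i\neq j$ needs at least one edge, so this is impossible.

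\emph{Degree two.} Then $W_{[b]}^2+aW_{[b]}+cI=0$ for some scalars $a,c$. Look at the $(i,j)$ entry. Since $W_{ij}=0$ and $i\neq j$, this gives $(W_{[b]}^2)_{ij}=0$, so $i$ and $j$ have no common neighbor. By connectivity there is a shortest path from $i$ to $j$. Since $W_{ij}=0$, its length is at least two, so it passes through a vertex $k$ with $W_{ik}=1$ and some $l$ with $W_{kl}=1$, $W_{il}=0$, $l\neq i$. At the pair $(i,l)$ the polynomial identity gives $(W_{[b]}^2)_{il}=0$. But $k$ is a common neighbor of $i$ and $l$, so $(W_{[b]}^2)_{il}\ge 1$, a contradiction.

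(An alternative route: a connected graph has diameter strictly less than its number of distinct adjacency eigenvalues, and diameter $\ge 2$ forces at least three.) Either way, both assumptions cannot hold simultaneously, and the theorem applies.

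The main obstacle is the degree-two case. The key step is recognizing that any distance-two pair makes $(W^2)_{il}$ positive while $W_{il}$ and $I_{il}$ vanish, contradicting the quadratic relation. Regularity is not actually needed for this step, but it costs nothing to assume.
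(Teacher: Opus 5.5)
Your proof is correct, but it takes a different route from the paper's. Both proofs reduce to Theorem~\ref{thm:car-identifiability}; they differ in how they obtain three distinct eigenvalues.

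\textbf{The paper's route.} It first reduces to the case where every $D_{[b]}$ is a scaled identity. It then argues that the non-complete component has at least four vertices and common degree $2\le m\le n_{[b]}-2$, and applies Lemma~\ref{lemma:three-eigenvalues}. That lemma shows that $m$ is a simple eigenvalue, by a maximum-principle argument using connectivity. If there were only two eigenvalues, the trace would force the other to be $-m/(n_{[b]}-1)$. The paper then rules this out because $\det(W_{[b]})$ must be an integer, via a prime-factorization argument.

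\textbf{Your route.} You replace all of this with a minimal-polynomial and walk-counting argument. At most two eigenvalues force $W_{[b]}^2$ to be a linear combination of $W_{[b]}$ and $I$, and this fails at any pair at graph distance exactly two. This is the classical fact that a connected graph has more distinct adjacency eigenvalues than its diameter \citep[][Proposition~1.3.3]{brouwer2011spectra}. The paper itself cites this fact for $D-W$ in the Leroux section but does not use it here.

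\textbf{What your route buys.} It needs neither regularity, nor the case analysis on component size and degree, nor any number theory. It verifies the first condition of Theorem~\ref{thm:car-identifiability} directly, without looking at $D_{[b]}$. It also extends the corollary verbatim to non-binary nonnegative proximity matrices, since $(W_{[b]}^2)_{il}\ge W_{ik}W_{kl}>0$ whenever $W_{ik},W_{kl}>0$. By contrast, the paper's determinant-integrality step genuinely requires integer ($0/1$) entries.

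\textbf{Two small points.}
\begin{itemize}
\item Your look at the $(i,j)$ entry is redundant. The contradiction at $(i,l)$ suffices on its own, and it covers the case $l=j$.
\item You should state why $W_{il}=0$. Either $l=j$, where this is the hypothesis, or else $W_{il}=1$ would shorten the path, contradicting that it is shortest.
\end{itemize}
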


From Corollary \ref{cor:car-identifiability-binary}, 
when $\varphi_U \neq 0$, the treatment effect $\beta$ is identifiable from the observed data as long as at least one component has the neighborhood graph not fully connected.
Figure \ref{fig:adj-structure} illustrates this condition through four examples of neighborhood structures among six locations. 
Note that Figure \ref{fig:adj-structure} (b) represents exchangeable correlation. 
Specifically, the condition holds under Figure \ref{fig:adj-structure}(c)--(d), while fails under Figure \ref{fig:adj-structure}(a)--(b). 
In particular, as indicated by Figure \ref{fig:adj-structure}(d), 
the condition is satisfied under a ring graph with $n > 3$.
Thus, Corollary \ref{cor:car-identifiability-binary} generalizes the identification results of \citet{schnell2020mitigating}.
Moreover, Corollary \ref{cor:car-identifiability-binary}  not only accommodates more flexible neighborhood structures, but also applies to finite sample sizes. 
It will be interesting to investigate whether this more general identification result extends to settings with spatial interference \citep{papadogeorgou2023spatial}, and we leave this for future study.

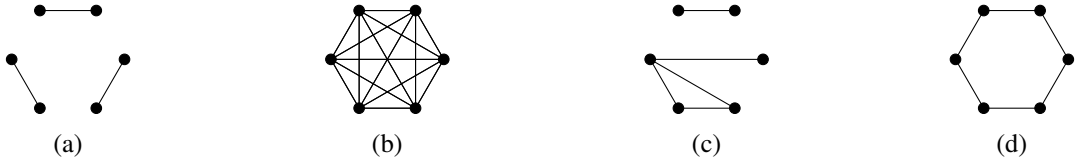
\begin{figure}[h]
    \centering
    
    \begin{subfigure}[b]{0.24\textwidth}
        \centering
        \begin{tikzpicture}[scale=0.75]
            \foreach \angle in {60, 120, 180, 240, 300, 360}
                \node[draw, circle, fill=black, inner sep=1pt, minimum size=4pt] at (\angle:1cm) {};
            \draw (60:1cm) -- (120:1cm);
            \draw (180:1cm) -- (240:1cm);
            \draw (300:1cm) -- (360:1cm);
        \end{tikzpicture}
        \caption{
        }
    \end{subfigure}
    \hfill
    \begin{subfigure}[b]{0.24\textwidth}
        \centering
        \begin{tikzpicture}[scale=0.75]
            \foreach \angle in {60, 120, 180, 240, 300, 360}
                \node[draw, circle, fill=black, inner sep=1pt, minimum size=4pt] at (\angle:1cm) {};
            \foreach \source in {60, 120, 180, 240, 300, 360}
                \foreach \dest in {60, 120, 180, 240, 300, 360}
                    \draw (\source:1cm) -- (\dest:1cm);
        \end{tikzpicture}
        \caption{
        }
    \end{subfigure}
    \hfill
    \begin{subfigure}[b]{0.24\textwidth}
        \centering
        \begin{tikzpicture}[scale=0.75]
            \foreach \angle in {60, 120, 180, 240, 300, 360}
                \node[draw, circle, fill=black, inner sep=1pt, minimum size=4pt] at (\angle:1cm) {};
            \draw (60:1cm) -- (120:1cm);
            \draw (180:1cm) -- (240:1cm) -- (300:1cm) -- cycle;
            \draw (180:1cm) -- (360:1cm);
        \end{tikzpicture}
        \caption{
        }
    \end{subfigure}
    \begin{subfigure}[b]{0.24\textwidth}
        \centering
        \begin{tikzpicture}[scale=0.75]
        \foreach \angle in {60, 120, 180, 240, 300, 360}
            \node[draw, circle, fill=black, inner sep=1pt, minimum size=4pt] at (\angle:1cm) {};
        
        \foreach \angle in {60, 120, 180, 240, 300, 360} {
            \pgfmathsetmacro\next{\angle + 60}
            \draw (\angle:1cm) -- (\next:1cm);
        }
        \end{tikzpicture}
        \caption{
        }
    \end{subfigure}
    \caption{Examples of neighborhood structures among 6 locations. The condition on $W$ specified in Corollary \ref{cor:car-identifiability-binary} is violated in (a) and (b), but satisfied in (c) and (d).
    }\label{fig:adj-structure}
\end{figure}

\subsection{Leroux Conditional Autoregressive Model}\label{sec:leroux_car}
We now consider the Leroux CAR model \citep{leroux2000estimation} for the covariance structure of the unmeasured confounder $U$ and exposure $Z$ in \eqref{eq:simp-model}, as in \citet{guan2023spectral}: 
\begin{equation}\label{eq:lcar-var}
\Sigma_{UU} = \sigma_U^2\{ (1-\lambda_U)I_n + \lambda_U (D-W) \}^{-1}, \quad  \Sigma_{ZZ} = \sigma_Z^2 \{ (1-\lambda_Z)I_n + \lambda_Z (D-W) \}^{-1},
\end{equation} 
where $W$ is a proximity matrix defined as in \S \ref{sec:car-papadogeorgu}, $D$ is the diagonal matrix consisting of the column sums of $W$, and the parameters satisfy that $\sigma_U, \sigma_Z > 0$ and $\lambda_U, \lambda_Z \in [0,1)$.
Here the entries of $W$ do not need to be binary, which is more general than \citet{guan2023spectral}. 
To specify the covariance between $U$ and $Z$, we consider $U$ and $Z$ in their spectral domain. Let $D-W = P \Omega P^\top$ denote the spectral decomposition of $D-W$, where $\Omega$ is a diagonal matrix and $P$ is an orthogonal matrix.
Let $U' = P^\top U$ and $Z' = P^\top Z$, whose covariance structure then has the following form: 
\begin{align}\label{eq:leroux_car}
    \cov\left( \begin{pmatrix}U' \\ Z' \end{pmatrix} \right)
    & \equiv 
    \begin{pmatrix} \Sigma'_{UU} & \Sigma'_{UZ} \\ \Sigma'_{ZU} & \Sigma'_{ZZ} \end{pmatrix}
    = 
    \begin{pmatrix} 
    \sigma^2_U \{(1-\lambda_U)I_n+\lambda_U \Omega\}^{-1} & P^\top \Sigma_{UZ} P 
    \\ 
    P^\top \Sigma_{ZU} P & \sigma^2_Z \{(1-\lambda_Z)I_n+\lambda_Z \Omega\}^{-1}
    \end{pmatrix}. 
\end{align}
Following \citet{guan2023spectral}, we consider the following two model assumptions on the covariance between $U$ and $Z$ in their spectral domain: for some $\rho \in [-1, 1]$
\begin{align}\label{eq:lcar-crosscov-flex}
\Sigma_{UZ}' & = \rho \sigma_U\sigma_Z \{(1-\lambda_{UZ})I_n + \lambda_{UZ}\Omega\}^{-1}, 
\\
\label{eq:lcar-crosscov-pars} 
\Sigma_{UZ}' & = \rho\sigma_U\sigma_Z \big[\{(1-\lambda_U)I_n + \lambda_U\Omega\}^{1/2}\{(1-\lambda_Z)I_n+\lambda_Z \Omega\}^{1/2}\big]^{-1},
\end{align}
which correspond to non-parsimonious and parsimonious models, respectively. The parsimonious model in \eqref{eq:lcar-crosscov-pars} involves one less parameter. In either formulation,  the four submatrices in \eqref{eq:leroux_car} are all diagonal.
Below we study the identification of the causal effect $\beta$ under the above model assumptions. 

We consider first the non-parsimonious model 
for the covariance structure of $U$ and $Z$, whose identifiability has not been studied in \citet{guan2023spectral}. 

\begin{theorem}\label{thm:lcar-flex-identifiability}
Consider $n$ spatial locations with a proximity matrix $W$, and assume the data generating process follows \eqref{eq:simp-model}, \eqref{eq:leroux_car} and \eqref{eq:lcar-crosscov-flex}. 
\begin{enumerate}[label=(\roman*), topsep=1ex,itemsep=-0.3ex,partopsep=1ex,parsep=1ex]
    \item If $\rho \neq 0, \lambda_{UZ} \neq \lambda_Z$, and $D-W$ contains at least three distinct eigenvalues, then the treatment effect $\beta$ along with $\sigma_Z, \lambda_Z, \lambda_{UZ}$ are identifiable. If further $\lambda_U\neq 0$, then $\lambda_U, \rho, \sigma_U$, and $\sigma_\epsilon$ are identifiable. 
    \item If $\rho =0, \lambda_U\neq \lambda_Z, \lambda_Z \neq 0, \lambda_U \neq 0$, and $D-W$ contains at least four distinct eigenvalues, then the treatment effect $\beta$ and all parameters except for $\lambda_{UZ}$ are identifiable.
\end{enumerate}
\end{theorem}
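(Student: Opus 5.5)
The plan is to pass to the spectral domain, where the joint model factorizes into independent bivariate Gaussian pieces. On each piece, identifiability reduces to uniqueness of low-degree rational functions of the eigenvalue $\omega$ that are known at a few distinct points.

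\textbf{Reduction to the spectral domain.} Since $W$, and hence $D-W=P\Omega P^\top$, is known, the observed data determine the law of $(Y',Z')=(P^\top Y,P^\top Z)$. Under \eqref{eq:leroux_car} and \eqref{eq:lcar-crosscov-flex}, all blocks are diagonal, so the pairs $(Y'_k,Z'_k)$ are independent bivariate normals. If $P$ is not unique within an eigenspace, the covariance restricted to that eigenspace is still a known multiple of the identity. Hence, for every distinct eigenvalue $\omega$ of $D-W$, the following three quantities are identified. Write $a_\lambda(\omega)=1+\lambda(\omega-1)$, which is positive since $\omega\ge 0$ and $\lambda\in[0,1)$.
\begin{align*}
v_Z(\omega)&=\sigma_Z^2/a_{\lambda_Z}(\omega),\\
c_{YZ}(\omega)&=\beta\sigma_Z^2/a_{\lambda_Z}(\omega)+\rho\sigma_U\sigma_Z/a_{\lambda_{UZ}}(\omega),\\
v_Y(\omega)&=\beta^2v_Z(\omega)+2\beta\rho\sigma_U\sigma_Z/a_{\lambda_{UZ}}(\omega)+\sigma_U^2/a_{\lambda_U}(\omega)+\sigma_\epsilon^2.
\end{align*}
First, $1/v_Z(\omega)=\{1+\lambda_Z(\omega-1)\}/\sigma_Z^2$ is affine in $\omega$. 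Two distinct eigenvalues therefore determine $\sigma_Z$ and $\lambda_Z$ in both parts.

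\textbf{Part (i).} Consider $r(\omega)=c_{YZ}(\omega)/v_Z(\omega)=\beta+\kappa\,a_{\lambda_Z}(\omega)/a_{\lambda_{UZ}}(\omega)$, where $\kappa=\rho\sigma_U/\sigma_Z$. When $\kappa\neq0$ and $\lambda_{UZ}\ne\lambda_Z$, $r$ is a nonconstant Möbius function of $\omega$.

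Suppose a second parameter set $(\beta',\kappa',\lambda'_{UZ})$, with the same already-identified $\lambda_Z$, reproduces $r$ at three distinct eigenvalues. Clearing denominators gives a polynomial identity of degree at most $2$ in $\omega$ that vanishes at three points, so it vanishes identically and the two rational functions coincide. Uniqueness of the partial-fraction form then finishes the argument. Write $r=\beta+\kappa\lambda_Z/\lambda_{UZ}+\kappa(1-\lambda_Z/\lambda_{UZ})/a_{\lambda_{UZ}}$ when $\lambda_{UZ}\ne0$, and $r$ is affine and nonconstant when $\lambda_{UZ}=0$. The pole location $1-1/\lambda_{UZ}$, or its absence, fixes $\lambda_{UZ}$. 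The residue and the constant then fix $\kappa$ and $\beta$; a constant alternative is excluded.

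With $\beta$ known, $v_Y-\beta^2v_Z-2\beta\kappa\sigma_Z^2/a_{\lambda_{UZ}}=\sigma_U^2/a_{\lambda_U}(\omega)+\sigma_\epsilon^2$ is identified. If $\lambda_U\ne0$, the same three-point Möbius argument gives $\lambda_U$, $\sigma_U$ and $\sigma_\epsilon$, and then $\rho=\kappa\sigma_Z/\sigma_U$. If $\lambda_U=0$, this function is constant, and only $\sigma_U^2+\sigma_\epsilon^2$ is determined, which explains the extra condition.

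\textbf{Part (ii).} Now $\rho=0$, so $r\equiv\beta$ is constant. Any alternative reproducing $r$ must have $r'$ constant. By part (i)'s argument, this forces $\kappa'=0$ or $\lambda'_{UZ}=\lambda_Z$; in either case $r'=\beta'+\kappa'$, so $\beta'=\beta-\kappa'$. The residual variance $\mathrm{var}(Y'_k-\beta'Z'_k)$ must then match across the two parameter sets:
\begin{align*}
\kappa'^2\sigma_Z^2/a_{\lambda_Z}(\omega)+\sigma_U^2/a_{\lambda_U}(\omega)+\sigma_\epsilon^2
=\sigma_U'^2(1-\rho'^2)/a_{\lambda'_U}(\omega)+\sigma_\epsilon'^2 .
\end{align*}
Here the right side uses $\mathrm{var}(U'-\kappa'Z')$ under the alternative, which is a single Leroux term because $\lambda'_{UZ}=\lambda_Z$.

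Multiplying through by $a_{\lambda_Z}a_{\lambda_U}a_{\lambda'_U}$ gives a polynomial of degree at most $3$ vanishing at four distinct eigenvalues, so the identity holds for all $\omega$. The left side has two distinct poles, since $\lambda_U\ne\lambda_Z$ and both are nonzero, unless $\kappa'=0$; the right side has at most one. Hence $\kappa'=0$ and $\beta'=\beta$. The remaining single-pole identity then yields $\lambda_U$, $\sigma_U$ and $\sigma_\epsilon$, using $\lambda_U\ne0$, and $\rho=0$ follows from $\kappa=0$. Nothing involves $\lambda_{UZ}$, so it is not identified.

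\textbf{Main obstacle.} I expect the delicate step to be the bookkeeping of degenerate cases in the rational-function uniqueness. These include $\lambda=0$, where a term has no pole; coincident poles in the alternative, such as $\lambda'_U=\lambda_Z$; and correctly accounting for the alternative's $\rho'$ entering through $\mathrm{var}(U-\kappa'Z)$. I would handle them by always clearing denominators to a polynomial identity whose degree is bounded by (number of eigenvalues $-1$), and only afterwards comparing partial-fraction expansions.
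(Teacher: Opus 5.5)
Your argument is correct and follows essentially the paper's route. You diagonalize via $P$, read off $(\sigma_Z,\lambda_Z)$ from the affine reciprocal of $\var(Z')$, and use a degree-two polynomial identity at three eigenvalues for the M\"obius regression coefficient. In case (ii) you use a degree-three identity at four eigenvalues. Your pole/partial-fraction counting carries the same content as the paper's lemma on linear independence of $1$ and the functions $1/(1-\lambda_k+\lambda_k\omega)$, and it conveniently absorbs the paper's case splits on whether $\tilde\lambda_U\in\{0,\lambda_Z\}$.

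One slip needs fixing in part (ii). Under the alternative, $Y'-\beta'Z'=U'+\epsilon'$, so the right-hand side of your displayed identity should be $\sigma_U'^2/a_{\lambda_U'}(\omega)+\sigma_\epsilon'^2$, with no factor $(1-\rho'^2)$. Also, $\var(U'-\kappa'Z')$ is not a single Leroux term: with $\lambda'_{UZ}=\lambda_Z$ it equals $\sigma_U'^2/a_{\lambda_U'}-\kappa'^2\sigma_Z^2/a_{\lambda_Z}$. Your pole count only uses the fact that the right side has at most one pole, so $\kappa'=0$, $\beta'=\beta$ and the remaining identifications still go through.
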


In case (i) of Theorem~\ref{thm:lcar-flex-identifiability}, 
the identification requires that the exposure $Z$ and the confounder $U$ are correlated, 
and that the covariance between $U$ and $Z$ exhibit a spatial dependence structure 
distinct from that of $Z$ itself.
These conditions ensure that the contributions of $Z$ and $E(U \mid Z)$ in $E(Y \mid Z)$ 
are linearly independent, so that we can distinguish these two components and achieve identification. 
In case (ii) of Theorem~\ref{thm:lcar-flex-identifiability}, $\rho=0$ indicates that the exposure $Z$ and the confounder $U$ are uncorrelated, so there is actually no confounding. 
However, the treatment effect $\beta$ may still be unidentifiable, 
because an alternative data-generating process with a correlated unmeasured confounder 
could yield the same observed data distribution. Therefore, case (ii) further requires $U$ and $Z$ to be spatially autocorrelated 
but with distinct dependence structures, 
so that we can achieve identification using the information in $\var(Y \mid Z)$.
As demonstrated in the supplementary material, the conditions on parameters in Theorem \ref{thm:lcar-flex-identifiability} are also necessary for the identification of the treatment effect. 

Again, we can assess from the observed data distribution whether some identification conditions hold. 
Specifically, if $E(Y'Z'^{\top})E(Z'Z'^{\top})^{-1}$ is not proportional to the identity matrix, 
then $\rho \neq 0$ and $\lambda_{UZ} \neq \lambda_Z$, 
indicating that the parameter conditions in case (i) must hold. 
The requirement 
that $D - W$ have a sufficient number of distinct eigenvalues 
can also be readily checked with the observed locations. 
When $W$ is binary indicating neighborhood structure, 
a sufficient condition for $D - W$ to have at least three (or four) distinct eigenvalues 
is the existence of two locations that are indirectly connected through at least one 
(or two) intermediate locations \citep[][Proposition~1.3.3]{brouwer2011spectra}.

We then consider the parsimonious model for the covariance between $U$ and $Z$. 
Unlike \citet{guan2023spectral}, 
we explicitly require the spectrum of $D - W$ to contain a sufficient number of distinct eigenvalues, 
and fully state the sufficient conditions on the model parameters.

\begin{theorem}\label{thm:lcar-pars-identifiability}
Consider $n$ spatial locations with a proximity matrix $W$, and assume that the data generating process follows \eqref{eq:simp-model}, \eqref{eq:leroux_car} and \eqref{eq:lcar-crosscov-pars}. Suppose that $D-W$ contains at least three distinct eigenvalues.
\begin{enumerate}[label=(\roman*), topsep=1ex,itemsep=-0.3ex,partopsep=1ex,parsep=1ex]
\item If $\lambda_U\neq \lambda_Z$ and $\rho \neq 0$, then the treatment effect $\beta$ along with $\sigma_Z$, $\lambda_Z$, $\lambda_U$ and $\rho \sigma_U$ are identifiable.
\item If $\lambda_U \neq \lambda_Z$ and $\lambda_U \neq 0$, then all parameters, including the causal effect $\beta$, are identifiable.
\end{enumerate}
\end{theorem}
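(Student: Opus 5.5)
Everything is diagonal in the spectral domain, so I would work with $Y'=P^\top Y$ and $Z'=P^\top Z$. The coordinate pairs $(Y'_k,Z'_k)$, $k=1,\dots,n$, are independent Gaussians whose joint law depends on $k$ only through $\omega_k$, the $k$th eigenvalue of $D-W$. Since $P$ is known from the locations, the observed distribution is equivalent to knowing, for each distinct eigenvalue $\omega$, three functions of $\omega$:
\begin{align*}
\var(Z'_k)&=\sigma_Z^2/a_Z(\omega),\\
\cov(Y'_k,Z'_k)&=\beta\sigma_Z^2/a_Z(\omega)+\rho\sigma_U\sigma_Z/\{a_U(\omega)a_Z(\omega)\}^{1/2},\\
\var(Y'_k)&=\beta^2\sigma_Z^2/a_Z(\omega)+2\beta\rho\sigma_U\sigma_Z/\{a_Ua_Z\}^{1/2}+\sigma_U^2/a_U(\omega)+\sigma_\epsilon^2,
\end{align*}
where $a_\bullet(\omega)=(1-\lambda_\bullet)+\lambda_\bullet\omega$. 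Identifiability therefore reduces to showing that these functions, evaluated at $m\ge 3$ distinct points, determine the parameters.

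\emph{Step 1: identify $(\sigma_Z,\lambda_Z)$.} The reciprocal $1/\var(Z'_k)=\{(1-\lambda_Z)+\lambda_Z\omega\}/\sigma_Z^2$ is affine in $\omega$. Two distinct eigenvalues determine its slope $\lambda_Z/\sigma_Z^2$ and intercept $(1-\lambda_Z)/\sigma_Z^2$. Their sum is $1/\sigma_Z^2$, so $\sigma_Z$ and $\lambda_Z$ follow.

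\emph{Step 2: identify $\beta$, $\lambda_U$ and $\rho\sigma_U$ in case (i).} Dividing by $\var(Z'_k)$ gives the regression coefficient
\[
r(\omega)=\beta+c\,\{a_Z(\omega)/a_U(\omega)\}^{1/2},\qquad c=\rho\sigma_U/\sigma_Z .
\]
Suppose two parameter values $(\beta,\lambda_U,c)$ and $(\tilde\beta,\tilde\lambda_U,\tilde c)$ agree at three distinct $\omega$'s, with $\lambda_Z$ already fixed. Write $h(\omega)=\{a_Z/a_U\}^{1/2}$; then $\beta-\tilde\beta+c h-\tilde c\tilde h=0$ at three points.

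The key point, and where I expect the main obstacle, is showing that $1$, $h$ and $\tilde h$ are linearly independent as functions on any three points whenever $\lambda_U\neq\tilde\lambda_U$. It is also needed that $h$ is nonconstant, which holds exactly when $\lambda_U\ne\lambda_Z$.

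My approach is to use monotonicity of the Möbius map. The map $\omega\mapsto a_Z/a_U$ is a linear-fractional function. For $\lambda_U\neq\lambda_Z$ it is strictly monotone and injective, and two such maps with different $\lambda_U$ coincide at most at one point. I would rearrange the relation to $c h=\tilde\beta-\beta+\tilde c\tilde h$, square, and clear denominators. This yields a polynomial identity in $\omega$ of low degree (at most 2 after cross-multiplying by $a_Ua_{\tilde U}$). The concern is that three points then do not suffice by degree counting alone. So instead I would argue directly that the triple $(1,h,\tilde h)$ restricted to three points has a nonzero determinant, using the explicit form $h^2=(1-\lambda_Z+\lambda_Z\omega)/(1-\lambda_U+\lambda_U\omega)$. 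If that direct route stalls, I would reparametrize by $t=h^2$, for which $\tilde h^2$ is again linear-fractional in $t$, and check convexity: a curve of the form $\sqrt{\text{Möbius}(t^2)}$ is strictly convex or concave, so it meets a line in at most two points.

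Once independence is established, $c=\tilde c=0$ or $\lambda_U=\tilde\lambda_U$. Since $\rho\neq0$ forces $c\neq0$, and $h$ nonconstant forces $\beta=\tilde\beta$ and $c=\tilde c$, we obtain $\beta$, $\lambda_U$ and $c$, hence $\rho\sigma_U$.

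\emph{Step 3: case (ii).} When $\rho$ may be zero, use the residual variance
\[
v(\omega)=\var(Y'_k)-\cov(Y'_k,Z'_k)^2/\var(Z'_k)=\sigma_U^2(1-\rho^2)/a_U(\omega)+\sigma_\epsilon^2 .
\]
With $\lambda_U\neq0$, $v$ is a nonconstant shifted reciprocal-affine function. Three distinct points identify $\lambda_U$, $\sigma_U^2(1-\rho^2)$ and $\sigma_\epsilon^2$, via the cross-ratio argument for $\alpha/(1-\lambda+\lambda\omega)+\gamma$. With $\lambda_U$ known, $r(\omega)=\beta+c\,h(\omega)$ with known nonconstant $h$ (since $\lambda_U\neq\lambda_Z$) gives $\beta$ and $c$ from two points. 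Then $\sigma_U^2=c^2\sigma_Z^2+\sigma_U^2(1-\rho^2)$ yields $\sigma_U$, and $\rho=c\,\sigma_Z/\sigma_U$. In case (i), the remaining parameters $\rho$, $\sigma_U$ and $\sigma_\epsilon$ follow the same way when $\lambda_U\neq0$.
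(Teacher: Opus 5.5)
Your architecture matches the paper's. You identify $(\sigma_Z,\lambda_Z)$ from $\var(Z')$ as in Lemma~\ref{lem:lcar-varz}. Case (ii) goes through $\var(Y'\mid Z')$ and then the regression coefficient. Case (i) rests on a three-term independence statement for $1,h,\tilde h$, equivalently for $a_Z^{-1/2},a_U^{-1/2},\tilde a_U^{-1/2}$.

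The gap is that this independence statement is never proved, and it is the whole content of case (i); you flag it yourself as the main obstacle. Squaring fails, as you note, and the ``direct determinant'' route is not carried out. The convexity fact you invoke (that $\sqrt{M(s^2)}$ is strictly convex or concave for a M\"obius map $M$) is false in general. For $M(t)=1/(1+t)$, the curve $(1+s^2)^{-1/2}$ has second derivative $(2s^2-1)(1+s^2)^{-5/2}$, which changes sign at $s=1/\sqrt{2}$, so such a curve can meet a line three times. Your blanket claim that $1,h,\tilde h$ are independent whenever $\tilde\lambda_U\neq\lambda_U$ also fails when $\tilde\lambda_U=\lambda_Z$, because then $\tilde h\equiv 1$. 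That case needs its own easy argument: $c\,h$ would be constant on three points, forcing $c=0$ and contradicting $\rho\neq0$. The paper treats this case separately for exactly that reason.

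Your convexity idea can be rescued by using the specific M\"obius map, which fixes $t=0$. Eliminating $\omega$ gives $\tilde h^2=t/(At+B)$, where $t=h^2$, $A=(\lambda_U-\tilde\lambda_U)/(\lambda_U-\lambda_Z)$ and $B=(\tilde\lambda_U-\lambda_Z)/(\lambda_U-\lambda_Z)$. Hence $\tilde h=g(h)$ with $g(s)=s(As^2+B)^{-1/2}$ and $g''(s)=-3ABs(As^2+B)^{-5/2}$. On the range of $h$ we have $As^2+B=\tilde a_U/a_U>0$, so $g''$ has constant nonzero sign there as soon as $\tilde\lambda_U\notin\{\lambda_U,\lambda_Z\}$. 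Since $h$ is injective (as $\lambda_U\neq\lambda_Z$), three distinct eigenvalues give three non-collinear points on the graph of $g$. This forces $\tilde c=0$, and then $c=0$.

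This fixed version works for any three distinct eigenvalues. The paper argues differently in Lemma~\ref{lem:lcar-pars-linind}. It uses that $\omega=0$ is always an eigenvalue of $D-W$, since $(D-W)\mathbf{1}=0$. Normalizing at $\omega=0$ gives $b_1+b_2+b_3=0$, which reduces the problem to a two-term identity. In that identity the ratio of the two $\omega$-dependent factors is strictly monotone on $\omega>0$, by Lemmas~\ref{lem:lcar-pars-strictly-increasing} and~\ref{lem:lcar-pars-strictly-increasing2}.

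Finally, your Step~3 tacitly needs $(1-\rho^2)\sigma_U^2\neq0$ for $v$ to be nonconstant. Positive definiteness of $\Sigma$ guarantees this. However, the paper's proof explicitly handles $\rho=\pm1$ in case (ii) by routing through case (i). Under that reading, your case (ii) also depends on the missing lemma.
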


The two cases in Theorem~\ref{thm:lcar-pars-identifiability} 
are similar to those in Theorem~\ref{thm:lcar-flex-identifiability}. 
In case (i), identification is achieved by separating the linearly independent contributions 
of $Z$ and $E(U \mid Z)$ in $E(Y \mid Z)$. 
In case (ii), where $\rho$ may be zero, 
identification relies on the information contained in $\var(Y \mid Z)$. 
As shown in the supplementary material, 
the parameter conditions in Theorem~\ref{thm:lcar-pars-identifiability} 
are also necessary for identifying the treatment effect. 

Similarly, we can assess the identification conditions in Theorem~\ref{thm:lcar-pars-identifiability}. 
If \(E(Y'Z'^{\top})E(Z'Z'^{\top})^{-1}\) is not proportional to the identity matrix, 
then we must have \(\lambda_U \neq \lambda_Z\) and \(\rho \neq 0\); 
that is, the parameter conditions in case (i) hold. 
The condition on the proximity structure in Theorem~\ref{thm:lcar-pars-identifiability} often holds in practice and can be easily verified from the observed spatial locations. 
For example, 
as discussed earlier, 
when $W$ is a binary matrix for neighborhood structure, it holds when there are two indirectly connected locations.

\section{Models for Geostatistical Data}
\subsection{Linear Model of Coregionalization}
Let $W$ be an $n$ by $n$ symmetric {\it distance matrix} representing the distances among $n$ spatial locations, where $W_{ij}\ge 0$ for all $i,j$ and $W_{ii} = 0$ for all $i$. 
For example, $W_{ij}$ can be the Euclidean distance between the locations $s_i$ and $s_j$. 
For geostatistical data, it is common to model the unmeasured confounder $U$ and the exposure $Z$
through Gaussian processes with covariance structure depending on the distance between locations. 
Throughout this section, we will consider widely used models for multivariate spatial processes, which can handle potential dependence between $U$ and $Z$.

We first consider the linear model of coregionalization \citep{wackernagel2003multivariate, gelfand2004nonstationary}, 
which represents the spatial processes $U$ and $Z$ as linear combinations of mutually independent latent spatial processes.
Each latent process shares a common covariance function but is characterized by distinct parameters.
Specifically, 
\begin{equation}\label{eq:LMC}
Z = \sum_{t=1}^T a_t L_t,  \qquad U = \sum_{t=1}^T b_t L_t,
\qquad (a_t, b_t \in\mathbb{R} \text{ for } 1\le t \le T) 
\end{equation}
where each $L_t$ is a mean-zero Gaussian process with covariance function $C(\phi_t, \cdot)$ for some parameter vector $\phi_t$. 
That is, the covariance between two observations of $L_t$ at locations $s_i$ and $s_j$ is $C(\phi_t, W_{ij})$. 
The form of the covariance function $C(\cdot, \cdot)$ is assumed known, with $C(\phi, 0) = 1$ for all $\phi$ to avoid parameter redundancy. 
The unknown parameters for the covariance structure of $(Z, U)$ are $a_t$, $b_t$ and $\phi_t$ for $1\le t \le T$.

Unfortunately, under the above model, the treatment effect is not identifiable from the observed data.

\begin{theorem}\label{thm:lmcoreg-nonid}
Consider $n$ spatial locations with a distance matrix $W$, 
and assume that the data generating process follows \eqref{eq:simp-model} and \eqref{eq:LMC}. 
The treatment effect $\beta$ is not identifiable, 
even if we know the true values of $a_t$ and $\phi_t$ for $1\le t \le T$ and $\sigma^2_{\epsilon}$. 
\end{theorem}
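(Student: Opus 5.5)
Since $(Y,Z)$ is jointly Gaussian with mean zero, its distribution is determined by $\var(Z)$, $\cov(Y,Z)$ and $\var(Y)$. The plan is to exhibit two parameter configurations that agree in all three quantities but have different $\beta$. Write $K_t$ for the $n\times n$ matrix with entries $C(\phi_t, W_{ij})$. Under \eqref{eq:LMC},
\begin{align*}
\Sigma_{ZZ} = \sum_{t} a_t^2 K_t,\qquad
\Sigma_{UZ} = \sum_t a_t b_t K_t,\qquad
\Sigma_{UU} = \sum_t b_t^2 K_t .
\end{align*}
By \eqref{eq:simp-model}, the observed covariances are
\begin{align*}
\var(Z) &= \Sigma_{ZZ},\\
\cov(Y,Z) &= \beta\Sigma_{ZZ} + \Sigma_{UZ} = \sum_t a_t(\beta a_t + b_t)K_t,\\
\var(Y) &= \sum_t (\beta a_t + b_t)^2 K_t + \sigma_\epsilon^2 I_n .
\end{align*}

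The key observation is that every observable depends on $(\beta, b_{1:T})$ only through the combinations $c_t = \beta a_t + b_t$, for $t=1,\ldots,T$. Fix the true $a_t$, $\phi_t$ and $\sigma_\epsilon^2$. For any $\beta' \neq \beta$, set $b_t' = b_t + (\beta-\beta')a_t$. Then $\beta' a_t + b_t' = \beta a_t + b_t$ for every $t$, so the alternative parameters $(\beta', b_{1:T}', a_{1:T}, \phi_{1:T}, \sigma_\epsilon^2)$ reproduce exactly the same $\var(Z)$, $\cov(Y,Z)$ and $\var(Y)$. Hence they induce the same distribution of $(Y,Z)$, and $\beta$ is not identifiable even when $a_t$, $\phi_t$ and $\sigma_\epsilon^2$ are known.

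The only point needing care is that the alternative parameters must define a valid model in the paper's sense, i.e.\ $\Sigma$ must be positive definite. An LMC with arbitrary real $b_t'$ is always a legitimate Gaussian model, since $\Sigma = \sum_t (a_t,b_t)^\top(a_t,b_t)\otimes K_t$ is positive semidefinite. The restriction to parameters yielding a positive definite $\Sigma$ is an open condition, because the perturbation $b_t'$ is continuous in $\beta'$. So if the true $\Sigma$ is positive definite, the perturbed $\Sigma$ remains positive definite for all $\beta'$ sufficiently close to $\beta$.

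Consequently there is a whole interval of $\beta'$ values yielding the same observed distribution, which completes the argument. I expect this positive-definiteness check to be the only mild obstacle; the rest is an explicit reparametrization.
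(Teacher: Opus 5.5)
Your proof is correct and uses the same reparametrization as the paper ($\tilde b_t = b_t + \delta a_t$, $\tilde\beta = \beta - \delta$); the paper omits the positive-definiteness check that you add. That check in fact holds for every $\beta'$, not just nearby ones: the alternative confounder is $U' = U + (\beta-\beta')Z$, so $(U',Z)$ is an invertible linear transform of $(U,Z)$ and its covariance stays positive definite, which is why the paper can say $\beta$ may take any real value.
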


From Theorem~\ref{thm:lmcoreg-nonid}, we must exercise caution when using the linear model of coregionalization to characterize the dependence between the exposure and unmeasured confounder.
As shown in the proof, based solely on the observed data distribution, the treatment effect $\beta$ is not identifiable and can take any value on the real line.
Consequently, the likelihood function 
can have 
local maxima at infinitely many parameter configurations, 
and the Bayesian inference for $\beta$ can be highly sensitive to the prior. The intuition behind the nonidentifiability is that even if the true $a_t$, $\phi_t$ and $\sigma^2_{\epsilon}$ are known, the observed data distribution can at most reveal the values of $\beta a_t + b_t$ for all $t$, from which $\beta$ and $b_t$ cannot be separately identified.

\subsection{Bivariate Spatial Covariance Models}\label{sec:bivariate_spatial}
We then consider a general class of bivariate covariance models that can explicitly model the autocovariances of the unmeasured confounder $U$ and the exposure $Z$, as well as their covariance. 
Let $C(\psi, \cdot)$ be a 
covariance 
function indexed by a parameter vector $\psi$, 
with $C(\psi, 0)=1$ to avoid redundancy. 
We assume the following covariance structure for $(U, Z)$ in \eqref{eq:simp-model}: for $1 \le i,j \le n$,
\begin{align}\label{eq:bivariate-cov-framework}
(\Sigma_{UU})_{ij} &= \sigma_U^2 C(\psi_U, W_{ij}), \ (\Sigma_{ZZ})_{ij} = \sigma_Z^2 C(\psi_Z, W_{ij}), \ 
(\Sigma_{UZ})_{ij} = \rho \sigma_U \sigma_Z C(\psi_{UZ}, W_{ij}),
\end{align}  
where $\sigma_U, \sigma_Z > 0$,
$(\Sigma_{UU})_{ij}$ denotes the $(i,j)$th element of $\Sigma_{UU}$,
and $(\Sigma_{ZZ})_{ij}$ and $(\Sigma_{UZ})_{ij}$ are defined analogously.
The model in \eqref{eq:bivariate-cov-framework} resembles that of \citet{guan2023spectral}.
However, unlike their spectral-domain formulation, we focus on the spatial domain and assume that only a finite set of locations rather than the entire spectrum is observed.

We first define the following condition on the covariance functions, which is crucial for identification.
\begin{definition}\label{defn:k-linear-independence}
Given a family of covariance functions $\{C(\psi, \cdot): \psi \in \Psi\}$ and a set $\mathcal{S}$ of 
nonnegative
values, the family is said to be $K$-linearly independent with respect to $\mathcal{S}$ if, for any $K$ distinct parameters $\psi_1, \ldots, \psi_K \in \Psi$, $c_1 C(\psi_1, w) + \cdots + c_K C(\psi_K, w) = 0$ for all $w \in \mathcal{S}$ if and only if $c_1 = \cdots = c_K = 0$.
\end{definition}
Many commonly used covariance families are $K$-linearly independent with respect to any set that contains a sufficiently large number of distinct values. As shown in the supplementary material, the exponential, Gaussian, and powered exponential covariance families are all $K$-linearly independent with respect to any set containing at least $K$ distinct positive values. However, not all covariance families enjoy this property. The spherical covariance family with a restricted domain for its range parameter is only 3-linearly independent with respect to any set containing at least four distinct values smaller than the minimum range parameter value, but for $K\ge 4$, is not $K$-linearly independent with respect to any set that is not dense in $[0, \infty)$. Moreover, the wave covariance family may fail to be linearly independent even with respect to a set containing infinitely many distinct values.

With this definition stated, we now give the following theorem on the identifiability of the model in \eqref{eq:bivariate-cov-framework}.

\begin{theorem}\label{thm:flexible-bivariate-id}
Consider $n$ spatial locations with a distance matrix $W$, 
and assume that the data generating process follows \eqref{eq:simp-model} and \eqref{eq:bivariate-cov-framework}. 
If  $\rho \neq 0$, $\psi_{UZ} \ne \psi_Z$,
and 
the family of covariance functions $C(\psi, \cdot)$ is 3-linearly independent with respect to the set of off-diagonal elements of $W$, 
then all the model parameters, including the treatment effect $\beta$, 
are identifiable. 
\end{theorem}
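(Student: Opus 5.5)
Since $(Y,Z)$ is jointly Gaussian with mean zero, its distribution is determined by three blocks: $\var(Z)=\Sigma_{ZZ}$, $\cov(Y,Z)=\beta\Sigma_{ZZ}+\Sigma_{UZ}$, and $\var(Y)=\beta^2\Sigma_{ZZ}+\beta(\Sigma_{UZ}+\Sigma_{ZU})+\Sigma_{UU}+\sigma_\epsilon^2 I_n$. The plan is to read the parameters off these three matrices in that order, using 3-linear independence of $C(\psi,\cdot)$ on the set $\mathcal{S}$ of off-diagonal entries of $W$ to separate terms.

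\textbf{Step 1: exposure parameters.} The diagonal of $\Sigma_{ZZ}$ equals $\sigma_Z^2$ because $C(\psi,0)=1$, so $\sigma_Z$ is identified. The off-diagonal entries give the function $w\mapsto C(\psi_Z,w)$ on $\mathcal{S}$. If two values $\psi_Z\neq\tilde\psi_Z$ gave the same function on $\mathcal{S}$, then $C(\psi_Z,\cdot)-C(\tilde\psi_Z,\cdot)=0$ on $\mathcal{S}$. Padding with a third distinct parameter with zero coefficient contradicts 3-linear independence, so $\psi_Z$ is identified.

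\textbf{Step 2: $\beta$ and the cross-covariance, which I expect to be the main step.} Suppose two parameter sets give the same $\cov(Y,Z)$. Then for all $w\in\mathcal{S}$, and also at $w=0$,
\[
\beta\sigma_Z^2C(\psi_Z,w)+\rho\sigma_U\sigma_ZC(\psi_{UZ},w)=\tilde\beta\sigma_Z^2C(\psi_Z,w)+\tilde\rho\tilde\sigma_U\sigma_ZC(\tilde\psi_{UZ},w).
\]
This is a linear relation among $C(\psi_Z,\cdot)$, $C(\psi_{UZ},\cdot)$ and $C(\tilde\psi_{UZ},\cdot)$ on $\mathcal{S}$, where $\psi_{UZ}\neq\psi_Z$ by assumption. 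I split into cases.
\begin{itemize}
\item If $\tilde\psi_{UZ}\notin\{\psi_Z,\psi_{UZ}\}$, the three parameters are distinct. Then 3-linear independence forces the coefficient $\rho\sigma_U\sigma_Z$ to vanish, contradicting $\rho\neq0$.
\item If $\tilde\psi_{UZ}=\psi_Z$, linear independence of the pair $C(\psi_Z,\cdot),C(\psi_{UZ},\cdot)$ again forces $\rho\sigma_U\sigma_Z=0$, a contradiction.
\item Hence $\tilde\psi_{UZ}=\psi_{UZ}$. Matching coefficients then gives $\tilde\beta=\beta$ and $\tilde\rho\tilde\sigma_U=\rho\sigma_U$.
\end{itemize}
The delicate point is that the alternative parameter $\tilde\psi_{UZ}$ is unknown, so the argument must cover every case of coincidence with $\psi_Z$ or $\psi_{UZ}$. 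In every case this needs only linear independence of at most three functions, which is exactly the 3-linear independence assumed in the theorem.

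\textbf{Step 3: remaining parameters.} With $\beta$, $\Sigma_{ZZ}$ and $\Sigma_{UZ}$ known, the law of $(Y,Z)$ identifies $\var(Y)-\beta^2\Sigma_{ZZ}-2\beta\Sigma_{UZ}=\Sigma_{UU}+\sigma_\epsilon^2I_n$.
\begin{itemize}
\item Its off-diagonal entries equal $\sigma_U^2C(\psi_U,w)$ on $\mathcal{S}$. If two configurations agreed there, the relation $\sigma_U^2C(\psi_U,\cdot)=\tilde\sigma_U^2C(\tilde\psi_U,\cdot)$ on $\mathcal{S}$ with $\psi_U\neq\tilde\psi_U$ would force $\sigma_U=0$ by linear independence, which is impossible. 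So $\psi_U=\tilde\psi_U$, and then $\sigma_U^2=\tilde\sigma_U^2$ provided $C(\psi_U,\cdot)$ is not identically zero on $\mathcal{S}$, which again follows from linear independence.
\item Given $\sigma_U$, the diagonal $\sigma_U^2+\sigma_\epsilon^2$ yields $\sigma_\epsilon^2$.
\item From $\rho\sigma_U$ identified in Step 2 and $\sigma_U>0$, we recover $\rho$.
\end{itemize}
This completes identification of all parameters.
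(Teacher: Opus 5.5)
Your proposal is correct and follows essentially the same route as the paper's proof. You identify $(\sigma_Z,\psi_Z)$ from $\var(Z)$, then use $\cov(Y,Z)$ with the same case analysis on whether $\tilde\psi_{UZ}$ coincides with $\psi_Z$ or $\psi_{UZ}$ to get $\psi_{UZ}$, $\beta$ and $\rho\sigma_U$, and finally use the off-diagonal and diagonal entries of $\var(Y)$ to get $\psi_U$, $\sigma_U$, $\sigma_\epsilon^2$ and hence $\rho$; your zero-coefficient padding step is exactly the paper's lemma that $K$-linear independence implies $J$-linear independence for $J\le K$, and the only cosmetic difference is that you read $\sigma_Z^2$ off the diagonal before identifying $\psi_Z$.
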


The identiability conditions in Theorem~\ref{thm:flexible-bivariate-id} are similar to those in case~(i) of Theorem~\ref{thm:lcar-flex-identifiability}: the exposure and the unmeasured confounder must be associated, and the covariance between $U$ and $Z$ must have a spatial structure that differs from that of $Z$ itself. As before, these  conditions can be assessed using observable quantities. Specifically, if
$
E(YZ^\top)E(ZZ^\top)^{-1}
$
is not proportional to the identity matrix, then we must have $\rho \neq 0$ and $\psi_{UZ} \neq \psi_Z$. In addition, if the family of covariance functions is exponential, Gaussian, or powered exponential, then the linear-independence condition in Theorem~\ref{thm:flexible-bivariate-id} holds whenever the distance matrix $W$ contains at least three distinct positive values.

\subsection{Parsimonious Mat\'ern Covariance Models}\label{sec:matern}

We finally consider using the Mat\'ern covariance functions to model the covariance structure of the unmeasured confounder $U$ and the exposure $Z$.  
The Mat\'ern correlation function is defined as
\[ C(\phi, \nu; w)
= 
{2^{1-\nu}\over \Gamma(\nu)}(w/\phi)^\nu K_\nu(w/\phi), 
\qquad (w \ge 0, \phi > 0, \nu >0)
\]
where $\Gamma(\cdot)$ is the gamma function and $K_\nu(\cdot)$ is the modified Bessel function of the second kind of order $\nu$. 
It involves two parameters: 
the spatial range parameter $\phi$ 
and the smoothness parameter $\nu$. 
 Following \citet{gneiting2010matern}, 
we model the exposure $Z$ and unmeasured confounder $U$ as Gaussian processes with 
all the marginal and cross covariance functions being the Mat\'ern function. 
Moreover, we consider the parsimonious bivariate Mat\'ern model 
so that all the marginal and cross covariance functions share a common range parameter and the smoothhness parameter for the cross-covariance is the average of those for the marginal covariances. 
That is, for all $1\le i, j \le n$, 
\begin{equation}\label{eq:pars-matern}
(\Sigma_{UU})_{ij} = \sigma_U^2 C(\phi, \nu_U; W_{ij}), 
\ 
(\Sigma_{ZZ})_{ij} = \sigma_Z^2 C(\phi, \nu_Z; W_{ij}), 
\ 
(\Sigma_{UZ})_{ij} = \rho \sigma_U \sigma_Z C \Big( \phi, \frac{\nu_U+\nu_Z}{2}; W_{ij} \Big), 
\end{equation}  
recalling that $W_{ij}$ denotes the distance between locations $s_i$ and $s_j$. 

The model in \eqref{eq:pars-matern} is a special case of \eqref{eq:bivariate-cov-framework} that uses the Matérn covariance function and imposes additional constraints on the parameters governing the marginal and cross covariance matrices.
Consequently, the identifiability results in Theorem \ref{thm:flexible-bivariate-id} imply identifiability of the parameters of the parsimonious Matérn model in \eqref{eq:pars-matern}, provided that the conditions of the theorem hold.
However, verifying these conditions, particularly the linear independence of the relevant Matérn covariance functions, is challenging because of the complex form of the modified Bessel function. 
In the following theorem, 
we instead consider an asymptotic setting in which there exists a sequence of location pairs whose distances diverge to infinity.

\begin{theorem}\label{thm:bivariate-matern-id-full-asymptotic}
Consider $n$ spatial locations with a distance matrix $W$, and assume that the data generating process follows \eqref{eq:simp-model} and \eqref{eq:pars-matern}.
Suppose that $\nu_U \neq \nu_Z$ and $\rho\neq 0$. 
As $n\rightarrow \infty$, 
if $\max_{1\le i, j \le n} W_{ij}$ converges to infinity, then all the model parameters, including the treatment effect $\beta$, 
are identifiable. 
\end{theorem}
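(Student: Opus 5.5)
From the observed law of $(Y,Z)$ we read off $\Sigma_{ZZ}$, the regression matrix $A \equiv E(YZ^\top)E(ZZ^\top)^{-1}$, and $\var(Y\mid Z)$. Under \eqref{eq:simp-model}, $E(Y\mid Z) = \{\beta I_n + \Sigma_{UZ}\Sigma_{ZZ}^{-1}\}Z$, so $A = \beta I_n + \Sigma_{UZ}\Sigma_{ZZ}^{-1}$. Hence the matrix
\[
M \equiv A\Sigma_{ZZ} = \beta\Sigma_{ZZ} + \Sigma_{UZ}
\]
is identified. The plan is to recover parameters in a fixed order. First come $\sigma_Z,\phi,\nu_Z$ from $\Sigma_{ZZ}$. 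Next come $\beta$, $\rho\sigma_U$ and $\nu_U$ from $M$, using the large-distance asymptotics of the Mat\'ern function. Finally $\sigma_U$ (hence $\rho$) and $\sigma_\epsilon^2$ come from $\var(Y\mid Z) = \Sigma_{UU} - \Sigma_{UZ}\Sigma_{ZZ}^{-1}\Sigma_{ZU} + \sigma_\epsilon^2 I_n$.

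For the first step, the diagonal of $\Sigma_{ZZ}$ gives $\sigma_Z^2$. Along the pairs with $W_{ij}\to\infty$, we use $K_\nu(x)\sim\sqrt{\pi/(2x)}\,e^{-x}$. This gives $\log C(\phi,\nu;w) = -w/\phi + (\nu-1/2)\log w + c(\phi,\nu) + o(1)$. The leading linear term identifies $\phi$, the logarithmic term then identifies $\nu_Z$, and the constant pins down any residual ambiguity. Uniqueness can also be argued by comparing two parameter sets whose ratio must tend to $1$.

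For the second step, write $\bar\nu = (\nu_U+\nu_Z)/2$ and $\gamma = \rho\sigma_U\sigma_Z$. The off-diagonal entries of $M$ are then
\[
\beta\sigma_Z^2 C(\phi,\nu_Z;w) + \gamma\, C(\phi,\bar\nu;w).
\]
Suppose two parameter sets $(\beta,\gamma,\nu_U)$ and $(\beta',\gamma',\nu_U')$ give the same $M$, with $\phi$ and $\nu_Z$ already known. Divide the difference by $w^{\cdot}e^{-w/\phi}$ and compare the polynomial growth rates $w^{\nu_Z-1/2}$, $w^{\bar\nu-1/2}$ and $w^{\bar\nu'-1/2}$ as $w\to\infty$; the $\nu$-dependent constant factors do not alter these exponents. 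Because $\rho\neq0$ and $\nu_U\ne\nu_Z$, we have $\gamma\ne0$ and $\bar\nu\neq\nu_Z$. An asymptotic linear-independence argument on distinct power exponents then forces $\bar\nu=\bar\nu'$, $\gamma=\gamma'$ and $\beta=\beta'$. One case needs separate attention: if the alternative has $\gamma'=0$ or $\bar\nu'=\nu_Z$, the true nonzero $\gamma w^{\bar\nu-1/2}$ term has nothing to cancel against, which is a contradiction. The diagonal of $M$, namely $\beta\sigma_Z^2+\gamma$, serves as a consistency check.

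For the last step, $\var(Y\mid Z)$ has off-diagonal entries $\sigma_U^2 C(\phi,\nu_U;W_{ij}) - (\Sigma_{UZ}\Sigma_{ZZ}^{-1}\Sigma_{ZU})_{ij}$. The second term is already known, since $\Sigma_{UZ}$ is determined by $\gamma,\phi,\bar\nu$. The off-diagonal entries at a single pair with $C(\phi,\nu_U;W_{ij})\neq0$ therefore identify $\sigma_U^2$; such a pair exists because Mat\'ern correlations are positive. Then $\rho=\gamma/(\sigma_U\sigma_Z)$, and the diagonal gives $\sigma_\epsilon^2$.

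I expect the main obstacle to be making the asymptotic separation rigorous when $n$ varies. The identified matrices change with $n$, and the pairs $W_{ij}\to\infty$ come from different $n$. I would therefore formalize identifiability as equality of the observed laws for every $n$, and take limits along a sequence of pairs $(i_n,j_n)$ with $W_{i_nj_n}\to\infty$. A further subtlety is that the remainder in the $K_\nu$ expansion, of relative order $O(1/w)$, must be controlled uniformly over the two competing parameter sets. It must also be dominated by the gap $w^{|\bar\nu-\nu_Z|}$ between exponents, or handled by equating the leading terms successively.
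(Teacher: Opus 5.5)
Your proposal is correct and follows the paper's proof essentially step for step: $\var(Z)$ yields $(\sigma_Z,\phi,\nu_Z)$ through the large-argument asymptotics of $K_\nu$, the cross-covariance $\beta\Sigma_{ZZ}+\Sigma_{UZ}$ yields $(\beta,\rho\sigma_U,\nu_U)$ through asymptotic linear independence of common-range Mat\'ern functions, and the remaining second moment of $Y$ yields $\sigma_U$, $\sigma_\epsilon^2$ and $\rho$. The differences are minor---you get the three-term independence by peeling off power exponents rather than from the order-monotonicity $K_{\nu_1}>K_{\nu_2}$, you use $\var(Y\mid Z)$ instead of $\var(Y)$, and you explicitly rule out an alternative with $(\tilde\nu_U+\nu_Z)/2=\nu_Z$, a case the paper's proof skips by tacitly assuming $\tilde\nu_U\neq\tilde\nu_Z$---and your closing worry about uniform remainder control is unnecessary, because peeling finitely many fixed functions needs only each one's $1+o(1)$ factor.
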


The conditions on the model parameters are essentially the same as those in Theorem \ref{thm:flexible-bivariate-id}, since the smoothness parameter for the cross-covariance is $(\nu_U+\nu_Z)/2$. 
We conjecture that even for a finite set of locations, the model with a parsimonious Mat\'ern covariance structure is identifiable. However, due to the complexity of the Bessel function in Mat\'ern covariance function, we can only prove identifiability when the data contain location pairs with very large distances, which is still common for environmental data.
In the supplementary material, we also establish finite-sample identifiability of the parsimonious Matérn model in \eqref{eq:pars-matern}, provided that $\nu_U$ and $\nu_Z$ are known and distinct.

\appendix
\section*{Supplementary Material}
\addcontentsline{toc}{section}{Supplementary Material}

\renewcommand{\thesection}{A\arabic{section}}
\renewcommand{\thetheorem}{A\arabic{theorem}}
\renewcommand{\thelemma}{A\arabic{lemma}}
\renewcommand{\theproposition}{A\arabic{proposition}}
\renewcommand{\thecorollary}{A\arabic{corollary}}
\renewcommand{\theequation}{A\arabic{equation}}
\renewcommand{\thefigure}{A\arabic{figure}}
\renewcommand{\thetable}{A\arabic{table}}
\renewcommand{\theremark}{A\arabic{remark}}

This appendix is lengthy. We briefly describe the contents of this supplementary material. Section \S \ref{sec:proofs-framework} gives proofs for some of our key assumptions in our framework and shows some computational details for the expressions for our observable parameters. Following it, Section \S \ref{sec:proofs-discrete-car} covers proofs, including edge cases, for the conditionally defined CAR model. Next, Sections \S \ref{sec:proofs-lcar-nonpars} and \S \ref{sec:proofs-lcar-pars} provide proofs for the non-parsimonious and parsimonious Leroux CAR models, respectively, with special attention being given to examples that illuminate the sharpness of some of the conditions. Then, Section \S \ref{sec:proofs-lmcoreg} contains the proof for the linear model of coregionalization. Continuing the results for geostatistical models, Section \S \ref{sec:proofs-continuous-bivariate} contains proofs for the flexible bivariate continuous case and Section \S \ref{sec:proofs-matern} gives proofs for the parsimonious Mat\'ern covariance specifically. Finishing proofs directly related to continuous, geostatistical data, \S \ref{sec:linear-non-ind} details the $K$-linear independence (or non-independence) of various covariance functions often seen in spatial statistics. 
Finally, Section \S \ref{sec:pd} gives brief notes on positive definiteness for our models.

\section{Technical details for \S \ref{sec:model}}\label{sec:proofs-framework}

\subsection{Simplification of the linear structrual model}\label{sec:no-measured-covariates}
We show that for the purposes of identifiability, it suffices to consider the model in \eqref{eq:simp-model} with no observed covariates. 
From \eqref{eq:sp-conf-model} and \eqref{eq:ZU_X}, 
\begin{align*}
    Y 
    & = 
    f(X_{1:n}, s_{1:n}) + Z \beta_Z + U + \epsilon\\
    & = f(X_{1:n}, s_{1:n}) + g(X_{1:n}, s_{1:n}) \beta_Z + \{ Z - E(Z\mid X_{1:n}) \} \beta_Z + U + \epsilon\\
    & = E(Y\mid X_{1:n}) + \{ Z - E(Z\mid X_{1:n}) \} \beta_Z + U + \epsilon,
\end{align*}
where the second equality follows from \eqref{eq:ZU_X}, and the last equality holds because $E(U\mid X_{1:n}) = E(\epsilon \mid X_{1:n}) = 0$. 

Define $\tilde{Y} = Y- E(Y\mid X_{1:n})$ and $\tilde{Z} = Z - E(Z\mid X_{1:n})$. Then
\[
\tilde{Y} =
\tilde{Z}\beta_Z + U + \epsilon, 
\quad 
\epsilon \sim \mathcal{N}(0, \sigma_{\epsilon}^2 I_n), 
\quad 
\begin{pmatrix}
        U\\
        \tilde{Z}
    \end{pmatrix} 
    \sim \mathcal{N}
    \left(  
    \begin{pmatrix}
        0\\
        0
    \end{pmatrix}, 
    \begin{pmatrix}
        \Sigma_{UU} & \Sigma_{UZ}\\
        \Sigma_{ZU} & \Sigma_{ZZ}
    \end{pmatrix}
    \right), 
\] 
which 
has the same form as the simplified model in \eqref{eq:simp-model}.
Moreover, 
compared to the original model in \eqref{eq:sp-conf-model} and \eqref{eq:ZU_X}, 
the parameter of interest $\beta_Z$ is unchanged,  
nor the covariance matrix $\Sigma$ or the noise variance $\sigma^2_\epsilon$.
Because the difference between $(Y,Z)$ and $(\tilde{Y}, \tilde{Z})$ is fully determined by the observed data distribution, 
the identifiability of the parameters $\beta_Z, \Sigma$ and $\sigma^2_\epsilon$ is the same under both the original and the simplified models.

\subsection{
Representation for the observed data distribution
}
Under the model in \eqref{eq:simp-model}, 
the observed data distribution is the distribution of $(Y,Z)$. 
Because $(Y,Z)$ is jointly Gaussian with zero means, the distribution of $(Y,Z)$ can be uniquely determined by either
\begin{align}\label{eq:joint_ZY}
    \var(Z) & = \Sigma_{ZZ}, 
    \nonumber
    \\
    \cov(Y, Z) & = \beta \Sigma_{ZZ} + \Sigma_{UZ}, \nonumber
    \\
    \var(Y) & = \beta^2 \Sigma_{ZZ} + \Sigma_{UU} + \beta \Sigma_{ZU} + \beta \Sigma_{UZ} + \sigma^2_{\epsilon} I_n, 
\end{align}
or 
\begin{align}\label{eq:condvar}
    \var(Z) & = \Sigma_{ZZ},
    \nonumber
    \\
    \var(Y\mid Z) & = 
    \Sigma_{UU} - \Sigma_{UZ} \Sigma_{ZZ}^{-1}\Sigma_{ZU} + \sigma^2_\epsilon  I_n, \nonumber
    \\
    \cov(Y,Z) \var(Z)^{-1}
    & = \beta I_n + \Sigma_{UZ}\Sigma_{ZZ}^{-1}.
\end{align}
The expressions in \eqref{eq:joint_ZY} and \eqref{eq:condvar} follow by some algebra. 
The quantities in \eqref{eq:joint_ZY} directly characterize the joint distribution of $(Y,Z)$, while the quantities in \eqref{eq:condvar} characterize the marginal distribution of $Z$ and the conditional distribution of $Y$ given $Z$. 
Each of the two forms in \eqref{eq:joint_ZY} and \eqref{eq:condvar} for characterizing the observed data distribution can be preferred under different model assumptions.

\section{Technical details for \S \ref{sec:car-papadogeorgu}}\label{sec:proofs-discrete-car}

\subsection{Proof of Theorem \ref{thm:car-identifiability}}
\label{sec:car-proofs}

To prove Theorem \ref{thm:car-identifiability}, we first define some notations. As in \S\ref{sec:car-papadogeorgu}, define $W, D, W_{[b]}$s and $D_{[b]}$s. 
Define further $\Lambda_{[b]}$ and $\Gamma_{[b]}$ via the spectral decomposition  of $D_{[b]}^{-1/2} W_{[b]} D_{[b]}^{-1/2}$: 
\begin{align*}%
    D_{[b]}^{-1/2} W_{[b]} D_{[b]}^{-1/2} = \Gamma_{[b]} \Lambda_{[b]} \Gamma_{[b]}^\top,
\end{align*}
where $\Lambda_{[b]}$ is a diagonal matrix and $\Gamma_{[b]}$ is an orthogonal matrix. 
Without loss of generality, we assume that the locations are ordered according to the connected components to which they belong, such that both $W$ and $D$ are block-diagonal matrices formed by the $W_{[b]}$s and $D_{[b]}$s. 
Let $\Lambda$ and $\Gamma$ be the block-diagonal matrices formed by the $\Lambda_{[b]}$s and $\Gamma_{[b]}$s. We can then verify that  
$D^{-1/2} W D^{-1/2} = \Gamma \Lambda \Gamma^\top$ is a spectral decomposition $D^{-1/2} W D^{-1/2}$, 
where $\Lambda$ is a diagonal matrix and $\Gamma$ is an orthogonal matrix.

We need the following three lemmas to prove Theorem \ref{thm:car-identifiability}.  

\begin{lemma}\label{lem:cond_Y_Z}
Define $W, D, W_{[b]}$s and $D_{[b]}$s
as in \S\ref{sec:car-papadogeorgu}, 
and recall the definition of $\Lambda_{[b]}$s,
$\Gamma_{[b]}$s, $\Lambda$ and $\Gamma$ at the beginning of \S\ref{sec:car-proofs}. 
Let $(\tau_U, \varphi_U, \tau_{\epsilon})$ be a set of parameters such that 
$\tau_U>0$, $\tau_{\epsilon}>0, \varphi_U \ne 0$ and $I-\varphi_U \Lambda$ is positive definite, 
and $(\tilde{\tau}_U, \tilde{\varphi}_U, \tilde{\tau}_{\epsilon})$ be another set of parameters such that $\tilde{\tau}_U>0$, $\tilde{\tau}_{\epsilon}>0$ and $I-\tilde{\varphi}_U \Lambda$ is positive definite.
Suppose that 
\begin{align}\label{eq:cond_Y_Z_eq}
    \tau_U^{-1} (I-\varphi_U \Lambda)^{-1} 
    +\tau_{\epsilon}^{-1} \Gamma^\top D \Gamma = 
    \tilde{\tau}_U^{-1} (I-\tilde{\varphi}_U \Lambda)^{-1} 
    +\tilde{\tau}_{\epsilon}^{-1} \Gamma^\top D \Gamma. 
\end{align}
or equivalently, for all $1\le b\le B$, 
\begin{align}\label{eq:cond_Y_Z_eq_k}
    \tau_U^{-1} (I-\varphi_U \Lambda_{[b]})^{-1} 
    +\tau_{\epsilon}^{-1} \Gamma_{[b]}^\top D_{[b]} \Gamma_{[b]} = 
    \tilde{\tau}_U^{-1} (I-\tilde{\varphi}_U \Lambda_{[b]})^{-1} 
    +\tilde{\tau}_{\epsilon}^{-1} \Gamma_{[b]}^\top D_{[b]} \Gamma_{[b]}. 
\end{align}
\begin{enumerate}[label=(\roman*)]
    \item If $D_{[b]}$ contains distinct diagonal elements for some $1\le b \le B$, 
    then $(\tau_U, \varphi_U, \tau_{\epsilon}) = (\tilde{\tau}_U, \tilde{\varphi}_U, \tilde{\tau}_{\epsilon})$.
    \item If $D_{[b]}$ is a scaled identity matrix for all $1\le b \le B$, and 
    $\Lambda_{[b]}$ has at least three distinct diagonal elements for some $1\le b \le B$,
    then $(\tau_U, \varphi_U, \tau_{\epsilon}) = (\tilde{\tau}_U, \tilde{\varphi}_U, \tilde{\tau}_{\epsilon})$. 
\end{enumerate}
\end{lemma}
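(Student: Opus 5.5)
Fix a component $b$ for which the hypothesis in (i) or (ii) holds, and work with \eqref{eq:cond_Y_Z_eq_k} for that $b$ only. Write $a=\tau_\epsilon^{-1}-\tilde\tau_\epsilon^{-1}$ and $F(\lambda)=\tau_U^{-1}(1-\varphi_U\lambda)^{-1}-\tilde\tau_U^{-1}(1-\tilde\varphi_U\lambda)^{-1}$. Then \eqref{eq:cond_Y_Z_eq_k} becomes
\begin{align*}
F(\Lambda_{[b]}) = -a\,\Gamma_{[b]}^\top D_{[b]}\Gamma_{[b]} .
\end{align*}
The left side is diagonal. The plan is to first show $a=0$, and then use rational-function arguments to get $F\equiv 0$ on enough points. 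From there the parameters follow.

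\textbf{Case (i).} If $a\neq 0$, then $\Gamma_{[b]}^\top D_{[b]}\Gamma_{[b]}$ is diagonal, so $D_{[b]}$ commutes with $\Gamma_{[b]}\Lambda_{[b]}\Gamma_{[b]}^\top=D_{[b]}^{-1/2}W_{[b]}D_{[b]}^{-1/2}$. Since both are simultaneously diagonalized by $\Gamma_{[b]}$, $D_{[b]}$ also commutes with $D_{[b]}^{-1/2}W_{[b]}D_{[b]}^{-1/2}$, and hence with $W_{[b]}$. Entrywise this says $(d_i-d_j)W_{ij}=0$. So $d_i=d_j$ whenever $W_{ij}>0$. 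Because the component is connected, propagating along paths gives $D_{[b]}$ constant, which contradicts the hypothesis. Hence $a=0$, i.e.\ $\tau_\epsilon=\tilde\tau_\epsilon$, and $F(\lambda)=0$ at every eigenvalue of $\Lambda_{[b]}$.

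We still need enough distinct eigenvalues. Perron--Frobenius applied to the connected nonnegative matrix $W_{[b]}$ gives that $1$ is a simple eigenvalue of $D_{[b]}^{-1/2}W_{[b]}D_{[b]}^{-1/2}$. The trace is zero, so there is also a negative eigenvalue. This yields at least two distinct eigenvalues. If all the others were equal, say to $\mu$, I would argue that the matrix equals $\Gamma(\mu I+(1-\mu)vv^\top)\Gamma^\top$ with $v\propto D_{[b]}^{1/2}\mathbf 1$. Then the $(i,j)$ off-diagonal entries, which equal $W_{ij}/\sqrt{d_id_j}$, would be $(1-\mu)\sqrt{d_id_j}/\sum d$, and diagonal zeros would force $\mu=-(1-\mu)d_i/\sum d$ for all $i$. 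That means the $d_i$ are constant, a contradiction. So there are at least three distinct eigenvalues.

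Now use the rational function. $F(\lambda)=0$ is equivalent to
\begin{align*}
\tau_U^{-1}(1-\tilde\varphi_U\lambda)-\tilde\tau_U^{-1}(1-\varphi_U\lambda)=0 ,
\end{align*}
which is linear in $\lambda$. A linear function vanishing at two distinct points vanishes identically. This gives $\tau_U=\tilde\tau_U$ and $\tau_U^{-1}\tilde\varphi_U=\tilde\tau_U^{-1}\varphi_U$, so $\varphi_U=\tilde\varphi_U$. (Two distinct eigenvalues would already suffice here, but three are available.)

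\textbf{Case (ii).} Here $D_{[b]}=dI$, so the equation reads $F(\Lambda_{[b]})=-ad\,I$. Thus $G(\lambda)=F(\lambda)+ad$ vanishes at three distinct points. Clearing denominators, $G$ becomes a polynomial of degree at most $2$, so it vanishes identically. Comparing coefficients (constant, $\lambda$, $\lambda^2$), or equivalently comparing the pole of $F$ at $1/\varphi_U$ with the other terms, gives the result. Since $\varphi_U\neq0$, the pole of $\tau_U^{-1}(1-\varphi_U\lambda)^{-1}$ must cancel, which forces $\tilde\varphi_U=\varphi_U$ and $\tilde\tau_U=\tau_U$. Then $F\equiv 0$, and so $a=0$.

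\textbf{Main obstacle.} The delicate step is in case (i): showing that a connected component with nonconstant $D_{[b]}$ has at least two usable eigenvalues. The Perron--Frobenius plus rank-one argument above handles it. The commuting argument itself should be routine.
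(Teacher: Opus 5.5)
Your proof is correct. Case (i) matches the paper in its first half and takes a slightly different route at the end. Case (ii) follows a genuinely different and shorter route.

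\textbf{Case (i).} Like the paper, you use the commutation argument plus connectivity to force a nonzero off-diagonal entry in $\Gamma_{[b]}^\top D_{[b]}\Gamma_{[b]}$, which gives $\tau_\epsilon=\tilde\tau_\epsilon$.

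For the remaining parameters, the paper works with the full matrices. It inverts to get $\tau_U(I-\varphi_U\Lambda)=\tilde\tau_U(I-\tilde\varphi_U\Lambda)$, conjugates back to $(\tau_U-\tilde\tau_U)D=(\tau_U\varphi_U-\tilde\tau_U\tilde\varphi_U)W$, and reads off the diagonal and off-diagonal entries. You instead note that a linear function of $\lambda$ vanishing at two distinct eigenvalues of $\Lambda_{[b]}$ is identically zero. That is equally valid.

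However, you misjudge where the difficulty lies. Two distinct eigenvalues follow at once from $\tr(\Lambda_{[b]})=\tr(W_{[b]}D_{[b]}^{-1})=0$ together with $\Lambda_{[b]}\neq 0$. The Perron--Frobenius and rank-one detour is therefore unnecessary. It also contains a notational slip: the rank-one matrix should be $\mu I+(1-\mu)vv^\top$ with $v=D_{[b]}^{1/2}\mathbf 1/(\mathbf 1^\top D_{[b]}\mathbf 1)^{1/2}$, not conjugated by $\Gamma$.

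\textbf{Case (ii).} This is where your argument genuinely differs. The paper proceeds in several steps:
\begin{itemize}
\item it first rules out $\tilde\varphi_U=0$;
\item it derives a cross-ratio identity expressing $\tau_\epsilon$ in terms of the diagonal entries $A_{ii}$ and the $\lambda_i$;
\item it shows this identity is nondegenerate, because $(1-\varphi_U\lambda)^{-1}$ cannot be affine on three points;
\item it concludes $\tau_\epsilon=\tilde\tau_\epsilon$ and then reuses the case (i) step.
\end{itemize}

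You clear denominators in $F(\lambda)+ad$. This gives a polynomial of degree at most two that vanishes at three points, so it is identically zero. The pole at $1/\varphi_U$, or equivalently comparing coefficients, then yields all three parameters at once. Your version handles $\tilde\varphi_U=0$ automatically, since in that case there is no pole to cancel the one from the true parameters, and it avoids the ratio bookkeeping entirely. Both arguments ultimately rest on the same fact: a nonzero quadratic has at most two roots.
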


\begin{proof}[Proof of Lemma \ref{lem:cond_Y_Z}(i)]
We first consider the case where $D_{[b]}$ contains distinct diagonal elements, for some $1\le b\le B$. 
For descriptive convenience, 
we use the notation $[b]$ to denote the set of indices that correspond to locations in $\mathcal{M}_b$. 

First, we prove that $\Gamma_{[b]}^\top D_{[b]} \Gamma_{[b]}$ must contain some nonzero off-diagonal entry.
We prove this by contradiction. 
Assume that $\Gamma_{[b]}^\top D_{[b]} \Gamma_{[b]}$ is a diagonal matrix. By definition, $\Gamma_{[b]}^\top D_{[b]}^{-1/2} W_{[b]} D_{[b]}^{-1/2} \Gamma_{[b]} =  \Lambda_{[b]}$ is also a diagonal matrix. 
Thus, $D_{[b]}$ and $D_{[b]}^{-1/2} W_{[b]} D_{[b]}^{-1/2}$ can be 
simultaneously diagonalized. 
This implies that they commute with each other, i.e., 
$D_{[b]} D_{[b]}^{-1/2} W_{[b]} D_{[b]}^{-1/2} = D_{[b]}^{-1/2} W_{[b]} D_{[b]}^{-1/2} D_{[b]}$, which is  equivalent to that $D_{[b]} W_{[b]} = W_{[b]} D_{[b]}$. 
Consequently, for any $i, j \in [b]$, 
$D_{ii}W_{ij} = W_{ij} D_{jj}$. 
If $W_{ij}\ne 0$, then $D_{ii}=D_{jj}$. 
By definition, all units in $[b]$ are connected. Thus, for any $i,j\in [b]$, there is a sequence $s_{m_1},s_{m_2},\ldots,s_{m_l}$ such that $m_1 = i, m_l = j,$ and  $W_{m_r,m_{r+1}}>0$ for every $1\le r \le l-1$, which implies that $D_{ii}= D_{m_1 m_1} = \cdots = D_{jj}$. Therefore, $D_{ii}$ must be identical for all $i\in [b]$. 
This contradicts the assumption that $D_{[b]}$ contains distinct diagonal elements. 

Second, we prove that $\tau_{\epsilon} = \tilde{\tau}_{\epsilon}$. 
From the discussion before, 
$\Gamma_{[b]}^\top D_{[b]} \Gamma_{[b]}$ contains nonzero off-diagonal elements. 
Thus, there must exist $i, j \in [b]$ with $i\ne j$, such that $[\Gamma^\top D \Gamma]_{ij} \ne 0$.
Note that 
both $\tau_U^{-1} (I-\varphi_U \Lambda)^{-1}$ and $\tilde{\tau}_U^{-1} (I-\tilde{\varphi}_U \Lambda)^{-1}$ are diagonal matrices. 
From \eqref{eq:cond_Y_Z_eq}, we immediately have 
$\tau_{\epsilon}^{-1} [\Gamma^\top D \Gamma]_{ij} = \tilde{\tau}_{\epsilon}^{-1} [\Gamma^\top D \Gamma]_{ij}$. 
Because $[\Gamma^\top D \Gamma]_{ij} \ne 0$, this further implies that $\tau_{\epsilon} = \tilde{\tau}_{\epsilon}$.

Third, we prove that $\tau_U = \tilde{\tau}_U$ and $\varphi_U = \tilde{\varphi}_U$. 
Because $\tau_{\epsilon} = \tilde{\tau}_{\epsilon}$, 
from \eqref{eq:cond_Y_Z_eq}, 
\begin{align}\label{eq:proof_car_tau_phi_U}
    & \tau_U (I-\varphi_U \Lambda) = \tilde{\tau}_U (I-\tilde{\varphi}_U \Lambda) 
    \nonumber
    \\
    \Longrightarrow \ & 
    (\tau_U - \tilde{\tau}_U) I = ( \tau_U \varphi_U - \tilde{\tau}_U \tilde{\varphi}_U ) \Lambda
    \ \Longrightarrow \ 
    (\tau_U - \tilde{\tau}_U) I = ( \tau_U \varphi_U - \tilde{\tau}_U \tilde{\varphi}_U ) \Gamma \Lambda \Gamma^\top 
    \nonumber
    \\
    \Longrightarrow \ & 
    (\tau_U - \tilde{\tau}_U) I = ( \tau_U \varphi_U - \tilde{\tau}_U \tilde{\varphi}_U ) D^{-1/2} W D^{-1/2}
    \ \Longrightarrow \  
    (\tau_U - \tilde{\tau}_U) D = ( \tau_U \varphi_U - \tilde{\tau}_U \tilde{\varphi}_U ) W.
\end{align}
The diagonal elements of $D$ are positive and all of the diagonal elements of $W$ are zero. From \eqref{eq:proof_car_tau_phi_U}, we can then derive that $\tau_U-\tilde{\tau}_U =0$. 
Moreover, $W$ must have a nonzero off-diagonal element; otherwise all entries of $D$ are zero, violating our assumption that all diagonal elements of $D$ are positive in \S\ref{sec:car-papadogeorgu}. 
Because $D$ is diagonal, 
from \eqref{eq:proof_car_tau_phi_U}, then $\tau_U \varphi_U - \tilde{\tau}_U \tilde{\varphi}_U = 0$. 
Because $\tilde{\tau}_U = \tau_U>0$,  $\varphi_U = \tilde{\varphi}_U$.

From the above, $(\tau_U, \varphi_U, \tau_{\epsilon}) = (\tilde{\tau}_U, \tilde{\varphi}_U, \tilde{\tau}_{\epsilon})$, i.e., Lemma \ref{lem:cond_Y_Z}(i) holds. 
\end{proof}

\begin{proof}[Proof of Lemma \ref{lem:cond_Y_Z}(ii)]
We then consider the case where $D_{[b]}$ is a scaled identity matrix for all $1\le b \le B$. 
Let $D_{[b]} = d_{[b]} I$ with $d_{[b]}>0$ for $1\le b \le B$. Then $ \Gamma_{[b]}^\top D_{[b]} \Gamma_{[b]} = d_{[b]}I$ for all $1\le b \le B$.
Suppose that for some $1\le m \le B$, 
$\Lambda_{[m]}$ contains at least three distinct diagonal elements. 
We introduce $A$ to denote the quantity in \eqref{eq:cond_Y_Z_eq}, with $A_{[b]}$ being the quantity in \eqref{eq:cond_Y_Z_eq_k} for $1\le b \le B$.

First, we show that $\varphi_{U}' \ne 0$. We prove this by contradiction.  
If $\varphi_{U}' = 0$, 
from \eqref{eq:cond_Y_Z_eq_k}, 
$$A_{[m]} = \tilde{\tau}_U^{-1} (I-\tilde{\varphi}_U \Lambda_{[m]})^{-1} 
    +\tilde{\tau}_{\epsilon}^{-1} \Gamma_{[m]}^\top D_{[m]} \Gamma_{[m]} = \tilde{\tau}_U^{-1} I 
    +\tilde{\tau}_{\epsilon}^{-1} d_{[m]} I = (\tilde{\tau}_U^{-1}+\tilde{\tau}_{\epsilon}^{-1} d_{[m]})I$$ 
will be a scaled identity matrix. 
However, because $\varphi_U \ne 0$ and $\Lambda_{[m]}$ has three distinct diagonal elements, 
from \eqref{eq:cond_Y_Z_eq_k}, 
$$
A_{[b]} = \tau_U^{-1} (I-\varphi_U \Lambda_{[b]})^{-1} 
    +\tau_{\epsilon}^{-1} \Gamma_{[b]}^\top D_{[b]} \Gamma_{[b]}
= \tau_U^{-1} (I-\varphi_U \Lambda_{[b]})^{-1} + \tau_{\epsilon}^{-1} d_{[b]} I
$$
must have distinct diagonal elements, leading to a contradiction. 

Second, we try to solve $\tau_{\epsilon}$ from $A_{[m]}, \Lambda_{[m]}$ and $d_{[m]}$. 
For any $i\in [m]$, from \eqref{eq:cond_Y_Z_eq_k}, we have 
\begin{align*}
    A_{ii} = \tau_U^{-1} (1- \varphi_U \lambda_i)^{-1} + \tau_\epsilon^{-1} d_{[m]}
    \ \Longleftrightarrow \ 
    \tau_U( A_{ii} - \tau_\epsilon^{-1} d_{[m]} ) = (1- \varphi_U \lambda_i)^{-1}, 
\end{align*}
where $\lambda_i$ is the diagonal element of $\Lambda$ corresponding to location $i$. 
For any $i,j\in [m]$, we then have 
\begin{align*}
    & \frac{A_{ii} - \tau_\epsilon^{-1} d_{[m]}}{A_{jj} - \tau_\epsilon^{-1} d_{[m]}}
    = 
    \frac{1- \varphi_U \lambda_j}{1- \varphi_U \lambda_i}
    \\
    \Longleftrightarrow \  & 
    A_{ii} - A_{jj} = \varphi_U  \left\{ \lambda_iA_{ii}- \lambda_j A_{jj} - 
    (\lambda_i -\lambda_j) \tau_\epsilon^{-1} d_{[m]} 
    \right\}
\end{align*}
Using our assumption that $\varphi_U \neq 0$, for any $i,j, s, t\in [m]$ such that $\lambda_i \ne \lambda_j$ and $\lambda_s \ne \lambda_t$, $A_{ii}\ne A_{jj}$, $A_{ss}\ne A_{tt}$, and thus
\begin{align}\label{eq:solve_tau_eps}
    \frac{A_{ii} - A_{jj}}{A_{ss} - A_{tt}} = 
    \frac{\lambda_iA_{ii}- \lambda_j A_{jj} - 
    (\lambda_i -\lambda_j) \tau_\epsilon^{-1} d_{[m]} 
    }{
    \lambda_sA_{ss}- \lambda_t A_{tt} - 
    (\lambda_s -\lambda_t) \tau_\epsilon^{-1} d_{[m]} 
    }.
\end{align}
Note that $d_{[m]}$ is positive. 
From \eqref{eq:solve_tau_eps}, $\tau_{\epsilon}$ can be uniquely determined by $A_{[m]}, \Lambda_{[m]}$ and $d_{[m]}$ as long as 
\begin{align*}
    (A_{ii}-A_{jj})(\lambda_s-\lambda_t) \ne (A_{ss} - A_{tt}) (\lambda_i -\lambda_j) 
    \Longleftrightarrow
    \frac{A_{ii}-A_{jj}}{\lambda_i -\lambda_j} \ne \frac{A_{ss} - A_{tt}}{\lambda_s-\lambda_t}. 
\end{align*}

Third, we prove that $\tau_{\epsilon} = \tilde{\tau}_{\epsilon}$. 
We prove this by contradiction. 
Assume that $\tau_{\epsilon} \ne \tilde{\tau}_{\epsilon}$. 
Let $i,j,t$ be the three indices in $[m]$ such that $\lambda_i, \lambda_j$ and $\lambda_t$ are mutually different. 
From the discussion before, 
\begin{align*}
    & \frac{A_{ii}-A_{jj}}{\lambda_i -\lambda_j} = \frac{A_{ii} - A_{tt}}{\lambda_i-\lambda_t}
    = 
    \frac{A_{tt}-A_{jj}}{\lambda_t -\lambda_j} \\
    \Longrightarrow \ & 
    \frac{(1- \varphi_U \lambda_i)^{-1} - (1- \varphi_U \lambda_j)^{-1}}{\lambda_i -\lambda_j}
    = 
    \frac{(1- \varphi_U \lambda_i)^{-1} - (1- \varphi_U \lambda_t)^{-1}}{\lambda_i -\lambda_t}
    = 
    \frac{(1- \varphi_U \lambda_t)^{-1} - (1- \varphi_U \lambda_j)^{-1}}{\lambda_t -\lambda_j}.
\end{align*}
Denote the common value of the above three expressions as $b_1$. Then:
\begin{align*} &b_1(\lambda_i - \lambda_j) = (1-\varphi_U \lambda_i)^{-1} - (1-\varphi_U \lambda_j)^{-1}\\
\implies & b_1 \lambda_i - (1-\lambda_U \lambda_i)^{-1} = b_1\lambda_j - (1-\lambda_U \lambda_j)^{-1}.
\end{align*} 
By the same logic, we have 
\begin{align*}
    b_1 \lambda_i - (1-\lambda_U \lambda_i)^{-1} = b_1\lambda_j - (1-\lambda_U \lambda_j)^{-1}
    = b_1\lambda_t - (1- \varphi_u \lambda_t)^{-1}. 
\end{align*}
Denote the common value of the above three expressions as $-b_0$.
Then, for any $\lambda \in \{\lambda_i, \lambda_j, \lambda_t\}$, we have 
\begin{align}\label{eq:equal_tau_epsilon_lambda}
    (1- \varphi_U \lambda)^{-1} = b_0 + b_1 \lambda
    \ \Longrightarrow \ 
    (b_0 + b_1 \lambda)  (1- \varphi_U \lambda) = 1. 
\end{align}
However, \eqref{eq:equal_tau_epsilon_lambda}, as an quadratic equation for $\lambda$, has at most two distinct roots, leading to a contradiction. 
Therefore, $\tau_{\epsilon} = \tilde{\tau}_{\epsilon}$. 

Fourth, we prove that $\tau_U = \tilde{\tau}_U$ and $\varphi_U = \tilde{\varphi}_U$. 
The proof then follows by the same logic as the third step in the proof of Lemma \ref{lem:cond_Y_Z}(i). 

From the above, $(\tau_U, \varphi_U, \tau_{\epsilon}) = (\tilde{\tau}_U, \tilde{\varphi}_U, \tilde{\tau}_{\epsilon})$, i.e., Lemma \ref{lem:cond_Y_Z}(ii) holds. 
\end{proof}

\begin{lemma}\label{lem:mean_Y_Z}
Define $W, D, W_{[b]}$s and $D_{[b]}$s
as in \S\ref{sec:car-papadogeorgu}, 
and recall the definition of $\Lambda_{[b]}$s,
$\Gamma_{[b]}$s, $\Lambda$ and $\Gamma$ at the beginning of \S\ref{sec:car-proofs}.
Consider any given $\tau_U>0$ and $\varphi_U \ne 0$ such that $I-\varphi_U \Lambda$ is positive definite. 
Let $(\beta, \rho, \tau_Z)$ and $(\tilde{\beta}, \tilde{\rho}, \tilde{\tau}_Z)$ be two sets of parameters with $\tau_Z$ and $\tilde{\tau}_Z$ being positive. 
If 
\begin{align}\label{eq:mean_Y_Z}
    \beta I + \rho \sqrt{\tau_Z/\tau_U} (I-\varphi_U \Lambda)^{-1} 
    = 
    \tilde{\beta} I + \tilde{\rho} \sqrt{\tilde{\tau}_Z/\tau_U} (I-\varphi_U \Lambda)^{-1}, 
\end{align}
then $\beta = \tilde{\beta}$ and $\rho \sqrt{\tau_Z} = \tilde{\rho} \sqrt{\tilde{\tau}_Z}$. 
\end{lemma}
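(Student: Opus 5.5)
Set $c = \rho\sqrt{\tau_Z/\tau_U}$ and $\tilde c = \tilde\rho\sqrt{\tilde\tau_Z/\tau_U}$. Then \eqref{eq:mean_Y_Z} reads
\begin{align*}
(\beta-\tilde\beta) I = (\tilde c - c)(I-\varphi_U\Lambda)^{-1}.
\end{align*}
Since $\tau_U>0$ is fixed and common to both sides, the conclusion $\rho\sqrt{\tau_Z} = \tilde\rho\sqrt{\tilde\tau_Z}$ is equivalent to $c=\tilde c$. So it suffices to show that this identity forces $\beta=\tilde\beta$ and $c=\tilde c$.

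The plan is to transport the identity back to the original coordinates, exactly as in the third step of the proof of Lemma \ref{lem:cond_Y_Z}(i).

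First, suppose $\tilde c \ne c$. Then $(I-\varphi_U\Lambda)^{-1}$ is a scalar multiple of $I$, and hence so is its inverse: $I-\varphi_U\Lambda = \kappa I$ for some scalar $\kappa$. Since $\varphi_U\neq 0$, this gives $\Lambda = \varphi_U^{-1}(1-\kappa) I$.

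Conjugating by $\Gamma$ yields $D^{-1/2}WD^{-1/2} = \Gamma\Lambda\Gamma^\top = \varphi_U^{-1}(1-\kappa) I$. Multiplying on both sides by $D^{1/2}$ gives $W = \varphi_U^{-1}(1-\kappa)D$.

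Because $W$ has zero diagonal and $D$ has positive diagonal, the scalar $\varphi_U^{-1}(1-\kappa)$ must be $0$, and therefore $W=0$. But then every column sum of $W$ vanishes, so $D=0$. This contradicts the standing assumption of \S\ref{sec:car-papadogeorgu} that all diagonal elements of $D$ are positive. Hence $\tilde c = c$.

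With $c=\tilde c$, the identity reduces to $(\beta-\tilde\beta)I=0$, so $\beta=\tilde\beta$. Finally, $c=\tilde c$ gives $\rho\sqrt{\tau_Z}=\tilde\rho\sqrt{\tilde\tau_Z}$ after multiplying by $\sqrt{\tau_U}$.

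The only substantive step is ruling out the case in which $(I-\varphi_U\Lambda)^{-1}$ is proportional to the identity. This is where both $\varphi_U\ne0$ and the positivity of $D$ are used.
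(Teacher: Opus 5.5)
Your proof is correct and follows essentially the paper's route: both transport the identity back through $\Gamma$ and $D^{1/2}$ and use the fact that $W$ (zero diagonal, not identically zero) cannot be a scalar multiple of the positive diagonal matrix $D$, with $\varphi_U\neq 0$ needed to pass from $I-\varphi_U\Lambda$ to $\Lambda$. The difference is cosmetic: the paper multiplies through to obtain $\{(\beta-\tilde\beta)-(\tilde c-c)\}D=\varphi_U(\beta-\tilde\beta)W$ and reads off both coefficients directly, getting $\beta=\tilde\beta$ first, whereas you argue by contradiction on $c\neq\tilde c$. Your inversion step is fine because invertibility of $(I-\varphi_U\Lambda)^{-1}$ guarantees the proportionality constant is nonzero.
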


\begin{proof}[Proof of Lemma \ref{lem:mean_Y_Z}]
From \eqref{eq:mean_Y_Z}, we have 
\begin{align*}
    & (\beta - \tilde{\beta}) I 
    = 
    \Big(\tilde{\rho} \sqrt{\tilde{\tau}_Z/\tau_U}- \rho \sqrt{\tau_Z/\tau_U} \Big) (I-\varphi_U \Lambda)^{-1}
    \\
    \Longrightarrow \ & 
    (\beta - \tilde{\beta}) (I-\varphi_U \Lambda)
    = \big( \tilde{\rho} \sqrt{\tilde{\tau}_Z/\tau_U}- \rho \sqrt{\tau_Z/\tau_U} \big) I 
    \\
    \Longrightarrow \ & 
    \big\{ (\beta - \tilde{\beta}) - \big(\tilde{\rho} \sqrt{\tilde{\tau}_Z/\tau_U}- \rho \sqrt{\tau_Z/\tau_U} \big) \big\} I
    = \varphi_U (\beta - \tilde{\beta}) \Lambda
    \\
    \Longrightarrow \ & 
    \big\{ (\beta - \tilde{\beta}) - \big(\tilde{\rho} \sqrt{\tilde{\tau}_Z/\tau_U}- \rho \sqrt{\tau_Z/\tau_U} \big) \big\} I
    = \varphi_U (\beta - \tilde{\beta}) \Gamma \Lambda \Gamma^\top
    \\
    \Longrightarrow \ & 
    \big\{ (\beta - \tilde{\beta}) - \big(\tilde{\rho} \sqrt{\tilde{\tau}_Z/\tau_U}- \rho \sqrt{\tau_Z/\tau_U} \big) \big\} I
    = \varphi_U (\beta - \tilde{\beta}) D^{-1/2} W D^{-1/2}
    \\
    \Longrightarrow \ & 
    \big\{ (\beta - \tilde{\beta}) - \big(\tilde{\rho} \sqrt{\tilde{\tau}_Z/\tau_U}- \rho \sqrt{\tau_Z/\tau_U} \big) \big\} D
    = \varphi_U (\beta - \tilde{\beta}) W.
\end{align*}
The matrix
$D$ must have positive diagonal elements, and consequently $W$ must have nonzero off-diagonal elements. 
These imply that $(\beta - \tilde{\beta}) - ( \rho \sqrt{\tau_Z/\tau_U} - \tilde{\rho} \sqrt{\tilde{\tau}_Z/\tau_U} ) = \varphi_U (\beta - \tilde{\beta}) = 0$.
Because $\varphi_U \ne 0$ and $\tau_U>0$,
we have 
$\beta = \tilde{\beta}$ and $\rho \sqrt{\tau_Z} = \tilde{\rho} \sqrt{\tilde{\tau}_Z}$. 
Therefore, Lemma \ref{lem:mean_Y_Z} holds. 
\end{proof}

\begin{lemma}\label{lem:prec_Z}
Define $W, D, W_{[b]}$s and $D_{[b]}$s
as in \S\ref{sec:car-papadogeorgu}, 
and recall the definition of $\Lambda_{[b]}$s,
$\Gamma_{[b]}$s, $\Lambda$ and $\Gamma$ at the beginning of \S\ref{sec:car-proofs}. 
Consider any given $\varphi_U$ such that $I - \varphi_U \Lambda$ is positive definite. 
Let $(\tau_Z, \varphi_Z, \rho)$ and $(\tilde{\tau}_Z, \tilde{\varphi}_Z, \tilde{\rho})$ be two sets of parameters such that $\tau_Z>0, \tilde{\tau}_Z>0$ and $\rho \sqrt{\tau_Z} = \tilde{\rho} \sqrt{\tilde{\tau}_Z}$. 
If 
\begin{align}\label{eq:prec_Z_lemma}
     \tau_Z \big\{ I - \varphi_Z \Lambda - \rho^2 (I - \varphi_U \Lambda)^{-1} \big\}
     = 
    \tilde{\tau}_Z \big\{ I - \tilde{\varphi}_Z \Lambda - \tilde{\rho}^2 (I - \varphi_U \Lambda)^{-1} \big\}, 
\end{align}
then $\tau_Z=\tilde{\tau}_Z, \varphi_Z=\tilde{\varphi}_Z$, and $\rho = \tilde{\rho}$. 
\end{lemma}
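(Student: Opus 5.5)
The plan is to use the constraint $\rho\sqrt{\tau_Z}=\tilde\rho\sqrt{\tilde\tau_Z}$ to cancel the only nonlinear term in \eqref{eq:prec_Z_lemma}. This reduces the statement to the same linear matrix identity already handled in the third step of the proof of Lemma~\ref{lem:cond_Y_Z}(i).

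First, squaring the constraint gives $\tau_Z\rho^2=\tilde\tau_Z\tilde\rho^2$. Expanding both sides of \eqref{eq:prec_Z_lemma}, the terms $\tau_Z\rho^2(I-\varphi_U\Lambda)^{-1}$ and $\tilde\tau_Z\tilde\rho^2(I-\varphi_U\Lambda)^{-1}$ are therefore identical, and both are well defined because $I-\varphi_U\Lambda$ is positive definite. They cancel, leaving
\begin{align*}
\tau_Z(I-\varphi_Z\Lambda)=\tilde\tau_Z(I-\tilde\varphi_Z\Lambda),
\quad\text{i.e.}\quad
(\tau_Z-\tilde\tau_Z)I=(\tau_Z\varphi_Z-\tilde\tau_Z\tilde\varphi_Z)\Lambda .
\end{align*}
This step uses no property of $\varphi_U$ beyond invertibility; in particular $\varphi_U\neq 0$ is not needed.

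Next, I repeat the chain in \eqref{eq:proof_car_tau_phi_U}. Conjugating by the orthogonal $\Gamma$ turns $\Lambda$ into $D^{-1/2}WD^{-1/2}$, and multiplying on both sides by $D^{1/2}$ yields
\begin{align*}
(\tau_Z-\tilde\tau_Z)D=(\tau_Z\varphi_Z-\tilde\tau_Z\tilde\varphi_Z)W.
\end{align*}
$W$ has zero diagonal and $D$ has positive diagonal entries, so comparing diagonals gives $\tau_Z=\tilde\tau_Z$. Since every diagonal entry of $D$ is positive, $W$ has some nonzero off-diagonal entry, and comparing that entry gives $\tau_Z\varphi_Z=\tilde\tau_Z\tilde\varphi_Z$. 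Dividing by $\tau_Z=\tilde\tau_Z>0$ then gives $\varphi_Z=\tilde\varphi_Z$.

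Finally, substituting $\tau_Z=\tilde\tau_Z>0$ into $\rho\sqrt{\tau_Z}=\tilde\rho\sqrt{\tilde\tau_Z}$ gives $\rho=\tilde\rho$. There is no real obstacle here. The only point requiring care is to cancel the $\rho^2$ terms using the squared constraint rather than trying to separate $\rho$ from $\tau_Z$ directly, which would require linear independence of $I$, $\Lambda$ and $(I-\varphi_U\Lambda)^{-1}$ and hence extra spectral conditions.
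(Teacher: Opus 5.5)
Your proof is correct and essentially the paper's. The only difference is in how $\tau_Z=\tilde\tau_Z$ is obtained: the paper takes traces and uses $\tr(\Lambda)=\tr(WD^{-1})=0$ together with $\rho^2\tau_Z=\tilde\rho^2\tilde\tau_Z$, whereas you cancel the $\rho^2$ terms at the matrix level and compare diagonals of $(\tau_Z-\tilde\tau_Z)D=(\tau_Z\varphi_Z-\tilde\tau_Z\tilde\varphi_Z)W$. Both rest on $W$ having zero diagonal, and your remaining steps ($\rho=\tilde\rho$ from the constraint, $\varphi_Z=\tilde\varphi_Z$ via a nonzero off-diagonal entry of $W$) match the paper.
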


\begin{proof}[Proof of Lemma \ref{lem:prec_Z}]
Note that 
\begin{align*}
    \tr(\Lambda) & = \tr(\Gamma \Lambda \Gamma^\top) = \tr(D^{-1/2} W D^{-1/2}) = \tr( W D^{-1}) = 0,
\end{align*}
where the last equality holds because all the diagonal elements of $W$ are zero and $D$ is a diagonal matrix. 
From \eqref{eq:prec_Z_lemma}, we have 
\begin{align*}
    & \tau_Z \cdot \tr\big( I - \varphi_Z \Lambda - \rho^2 (I - \varphi_U \Lambda)^{-1} \big)
     = 
    \tilde{\tau}_Z \cdot \tr\big( I - \tilde{\varphi}_Z \Lambda - \tilde{\rho}^2 (I - \varphi_U \Lambda)^{-1} \big)
    \\
    \Longrightarrow \ 
    & 
    n \tau_Z - (\rho \sqrt{\tau_Z})^2 \cdot \tr\big( (I - \varphi_U \Lambda)^{-1} \big)
    = 
    n \tilde{\tau}_Z - \left(\tilde{\rho} \sqrt{\tilde{\tau}_Z}\right)^2 \cdot \tr\big( (I - \varphi_U \Lambda)^{-1} \big).
\end{align*}
Because $\rho \sqrt{\tau_Z} = \tilde{\rho} \sqrt{\tilde{\tau}_Z}$, $\tau_Z=\tilde{\tau}_Z$, which further implies that $\rho = \tilde{\rho}$. 
From \eqref{eq:prec_Z_lemma}, 
this then implies that $\varphi_Z \Lambda = \tilde{\varphi}_Z \Lambda$. 
Consequently, 
\begin{align*}
     (\varphi_Z - \tilde{\varphi}_Z) \Lambda = 0
    & \Longrightarrow (\varphi_Z - \tilde{\varphi}_Z) \Gamma \Lambda \Gamma^\top = 0
    \Longrightarrow (\varphi_Z - \tilde{\varphi}_Z) D^{-1/2} W D^{-1/2} = 0
    \\
    & \Longrightarrow (\varphi_Z - \tilde{\varphi}_Z) W = 0 
    \Longrightarrow \varphi_Z = \tilde{\varphi}_Z, 
\end{align*}
where the last step holds because $W$ has nonzero off-diagonal elements. 
Therefore, Lemma \ref{lem:prec_Z} holds. 
\end{proof}

\begin{proof}[Proof of Theorem \ref{thm:car-identifiability}]

Recall the definition of $W, D, W_{[b]}$s and $D_{[b]}$s in \S\ref{sec:car-papadogeorgu}, 
and the definition of $\Lambda_{[b]}$s,
$\Gamma_{[b]}$s, $\Lambda$ and $\Gamma$ at the beginning of \S\ref{sec:car-proofs}. 
We further introduce 
$\tau_\epsilon$ to denote $\sigma_\epsilon^{-2}$.
By definition, we have
$D^{-1/2} W D^{-1/2} = \Gamma \Lambda \Gamma^\top$. 
Consequently, 
\begin{align}\label{eq:DW_simp_U}
    \tau_U (D-\varphi_U W)
    & = 
    D^{1/2} \cdot \tau_U (I- \varphi_U D^{-1/2} W D^{-1/2}) \cdot D^{1/2}
    \nonumber
    \\
    & = 
    D^{1/2} \cdot \tau_U (I- \varphi_U \Gamma \Lambda \Gamma^\top) \cdot D^{1/2}
    \nonumber
    \\
    & = 
    D^{1/2} \Gamma \cdot \tau_U (I- \varphi_U \Lambda ) \cdot \Gamma^\top D^{1/2}, 
\end{align}
and by the same logic, 
\begin{align}\label{eq:DW_simp_Z}
    \tau_Z (D-\varphi_Z W)
    & = 
    D^{1/2} \Gamma \cdot \tau_Z (I- \varphi_Z \Lambda ) \cdot \Gamma^\top D^{1/2}. 
\end{align}

First, we consider the joint covariance matrix in \eqref{eq:Sigma_SP}. 
From \eqref{eq:Sigma_SP} and the block matrix inversion formula, we have 
\begin{align*}
\begin{pmatrix}
        \Sigma_{UU} & \Sigma_{UZ}\\
        \Sigma_{ZU} & \Sigma_{ZZ}
\end{pmatrix}
& = 
\begin{pmatrix} 
        \tau_U(D-\varphi_U W) & -\rho\sqrt{\tau_U\tau_Z} D \\
        -\rho\sqrt{\tau_U\tau_Z} D & \tau_Z(D-\varphi_Z W)
    \end{pmatrix}^{-1}\\
& = 
\begin{pmatrix} 
* \ & * \\
* \ & 
[ \tau_Z(D-\varphi_Z W) -  \rho^2\tau_Z\tau_U D \{ \tau_U(D-\varphi_U W) \}^{-1} D ]^{-1}
\end{pmatrix}^{-1},
\end{align*} 
and  
\begin{align*}
    & \quad \ 
    \begin{pmatrix} 
        \tau_U(D-\varphi_U W) & -\rho\sqrt{\tau_U\tau_Z} D \\
        -\rho\sqrt{\tau_U\tau_Z} D & \tau_Z(D-\varphi_Z W)
    \end{pmatrix}
    = 
    \begin{pmatrix}
        \Sigma_{UU} & \Sigma_{UZ}\\
        \Sigma_{ZU} & \Sigma_{ZZ}
    \end{pmatrix}^{-1}
    \\
    & = 
     \begin{pmatrix}
        (\Sigma_{UU} - \Sigma_{UZ} \Sigma_{ZZ}^{-1} \Sigma_{ZU} )^{-1} & - (\Sigma_{UU} - \Sigma_{UZ} \Sigma_{ZZ}^{-1} \Sigma_{ZU} )^{-1} \Sigma_{UZ} \Sigma_{ZZ}^{-1} \\
        *
         & 
         * 
    \end{pmatrix}.
\end{align*} 
These imply that 
\begin{align}
\label{eq:cond_dist_YZ_1}
    \Sigma_{ZZ}^{-1} 
    & = \tau_Z(D-\varphi_Z W) -  \rho^2\tau_Z\tau_U D \{ \tau_U(D-\varphi_U W) \}^{-1} D, 
    \\
\label{eq:cond_dist_YZ_2}
    \Sigma_{UU} - \Sigma_{UZ} \Sigma_{ZZ}^{-1} \Sigma_{ZU}
    & = \tau_U^{-1}(D-\varphi_U W)^{-1},
    \\
\label{eq:cond_dist_YZ_3}
    \Sigma_{UZ} \Sigma_{ZZ}^{-1} 
    & = 
    \tau_U^{-1}(D-\varphi_U W)^{-1} \rho\sqrt{\tau_U\tau_Z} D. 
\end{align}

Second, we simplify the expressions in  \eqref{eq:condvar} that 
determine the distribution of the observed data $(Y,Z)$. 
The precision matrix for $Z$ is
\begin{align*}
     \var(Z)^{-1} & = \Sigma_{ZZ}^{-1}
     = \tau_Z(D-\varphi_Z W) - \rho^2\tau_Z\tau_U D \{ \tau_U(D-\varphi_U W) \}^{-1} D
     \\
    & = 
    D^{1/2} \Gamma \cdot \tau_Z (I- \varphi_Z \Lambda ) \cdot \Gamma^\top D^{1/2}
    - \rho^2\tau_Z
    D^{1/2}\Gamma 
     (I- \varphi_U \Lambda )^{-1} \Gamma^\top D^{1/2}\\
    & = D^{1/2} \Gamma \cdot \tau_Z 
     \big\{ I - \varphi_Z \Lambda - \rho^2 (I - \varphi_U \Lambda)^{-1} \big\} \cdot \Gamma^\top D^{1/2},
\end{align*}
where the second equality follows from \eqref{eq:cond_dist_YZ_1}, and 
the third equality follows from \eqref{eq:DW_simp_U} and \eqref{eq:DW_simp_Z}.  
The conditional variance of $Y$ given $Z$ is 
\begin{align*}
    \var (Y\mid Z) &= \Sigma_{UU} - \Sigma_{UZ} \Sigma_{ZZ}^{-1}\Sigma_{ZU} + \tau_\epsilon^{-1}  I
    = \tau_U^{-1}(D-\varphi_U W)^{-1} + \tau_\epsilon^{-1} I \\
    & = 
    D^{-1/2} \Gamma 
    \tau_U^{-1} (I- \varphi_U \Lambda )^{-1} \Gamma^\top D^{-1/2} + \tau_\epsilon^{-1} I
    \\
    &= D^{-1/2}\Gamma \cdot
    \big\{ \tau_U^{-1} (I-\varphi_U \Lambda)^{-1} 
    +\tau_{\epsilon}^{-1} \Gamma^\top D \Gamma
    \big\} \cdot
    \Gamma^{\top} D^{-1/2}, 
\end{align*}
where the second equality follows from \eqref{eq:cond_dist_YZ_2} and the third equality follows from \eqref{eq:DW_simp_U}. 
The coefficient from the conditional mean of $Y$ given $Z$ is 
\begin{align*}
    \cov(Y,Z) \var(Z)^{-1} &
     = \beta I + \Sigma_{UZ}\Sigma_{ZZ}^{-1}
    = \beta I + \rho\sqrt{\tau_Z/\tau_U} (D - \varphi_U W)^{-1} D\\
    & = 
    \beta I + \rho\sqrt{\tau_Z/\tau_U}
    D^{-1/2} \Gamma (I-\varphi_U \Lambda)^{-1} \Gamma^\top D^{-1/2} D
    \\
    &=D^{-1/2} \Gamma
    \big\{ \beta I +\rho \sqrt{\tau_Z/\tau_U} (I-\varphi_U \Lambda)^{-1} \big\} 
    \Gamma^{\top} D^{1/2}, 
\end{align*}
where the second equality follows from \eqref{eq:cond_dist_YZ_3}.

Third, we consider the identifiability of the model parameters. 
From the discussion before and noting that $D$ and $\Gamma$ are known from the observed proximity matrix $W$, the observed data distribution can  identify 
\begin{equation*}%
    \beta I + \rho \sqrt{\tau_Z/\tau_U} (I-\varphi_U \Lambda)^{-1}, 
    \ \ 
    \tau_U^{-1} (I-\varphi_U \Lambda)^{-1} 
    +\tau_{\epsilon}^{-1} \Gamma^\top D \Gamma, 
    \ \ 
    \tau_Z 
     \big\{ I - \varphi_Z \Lambda - \rho^2 (I - \varphi_U \Lambda)^{-1} \big\}.
\end{equation*}
Let $(\beta, \tau_Z, \varphi_Z, \tau_U, \varphi_U, \tau_{\epsilon}, \rho)$ denote the true data generating parameters, and consider any parameters $(\tilde{\beta}, \tilde{\tau}_Z, \tilde{\varphi}_Z, \tilde{\tau}_U, \tilde{\varphi}_U, \tilde{\tau}_{\epsilon}, \tilde{\rho})$ leading 
to 
the same observed data distribution. Then 
\begin{align}
    \beta I + \rho \sqrt{\tau_Z/\tau_U} (I-\varphi_U \Lambda)^{-1} & = 
    \tilde{\beta} I + \tilde{\rho} \sqrt{\tilde{\tau}_Z/\tilde{\tau}_U} (I-\tilde{\varphi}_U \Lambda)^{-1}, 
    \label{eq:iden_cond_mean_Y_Z}
    \\ 
    \tau_U^{-1} (I-\varphi_U \Lambda)^{-1} 
    +\tau_{\epsilon}^{-1} \Gamma^\top D \Gamma
    & = 
    \tilde{\tau}_U^{-1} (I-\tilde{\varphi}_U \Lambda)^{-1} 
    +\tilde{\tau}_{\epsilon}^{-1} \Gamma^\top D \Gamma, 
    \label{eq:iden_var_mean_Y_Z}
    \\ 
    \tau_Z \big\{ I - \varphi_Z \Lambda - \rho^2 (I - \varphi_U \Lambda)^{-1} \big\}
    & = 
    \tilde{\tau}_Z \big\{ I - \tilde{\varphi}_Z \Lambda - \tilde{\rho}^2 (I - \tilde{\varphi}_U \Lambda)^{-1} \big\}.
    \label{eq:prec_Z}
\end{align}
Below we prove that these two sets of parameters must be identical. 

First, we prove that, under the conditions in Theorem \ref{thm:car-identifiability}, one of the following two conditions must hold:  
\begin{enumerate}[label=(\roman*)]
    \item $D_{[b]}$ contains distinct diagonal elements for some $1\le b \le B$, 
    \item $D_{[b]}$ is a scaled identity matrix for all $1\le b \le B$, 
    and 
    $\Lambda_{[b]}$ has at least three distinct diagonal elements for some $1\le b \le B$.
\end{enumerate}
Suppose that (i) fails. In this case, 
$D_{[b]}$ is a scaled identity matrix for all $1\le b \le B$. 
Moreover, by the conditions in Theorem \ref{thm:car-identifiability}, $W_{[b]}$ has at least three distinct eigenvalues for some $1\le b \le B$. 
Let $D_{[b]} = d_{[b]} I$ for all $1 \le b \le B$. Then
$\Gamma_{[b]} \Lambda_{[b]} \Gamma_{[b]}^\top = D_{[b]}^{-1/2} W_{[b]} D_{[b]}^{-1/2} = d_{[b]}^{-1} W_{[b]}$ for all $b$. 
This implies that, for all $b$, the diagonal elements of $d_{[b]}\Lambda_{[b]}$ are the eigenvalues of $W_{[b]}$. 
Because $W_{[b]}$ has at least three distinct eigenvalues for some $1\le b\le B$, we can know that $\Lambda_{[b]}$ has at least three distinct diagonal elements for some $1\le b \le B$. Therefore, when (i) fails, (ii) must hold.

Second, we prove $(\tau_U, \varphi_U, \tau_{\epsilon}) = (\tilde{\tau}_U, \tilde{\varphi}_U, \tilde{\tau}_{\epsilon})$. 
This follows immediately from \eqref{eq:iden_var_mean_Y_Z} and Lemma \ref{lem:cond_Y_Z}. 

Third, we prove $\beta = \tilde{\beta}$ and $\rho \sqrt{\tau_Z} = \tilde{\rho} \sqrt{\tilde{\tau}_Z}$. This follows immediately from \eqref{eq:iden_cond_mean_Y_Z} and Lemma \ref{lem:mean_Y_Z}.

Fourth, we prove $(\tau_Z, \varphi_Z, \rho)=(\tilde{\tau}_Z, \tilde{\varphi}_Z, \tilde{\rho})$. This follows immediately from \eqref{eq:prec_Z} and Lemma \ref{lem:prec_Z}. 

From the above, Theorem \ref{thm:car-identifiability} holds.  
\end{proof}

\subsection{Proof of Corollary \ref{cor:car-identifiability-binary}}

To prove Corollary \ref{cor:car-identifiability-binary}, 
we need the following lemma.

\begin{lemma}\label{lemma:three-eigenvalues}
Consider any integer $n\ge 4$, 
and any $n\times n$ symmetric matrix $W$, whose diagonal elements are all zero and off-diagonal elements are either $0$ or $1$.  
Let $D$ be an $n\times n$ diagonal matrix, whose $i$th diagonal element is the sum of the $i$th column of $W$, for $1\le i \le n$. 
If 
\begin{enumerate}[label={(\roman*)}, topsep=1ex,itemsep=-0.3ex,partopsep=1ex,parsep=1ex]
    \item for any $1\le i\ne j \le n$, there exist $(i_1, i_2, \ldots, i_l)$ such that $i_1 = i, i_l=j$, and $W_{i_t i_{t+1}} = 1$ for all $1\le t\le l-1$, 
    i.e., $i$ and $j$ are connected either directly or indirectly, 
    \item $D$ is a scaled identity matrix, in the sense that  $D = m I_n$ for $2 \le  m \le n-2$,
\end{enumerate}
then 
$D^{-1/2} W D^{-1/2}$ has at least three distinct eigenvalues. 
\end{lemma}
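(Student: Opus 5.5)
Since $D = mI_n$, we have $D^{-1/2}WD^{-1/2} = m^{-1}W$, so it suffices to show that the adjacency matrix $W$ of the connected $m$-regular graph has at least three distinct eigenvalues. The plan is to argue by contradiction. Suppose $W$ had at most two distinct eigenvalues. It cannot have only one, because a symmetric matrix with a single eigenvalue is a scalar multiple of $I_n$; since $\tr(W)=0$ this would force $W=0$, contradicting $m\ge 2$. So assume $W$ has exactly two distinct eigenvalues $\mu_1\ne\mu_2$. Then $W$ is diagonalizable with minimal polynomial $(x-\mu_1)(x-\mu_2)$, which gives
\begin{equation*}
W^2 = (\mu_1+\mu_2)W - \mu_1\mu_2 I_n .
\end{equation*}

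Next, identify the eigenvalues. Since $W\mathbf{1} = m\mathbf{1}$, one eigenvalue, say $\mu_1$, equals $m$. Now read off the entries of the identity above. The diagonal gives $(W^2)_{ii} = m = -\mu_1\mu_2 = -m\mu_2$, so $\mu_2 = -1$. For the off-diagonal entries, $(W^2)_{ij}$ counts the common neighbours of $i$ and $j$, and the identity gives $(W^2)_{ij} = (m-1)W_{ij}$.

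The key step is to use this to force completeness. Take any nonadjacent pair $i\ne j$, so that $W_{ij}=0$. Then $i$ and $j$ have no common neighbours, which means every pair of vertices at graph distance exactly $2$ is impossible. Since the graph is connected by condition (i), if it were not complete there would exist two vertices at distance $2$, because a shortest path between a nonadjacent pair has length at least $2$ and its first two steps give such a pair. Hence the graph is complete, so $m = n-1$. This contradicts condition (ii), $m\le n-2$, and therefore $W$, and hence $D^{-1/2}WD^{-1/2}$, has at least three distinct eigenvalues.

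I expect the only delicate point to be the use of the minimal polynomial. Symmetry guarantees diagonalizability, which is what justifies deriving the quadratic matrix identity from the assumption of two distinct eigenvalues. The entrywise reading of that identity is then routine. As an alternative route, one could cite the standard fact that a connected graph with diameter $d$ has at least $d+1$ distinct adjacency eigenvalues; a noncomplete connected graph has $d\ge 2$, which gives the same conclusion.
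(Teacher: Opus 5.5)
Your proof is correct, but it takes a different route from the paper's.

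\textbf{The paper's route.} It first proves, by a maximum-principle argument that uses connectivity, that $m$ is a simple eigenvalue of $W$. Under the two-eigenvalue assumption the other eigenvalue therefore has multiplicity $n-1$, and the trace forces it to equal $-m/(n-1)$. The contradiction is then arithmetic: $\det W=(-1)^{n-1}m^n/(n-1)^{n-1}$ must be an integer, and a prime-factorization argument rules this out when $m<n-1$.

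\textbf{Your route.} You turn the two-eigenvalue assumption into the identity $W^2=(\mu_1+\mu_2)W-\mu_1\mu_2 I_n$ and read it entrywise. Connectivity then enters only through the existence of a pair at graph distance $2$. You need neither simplicity of $m$ nor any number theory. This is the $d=2$ case of the diameter bound you mention, which is the same Brouwer--Haemers fact the paper cites for $D-W$ in the main text.

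\textbf{What each buys.} Your argument is shorter and more robust. In fact your step showing $\mu_2=-1$ is unnecessary: for $i\neq j$ the identity gives $(W^2)_{ij}=(\mu_1+\mu_2)W_{ij}$ whatever the eigenvalues are, and that is all the completeness step uses. So the argument works verbatim for any symmetric nonnegative $W$ with zero diagonal. It also applies directly to $D^{-1/2}WD^{-1/2}$ with non-scalar $D$, since that matrix has the same zero pattern. Hence any connected component that is not fully connected gives at least three distinct eigenvalues. That would let Corollary~\ref{cor:car-identifiability-binary} skip its preliminary bounds on $m$ and on the component size. The paper's argument, by contrast, relies on integrality of $\det W$ and on regularity, so it is tied to the binary, scalar-$D$ setting of this lemma.
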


\begin{proof}[Proof of Lemma \ref{lemma:three-eigenvalues}]
Assume that conditions (i) and (ii) in Lemma \ref{lemma:three-eigenvalues} hold. Then  $D^{-1/2} W D^{-1/2} = m^{-1} W$. 
Thus, it suffices to show that $W$ has at least three distinct eigenvalues.

First, we prove $m$ is an eigenvalue of $W$ and its multiplicity is $1$. 
Let $u = (1, 1, \ldots, 1)^\top$. 
Since $D = mI_n$, by definition,  
$
Wu = m u, 
$
implying that $m$ is an eigenvalue of $W$. We now show that the multiplicity of $m$ is $1$. 
Suppose $v = (v_1,\ldots,v_n)^\top$ is an eigenvector of $W$ associated with the eigenvalue $m$, i.e., $
Wv = m v. 
$
Then, for every $1\le i\le n$, 
$m v_i = \sum_{j \in \partial_i} v_j$,  
where $\partial_i \equiv \{j: W_{ij} = 1, 1\le j \le n\}$. 
Because $D = mI_n$, then for any $1\le i\le n$, $\partial_i$ contains exactly $m$ elements. 
Let $v_{\max} = \max_{1\le i \le n} v_i$, and $a$ be an index such that $v_a = v_{\max}$. 
Since $m v_a = \sum_{j \in \partial_a} v_j \le m v_{\max} = m v_a$,  $v_j = v_{\max}$ for all $j \in \partial_a$. 
From condition (i) in Lemma \ref{lemma:three-eigenvalues}, for any $1\le b \le n$ and $b\ne a$, 
there exists a sequence $\{i_1,\ldots,i_l\}$ such that $i_1=a, i_l=b$, and $W_{i_t, i_{t+1}} =1$ for $1\le t\le l-1$. 
By iteratively applying the previous argument, we have 
$v_{\max} = v_a = v_{i_{1}} = v_{i_{2}} = \ldots = v_{i_{l}} = v_b$. 
Therefore, $v = (v_{\max}, \ldots, v_{\max})^\top = v_{\max} u$. 
Consequently, the multiplicity of the eigenvalue $m$ must be $1$.

Second, we prove by contradiction that $W$ has at least three distinct eigenvalues. 
Suppose that $W$ has only two distinct eigenvalues.
From the discussion before, 
one of the eigenvalues must be $m$ with multiplicity $1$. 
We denote the other eigenvalue by $\lambda$, which must have multiplicity $n-1$. 
Note that the trace of $W$ is zero, and it is also the same as the sum of all eigenvalues of $W$. 
Thus, 
$(n-1)\lambda + m = \tr (W) = 0$, 
implying that $\lambda = -m/(n-1)$. 
Consequently, the determinant of $W$ equals $\det(W) = m \lambda^{n-1} = m \{-(m/(n-1)\}^{n-1} = (-1)^{n-1} m^n/(n-1)^{n-1}$. 
Because all elements of $W$ are $0$ or $1$,  $\det(W)$ must be an integer. Therefore, 
$m^n/(n-1)^{n-1}$ must be an integer. 
Because $m<n-1$, there must exist a prime number $p$ such that 
the prime factorizations of $n-1$ and $m$ contain, respectively, $r_2$ and $r_1$ powers of $p$, for some integers $r_2 > r_1 \ge 0$. 
Because $m^n/(n-1)^{n-1}$ is an integer, 
$p^{nr_1}/p^{(n-1)r_2}$ must also be an integer, which implies that $nr_1 \ge (n-1)r_2$. 
This further implies that $(n-1)/n \le r_1/r_2 \le (r_2-1)/r_2$. Consequently, $n\le r_2$. 
However, 
because $r_2$ is a positive integer, we also have 
$n-1\ge p^{r_2} \ge 2^{r_2} \ge r_2$, leading to a contradiction. 
Therefore, $W$ must have at least three distinct eigenvalues. 

From the above, Lemma \ref{lemma:three-eigenvalues} holds. 
\end{proof}

\begin{proof}[Proof of Corollary \ref{cor:car-identifiability-binary}]
From the proof of Theorem \ref{thm:car-identifiability},
it suffices to prove that one of the following two conditions must hold: 
\begin{enumerate}[label=(\roman*)]
    \item $D_{[b]}$ contains distinct diagonal elements for some $1\le b \le B$, 
    \item $D_{[b]}$ is a scaled identity matrix for all $1\le b \le B$, 
    and 
    $\Lambda_{[b]}$ has at least three distinct diagonal elements for some $1\le b \le B$.
\end{enumerate} 

Suppose that condition (i) fails. In the following, we prove that condition (ii) must hold. When condition (i) fails, $D_{[b]}$ is a scaled identity matrix for all $1\le b \le B$.
From the conditions in Corollary \ref{cor:car-identifiability-binary}, 
there exists some $b\in \{1,2,\ldots,B\}$ such that the locations in the connected component $\mathcal{M}_b$ are not fully connected. 
Let $m$ be the positive integer such that $D_{[b]} = m I$, and $n_{[b]}$ be the size of $\mathcal{M}_b$.

We first show that 
$\mathcal{M}_b$ must contain at least four locations. 
First, $\mathcal{M}_b$ has at least two locations, because all the diagonal elements of $D$ are positive as assumed in \S \ref{sec:car-papadogeorgu}. 
Second, $\mathcal{M}_b$ cannot have exactly two locations; otherwise, it will be fully connected. 
Third, $\mathcal{M}_b$ cannot have exactly three locations. 
This is because the only way for the three locations to be indirectly connected without being fully connected is that one location is directly connected to the other two, while those two are not directly connected to each other, under which $D_{[b]}$ cannot be a scaled identity. 
These then imply that $\mathcal{M}_b$ must contain at least four locations. 

Second, we prove that $2 \le  m \le n_{[b]}-2$. 
By definition, $1\le m \le n_{[b]}-1$. 
Because locations in $\mathcal{M}_b$ are not fully connected, we must have $m \ne n_{[b]}-1$. 
Because there are two locations, say, $i$ and $j$, that are indirectly connected, each of the locations on the path between $i$ and $j$ must have at least two neighbors. Thus, $m\ne 1$. 
These then imply that $2 \le  m \le n_{[b]}-2$.

From Lemma \ref{lemma:three-eigenvalues} and the discussion before, 
$D_{[b]}^{-1/2} W_{[b]} D_{[b]}^{-1/2}$ must have at least three distinct eigenvalues. Consequently, $\Lambda_{[b]}$ has at least three distinct diagonal elements. Thus, condition (ii) must hold. 

From the above, we derive Corollary \ref{cor:car-identifiability-binary}. 
\end{proof}

\subsection{A remark on checking whether $\varphi_U=0$ based on the observed data distribution}

We first prove that $\Lambda$ contains at least two distinct diagonal elements. We prove this by contradiction. 
Suppose that $\Lambda = \tilde{\lambda} I$ for some $\tilde{\lambda}$. Then, by definition, 
$D^{-1/2} W D^{-1/2} = \Gamma \Lambda \Gamma^\top = \tilde{\lambda} \Gamma \Gamma^\top = \tilde{\lambda} I$. 
This implies that 
$W = \tilde{\lambda} D$.
Because $D$ is a diagonal matrix with positive diagonal elements, and $W$ is a matrix with zero diagonal elements and some nonzero off-diagonal elements, $W = \tilde{\lambda} D$ cannot be hold for $\tilde{\lambda}$, leading to a contradiction. 

We then prove that $\cov(Y,Z) \var(Z)^{-1}$ is a scaled identity matrix if and only if $\varphi_U=0$ or $\rho = 0$. 
From the proof of Theorem \ref{thm:car-identifiability}, 
\begin{align*}
    \cov(Y,Z) \var(Z)^{-1} &
     =D^{-1/2} \Gamma
    \big\{ \beta I +\rho \sqrt{\tau_Z/\tau_U} (I-\varphi_U \Lambda)^{-1} \big\} 
    \Gamma^{\top} D^{1/2}. 
\end{align*}
We can know that $\cov(Y,Z) \var(Z)^{-1}$ is a scaled identity matrix if and only if 
\begin{align*}
    & D^{-1/2} \Gamma
    \big\{ \beta I +\rho \sqrt{\tau_Z/\tau_U} (I-\varphi_U \Lambda)^{-1} \big\} 
    \Gamma^{\top} D^{1/2}
    = aI \text{ for some } a
    \\
    \Longleftrightarrow \quad  
    & 
    \beta I +\rho \sqrt{\tau_Z/\tau_U} (I-\varphi_U \Lambda)^{-1} 
    = aI \text{ for some } a
    \\
    \Longleftrightarrow \quad  
    & 
    \beta I +\rho \sqrt{\tau_Z/\tau_U} (I-\varphi_U \Lambda)^{-1} 
    = aI \text{ for some } a
    \\
    \Longleftrightarrow \quad  
    & 
    I-\varphi_U \Lambda
    = aI \text{ for some } a, \text{ or } \rho = 0\\
    \Longleftrightarrow \quad  
    & 
    \varphi_U = 0 \text{ or } \rho = 0. 
\end{align*}

\subsection{A remark on the necessity of $\varphi_U\neq 0$}\label{sec:nece_phiU_car}
Let $(\beta, \tau_Z, \varphi_Z, \tau_U, \varphi_U, \tau_{\epsilon}, \rho)$ be the true data-generating parameters, where we again use $\tau_{\epsilon}$ to denote $\sigma_{\epsilon}^{-2}$ for convenience. 
Below we show that when $\varphi_U = 0$, the model parameters are not identifiable. 
From the proof of Theorem \ref{thm:car-identifiability}, 
to prove nonidentifiability, 
it suffices to find parameters 
$(\tilde{\beta}, \tilde{\tau}_Z, \tilde{\varphi}_Z, \tilde{\tau}_U, \tilde{\varphi}_U, \tilde{\tau}_{\epsilon}, \tilde{\rho})$
that are different from the true data-generating parameters and yield the same values as in \eqref{eq:iden_cond_mean_Y_Z}--\eqref{eq:prec_Z}. 
Let $\tilde{\varphi}_U=\varphi_U=0$, $\tilde{\tau}_U = \tau_U$ and $\tilde{\tau}_\epsilon = \tau_\epsilon$. Then \eqref{eq:iden_cond_mean_Y_Z}--\eqref{eq:prec_Z} hold if and only if 
\begin{align*}
    \beta I + \rho \sqrt{\tau_Z/\tau_U} I & = 
    \tilde{\beta} I + \tilde{\rho} \sqrt{\tilde{\tau}_Z/\tau_U} I, 
    \\ 
    \tau_Z \big( I - \varphi_Z \Lambda - \rho^2 I \big)
    & = 
    \tilde{\tau}_Z \big( I - \tilde{\varphi}_Z \Lambda - \tilde{\rho}^2 I \big).
\end{align*}
These hold obviously when 
\begin{align}\label{eq:non_iden_car_phi_0}
\beta + \rho \sqrt{\tau_Z/\tau_U} & = 
    \tilde{\beta} + \tilde{\rho} \sqrt{\tilde{\tau}_Z/\tau_U}, 
    \quad 
\tau_Z (1-\rho^2) = \tilde{\tau}_Z (1-\tilde{\rho}^2), 
\quad 
\tau_Z \varphi_Z = \tilde{\tau}_Z \tilde{\varphi}_Z
\end{align}
For any $\delta > 1 - \rho^2 \ge 0$ and $\zeta \in \{-1,1\}$, define 
\begin{align*}
    \tilde{\tau}_Z = \delta \tau _Z, 
    \quad 
    \tilde{\varphi}_Z = \delta^{-1} \varphi_Z,
    \quad 
    \tilde{\rho} = \zeta \sqrt{\frac{\delta - (1-\rho^2)}{\delta}}, 
    \quad 
    \tilde{\beta} =  \beta + (\rho - \sqrt{\delta}\tilde{\rho})\sqrt{\frac{\tau_Z}{\tau_U}} .
\end{align*}
Under the above definition, we can verify that \eqref{eq:non_iden_car_phi_0} must hold. 
Thus, the parameters $(\tilde{\beta}, \tilde{\tau}_Z, \tilde{\varphi}_Z, \tilde{\tau}_U,$ $\tilde{\varphi}_U, \tilde{\tau}_{\epsilon}, \tilde{\rho})$ lead to the same observed data distribution as the true data-generating parameters. 
Moreover, $\tilde{\beta} = \beta$ if and only if 
$\sqrt{\delta}\tilde{\rho} = \rho$, which is further equivalent to
\begin{align*}
    \sqrt{\delta}\zeta \sqrt{\frac{\delta - (1-\rho^2)}{\delta}} = \rho 
    \Longleftrightarrow 
    \zeta \sqrt{\delta - (1-\rho^2)} = \rho. 
\end{align*}
By appropriately choosing $\delta$ and $\zeta$, 
the above equation will fail, under which 
$\tilde{\beta} \ne \beta$.

\subsection{A remark on a fully connected neighborhood structure}

\begin{proposition}\label{prop:car-fullyconnected-nonid}
Consider the same model setup as in Theorem \ref{thm:car-identifiability}. 
If 
$\rho \ne 0$, and
$W$ corresponds to a fully connected neighborhood structure, i.e., the diagonal elements of $W$ are all $0$ and the off-diagonal elements of $W$ are all $1$, 
then the treatment effect $\beta$ a
is 
not identifiable.
\end{proposition}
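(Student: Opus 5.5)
The plan is to exploit the fact that a fully connected $W$ produces a spectrum with only two distinct values. The observed data distribution then yields only six scalar equations for the seven parameters, and I will exhibit an explicit one-parameter family of alternative parameters, indexed by $\tilde\varphi_U$, that reproduces the observed distribution.

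\textbf{Reduction to six scalar equations.} With $W = \mathbf{1}\mathbf{1}^\top - I$ we have $D=(n-1)I$, so $D^{-1/2}WD^{-1/2}=W/(n-1)$. Its eigenvalues are $\lambda_1 = 1$ (multiplicity one, eigenvector $\mathbf{1}$) and $\lambda_2=-1/(n-1)$ (multiplicity $n-1$), and $\Gamma^\top D\Gamma=(n-1)I$. By the proof of Theorem \ref{thm:car-identifiability}, two parameter sets give the same observed distribution if and only if \eqref{eq:iden_cond_mean_Y_Z}--\eqref{eq:prec_Z} hold. All three matrices there are diagonal and depend on $\Lambda$ only through $\lambda_k$. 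Write $g_k=(1-\varphi_U\lambda_k)^{-1}$ and $c=\rho\sqrt{\tau_Z/\tau_U}$, with $\tilde g_k$ and $\tilde c$ defined analogously. The conditions then reduce, for $k=1,2$, to
\begin{align*}
\beta + c\, g_k &= \tilde\beta + \tilde c\, \tilde g_k,\\
\tau_U^{-1} g_k + (n-1)\tau_\epsilon^{-1} &= \tilde\tau_U^{-1}\tilde g_k + (n-1)\tilde\tau_\epsilon^{-1},\\
\tau_Z - \tau_Z\varphi_Z\lambda_k - c^2\tau_U g_k &= \tilde\tau_Z - \tilde\tau_Z\tilde\varphi_Z\lambda_k - \tilde c^2\tilde\tau_U \tilde g_k .
\end{align*}
The last line uses $\tilde\rho^2\tilde\tau_Z=\tilde c^2\tilde\tau_U$.

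\textbf{Construction.} First take the case $\varphi_U\neq 0$; the case $\varphi_U=0$ is already covered by the remark in \S\ref{sec:nece_phiU_car}. Fix $\tilde\varphi_U\neq 0$ close to $\varphi_U$. Then $\tilde g_1\ne\tilde g_2$, so I solve the equations in the following order.
\begin{enumerate}[label=(\alph*)]
\item The first pair is a $2\times2$ linear system for $(\tilde\beta,\tilde c)$. It gives $\tilde c=c(g_1-g_2)/(\tilde g_1-\tilde g_2)$ and $\tilde\beta=\beta+cg_1-\tilde c\tilde g_1$.
\item The second pair is linear in $(\tilde\tau_U^{-1},\tilde\tau_\epsilon^{-1})$, with the same nonsingular coefficient structure.
\item The third pair, with $\tilde c,\tilde\tau_U$ now known, is linear in $(\tilde\tau_Z,\tilde\tau_Z\tilde\varphi_Z)$. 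Its determinant is $\lambda_1-\lambda_2\ne 0$.
\item Finally set $\tilde\rho=\tilde c\sqrt{\tilde\tau_U/\tilde\tau_Z}$.
\end{enumerate}
Every solved quantity is a continuous function of $\tilde\varphi_U$ and equals the true value at $\tilde\varphi_U=\varphi_U$.

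\textbf{Admissibility.} The main obstacle is checking that the alternative parameters are admissible: $\tilde\tau_U,\tilde\tau_Z,\tilde\tau_\epsilon>0$, $|\tilde\varphi_U|,|\tilde\varphi_Z|<1$, $|\tilde\rho|<1$, and the block precision matrix in \eqref{eq:Sigma_SP} positive definite. All of these are open conditions that hold at the true parameters. By the continuity just noted, they therefore hold for every $\tilde\varphi_U$ in a small neighborhood of $\varphi_U$. I would state this openness explicitly rather than verify the inequalities by hand.

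\textbf{Distinctness of $\beta$.} From the formulas in (a), $\tilde\beta=\beta$ if and only if $\tilde c\tilde g_1=cg_1$. Since $\rho\ne0$ gives $c\neq0$, this is equivalent to $\tilde g_2/\tilde g_1=g_2/g_1$, i.e.\ $(1-\tilde\varphi_U\lambda_1)/(1-\tilde\varphi_U\lambda_2)=(1-\varphi_U\lambda_1)/(1-\varphi_U\lambda_2)$. The left side is a Möbius transformation of $\tilde\varphi_U$ with determinant $\lambda_1-\lambda_2\neq0$, hence injective. So every $\tilde\varphi_U\neq\varphi_U$ in the neighborhood yields $\tilde\beta\neq\beta$ with the same observed distribution, and $\beta$ is not identifiable. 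This is where $\rho\ne 0$ is essential.
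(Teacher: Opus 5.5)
Your proposal is correct and follows essentially the same route as the paper. Both reduce to the six scalar equations at the eigenvalues $1$ and $-1/(n-1)$, build a one-parameter family of alternatives indexed by $\tilde\varphi_U$ (the paper writes $\tilde\varphi_U=b\varphi_U$ and solves explicitly via $r(b)$), use continuity near the truth for admissibility, show $\tilde\beta\neq\beta$ off the true value, and defer the case $\varphi_U=0$ to \S\ref{sec:nece_phiU_car}; your M\"obius-injectivity argument for $\tilde\beta\neq\beta$ is simply a cleaner version of the paper's monotonicity-in-$b$ claim.
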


\begin{proof}[Proof of Proposition \ref{prop:car-fullyconnected-nonid}]
Suppose that $(\beta, \varphi_U, \tau_U, \varphi_Z, \tau_Z, \rho, \tau_\epsilon)$ denote the true data generating parameter, where we again use $\tau_{\epsilon}$ to denote $\sigma_{\epsilon}^{-2}$ for convenience. 
When $\varphi_U = 0$, Proposition \ref{prop:car-fullyconnected-nonid} follows immediately from Section \ref{sec:nece_phiU_car}. 
Below we consider only the case where $\varphi_U \ne 0$. 
From the proof of Theorem \ref{thm:car-identifiability}, it suffices to find other parameters $(\tilde\beta, \tilde\varphi_U, \tilde\tau_U, \tilde\varphi_Z, \tilde\tau_Z, \tilde\rho, \tilde\tau_\epsilon)$ such that equations \eqref{eq:iden_cond_mean_Y_Z}--\eqref{eq:prec_Z} hold and $\tilde\beta \ne \beta$. 

First, we simplify \eqref{eq:iden_cond_mean_Y_Z}--\eqref{eq:prec_Z} using the condition that $W$ represents a connected neighborhood structure. 
By definition, we have $W = 1_{n} 1_{n}^\top - I_n$ and $D = (n-1)I_n$. 
Because the eigenvalues of $1_{n} 1_{n}^\top$ are either $n$ or $0$, 
the eigenvalues of $W$ must be either $n-1$ or $-1$. 
This further implies that the eigenvalue of $D^{-1/2}WD^{-1/2} = (n-1)^{-1} W$ are either $1$ or $-(n-1)^{-1}$. 
We can then verify that equations \eqref{eq:iden_cond_mean_Y_Z}--\eqref{eq:prec_Z}  are equivalent to 

\begin{align}
\beta + \rho\sqrt{\tau_Z/\tau_U}(1-\varphi_U)^{-1} &= \tilde\beta + \tilde\rho\sqrt{\tilde\tau_Z/\tilde\tau_U}(1-\tilde\varphi_U)^{-1}\label{eq:fullyconnected-exp-product-1} \\
\beta + \rho\sqrt{\tau_Z/\tau_U}\left(1+\frac{\varphi_U}{n-1}\right)^{-1} &= \tilde\beta + \tilde\rho\sqrt{\tilde\tau_Z/\tilde\tau_U}\left(1+\frac{\tilde\varphi_U}{n-1}\right)^{-1}\label{eq:fullyconnected-exp-product-n-1} 
\end{align}
\begin{align}
\tau_U^{-1}(1-\varphi_U)^{-1} + \tau_\epsilon^{-1} (n-1) &= \tilde\tau_U^{-1}(1-\tilde\varphi_U)^{-1} + \tilde\tau_\epsilon^{-1}(n-1)\label{eq:fullyconnected-condvar-1} \\
\tau_U^{-1}\left(1+\frac{\varphi_U}{n-1}\right)^{-1}  + \tau_\epsilon^{-1} (n-1) &= \tilde\tau_U^{-1}\left(1+\frac{\tilde\varphi_U}{n-1}\right)^{-1}  + \tilde\tau_\epsilon^{-1}(n-1)\label{eq:fullyconnected-condvar-n-1} 
\end{align}
\begin{align}
\tau_Z\left\{1 - \varphi_Z - \rho^2(1-\varphi_U)^{-1}\right\} &= \tilde\tau_Z\left\{1 - \tilde\varphi_Z - \tilde\rho^2(1-\tilde\varphi_U)^{-1}\right\}\label{eq:fullyconnected-precision-1} \\
\tau_Z\left\{1 + \frac{\varphi_Z}{n-1} - \rho^2\left(1+\frac{\varphi_U}{n-1}\right)^{-1}\right\} &= \tilde\tau_Z\left\{1 + \frac{\tilde\varphi_Z}{n-1} - \tilde\rho^2\left(1+\frac{\tilde\varphi_U}{n-1}\right)^{-1}\right\}\label{eq:fullyconnected-precision-n-1} 
\end{align}

Second, we try to construct parameters $(\tilde\beta, \tilde\varphi_U, \tilde\tau_U, \tilde\varphi_Z, \tilde\tau_Z, \tilde\rho, \tilde\tau_\epsilon)$ such that equations \eqref{eq:fullyconnected-exp-product-1}--\eqref{eq:fullyconnected-precision-n-1} hold. 
Let $b$ be a scalar whose value will be determined later. Define 
\begin{align}
r(b) &= \frac{\left(1 + \frac{\varphi_U}{n-1} \right)^{-1} - \left(1-\varphi_U\right)^{-1}}{\left(1 + \frac{b\varphi_U}{n-1} \right)^{-1} - \left(1-b\varphi_U\right)^{-1}} 
,
\label{eq:car-full-r}
\\
\nonumber
B_1(b) & = \tau_Z + \frac{\tau_Z\varphi_Z}{n-1} - \rho^2\tau_Z\left( 1 + \frac{\varphi_U}{n-1}\right)^{-1} + r\rho^2 \tau_Z \left( 1 + \frac{b\varphi_U}{n-1}\right)^{-1},
\\
\nonumber
B_2(b) & = \tau_Z - \tau_Z \varphi_Z - \rho^2\tau_Z(1-\varphi_U)^{-1} + r\rho^2\tau_Z(1-b\varphi_U)^{-1}.
\end{align}
For notational convenience, we will simply write $r(b), B_1(b)$ and $B_2(b)$ as  $r, B_1$ and $B_2$. 
Define further $(\tilde\beta, \tilde\varphi_U, \tilde\tau_U, \tilde\varphi_Z, \tilde\tau_Z, \tilde\rho, \tilde\tau_\epsilon)$ as follows:
\begin{align}\label{eq:tilde_car_non}
    \tilde\varphi_U & = b\varphi_U, 
    \qquad 
    \tilde\tau_U^{-1} = r \tau_U^{-1}
    \qquad 
    \tilde\tau_\epsilon^{-1} = (n-1)^{-1}\tau_U^{-1}\{(1-\varphi_U)^{-1} - r(1-b\varphi_U)^{-1}\} + \tau_\epsilon^{-1}, 
    \nonumber
    \\
    \tilde\beta & = \beta + \rho\sqrt{\frac{\tau_Z}{\tau_U}}(1-\varphi_U)^{-1} - r\rho\sqrt{\frac{\tau_Z}{\tau_U}}(1-b\varphi_U)^{-1}, 
    \qquad 
    \tilde\varphi_Z = \frac{B_1 - B_2}{(n-1)^{-1}B_2 + B_1},
    \nonumber
    \\
    \tilde\tau_Z &= 
    \frac{B_2}{1-\tilde\varphi_Z}
    = 
    \frac{(n-1)B_1 + B_2}{n}, \qquad \tilde\rho = \rho\sqrt{r \tau_Z/\tilde\tau_Z}. 
\end{align}

Third, we show that equations \eqref{eq:fullyconnected-exp-product-1}--\eqref{eq:fullyconnected-precision-n-1} hold for $(\tilde\beta, \tilde\varphi_U, \tilde\tau_U, \tilde\varphi_Z, \tilde\tau_Z, \tilde\rho, \tilde\tau_\epsilon)$ defined in \eqref{eq:tilde_car_non}, under the assumption that the quantities in \eqref{eq:tilde_car_non} are well-defined. 
We begin with \eqref{eq:fullyconnected-condvar-1} and \eqref{eq:fullyconnected-condvar-n-1}. 
By the definition in \eqref{eq:car-full-r} and \eqref{eq:tilde_car_non}, we have 
\begin{align*}
\tilde\tau_U^{-1}(1-\tilde\varphi_U)^{-1} + \tilde\tau_\epsilon^{-1}(n-1) &= r \tau_U^{-1}(1-b\varphi_U)^{-1} + \tau_U^{-1}\{(1-\varphi_U)^{-1} - r(1-b\varphi_U)^{-1}\} + \tau_\epsilon^{-1}(n-1)
\\
&= \tau_U^{-1}(1-\varphi_U)^{-1} + \tau_\epsilon^{-1}(n-1)
\end{align*} and 
\begin{align*}
& \quad \ \tilde\tau_U^{-1}\left(1+\frac{\tilde\varphi_U}{n-1}\right)^{-1} + \tilde\tau_\epsilon^{-1}(n-1)
\\ 
&= r \tau_U^{-1}\left(1+\frac{b\varphi_U}{n-1}\right)^{-1} + \tau_U^{-1}\{(1-\varphi_U)^{-1} - r(1-b\varphi_U)^{-1}\} + \tau_\epsilon^{-1}(n-1) \\
&= r\tau_U^{-1}\left\{\left(1+\frac{b\varphi_U}{n-1}\right)^{-1} - (1-b\varphi_U)^{-1}\right\} + \tau_U^{-1}(1-\varphi_U)^{-1} + \tau_\epsilon^{-1}(n-1) \\
&= \tau_U^{-1}\left\{\left(1 + \frac{\varphi_U}{n-1} \right)^{-1} - \left(1-\varphi_U\right)^{-1}\right\}+ \tau_U^{-1}(1-\varphi_U)^{-1} + \tau_\epsilon^{-1}(n-1) \\
&= \tau_U^{-1}\left(1+\frac{\varphi_U}{n-1}\right)^{-1} + \tau_\epsilon^{-1} (n-1),
\end{align*} 
which imply that \eqref{eq:fullyconnected-condvar-1} and \eqref{eq:fullyconnected-condvar-n-1} hold. We next consider \eqref{eq:fullyconnected-exp-product-1} and \eqref{eq:fullyconnected-exp-product-n-1}. 
By definition, we have  \begin{equation}\nonumber%
\tilde\rho\sqrt{\tilde\tau_Z/\tilde\tau_U}=
 \rho\sqrt{r \tau_Z/\tilde\tau_Z} \sqrt{\tilde\tau_Z/\tilde\tau_U}
=
\rho\sqrt{r \tau_Z/\tilde\tau_U}
= 
r\rho\sqrt{\tau_Z/\tau_U}.
\end{equation} 
Consequently, 
\begin{align*}
&\tilde\beta + \tilde\rho\sqrt{\tilde\tau_Z/\tilde\tau_U}(1-\tilde\varphi_U)^{-1}\\ &=  \beta + \rho\sqrt{\frac{\tau_Z}{\tau_U}}(1-\varphi_U)^{-1} - r\rho\sqrt{\frac{\tau_Z}{\tau_U}}(1-b\varphi_U)^{-1} + r\rho\sqrt{\tau_Z/\tau_U}(1-b\varphi_U)^{-1} \\
&= \beta + \rho\sqrt{\tau_Z/\tau_U}(1-\varphi_U)^{-1},
\end{align*} 
and 
\begin{align*}
& \quad \ \tilde\beta + \tilde\rho\sqrt{\tilde\tau_Z/\tilde\tau_U}\left(1+\frac{\tilde\varphi_U}{n-1}\right)^{-1} 
\\
&= \beta + \rho\sqrt{\frac{\tau_Z}{\tau_U}}(1-\varphi_U)^{-1} - r\rho\sqrt{\frac{\tau_Z}{\tau_U}}(1-b\varphi_U)^{-1} + r\rho\sqrt{\frac{\tau_Z}{\tau_U}}\left(1+ \frac{b\varphi_U}{n-1} \right)^{-1} \\
&= \beta + \rho\sqrt{\frac{\tau_Z}{\tau_U}}(1-\varphi_U)^{-1} + r\rho \sqrt{\frac{\tau_Z}{\tau_U}}\left\{\left(1+ \frac{b\varphi_U}{n-1} \right)^{-1}-(1-b\varphi_U)^{-1} \right\} \\
&= \beta + \rho\sqrt{\frac{\tau_Z}{\tau_U}}(1-\varphi_U)^{-1} + \rho\sqrt{\frac{\tau_Z}{\tau_U}}\left\{\left(1+ \frac{\varphi_U}{n-1} \right)^{-1}-(1-\varphi_U)^{-1}\right\}\\
&= \beta +\rho\sqrt{\frac{\tau_Z}{\tau_U}}\left(1+ \frac{\varphi_U}{n-1} \right)^{-1},
\end{align*} 
which imply that \eqref{eq:fullyconnected-exp-product-1} and \eqref{eq:fullyconnected-exp-product-n-1} hold. 
We finally consider \eqref{eq:fullyconnected-precision-1}
and \eqref{eq:fullyconnected-precision-n-1}. 
By definition, 
\begin{align*}
\tilde\tau_Z\{1-\tilde\varphi_Z - \tilde\rho^2(1-\tilde\tau_U)^{-1}\} &= \frac{B_2}{1-\tilde\varphi_Z}(1-\tilde\varphi_Z) - \tilde\tau_Z\tilde\rho^2(1-b\varphi_U)^{-1}
\\ &= B_2 - r \tau_Z \rho^2 (1-b\varphi_U)^{-1} \\
&= \tau_Z - \tau_Z \varphi_Z - \rho^2\tau_Z(1-\varphi_U)^{-1} 
\\ &= \tau_Z \{ 1 - \varphi_Z - \rho^2(1-\varphi_U)^{-1} \}, 
\end{align*} 
and 
\begin{align*}
& \quad \ \tilde\tau_Z\left\{1+ \frac{\tilde\varphi_Z}{n-1} - \tilde\rho^2 \left( 1+ \frac{\tilde\varphi_U}{n-1}\right)^{-1}\right\} 
\\
&= \tilde\tau_Z\left(1+\frac{\tilde\varphi_Z}{n-1} \right) - \tilde\tau_Z\tilde\rho^2\left(1+\frac{b\varphi_U}{n-1}\right)^{-1} 
\\ &= 
\frac{B_2}{1 -\tilde\varphi_Z }\left(1+\frac{\tilde\varphi_Z}{n-1} \right) - \tilde\tau_Z\tilde\rho^2\left(1+\frac{b\varphi_U}{n-1}\right)^{-1}
\\
&= 
B_2 \cdot \frac{(n-1)^{-1}B_2+B_1}{\{(n-1)^{-1} + 1\}B_2}\left\{ 1+\frac{B_1-B_2}{B_2 + (n-1)B_1}\right\} - r\rho^2\tau_Z\left(1+\frac{b\varphi_U}{n-1}\right)^{-1}
\\
&= 
B_2 \cdot \frac{B_2+(n-1)B_1}{nB_2}
\frac{n B_1}{B_2 + (n-1)B_1} - r\rho^2\tau_Z \left(1+\frac{b\varphi_U}{n-1}\right)^{-1}
\\ &= B_1 - r\rho^2\tau_Z \left(1+\frac{b\varphi_U}{n-1}\right)^{-1}
\\
& = \tau_Z\left\{1+\frac{\varphi_Z}{n-1} - \rho^2 \left(1+ \frac{\varphi_U}{n-1}\right)^{-1} \right\}, 
\end{align*} 
which imply that \eqref{eq:fullyconnected-precision-1}
and \eqref{eq:fullyconnected-precision-n-1} hold.

Finally, we show that there exists $b$ such that the quantities in \eqref{eq:tilde_car_non} are well-defined and $\tilde\beta \ne \beta$. 
Note that $(\tilde\beta, \tilde\varphi_U, \tilde\tau_U, \tilde\varphi_Z, \tilde\tau_Z, \tilde\rho, \tilde\tau_\epsilon)$ are the same as $(\beta, \varphi_U, \tau_U, \varphi_Z, \tau_Z, \rho, \tau_\epsilon)$ when $b=1$. 
Thus, the parameters $(\tilde\beta, \tilde\varphi_U, \tilde\tau_U, \tilde\varphi_Z, \tilde\tau_Z, \tilde\rho, \tilde\tau_\epsilon)$ are well-defined, continuous in $b$ when viewed as functions of $b$, and lead to a valid model, as long as $b$ is in a sufficiently small neighborhood of $1$. 
Note that 
\begin{align*}
    \tilde\beta - \beta & = 
    \rho\sqrt{\frac{\tau_Z}{\tau_U}}(1-\varphi_U)^{-1} - r\rho\sqrt{\frac{\tau_Z}{\tau_U}}(1-b\varphi_U)^{-1}
    \\ &= 
    \rho\sqrt{\frac{\tau_Z}{\tau_U}} 
    \left\{ (1-\varphi_U)^{-1}  - r (1-b\varphi_U)^{-1}\right\}\\
    & = 
    \rho\sqrt{\frac{\tau_Z}{\tau_U}} 
    \left\{ (1-\varphi_U)^{-1}  - \frac{\left(1 + \frac{\varphi_U}{n-1} \right)^{-1} - \left(1-\varphi_U\right)^{-1}}{\left(1 + \frac{b\varphi_U}{n-1} \right)^{-1} - \left(1-b\varphi_U\right)^{-1}}  (1-b\varphi_U)^{-1}\right\}\\
    & = \rho\sqrt{\frac{\tau_Z}{\tau_U}} \cdot
    \frac{
    (1-\varphi_U)^{-1}\left(1 + \frac{b\varphi_U}{n-1} \right)^{-1} - \left(1 + \frac{\varphi_U}{n-1} \right)^{-1} (1-b\varphi_U)^{-1}
    }{\left(1 + \frac{b\varphi_U}{n-1} \right)^{-1} - \left(1-b\varphi_U\right)^{-1}}\\
    & = 
    \rho\sqrt{\frac{\tau_Z}{\tau_U}} \cdot
    \frac{
    (1-\varphi_U)^{-1} \cdot
    \frac{1-b\varphi_U}{ 1 + \frac{b\varphi_U}{n-1} }
     - \left(1 + \frac{\varphi_U}{n-1} \right)^{-1} 
    }{\frac{1-b\varphi_U}{ 1 + \frac{b\varphi_U}{n-1} } - 1}.
\end{align*}
Because $\varphi_U \ne 0$ and $\rho \ne 0$, we can verify that $ \tilde\beta - \beta$ is a strictly monotone function of $\frac{1-b\varphi_U}{ 1 + \frac{b\varphi_U}{n-1} }$, which itself is also a strictly monotone function of $b$. 
Thus, $\tilde\beta - \beta$ is strictly monotone in $b$ for $b$ in a sufficiently small neighborhood around $1$. 
Thus, $\tilde\beta - \beta$ must be nonzero for $b\ne 1$.

From the above, we derive Proposition \ref{prop:car-fullyconnected-nonid}. 
\end{proof}

\section{Technical details for the non-parsimonious model in \S \ref{sec:leroux_car}}\label{sec:proofs-lcar-nonpars}

\subsection{Proof of Theorem \ref{thm:lcar-flex-identifiability}}

To prove Theorem \ref{thm:lcar-flex-identifiability}, we need the following three lemmas. 

\begin{lemma}\label{lem:lcar-varz}
Consider any $\sigma^2 >0, \lambda < 1$ and $(\omega_1,\ldots,\omega_n) \in \mathbb{R}^n$ with $\omega_i \ge 0$ for all $1 \le i \le n$, where $n\ge 2$ is an integer and $(\omega_1,\ldots,\omega_n)$ contains at least two distinct elements. If $\tilde{\sigma}^2 >0$ and $\tilde{\lambda} < 1$ satisfy
\begin{equation}\label{eq:lcar-var}
\frac{\sigma^2}{1-\lambda + \lambda \omega_i} = \frac{\tilde{\sigma}^2}{1-\tilde{\lambda} + \tilde{\lambda}\omega_i}, \qquad i=1,\ldots, n
\end{equation} 
then $(\tilde{\sigma}^2, \tilde{\lambda}) = (\sigma^2, \lambda)$.  
\end{lemma}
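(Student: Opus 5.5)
The plan is to cross-multiply and reduce the identity to a polynomial identity in $\omega_i$ of degree at most one. Then we read off the coefficients at two distinct points.

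First, both denominators are positive. Since $\lambda<1$ and $\omega_i\ge 0$, we have $1-\lambda+\lambda\omega_i \ge \min(1,1-\lambda+\lambda\cdot 0)$ when $\lambda\ge 0$. More directly, $1-\lambda+\lambda\omega = (1-\lambda)+\lambda\omega$. For $\lambda\in[0,1)$ this is positive. If $\lambda<0$, positivity is implicit because the left side of the displayed equation is a variance, so it must be positive and finite. In any case the equation presupposes nonzero denominators. So we may cross-multiply to obtain, for every $i$,
\begin{equation*}
\sigma^2\{(1-\tilde\lambda)+\tilde\lambda\omega_i\} = \tilde\sigma^2\{(1-\lambda)+\lambda\omega_i\},
\end{equation*}
that is,
\begin{equation*}
\{\sigma^2(1-\tilde\lambda)-\tilde\sigma^2(1-\lambda)\} + \{\sigma^2\tilde\lambda-\tilde\sigma^2\lambda\}\,\omega_i = 0 .
\end{equation*}
This is an affine function of $\omega$ vanishing at the points $\omega_1,\ldots,\omega_n$. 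Because these contain at least two distinct values, the function vanishes identically, so both coefficients are zero:
\begin{equation*}
\sigma^2\tilde\lambda=\tilde\sigma^2\lambda, \qquad \sigma^2(1-\tilde\lambda)=\tilde\sigma^2(1-\lambda).
\end{equation*}

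Adding the two equations gives $\sigma^2=\tilde\sigma^2$. Since $\sigma^2>0$, substituting back into the first equation yields $\tilde\lambda=\lambda$, which concludes the argument.

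The only delicate point is justifying the cross-multiplication, namely that the denominators are nonzero. I would handle it by noting that the hypothesis is meaningful only when both sides are defined, so nothing beyond the two-distinct-points argument is really needed.
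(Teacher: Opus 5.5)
Your proof is correct and essentially the paper's argument. The paper takes reciprocals to get an identity affine in $\omega_i$, matches the slopes $\lambda/\sigma^2=\tilde\lambda/\tilde\sigma^2$ using two distinct $\omega$'s, and then matches intercepts; you cross-multiply, which is the same identity scaled by $\sigma^2\tilde\sigma^2$, and then add the two coefficient equations. As you note, only nonzero denominators are needed, and the hypothesis presupposes them, so the discussion of positivity when $\lambda<0$ is unnecessary.
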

\begin{proof}[Proof of Lemma \ref{lem:lcar-varz}]
Suppose that $\tilde{\sigma}^2 >0$ and $\tilde{\lambda} < 1$ satisfy \eqref{eq:lcar-var}. 
From the condition in Lemma \ref{lem:lcar-varz}, there exist $1 \le j \neq k \le n$ such that $\omega_j \neq \omega_k$. From \eqref{eq:lcar-var}, we have 
\begin{align*}
& \frac{1-\lambda +\lambda \omega_j}{\sigma^2} - \frac{1-\lambda + \lambda \omega_k}{\sigma^2} = \frac{1-\tilde{\lambda} + \tilde{\lambda}\omega_j}{\tilde{\sigma}^2} - \frac{1-\tilde\lambda + \tilde\lambda \omega_k}{\tilde{\sigma}^2}\\
\implies 
& 
\frac{\lambda}{\sigma^2}(\omega_j-\omega_k) 
=
\frac{\tilde\lambda}{\tilde{\sigma}^2}(\omega_j-\omega_k). \end{align*}
Since $\omega_j-\omega_k \neq 0$, this implies that $\lambda/\sigma^2 = \tilde\lambda / \tilde{\sigma}^2$.
For any $1\le i \le n$, from \eqref{eq:lcar-var}, we further have 
\begin{align*}
    \frac{1-\lambda +\lambda \omega_i}{\sigma^2}  = \frac{1-\tilde{\lambda} + \tilde{\lambda}\omega_i}{\tilde{\sigma}^2} 
    \implies
    \frac{1}{\sigma^2} = \frac{1}{\tilde{\sigma}^2}
    \implies 
    \sigma^2 = \tilde{\sigma}^2. 
\end{align*}
Consequently, we must have $\lambda = \tilde\lambda$. 
From the above, we derive Lemma \ref{lem:lcar-varz}. 
\end{proof}

\begin{lemma}\label{lem:lcar-flex-expectation}
Consider any $\sigma^2_Z, \sigma^2_U > 0$, $\lambda_Z, \lambda_{UZ} < 1, \rho \in [-1, 1], \beta \in \mathbb{R}$ and $(\omega_1,\ldots,\omega_n) \in \mathbb{R}^n$ with $\omega_i \ge 0$ for all $1 \le i \le n$, where $n \ge 3$ is an integer and $(\omega_1,\ldots,\omega_n)$ contains at least three distinct elements. Suppose that $\tilde{\sigma}_U^2 > 0, \tilde{\lambda}_{UZ} < 1, \tilde\rho \in [-1, 1]$ and $\tilde{\beta}\in\mathbb{R}$ satisfy
\begin{equation}\label{eq:lcar-flex-expectation}
\beta + \frac{\rho\sigma_U}{\sigma_Z} \cdot \frac{1-\lambda_Z + \lambda_Z \omega_i}{1- \lambda_{UZ} + \lambda_{UZ} \omega_i} = \tilde\beta + \frac{\tilde\rho \tilde\sigma_U}{\sigma_Z} \cdot \frac{1-\lambda_Z + \lambda_Z \omega_i}{1- \tilde\lambda_{UZ} + \tilde\lambda_{UZ} \omega_i}.
\end{equation}
(i) If $\lambda_Z \neq \lambda_{UZ}$ and $\rho \neq 0$, then  $(\tilde\beta, \tilde\lambda_{UZ}, \tilde\rho \tilde\sigma_U) = (\beta, \lambda_{UZ}, \rho\sigma_U)$. \\
(ii) If $\lambda_Z = \lambda_{UZ}$ or $\rho=0$, then  $\tilde\lambda_{UZ} = \lambda_Z$ or $\tilde{\rho}=0$. 
\end{lemma}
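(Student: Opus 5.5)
Set $a_i = 1-\lambda_Z+\lambda_Z\omega_i$, $c_i = 1-\lambda_{UZ}+\lambda_{UZ}\omega_i$ and $\tilde c_i = 1-\tilde\lambda_{UZ}+\tilde\lambda_{UZ}\omega_i$, all positive in the parameter range. Put $\kappa = \rho\sigma_U/\sigma_Z$ and $\tilde\kappa = \tilde\rho\tilde\sigma_U/\sigma_Z$. The plan is to view both sides of \eqref{eq:lcar-flex-expectation} as rational functions of one variable $\omega$, evaluated at the (at least three) distinct values $\omega_i$, and compare numerators after clearing denominators. Define the linear functions $a(\omega)=1-\lambda_Z+\lambda_Z\omega$, and $c(\omega)$, $\tilde c(\omega)$ analogously. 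Multiplying \eqref{eq:lcar-flex-expectation} by $c_i\tilde c_i$ gives
\begin{equation*}
(\beta-\tilde\beta)\,c(\omega)\tilde c(\omega) + \kappa\, a(\omega)\tilde c(\omega) - \tilde\kappa\, a(\omega) c(\omega) = 0
\end{equation*}
at every $\omega\in\{\omega_1,\dots,\omega_n\}$. The left side is a polynomial of degree at most two, and it vanishes at three distinct points, so it vanishes identically. Every later step works with this polynomial identity.

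For (i), assume $\lambda_Z\neq\lambda_{UZ}$ and $\rho\neq0$, so $\kappa\ne0$. A linear function $1-\lambda+\lambda\omega$ takes the value $1$ at $\omega=0$ and has slope $\lambda$, so two such functions are proportional only if they are equal, i.e.\ have the same $\lambda$. Hence $a$ and $c$ are linearly independent linear functions. Let $\omega^\ast$ be the root of $c$ if $\lambda_{UZ}\neq0$; then $a(\omega^\ast)\ne0$, since $a$ and $c$ are not proportional. Evaluating the identity at $\omega^\ast$ gives $\kappa\, a(\omega^\ast)\tilde c(\omega^\ast)=0$, so $\tilde c(\omega^\ast)=0$. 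Both $c$ and $\tilde c$ equal $1$ at $0$ and share the root $\omega^\ast$, so $\tilde c=c$ and $\tilde\lambda_{UZ}=\lambda_{UZ}$. If instead $\lambda_{UZ}=0$, then $c\equiv1$ and the identity reads $(\beta-\tilde\beta)\tilde c+\kappa a\tilde c-\tilde\kappa a=0$. Comparing $\omega^2$ coefficients gives $\kappa\lambda_Z\tilde\lambda_{UZ}=0$, and $\lambda_Z\ne\lambda_{UZ}=0$, so $\tilde\lambda_{UZ}=0$.

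Once $\tilde c=c$, divide the identity by $c$ to get $(\beta-\tilde\beta)c+(\kappa-\tilde\kappa)a\equiv0$. Since $a$ and $c$ are independent, $\beta=\tilde\beta$ and $\kappa=\tilde\kappa$, i.e.\ $\tilde\rho\tilde\sigma_U=\rho\sigma_U$. The main care point is the independence argument through the normalization at $\omega=0$, together with the degenerate case $\lambda_{UZ}=0$ just handled.

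For (ii), the hypothesis makes the left side of \eqref{eq:lcar-flex-expectation} constant in $i$. Then $\tilde\kappa\, a_i/\tilde c_i$ is constant on three distinct points. If $\tilde\kappa\ne0$, i.e.\ $\tilde\rho\neq0$, this forces $a(\omega)-k\tilde c(\omega)$ to have three roots, so $a=k\tilde c$. Normalization at $0$ gives $k=1$, hence $\tilde\lambda_{UZ}=\lambda_Z$. Otherwise $\tilde\rho=0$, which is the other alternative in the conclusion.
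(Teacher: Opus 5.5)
Your argument is correct in substance, but the normalization point is wrong throughout. The function $1-\lambda+\lambda\omega$ equals $1-\lambda$ at $\omega=0$ and equals $1$ at $\omega=1$. This affects three steps: the proportionality claim ($a=kc\Rightarrow k=1$), the assertion that $c$ and $\tilde c$ ``both equal $1$ at $0$'', and the step ``normalization at $0$ gives $k=1$'' in (ii). Each should evaluate at $\omega=1$. As written, evaluating at $0$ only yields $k=(1-\lambda)/(1-\mu)$, which does not force $k=1$. With $\omega=1$ everything goes through. In particular, $c$ and $\tilde c$ agree at $1$ and at $\omega^\ast$, and $\omega^\ast\neq 1$ because $c(1)=1$. (Under the lemma's literal hypothesis $\lambda<1$, the quantities $a_i,c_i,\tilde c_i$ need not be positive. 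You only use that the denominators are nonzero, which is implicit in the equation being defined.)

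Once fixed, your route differs from the paper's in part (i). The paper sets $c_1=\tilde\rho\tilde\sigma_U/(\rho\sigma_U)$ and a constant $c_2$, which equals $(\tilde\beta-\beta)/\kappa$. Its quadratic $a\tilde c-c_1ac-c_2c\tilde c$ is therefore exactly your polynomial divided by $\kappa$. It then extracts three scalar equations: the value at $\omega=1$ ($1-c_1-c_2=0$), the value at $\omega=0$, and the $\omega^2$ coefficient. These combine into $c_2(\lambda_{UZ}-\lambda_Z)(\lambda_Z-\tilde\lambda_{UZ})=0$, and the branch $\tilde\lambda_{UZ}=\lambda_Z$ is excluded by a separate monotonicity argument. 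This leaves $c_2=0$ and $c_1=1$, and the paper recovers $\tilde\lambda_{UZ}=\lambda_{UZ}$ and $\tilde\beta=\beta$ from the original equation by dividing by $a_i$.

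Your evaluation at the root of $c$ gives $\tilde c=c$ in one step, with no spurious branch to eliminate. The cost is a separate, easy case $\lambda_{UZ}=0$. Your final step uses linear independence of $a$ and $c$ rather than cancelling $a_i$. The result is shorter and more transparent, and it never needs $a_i\neq 0$. Part (ii) is the same argument in both proofs, with your ``linear function with three roots'' replacing the paper's strict monotonicity of $a/\tilde c$.
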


\begin{proof}[Proof of Lemma \ref{lem:lcar-flex-expectation}(i)]
Consider the case where $\lambda_Z \neq \lambda_{UZ}$ and $\rho \neq 0$.
From \eqref{eq:lcar-flex-expectation}, for any $1\le i, j \le n$, 
\[\frac{\rho \sigma_U}{\sigma_Z}\left( \frac{1-\lambda_Z + \lambda_Z \omega_i}{1- \lambda_{UZ} + \lambda_{UZ} \omega_i} -  \frac{1-\lambda_Z + \lambda_Z \omega_j}{1- \lambda_{UZ} + \lambda_{UZ} \omega_j} \right) = \frac{\tilde\rho \tilde\sigma_U}{\sigma_Z}\left( \frac{1-\lambda_Z + \lambda_Z \omega_i}{1- \tilde\lambda_{UZ} + \tilde\lambda_{UZ} \omega_i} -  \frac{1-\lambda_Z + \lambda_Z \omega_j}{1- \tilde\lambda_{UZ} + \tilde\lambda_{UZ} \omega_j} \right).\]
Since $\rho \neq 0$, we can define $c_1 = \tilde{\rho}\tilde{\sigma}_U / (\rho \sigma_U)$. Thus, from the above, for any $1 \le i,j \le n$, 
\begin{align*}&\left( \frac{1-\lambda_Z + \lambda_Z \omega_i}{1- \lambda_{UZ} + \lambda_{UZ} \omega_i} -  \frac{1-\lambda_Z + \lambda_Z \omega_j}{1- \lambda_{UZ} + \lambda_{UZ} \omega_j} \right) = c_1\left( \frac{1-\lambda_Z + \lambda_Z \omega_i}{1- \tilde\lambda_{UZ} + \tilde\lambda_{UZ} \omega_i} -  \frac{1-\lambda_Z + \lambda_Z \omega_j}{1- \tilde\lambda_{UZ} + \tilde\lambda_{UZ} \omega_j} \right)\\
\iff & \frac{1-\lambda_Z + \lambda_Z \omega_i}{1- \lambda_{UZ} + \lambda_{UZ} \omega_i} - c_1  \frac{1-\lambda_Z + \lambda_Z \omega_i}{1- \tilde\lambda_{UZ} + \tilde\lambda_{UZ} \omega_i} = \frac{1-\lambda_Z + \lambda_Z \omega_j}{1- \lambda_{UZ} + \lambda_{UZ} \omega_j}  -  c_1\frac{1-\lambda_Z + \lambda_Z \omega_j}{1- \tilde\lambda_{UZ} + \tilde\lambda_{UZ} \omega_j} .
\end{align*}
Consequently, there exists a constant $c_2$ such that for any $\omega \in \{\omega_1, \omega_2,\ldots, \omega_n\}$,
\begin{align}
&\frac{1-\lambda_Z + \lambda_Z \omega}{1- \lambda_{UZ} + \lambda_{UZ} \omega} - c_1  \frac{1-\lambda_Z + \lambda_Z \omega}{1- \tilde\lambda_{UZ} + \tilde\lambda_{UZ} \omega} = c_2 \label{eq:lcar-exp-const1} \\
\implies &(1-\lambda_Z+\lambda_Z \omega)(1-\tilde{\lambda}_{UZ} + \tilde{\lambda}_{UZ}\omega) - c_1(1-\lambda_Z+\lambda_Z \omega)(1-{\lambda}_{UZ} + {\lambda}_{UZ}\omega) \nonumber\\
& -c_2(1-\lambda_{UZ}+\lambda_{UZ} \omega)(1-\tilde{\lambda}_{UZ} + \tilde{\lambda}_{UZ}\omega) =0. \label{eq:lcar-exp-const2}
\end{align}
From the condition in Lemma \ref{lem:lcar-flex-expectation}, the left-hand side (LHS) of \eqref{eq:lcar-exp-const2} has at least three distinct roots. Since it is quadratic in $\omega$, it must be zero for all $\omega \in \mathbb{R}$. Thus, the LHS  of \eqref{eq:lcar-exp-const2} must be zero for $\omega=1$, which implies that 
\begin{equation}\label{eq:lcar-flex-root1}
1-c_1-c_2 =0 \implies 1-c_1=c_2.
\end{equation} Similarly, the LHS of \eqref{eq:lcar-exp-const2} must be zero at $\omega=0$, which implies that 
\begin{equation}\label{eq:lcar-flex-root2}
(1-\lambda_Z)\{(1-\tilde{\lambda}_{UZ}) - c_1(1-\lambda_{UZ})\} - c_2(1-\lambda_{UZ})(1-\tilde{\lambda}_{UZ}) =0.
\end{equation} 
Additionally, 
the coefficient of $\omega^2$ in the LHS of \eqref{eq:lcar-exp-const2} must also be zero, which implies that 
\begin{equation}\label{eq:lcar-flex-root3}
\lambda_Z\tilde{\lambda}_{UZ} - c_1 \lambda_Z \lambda_{UZ} - c_2\lambda_{UZ} \tilde\lambda_{UZ} =0 \implies \lambda_Z(\tilde{\lambda}_{UZ} - c_1\lambda_{UZ}) = c_2\lambda_{UZ} \tilde\lambda_{UZ}.
\end{equation}
Applying these together, 
we then have 
\begin{align}
0 &= \lambda_Z(1-\lambda_Z)\{(1-\tilde{\lambda}_{UZ}) - c_1(1-\lambda_{UZ})\} - c_2\lambda_Z(1-\lambda_{UZ})(1-\tilde{\lambda}_{UZ})\nonumber\\
&= (1-\lambda_Z)\{\lambda_Z(1-c_1) -\lambda_Z(\tilde\lambda_{UZ} - c_1\lambda_{UZ})\} - c_2\lambda_Z(1-\lambda_{UZ})(1-\tilde{\lambda}_{UZ})\nonumber\\
&=(1-\lambda_Z)(\lambda_Zc_2 - c_2\lambda_{UZ}
\tilde{\lambda}_{UZ}) - c_2\lambda_Z(1-\lambda_{UZ})(1-\tilde\lambda_{UZ}) \nonumber\\
&= c_2 \{(1-\lambda_Z)\lambda_Z - (1-\lambda_Z)\lambda_{UZ}\tilde\lambda_{UZ} - \lambda_Z(1-\lambda_{UZ}) + \lambda_Z(1-\lambda_{UZ})\tilde\lambda_{UZ}\} \nonumber\\
&= c_2\{\lambda_Z(\lambda_{UZ}-\lambda_Z) - (\lambda_{UZ} - \lambda_Z)\tilde{\lambda}_{UZ}\} \nonumber\\
&= c_2(\lambda_{UZ} - \lambda_Z)(\lambda_Z - \tilde{\lambda}_{UZ}), \label{eq:lcar-flex-root-fin}
\end{align} 
where the first equality follows by multiplying both sides of \eqref{eq:lcar-flex-root2} by $\lambda_Z$, 
the third equality follows from \eqref{eq:lcar-flex-root1} and \eqref{eq:lcar-flex-root3}, and the remaining equalities follow from some algebra. 
From \eqref{eq:lcar-flex-root-fin} and the condition that $\lambda_Z \neq \lambda_{UZ}$, we must have either $c_2=0$ or $\tilde\lambda_{UZ} = \lambda_Z$. 

We then prove that $\tilde\lambda_{UZ}\ne \lambda_Z$ by contradiction. 
Suppose that $\tilde\lambda_{UZ} = \lambda_Z$. From \eqref{eq:lcar-exp-const1}, we can see that for every $\omega \in \{\omega_1,\ldots,\omega_n\}$, 
\[\frac{1-\lambda_Z+\lambda_Z\omega}{1-\lambda_{UZ} + \lambda_{UZ} \omega} = c_1+c_2.\]
Because $\lambda_Z \neq \lambda_{UZ}$, we can verify that $(1-\lambda_Z+\lambda_Z\omega)/(1-\lambda_{UZ} + \lambda_{UZ} \omega)$ is strictly monotone in $\omega$. Because $\{\omega_1,\ldots,\omega_n\}$ contains at least three distinct values, this leads to a contradiction.

From the above, we must have $c_2=0$. From \eqref{eq:lcar-flex-root1}, we then have $c_1=1$. 
By the definition of $c_1$, this implies that $\tilde\rho \tilde\sigma_U = \rho\sigma_U$. 
In addition, from \eqref{eq:lcar-exp-const1}, for any $1\le i \le n$, 
\begin{align*}
&\frac{1-\lambda_Z+\lambda_Z \omega_i}{ 1-\lambda_{UZ} + \lambda_{UZ} \omega_i} = \frac{1-\lambda_Z+\lambda_Z \omega_i}{ 1-\tilde\lambda_{UZ} + \tilde\lambda_{UZ} \omega_i}\\
\implies & 1-\lambda_{UZ} + \lambda_{UZ} \omega_i  =  1-\tilde\lambda_{UZ} + \tilde\lambda_{UZ} \omega_i \\
\implies & (\lambda_{UZ} - \tilde\lambda_{UZ})(\omega_i-1) =0.
\end{align*}
Since $\{\omega_1,\ldots,\omega_n\}$ contains at least three distinct values, we must have $\lambda_{UZ} = \tilde\lambda_{UZ}$. 
From \eqref{eq:lcar-flex-expectation}, we then have $\tilde \beta = \beta$. 
Therefore, we derive Lemma \ref{lem:lcar-flex-expectation}(i).
\end{proof}

\begin{proof}[Proof of Lemma \ref{lem:lcar-flex-expectation}(ii)]
Consider the case where $\rho=0$ or $\lambda_Z = \lambda_{UZ}$. In this case, the LHS of \eqref{eq:lcar-flex-expectation} is equal to either $\beta$ or $\beta + \rho\sigma_U/\sigma_Z$, and is thus constant for all $\omega_i$. 

We now prove $\tilde\rho =0$ or $\tilde\lambda_{UZ} = \lambda_Z$ by contradiction. 
Suppose that $\tilde\lambda_{UZ} \neq \lambda_Z$ and $\tilde\rho \neq 0$. We can verify that the right-hand side (RHS) of \eqref{eq:lcar-flex-expectation} is strictly monotone in $\omega_i$. 
Because $\{\omega_1,\ldots,\omega_n\}$ contains at least three distinct values, this leads to a contradiction. 

Therefore, we must have either $\tilde\rho =0$ or $\tilde\lambda_{UZ} = \lambda_Z$, i.e., Lemma \ref{lem:lcar-flex-expectation}(ii) holds. 
\end{proof}

\begin{lemma}\label{lem:lcar-expression-linear-independence}
Consider any integer $1 \le K \le 3$, $a_1, \ldots, a_K, b \in \mathbb{R}$ and $\lambda_1,\ldots,\lambda_K \in  (0,1)$. If $\lambda_1,\ldots,\lambda_K$ are all distinct, and 
\begin{equation}\label{eq:lem-lcar-expression-linind}
\sum_{k=1}^K \frac{a_k}{1-\lambda_k+\lambda_k \omega} + b =0
\end{equation} 
for at least $K+1$ distinct nonnegative values of $\omega$, then $a_1=\cdots = a_K=b=0$.
\end{lemma}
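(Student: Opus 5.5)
Clear denominators. Since each $\lambda_k\in(0,1)$ and $\omega\ge 0$, every factor $1-\lambda_k+\lambda_k\omega$ is at least $1-\lambda_k>0$, so none vanishes on the nonnegative half-line. Multiplying \eqref{eq:lem-lcar-expression-linind} by $\prod_{k=1}^K(1-\lambda_k+\lambda_k\omega)$ gives
\begin{align*}
P(\omega) \equiv \sum_{k=1}^K a_k \prod_{l\ne k}(1-\lambda_l+\lambda_l\omega) + b\prod_{k=1}^K (1-\lambda_k+\lambda_k\omega) = 0 .
\end{align*}
This holds at the same $K+1$ distinct nonnegative values of $\omega$. Since $P$ is a polynomial of degree at most $K$ with at least $K+1$ roots, it vanishes identically, as a polynomial in $\omega\in\mathbb{R}$.

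Next, evaluate $P$ at the zeros of the individual factors. Put $r_k = -(1-\lambda_k)/\lambda_k$, the unique root of $1-\lambda_k+\lambda_k\omega$; it is well defined because $\lambda_k\neq 0$. Because the $\lambda_k$ are distinct, the $r_k$ are distinct: the map $\lambda\mapsto -(1-\lambda)/\lambda = 1-1/\lambda$ is strictly increasing on $(0,1)$. Evaluating $P(r_k)=0$ kills every term except the $k$th, giving $a_k\prod_{l\ne k}(1-\lambda_l+\lambda_l r_k)=0$. Each factor in that product is nonzero because $r_k\ne r_l$, so $a_k=0$ for every $k$. Then $P(\omega)=b\prod_k(1-\lambda_k+\lambda_k\omega)\equiv 0$, which is a nonzero polynomial unless $b=0$, so $b=0$.

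The one point to handle carefully is extending the identity from the finitely many observed nonnegative $\omega$ to the negative points $r_k$. This is legitimate only because the polynomial identity holds on all of $\mathbb{R}$ once degree is compared with the number of roots, so the degree count must be right. The coefficient of $\omega^K$ comes only from the $b$-term, so the degree is at most $K$, and $K+1$ roots suffices. The restriction $K\le 3$ plays no role in this argument; it presumably reflects how the lemma is applied later.
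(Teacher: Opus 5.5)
Your proof is correct. After the common first step, it takes a different route from the paper.

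Both arguments clear denominators and use the degree count to conclude that $P$ vanishes identically on $\mathbb{R}$.

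The paper then goes case by case for $K=1,2,3$. It reads off the coefficient of $\omega^K$ to get $b\prod_k\lambda_k=0$, hence $b=0$. It then extracts linear equations in the $a_k$ by evaluating the reduced polynomial at $\omega=0$ and $\omega=1$ and matching the leading coefficient. For $K=3$ this requires a $3\times 3$ determinant, which factors as $(\lambda_1-\lambda_2)(\lambda_2-\lambda_3)(\lambda_3-\lambda_1)$.

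You instead evaluate at the roots $r_k=1-1/\lambda_k$ of the individual factors, which is the standard uniqueness argument for partial fractions. Since $1-\lambda_l+\lambda_l r_k=\lambda_l(r_k-r_l)\neq 0$ for $l\neq k$, each $a_k$ is isolated in one step. This handles every $K$ uniformly with no determinant, which confirms your remark that $K\le 3$ is inessential. It also uses only that the $\lambda_k$ are nonzero and distinct and that no factor vanishes at the observed values of $\omega$.

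Your worry about evaluating at negative points is unfounded: the paper's evaluation points $\omega=0,1$ need not be among the observed values either. Both proofs therefore rely equally on $P\equiv 0$ on all of $\mathbb{R}$.

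The paper's version is fully explicit for the three cases it needs. Yours is shorter and more general.
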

\begin{proof}[Proof of Lemma \ref{lem:lcar-expression-linear-independence}]
First, we consider the case where $K=1$. From \eqref{eq:lem-lcar-expression-linind}, for at least two distinct nonnegative values of $\omega$,
\[\frac{a_1}{1-\lambda_1 + \lambda_1 \omega} + b =0 \implies a_1 + b(1-\lambda_1 + \lambda_1\omega)=0. \]
Since $a_1 + b(1-\lambda_1 + \lambda_1\omega)$ is linear in $\omega$, this implies that $b\lambda_1=0$. Since $\lambda_1\neq 0$, this further implies that $b=0$. Consequently, $a_1=0$.
Therefore, Lemma \ref{lem:lcar-expression-linear-independence} holds when $K=1$.
\\

Second, we consider the case where $K=2$. For at least 3 distinct nonnegative values of $\omega$, we have 
\begin{equation}\label{eq:lcar-expression-k2}
a_1(1-\lambda_2+\lambda_2\omega)+a_2(1-\lambda_1+\lambda_1\omega) + b(1-\lambda_1+\lambda_1\omega)(1-\lambda_2+\lambda_2 \omega)=0.
\end{equation} 
Because the expression on the LHS of \eqref{eq:lcar-expression-k2} is quadratic in $\omega$, it must be identically zero for all $\omega \in \mathbb{R}$. 
By letting $\omega$ equal $0$ and $1$, we have 
\begin{align}
    \label{eq:lcar-expression-root0}
a_1(1-\lambda_2)+a_2(1-\lambda_1)+b(1-\lambda_1)(1-\lambda_2) & =0,
\\
\label{eq:lcar-expression-root1}
a_1+a_2+b & = 0.
\end{align}
Because the coefficient of $\omega^2$ in the LHS of \eqref{eq:lcar-expression-k2} must be zero, we also have 
\begin{equation}\label{eq:lcar-expression-root-quadratic-coef}
b \lambda_1\lambda_2 =0.
\end{equation}
Note that $\lambda_1 \neq \lambda_2$ and both of them are nonzero. 
From \eqref{eq:lcar-expression-root-quadratic-coef}, $b$ must be zero. Multiplying \eqref{eq:lcar-expression-root1} by $\lambda_2-1$ and adding it to \eqref{eq:lcar-expression-root0}, we obtain
\[a_2(1-\lambda_1) -a_2(1-\lambda_2) =0 \implies a_2(\lambda_2-\lambda_1)=0. \] This implies $a_2=0$. Consequently $a_1=0$. Therefore, Lemma \ref{lem:lcar-expression-linear-independence} holds when $K=2$. \\

Finally, we consider the case where $K=3$. In this case, there are at least 4 distinct nonnegative values of $\omega$ such that: 
\begin{align}
a_1(1-\lambda_2+\lambda_2\omega)(1-\lambda_3+\lambda_3\omega) + a_2(1-\lambda_1+\lambda_1\omega)(1-\lambda_3+\lambda_3\omega) & \nonumber\\  + a_3 (1-\lambda_1+\lambda_1 \omega)(1-\lambda_2+\lambda_2\omega)+ b \prod_{k=1}^3 (1-\lambda_k +\lambda_k \omega) & =0 \label{eq:lcar-expression-k3}. 
\end{align} 
Because the LHS of \eqref{eq:lcar-expression-k3} is cubic in $\omega$ and has at most three real roots, 
it must be identically zero for all $\omega\in \mathbb{R}$. 
Thus, the coefficient of $\omega^3$ must be zero, implying that 
$b\lambda_1\lambda_2\lambda_3 =0.$
Because $\lambda_1, \lambda_2, \lambda_3$ are all nonzero, we then have $b=0$. 
From \eqref{eq:lcar-expression-k3}, 
we then have, for at least 4 distinct nonnegative values of $\omega$, 
\begin{align*}
    a_1(1-\lambda_2+\lambda_2\omega)(1-\lambda_3+\lambda_3\omega) + a_2(1-\lambda_1+\lambda_1\omega)(1-\lambda_3+\lambda_3\omega) &  \\
    + a_3 (1-\lambda_1+\lambda_1 \omega)(1-\lambda_2+\lambda_2\omega) & =0. 
\end{align*}
Because the LHS of the above equation is quadratic in $\omega$ and has at most two real roots, it must be identically zero for all $\omega\in \mathbb{R}$. 
Again, 
because the LHS of the above equation must be zero when $\omega$ equals 0 and 1, and the coefficient of $\omega^2$ is also zero, we can derive that 
\begin{align}
    \label{eq:lcar-k3-root0}
a_1(1-\lambda_2)(1-\lambda_3) + a_2(1-\lambda_1)(1-\lambda_3) + a_3 (1-\lambda_1)(1-\lambda_2) & =0,
\\
\label{eq:lcar-k3-root1}
a_1+a_2+a_3 & = 0,
\\
\label{eq:lcar-k3-coef}
a_1\lambda_2\lambda_3 + a_2\lambda_1\lambda_3+a_3\lambda_1\lambda_2 & =0.
\end{align}
Adding \eqref{eq:lcar-k3-root1}, \eqref{eq:lcar-k3-coef} and subtracting \eqref{eq:lcar-k3-root0}, we obtain:
\begin{equation}\label{eq:lcar-k3-interm}
a_1(\lambda_2+\lambda_3) + a_2(\lambda_1+\lambda_3) + a_3(\lambda_1+\lambda_2)=0.
\end{equation}
Consider \eqref{eq:lcar-k3-root1},  \eqref{eq:lcar-k3-coef},  \eqref{eq:lcar-k3-interm} as equations for variables $a_1, a_2, a_3$, and let $M$ denote the corresponding coefficient matrix. 
By some algebra, the determinant of $M$ has the following equivalent forms:
\begin{align*}
|M|&=\left| \begin{matrix} 1 & 1 & 1 \\
\lambda_2\lambda_3 & \lambda_1\lambda_3 & \lambda_1\lambda_2\\
\lambda_2+\lambda_3 & \lambda_1+\lambda_3 & \lambda_1+\lambda_2 \end{matrix}\right|
= (\lambda_1-\lambda_2)(\lambda_2-\lambda_3)(\lambda_3-\lambda_1),
\end{align*} 
which must be nonzero since $\lambda_1, \lambda_2$ and $\lambda_3$ are all distinct. 
Consequently, we must have $a_1=a_2=a_3=0$. 
Therefore, Lemma \ref{lem:lcar-expression-linear-independence} holds when $K=3$.
\end{proof}

\begin{proof}[Proof of Theorem \ref{thm:lcar-flex-identifiability}]
By definition, 
\begin{align*}
    \Sigma_{UU}' & = \sigma_U^2 \{ (1-\lambda_U) I_n + \lambda_U \Omega \}^{-1}, 
    \qquad \quad 
    \Sigma_{ZZ}' = \sigma_Z^2 \{ (1-\lambda_Z) I_n + \lambda_Z \Omega \}^{-1}, \\
    \Sigma_{UZ}' & = \rho \sigma_U \sigma_Z \{ (1-\lambda_{UZ} )I_n + \lambda_{UZ} \Omega \}^{-1}, 
\end{align*}
where $\Omega$ is a diagonal matrix with diagonal elements $\omega_1, \ldots, \omega_n$, and $\omega_1, \ldots, \omega_n$ are the eigenvalues of $D-W$. 
Because $D-W$ is diagonally dominant, the eigenvalues $\omega_1, \ldots, \omega_n$ are all nonnegative. 
By the same logic as \eqref{eq:condvar}, the observed data distribution is determined by 
\begin{align}\label{eq:lcar_obs}
    \var(Z') & = \Sigma_{ZZ}'
    = \sigma_Z^2 \{ (1-\lambda_Z) I_n + \lambda_Z \Omega \}^{-1}, 
    \nonumber\\
    \cov(Y', Z') \var(Z')^{-1} & = \beta I_n + \Sigma_{UZ}' \Sigma'^{-1}_{ZZ}
    \nonumber
    \\
    & =\beta I_n + \rho \frac{\sigma_U}{ \sigma_Z} \{ (1-\lambda_{UZ} )I_n + \lambda_{UZ} \Omega \}^{-1} \{ (1-\lambda_Z) I_n + \lambda_Z \Omega \},
    \nonumber
    \\
    \var(Y' \mid Z') & = \Sigma_{UU}' - \Sigma_{UZ}' \Sigma'^{-1}_{ZZ} \Sigma'_{ZU} + \sigma_{\epsilon}^2 I_n
    \nonumber
    \\
    & = 
    \sigma_U^2 \{ (1-\lambda_U) I_n + \lambda_U \Omega \}^{-1} + \sigma_{\epsilon}^2 I_n
    \nonumber
    \\
    & \quad \ 
    - 
    \rho^2 \sigma_U^2 \{ (1-\lambda_{UZ} )I_n + \lambda_{UZ} \Omega \}^{-2}
     \{ (1-\lambda_Z) I_n + \lambda_Z \Omega \} 
\end{align}
Let $(\sigma_Z^2, \lambda_Z, \sigma_U^2, \lambda_U, \rho, \lambda_{UZ}, \sigma_\epsilon^2, \beta)$ denote the true data generating parameters, and suppose that $(\tilde\sigma_Z^2, \tilde\lambda_Z, $ $\tilde\sigma_U^2, \tilde\lambda_U, \tilde\rho, \tilde\lambda_{UZ}, \tilde\sigma_\epsilon^2, \tilde\beta)$ lead to the same observed data distribution.

From $\var(Z')$ in \eqref{eq:lcar_obs}, we have, for $1\le i \le n$, 
\begin{equation}\label{eq:lcar-flex-varz}
\frac{\sigma_Z^2}{1-\lambda_Z + \lambda_Z \omega_i} = \frac{\tilde\sigma_Z^2}{1-\tilde\lambda_Z + \tilde\lambda_Z \omega_i}.
\end{equation}
In either case (i) or (ii) in Theorem \ref{thm:lcar-flex-identifiability}, 
$\{ \omega_1, \ldots, \omega_n \}$ contains at least three distinct eigenvalues. 
From Lemma \ref{lem:lcar-varz}, we must have $(\tilde\sigma_Z^2, \tilde{\lambda}_Z) = (\sigma_Z^2, \lambda_Z)$. 
From $\cov(Y', Z') \var(Z')^{-1}$ and $\var(Y'|Z')$ in \eqref{eq:lcar_obs}, we then have, for $1\le i \le n$, 
\begin{equation}\label{eq:lcar-flex-expectation-2}
\beta + \frac{\rho \sigma_U}{\sigma_Z} \cdot \frac{1-\lambda_Z +\lambda_Z\omega_i}{1-\lambda_{UZ} + \lambda_{UZ} \omega_i} = \tilde\beta + \frac{\tilde\rho \tilde\sigma_U}{\sigma_Z} \cdot \frac{1-\lambda_Z +\lambda_Z\omega_i}{1-\tilde\lambda_{UZ} + \tilde\lambda_{UZ} \omega_i},
\end{equation}
and 
\begin{align}\label{eq:lcar-flex-condvar}
& \quad 
\frac{\sigma_U^2}{1-\lambda_U + \lambda_U \omega_i} + \sigma_\epsilon^2 - (\rho\sigma_U)^2 \frac{1-\lambda_Z+\lambda_Z \omega_i}{(1-\lambda_{UZ} +\lambda_{UZ}\omega_i)^2} 
\nonumber
\\
& = \frac{\tilde\sigma_U^2}{1-\tilde\lambda_U + \tilde\lambda_U \omega_i} + \tilde\sigma_\epsilon^2 
- (\tilde\rho\tilde\sigma_U)^2 \frac{1-\lambda_Z+\lambda_Z \omega_i}{(1-\tilde\lambda_{UZ} +\tilde\lambda_{UZ}\omega_i)^2}.
\end{align}

First, we consider the case in (i) where $\lambda_{UZ}\neq\lambda_Z$ and $\rho \neq 0$. By applying Lemma \ref{lem:lcar-flex-expectation} to \eqref{eq:lcar-flex-expectation-2}, we have $(\tilde\beta, \tilde\lambda_{UZ}, \tilde\rho\tilde\sigma_U) = (\beta, \lambda_{UZ}, \rho\sigma_U)$. 
From the discussion before, $(\beta, \lambda_{UZ}, \sigma_Z^2, \lambda_Z)$ are identifiable. 
We now consider the case where we further have $\lambda_U\neq 0$. 
Due to the fact that $( \tilde\lambda_{UZ}, \tilde\rho\tilde\sigma_U) = ( \lambda_{UZ}, \rho\sigma_U)$, \eqref{eq:lcar-flex-condvar} reduces to 
\begin{equation}\label{eq:lcar-flex-condvar-interm}
\frac{\sigma_U^2}{1-\lambda_U+\lambda_U\omega_i} +
\frac{-\tilde\sigma_U^2}{1-\tilde\lambda_U+\tilde\lambda_U\omega_i} 
+ 
(\sigma_\epsilon^2 - \tilde\sigma_\epsilon^2) = 0
\end{equation} 
Note that $\lambda_U\neq 0$ and, by the conditions in Theorem \ref{thm:lcar-flex-identifiability}, $\{\omega_1,\ldots,\omega_n\}$ contains at least 3 distinct nonnegative values. If $\tilde\lambda_U \neq \lambda_U$, from \eqref{eq:lcar-flex-condvar-interm} and applying Lemma \ref{lem:lcar-expression-linear-independence} with $K=1$ or $K=2$ depending on whether $\tilde\lambda_U=0$, we can derive that $\sigma_U^2=0$, leading to a contradiction. Thus, $\tilde\lambda_U = \lambda_U$. Applying Lemma \ref{lem:lcar-expression-linear-independence} with $K=1$, we can then derive that $\tilde\sigma_U^2 = \sigma_U^2$ and $\tilde\sigma_\epsilon^2 = \sigma_\epsilon^2$. 
Because $\tilde\rho\tilde\sigma_U = \rho\sigma_U$, this further implies that $\tilde\rho = \rho$. 
Consequently, all the parameters $(\sigma_Z^2, \lambda_Z, \sigma_U^2, \lambda_U, \rho, \lambda_{UZ}, \sigma_\epsilon^2, \beta)$ can be identified from the observed data. 

We then consider the case in (ii) where $\rho =0$. Note that  $\sigma_Z$ and $\lambda_Z$ have already been identified before. By applying Lemma \ref{lem:lcar-flex-expectation} to \eqref{eq:lcar-flex-expectation-2}, we have $\tilde\rho =0$ or $\tilde\lambda_{UZ} = \lambda_Z$.
Below we prove $\tilde\rho =0$. 
It suffices to consider
the case where $\tilde\lambda_{UZ} =\lambda_Z$. From \eqref{eq:lcar-flex-condvar}, this implies that 
\begin{equation}\label{eq:lcar-flex-condvar-interm2}
\frac{\sigma_U^2}{1-\lambda_U+\lambda_U\omega_i} + \sigma_\epsilon^2 = \frac{\tilde\sigma_U^2}{1-\tilde\lambda_U+\tilde\lambda_U\omega_i} + \tilde\sigma_\epsilon^2 - \frac{(\tilde\rho\tilde\sigma_U)^2}{1-\lambda_Z +\lambda_Z\omega_i}.
\end{equation}
From the conditions in Theorem \ref{thm:lcar-flex-identifiability}(ii), $\lambda_U \ne \lambda_Z$ are both nonzero, and $\{\omega_1,\ldots,\omega_n)$ contains at least 4 distinct nonnegative values. 
If $\tilde\lambda_U \neq \lambda_U$, from \eqref{eq:lcar-flex-condvar-interm2} and applying Lemma \ref{lem:lcar-expression-linear-independence} with $K=2$ or $K=3$ depending on whether $\tilde\lambda_U\in\{\lambda_Z, 0\}$, we can derive that $\sigma_U^2=0$, leading to a contradiction. Thus, $\tilde\lambda_U = \lambda_U$. Applying Lemma \ref{lem:lcar-expression-linear-independence} again with $K=2$,  $\tilde\sigma_U^2 = \sigma_U^2, \tilde\sigma_\epsilon^2 = \sigma_\epsilon^2$, and $(\tilde\rho \tilde\sigma_U)^2=0$, which further implies that $\tilde\rho =0$. Therefore, we must have  $\tilde\rho=0 = \rho$. 
From \eqref{eq:lcar-flex-expectation-2}, we then have that $\tilde\beta = \beta$. From \eqref{eq:lcar-flex-condvar}, we also have
\[\frac{\sigma_U^2}{1-\lambda_U+\lambda_U \omega_i} + \sigma_\epsilon^2 = \frac{\tilde\sigma_U^2}{1-\tilde\lambda_U + \tilde\lambda_U \omega_i} + \tilde\sigma_\epsilon^2. \]
If $\tilde\lambda_U\neq \lambda_U $, from Lemma \ref{lem:lcar-expression-linear-independence} with $K=1$ or $K=2$ depending on whether $\tilde\lambda_U=0$, we can show that $\sigma_U^2=0$, leading to a contradiction. Thus, $\tilde\lambda_U=\lambda_U$.  Applying Lemma \ref{lem:lcar-expression-linear-independence} again with $K=1$, we can then show $\tilde\sigma_U^2 = \sigma_U^2$ and $\tilde\sigma_\epsilon^2 = \sigma_\epsilon^2$.
Consequently, the  parameters $(\sigma_Z^2, \lambda_Z, \sigma_U^2, \lambda_U, \rho, \sigma_\epsilon^2, \beta)$ can be identified from the observed data. 

From the above, Theorem \ref{thm:lcar-flex-identifiability} holds. 
\end{proof}

\subsection{A remark on the necessity of $\lambda_{UZ} \neq \lambda_Z$ when $\rho\neq 0$} 

\begin{proposition}\label{prop:nonid-lambdaUZ-lambdaZ}
Consider $n$ spatial locations with a proximity matrix $W$, and assume the data generating process in \eqref{eq:simp-model}, \eqref{eq:leroux_car} and \eqref{eq:lcar-crosscov-flex}. 
If $\rho\neq 0$ and $\lambda_{UZ}=\lambda_Z$, then $\beta$ is not identifiable from the observed data.
\end{proposition}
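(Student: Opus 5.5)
To prove nonidentifiability, we will construct a second parameter vector $(\tilde\sigma_Z^2,\tilde\lambda_Z,\tilde\sigma_U^2,\tilde\lambda_U,\tilde\rho,\tilde\lambda_{UZ},\tilde\sigma_\epsilon^2,\tilde\beta)$ that has $\tilde\beta\neq\beta$ and reproduces the three observable quantities in \eqref{eq:lcar_obs}. As in the proof of Theorem~\ref{thm:lcar-flex-identifiability}, these quantities determine the observed data distribution.

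When $\lambda_{UZ}=\lambda_Z$, the cross-covariance satisfies $\Sigma'_{UZ}\Sigma'^{-1}_{ZZ}=(\rho\sigma_U/\sigma_Z)I_n$. Hence
\[
\cov(Y',Z')\var(Z')^{-1}=(\beta+\rho\sigma_U/\sigma_Z)I_n,
\]
and
\[
\var(Y'\mid Z')=\sigma_U^2\{(1-\lambda_U)I_n+\lambda_U\Omega\}^{-1}-\rho^2\sigma_U^2\{(1-\lambda_Z)I_n+\lambda_Z\Omega\}^{-1}+\sigma_\epsilon^2 I_n .
\]
The natural perturbation keeps $\tilde\sigma_Z=\sigma_Z$, $\tilde\lambda_Z=\tilde\lambda_{UZ}=\lambda_Z$, $\tilde\lambda_U=\lambda_U$ and $\tilde\sigma_\epsilon^2=\sigma_\epsilon^2$, and changes only $(\beta,\sigma_U,\rho)$. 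We then need two equations to hold. The first is
\[
\tilde\beta+\tilde\rho\tilde\sigma_U/\sigma_Z=\beta+\rho\sigma_U/\sigma_Z .
\]
The second is that the diagonal matrix above is unchanged. If $\lambda_U\neq\lambda_Z$, this requires $\tilde\sigma_U^2=\sigma_U^2$ and $\tilde\rho^2=\rho^2$ by the linear independence in Lemma~\ref{lem:lcar-expression-linear-independence}. That is too rigid, so the plan needs more freedom.

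The extra freedom comes from absorbing part of the $\{(1-\lambda_Z)I+\lambda_Z\Omega\}^{-1}$ term into the $U$-covariance. Concretely, set $\tilde\lambda_U=\lambda_U$, $\tilde\sigma_U^2=\sigma_U^2 t$, and $\tilde\rho^2=\rho^2/t$ for a scalar $t$ near $1$. This keeps $\tilde\rho^2\tilde\sigma_U^2=\rho^2\sigma_U^2$, so the subtracted term is unchanged. The $U$-term, however, changes by $(t-1)\sigma_U^2\{(1-\lambda_U)I+\lambda_U\Omega\}^{-1}$, which cannot be compensated in general. The cleaner choice is therefore the reverse: keep $\tilde\sigma_U=\sigma_U$ and $\tilde\lambda_U=\lambda_U$, and flip the sign, taking $\tilde\rho=-\rho$. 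Then $\tilde\rho^2\tilde\sigma_U^2=\rho^2\sigma_U^2$, so $\var(Y'\mid Z')$ and $\var(Z')$ are unchanged. Setting
\[
\tilde\beta=\beta+2\rho\sigma_U/\sigma_Z
\]
preserves the conditional mean coefficient, and $\tilde\beta\neq\beta$ because $\rho\neq 0$. The admissibility constraint $\tilde\rho\in[-1,1]$ holds trivially.

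The main obstacle is confirming that the new parameters define a valid model, that is, that the joint covariance of $(U',Z')$ is positive definite. Each $2\times 2$ spectral block is
\[
\begin{pmatrix} a_i & \tilde\rho c_i\\ \tilde\rho c_i & d_i\end{pmatrix},
\]
whose determinant $a_id_i-\rho^2c_i^2$ is unchanged by the sign flip. Since the blocks decouple across the spectral coordinates, positive definiteness is inherited directly from the true parameters. For the sign flip to give a genuinely new value of $\beta$ we need only $\rho\neq0$, and no condition on the number of distinct eigenvalues of $D-W$ is required. If one also wants a continuum of non-identified values, as in Theorem~\ref{thm:lmcoreg-nonid}, a second option is to vary $\tilde\sigma_\epsilon^2$ and $\tilde\sigma_U^2$ jointly in the case $\lambda_U=0$ or $\lambda_U=\lambda_Z$. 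However, the sign-flip construction already suffices for the stated proposition.
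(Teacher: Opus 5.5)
Your proposal is correct and matches the paper's proof: you take $\tilde\rho=-\rho$ and $\tilde\beta=\beta+2\rho\sigma_U/\sigma_Z$ and leave every other parameter unchanged. Your added check that positive definiteness survives the sign flip is a useful detail the paper leaves implicit; it holds because the determinant of each $2\times 2$ spectral block is unchanged, or equivalently because the flip is a congruence by $\mathrm{diag}(-I_n, I_n)$.
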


\begin{proof}[Proof of Proposition \ref{prop:nonid-lambdaUZ-lambdaZ}]
Let $(\sigma_Z^2, \lambda_Z, \sigma_U^2, \lambda_U, \rho, \lambda_{UZ}, \sigma_\epsilon^2, \beta)$ denote the true data generating parameters. 
Below we construct alternative parameters $(\tilde\sigma_Z^2, \tilde\lambda_Z, \tilde\sigma_U^2, \tilde\lambda_U, \tilde\rho, \tilde\lambda_{UZ}, \tilde\sigma_\epsilon^2, \tilde\beta)$ such that they lead to the same observed data distribution and $\tilde\beta \ne \beta$. 
Specifically, let $(\tilde\sigma_Z^2, \tilde\lambda_Z, \tilde\sigma_U^2, \tilde\lambda_U, \tilde\lambda_{UZ}, \tilde{\sigma}_\epsilon^2) = (\sigma_Z^2, \lambda_Z, \sigma_U^2, \lambda_U, \lambda_{UZ}, \sigma_\epsilon^2)$, $\tilde\rho= -\rho$,  and $\tilde\beta = \beta + 2\rho \sigma_U/\sigma_Z$. 
We can then verify that \eqref{eq:lcar-flex-varz}--\eqref{eq:lcar-flex-condvar} hold. 
In addition, because $\rho \ne 0$, $\tilde \beta$ must be different from $\beta$. 
Therefore, we can then derive Proposition \ref{prop:nonid-lambdaUZ-lambdaZ}. 
\end{proof}

\subsection{A remark on the necessity of $\lambda_U\neq \lambda_Z, \lambda_Z \neq 0$ and $\lambda_U \neq 0$ when $\rho = 0$} 

\begin{proposition}\label{prop:nonid-lcar_rho0}
Consider $n$ spatial locations with a proximity matrix $W$, and assume the data generating process in \eqref{eq:simp-model}, \eqref{eq:leroux_car} and \eqref{eq:lcar-crosscov-flex}. 
If $\rho = 0$, and $(\lambda_U, \lambda_Z, 0)$ are not mutually distinct, then $\beta$ is not identifiable from the observed data.
\end{proposition}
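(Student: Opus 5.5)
Since the proposition asserts non-identifiability, I will construct explicit alternative parameters $(\tilde\sigma_Z^2, \tilde\lambda_Z, \tilde\sigma_U^2, \tilde\lambda_U, \tilde\rho, \tilde\lambda_{UZ}, \tilde\sigma_\epsilon^2, \tilde\beta)$ with $\tilde\beta\neq\beta$. By the proof of Theorem \ref{thm:lcar-flex-identifiability}, it suffices to verify \eqref{eq:lcar-flex-varz}, \eqref{eq:lcar-flex-expectation-2} and \eqref{eq:lcar-flex-condvar} for all $i$. Throughout I keep $\tilde\sigma_Z^2=\sigma_Z^2$ and $\tilde\lambda_Z=\lambda_Z$, so \eqref{eq:lcar-flex-varz} holds trivially.

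The alternative will have $\tilde\rho\neq 0$ with $\tilde\lambda_{UZ}=\lambda_Z$. Then $\Sigma'_{UZ}\Sigma'^{-1}_{ZZ}$ becomes the constant $\tilde\rho\tilde\sigma_U/\sigma_Z$. The mean equation \eqref{eq:lcar-flex-expectation-2} then reads $\beta=\tilde\beta+\tilde\rho\tilde\sigma_U/\sigma_Z$, which holds with $\tilde\beta=\beta-\tilde\rho\tilde\sigma_U/\sigma_Z\neq\beta$. The conditional variance equation \eqref{eq:lcar-flex-condvar} becomes, for all $i$,
\[
\frac{\sigma_U^2}{1-\lambda_U+\lambda_U\omega_i}+\sigma_\epsilon^2
=\frac{\tilde\sigma_U^2}{1-\tilde\lambda_U+\tilde\lambda_U\omega_i}+\tilde\sigma_\epsilon^2-\frac{(\tilde\rho\tilde\sigma_U)^2}{1-\lambda_Z+\lambda_Z\omega_i}.
\]
The remaining task is to satisfy this identity, splitting into cases according to which coincidence among $(\lambda_U,\lambda_Z,0)$ holds.

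\textbf{Case $\lambda_U=\lambda_Z$.} Take $\tilde\lambda_U=\lambda_U$ and $\tilde\sigma_\epsilon^2=\sigma_\epsilon^2$. The identity then reduces to $\sigma_U^2=\tilde\sigma_U^2(1-\tilde\rho^2)$. Choose any small $\tilde\rho\neq0$ and set $\tilde\sigma_U^2=\sigma_U^2/(1-\tilde\rho^2)$.

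\textbf{Case $\lambda_Z=0$.} The last term is the constant $(\tilde\rho\tilde\sigma_U)^2$. Take $\tilde\lambda_U=\lambda_U$, $\tilde\sigma_U=\sigma_U$, and $\tilde\sigma_\epsilon^2=\sigma_\epsilon^2+\tilde\rho^2\sigma_U^2$, with any $\tilde\rho\neq0$.

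\textbf{Case $\lambda_U=0$ with $\lambda_Z\neq0$.} The left side is constant. Take $\tilde\lambda_U=\lambda_Z$, so the right side equals $(\tilde\sigma_U^2-\tilde\rho^2\tilde\sigma_U^2)/(1-\lambda_Z+\lambda_Z\omega_i)+\tilde\sigma_\epsilon^2$. To cancel the nonconstant part I would need $\tilde\rho^2=1$, which is the boundary value; it is allowed since $\rho\in[-1,1]$. Then set $\tilde\sigma_\epsilon^2=\sigma_U^2+\sigma_\epsilon^2$ and choose $\tilde\sigma_U>0$ arbitrary.

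The main obstacle is validity, namely positive definiteness of the alternative joint covariance $\tilde\Sigma$. Each $2\times2$ spectral block is $\mathrm{diag}$-type with entries $\tilde\sigma_U^2/a_i$, $\tilde\sigma_Z^2/c_i$, $\tilde\rho\tilde\sigma_U\sigma_Z/c_i$, where $a_i=1-\tilde\lambda_U+\tilde\lambda_U\omega_i$ and $c_i=1-\lambda_Z+\lambda_Z\omega_i$. This block is positive definite iff $\tilde\rho^2 c_i<a_i$, i.e.\ iff the Schur complement $\tilde\sigma_U^2/a_i-\tilde\rho^2\tilde\sigma_U^2/c_i$ is positive.

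In the first two cases this holds for $|\tilde\rho|<1$, since $a_i=c_i$ or $c_i\equiv1$ with $\tilde\rho$ small. In the third case $\tilde\rho^2=1$ makes the Schur complement zero, so $\tilde\Sigma$ is singular. If positive definiteness is strictly required there, I would try to repair this by perturbing $\tilde\lambda_U$ away from $\lambda_Z$ and absorbing the mismatch through the noise term. However, this is not generally possible, because $1/a_i$ and $1/c_i$ are linearly independent on $\ge3$ eigenvalues by Lemma \ref{lem:lcar-expression-linear-independence}. I would therefore instead note that the distribution of $(Y,Z)$ is still well defined under a singular $\tilde\Sigma$, or restrict the claim appropriately. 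This subcase is where I expect the delicate point to lie.
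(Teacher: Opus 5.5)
Your construction is the paper's proof: the same reduction via $(\tilde\sigma_Z^2,\tilde\lambda_Z,\tilde\lambda_{UZ})=(\sigma_Z^2,\lambda_Z,\lambda_Z)$ and $\tilde\beta=\beta-\tilde\rho\tilde\sigma_U/\sigma_Z$, and the same three case-wise choices, including $\tilde\lambda_U=\lambda_Z$, $\tilde\rho=\pm1$, $\tilde\sigma_\epsilon^2=\sigma_U^2+\sigma_\epsilon^2$ when $\lambda_U=0$. The paper makes that boundary choice without remarking that $\tilde\Sigma$ is then singular, and your diagnosis is correct: when $D-W$ has at least three distinct eigenvalues, Lemmas~\ref{lem:lcar-varz}, \ref{lem:lcar-flex-expectation}(ii) and \ref{lem:lcar-expression-linear-independence} force every observationally equivalent alternative with $\tilde\beta\neq\beta$ in the subcase $\lambda_U=0\neq\lambda_Z$ to have $\tilde\lambda_U=\tilde\lambda_{UZ}=\lambda_Z$ and $|\tilde\rho|=1$, so that part of the proposition holds only because the Leroux section admits $\rho\in[-1,1]$ with a degenerate $(U,Z)$ (the law of $(Y,Z)$ stays nondegenerate), and not under the paper's standing positive-definiteness convention.
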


\begin{proof}[Proof of Proposition \ref{prop:nonid-lcar_rho0}]
    Let $(\sigma_Z^2, \lambda_Z, \sigma_U^2, \lambda_U, \rho, \lambda_{UZ}, \sigma_\epsilon^2, \beta)$ denote the true data generating parameters. 
Below we construct alternative parameters $(\tilde\sigma_Z^2, \tilde\lambda_Z, \tilde\sigma_U^2, \tilde\lambda_U, \tilde\rho, \tilde\lambda_{UZ}, \tilde\sigma_\epsilon^2, \tilde\beta)$ such that they lead to the same observed data distribution and $\tilde\beta \ne \beta$. 
Let $(\tilde\sigma_Z^2, \tilde{\lambda}_Z) = (\sigma_Z^2, \lambda_Z)$, and $\tilde\lambda_{UZ} = \lambda_Z$. 
Then \eqref{eq:lcar-flex-varz} holds obviously, 
and \eqref{eq:lcar-flex-expectation-2} and \eqref{eq:lcar-flex-condvar} become equivalent to 
\begin{align}\label{eq:lcar-flex-expectation-non-rho0}
\beta & = \tilde\beta + \frac{\tilde\rho \tilde\sigma_U}{\sigma_Z}, 
\nonumber
\\
\frac{\sigma_U^2}{1-\lambda_U + \lambda_U \omega_i}  
& = \frac{\tilde\sigma_U^2}{1-\tilde\lambda_U + \tilde\lambda_U \omega_i} + \tilde\sigma_\epsilon^2 - \sigma_\epsilon^2
- (\tilde\rho\tilde\sigma_U)^2 \frac{1}{1-\lambda_Z+\lambda_Z \omega_i}.
\end{align}
Let $\tilde\beta = \beta - \tilde\rho \tilde\sigma_U/\sigma_Z$, where the values of $\tilde\rho$ and $\tilde\sigma_U$ will be specified later. 
Because $(\lambda_U, \lambda_Z, 0)$ are not mutually distinct, we must have $\lambda_U = \lambda_Z, \lambda_Z = 0$ or $\lambda_U = 0$. 
Below we consider these three cases, separately. 

First, consider the case where $\lambda_U = \lambda_Z$. In this case, 
\eqref{eq:lcar-flex-expectation-non-rho0} is equivalent to 
\begin{align*}
    \frac{\sigma_U^2 + (\tilde\rho\tilde\sigma_U)^2}{1-\lambda_U + \lambda_U \omega_i}  
& = \frac{\tilde\sigma_U^2}{1-\tilde\lambda_U + \tilde\lambda_U \omega_i} + \tilde\sigma_\epsilon^2 - \sigma_\epsilon^2. 
\end{align*}
Let $\tilde\lambda_U = \lambda_U$ and $\tilde\sigma_\epsilon^2 = \sigma_\epsilon^2$. Then the above equation will hold as long as $(1-\tilde\rho^2) \tilde\sigma_U^2 = \sigma^2_U$. We can then choose $\tilde{\rho}$ and $\tilde\sigma_U^2$ such that $\tilde\beta = \beta - \tilde\rho \tilde\sigma_U/\sigma_Z \ne \beta$.

Second, consider the case where $\lambda_Z=0$. In this case, 
\eqref{eq:lcar-flex-expectation-non-rho0} is equivalent to 
\begin{align*}
    \frac{\sigma_U^2}{1-\lambda_U + \lambda_U \omega_i}  
& = \frac{\tilde\sigma_U^2}{1-\tilde\lambda_U + \tilde\lambda_U \omega_i} + \tilde\sigma_\epsilon^2 - \sigma_\epsilon^2
- (\tilde\rho\tilde\sigma_U)^2 
\end{align*}
Let $\tilde\lambda_U = \lambda_U$ and $\tilde\sigma_U^2 = \sigma_U^2$. 
Then the above equation will hold as long as $\tilde\sigma_\epsilon^2 
- (\tilde\rho \sigma_U)^2 = \sigma_\epsilon^2$. 
We can then choose $\tilde\rho$ and $\tilde\sigma_\epsilon^2$ such that $\tilde\beta = \beta - \tilde\rho \tilde\sigma_U/\sigma_Z \ne \beta$. 

Third, consider the case where $\lambda_U = 0$.  In this case, 
\eqref{eq:lcar-flex-expectation-non-rho0} is equivalent to 
\begin{align*}
\sigma_U^2  
& = \frac{\tilde\sigma_U^2}{1-\tilde\lambda_U + \tilde\lambda_U \omega_i} + \tilde\sigma_\epsilon^2 - \sigma_\epsilon^2
- (\tilde\rho\tilde\sigma_U)^2 \frac{1}{1-\lambda_Z+\lambda_Z \omega_i}.
\end{align*}
Let $\tilde\lambda_U = \lambda_Z$,  $\tilde{\rho} = 1$ or $-1$, 
$\tilde\sigma_\epsilon^2 = \sigma_U^2 + \sigma_\epsilon^2$, 
and $\tilde\sigma_U^2 > 0$ be any positive number. Then the above equation will hold. We can then choose $\tilde{\rho}$ and $\tilde\sigma_U^2$ such that $\tilde\beta = \beta - \tilde\rho \tilde\sigma_U/\sigma_Z \ne \beta$. 

From the above, Proposition \ref{prop:nonid-lcar_rho0} holds. 
\end{proof}

\section{Technical details for the parsimonious model in \S \ref{sec:leroux_car}}\label{sec:proofs-lcar-pars}

\subsection{Proof of Theorem \ref{thm:lcar-pars-identifiability}}
To prove Theorem \ref{thm:lcar-pars-identifiability}, we need the following four lemmas. 

\begin{lemma}\label{lem:lcar-pars-strictly-increasing}
Consider any constants $a>0$ and $c\ge b \ge 0$. The function:
\[
g(r) = \frac{\frac{r}{\sqrt{a+r}}+b}{\sqrt{a+r}+c}
= \frac{r+b\sqrt{a+r}}{a+r+c\sqrt{a+r}}
\]
is strictly increasing in $r$ for $r\in[0,\infty)$.
\end{lemma}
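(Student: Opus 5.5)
The cleanest route is the substitution $s=\sqrt{a+r}$. This turns $g$ into a rational function whose derivative has an explicitly positive numerator. As $r$ ranges over $[0,\infty)$, $s$ ranges over $[\sqrt a,\infty)$, and the map $r\mapsto s$ is strictly increasing. So it suffices to show that
\[
h(s)=\frac{s^2-a+bs}{s^2+cs}
\]
is strictly increasing on $[\sqrt a,\infty)$. This uses $r=s^2-a$ and $a+r+c\sqrt{a+r}=s^2+cs$. The denominator is positive there, since $s\ge\sqrt a>0$ and $c\ge 0$, so $h$ is smooth and we can differentiate.

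By the quotient rule, the numerator of $h'(s)$ is
\[
(2s+b)(s^2+cs)-(s^2+bs-a)(2s+c).
\]
I would expand this and collect powers of $s$. The $2s^3$ terms cancel. The $s^2$ terms give $(2c+b)s^2-(c+2b)s^2=(c-b)s^2$. The $s$ terms give $bcs-bcs+2as=2as$, and the constant term is $ac$. The numerator is therefore
\[
(c-b)s^2+2as+ac.
\]

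Under the hypotheses $c\ge b$, $a>0$ and $c\ge 0$, the first and last terms are nonnegative. The middle term $2as$ is strictly positive for $s\ge\sqrt a>0$. Hence $h'(s)>0$ on $[\sqrt a,\infty)$, so $h$ is strictly increasing there, and composing with the increasing map $r\mapsto\sqrt{a+r}$ gives the claim for $g$. Finally, the two displayed forms of $g$ agree after multiplying the numerator and denominator of the first form by $\sqrt{a+r}>0$, which is immediate.

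The only step needing care is the expansion of the derivative numerator. That is where the hypothesis $c\ge b$ enters: it makes the coefficient of $s^2$ nonnegative. There is no real obstacle beyond sign bookkeeping.
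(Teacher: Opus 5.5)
Your proof is correct and is essentially the paper's argument. Both show the derivative is positive after clearing the squared denominator, and your substitution $s=\sqrt{a+r}$ simply turns the paper's expression $a+\frac{ab}{2\sqrt{a+r}}+\frac{(c-b)(2a+r)}{2\sqrt{a+r}}$ into the polynomial $(c-b)s^2+2as+ac$, which is exactly $2s$ times it since $dr/ds=2s$, with $c\ge b$ used in the same place to make the $(c-b)$ term nonnegative.
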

\begin{proof}[Proof of Lemma \ref{lem:lcar-pars-strictly-increasing}]

By definition and some algebra, 

\begin{align*}
g'(r)\cdot (a+r+c\sqrt{a+r})^2 &= (1+\frac{b}{2\sqrt{a+r}})(a+r+c\sqrt{a+r}) - (r+b\sqrt{a+r})(1+\frac{c}{2\sqrt{a+r}}) \\
&=a+r+c\sqrt{a+r} + \frac{b(a+r)}{2\sqrt{a+r}} + \frac{bc}{2} - \left( r+b\sqrt{a+r} + \frac{rc}{2\sqrt{a+r}} + \frac{bc}{2} \right) \\
&= a+(c-b)\sqrt{a+r} + \frac{ab-(c-b)r}{2\sqrt{a+r}}\\
&= a+\frac{ab}{2\sqrt{a+r}} + (c-b)\left( \sqrt{a+r}-\frac{r}{2\sqrt{a+r}} \right) \\
&= a + \frac{ab}{2\sqrt{a+r}} + \frac{c-b}{2\sqrt{a+r}}(2a+r) >0.
\end{align*} 
We can then derive Lemma \ref{lem:lcar-pars-strictly-increasing}. 
\end{proof}

\begin{lemma}\label{lem:lcar-pars-strictly-increasing2}
Consider any $r_1,r_2,r_3 \in [0,1)$. If $r_2>r_1$, then 
\[\frac{\sqrt{1+r_2\omega} + \sqrt{1+r_3\omega}}{\sqrt{1+r_1\omega} + \sqrt{1+r_3\omega}}\]
is strictly increasing in $\omega$ for $\omega>0$.
\end{lemma}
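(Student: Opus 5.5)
Write $A=\sqrt{1+r_1\omega}$, $B=\sqrt{1+r_2\omega}$, $C=\sqrt{1+r_3\omega}$, so the function is $h(\omega)=(B+C)/(A+C)$. The plan is to differentiate $h$ directly and show the numerator of $h'(\omega)$ is positive for every $\omega>0$.

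\textbf{Step 1 (derivatives).} We have $A'=r_1/(2A)$, $B'=r_2/(2B)$ and $C'=r_3/(2C)$. The sign of $h'$ is that of
\[
N(\omega)=(B'+C')(A+C)-(A'+C')(B+C).
\]

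\textbf{Step 2 (simplify $N$).} Using $2xx'=r_x$ for each root, $N$ splits into two pieces. The first piece is
\[
B'A - A'B=\frac{r_2A^2-r_1B^2}{2AB}=\frac{r_2-r_1}{2AB},
\]
since $r_2(1+r_1\omega)-r_1(1+r_2\omega)=r_2-r_1$. The cross terms group as $C(B'-A')+C'(A-B)$, which equals
\[
\frac{C}{2}\Big(\frac{r_2}{B}-\frac{r_1}{A}\Big)+\frac{r_3}{2C}(A-B).
\]
The $C'C$ terms cancel, and so does the product $B'C$ against $A'C$ only in part.

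\textbf{Step 3 (sign of the remaining terms).} This is the main obstacle: the term $\frac{r_3}{2C}(A-B)$ is negative, because $B>A$. The first try is to rewrite everything with the identity
\[
\frac{r_2}{B}-\frac{r_1}{A}=\frac{r_2A-r_1B}{AB}.
\]
Write $r_i=(x_i^2-1)/\omega$ with $x_i\in\{A,B,C\}$. Then
\[
r_2A-r_1B=\frac{(B^2-1)A-(A^2-1)B}{\omega}=\frac{(B-A)(AB+1)}{\omega},
\]
and $A-B=-(B-A)$. This gives
\[
2N=\frac{r_2-r_1}{AB}+\frac{B-A}{\omega}\Big(\frac{C(AB+1)}{AB}-\frac{C^2-1}{C}\Big).
\]
The bracket is $C+\frac{C}{AB}-C+\frac1C=\frac{C}{AB}+\frac1C>0$. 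Since $B>A$ whenever $r_2>r_1$ and $\omega>0$, both terms are positive, so $N>0$. Hence $h'>0$ on $(0,\infty)$, and $h$ is strictly increasing there.

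The only delicate point is the substitution $r_i=(x_i^2-1)/\omega$, which needs $\omega>0$; this matches the domain in the statement. The algebra in Step 3 should be double-checked when writing the proof out.
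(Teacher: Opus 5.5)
Your proof is correct, and it takes a different route from the paper. The paper differentiates $\log(B+C)-\log(A+C)$ and factors $\sqrt{1+r\omega}=\sqrt{\omega}\sqrt{\omega^{-1}+r}$. This writes the derivative as $\frac{1}{2\omega}\{g(r_2)-g(r_1)\}$, where $g(r)=\{r/\sqrt{a+r}+b\}/\{\sqrt{a+r}+c\}$ with $a=\omega^{-1}$, $b=r_3/\sqrt{\omega^{-1}+r_3}$ and $c=\sqrt{\omega^{-1}+r_3}$. It then invokes a separate auxiliary lemma that $g$ is strictly increasing in $r$ whenever $a>0$ and $c\ge b\ge 0$; that lemma is proved by a second differentiation, now in $r$. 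So the paper reduces monotonicity in $\omega$ to a monotonicity statement in the slope parameter $r$.

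You instead apply the quotient rule directly and use the substitution $r_i=(x_i^2-1)/\omega$. This turns the numerator of $h'$ into a closed form that is visibly positive. I checked the algebra: the identity $2N=\frac{r_2-r_1}{AB}+\frac{B-A}{\omega}\bigl(\frac{C}{AB}+\frac{1}{C}\bigr)$ is right, so the ``double-check'' caveat can go. Substituting $r_2-r_1=(B^2-A^2)/\omega$ as well gives the fully factored form $2N=\frac{B-A}{\omega}\bigl(\frac{A+B+C}{AB}+\frac{1}{C}\bigr)$.

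Your version is self-contained and avoids the auxiliary lemma. It also never uses $r_3\ge 0$, which the paper needs to get $b\ge 0$. Neither argument uses $r_i<1$. That matters, because the lemma is later applied with $r_k=\lambda_k/(1-\lambda_k)$, which can exceed $1$.

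When you write it up, drop the exploratory phrasing in Steps 2--3. In particular, the remark that $B'C$ cancels ``against $A'C$ only in part'' is inaccurate: nothing cancels there, and those terms are simply grouped as $C(B'-A')$.
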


\begin{proof}[Proof of Lemma \ref{lem:lcar-pars-strictly-increasing2}]
Since logarithm is strictly increasing, it suffices to show that
\[f(\omega) = \log(\sqrt{1+r_2\omega} + \sqrt{1+r_3 \omega}) - \log(\sqrt{1+r_1\omega} + \sqrt{1+r_3\omega})\] is strictly increasing in $\omega \in (0,\infty)$. We now compute its derivative: 
\begin{align}
\frac{\partial f(\omega)}{\partial \omega} &=\frac{\frac{r_2}{2\sqrt{1+r_2\omega}}+\frac{r_3}{2\sqrt{1+r_3\omega}}}{\sqrt{1+r_2\omega} + \sqrt{1+r_3 \omega}} - \frac{\frac{r_1}{2\sqrt{1+r_1\omega}}+\frac{r_3}{2\sqrt{1+r_3\omega}}}{\sqrt{1+r_1\omega} + \sqrt{1+r_3 \omega}} \nonumber\\ 
&= \frac{1}{2\omega} \left(\frac{\frac{r_2}{\sqrt{\omega^{-1}+r_2}} + \frac{r_3}{\sqrt{\omega^{-1}+r_3}}}{\sqrt{\omega^{-1}+r_2} + \sqrt{\omega^{-1}+r_3}} - \frac{\frac{r_1}{\sqrt{\omega^{-1}+r_1}} + \frac{r_3}{\sqrt{\omega^{-1}+r_3}}}{\sqrt{\omega^{-1}+r_1} + \sqrt{\omega^{-1}+r_3}} \right). \label{eq:lcar-pars-increasing-derivative}
\end{align}
Now we apply Lemma \ref{lem:lcar-pars-strictly-increasing} with $a=\omega^{-1}, b= r_3/\sqrt{\omega^{-1}+r_3}, c = \sqrt{\omega^{-1}+r_3}$. Note that $r_2 > r_1$, and  $c-b = (\omega^{-1}+r_3-r_3)/\sqrt{\omega^{-1}+r_3} = \omega^{-1}/\sqrt{\omega^{-1}+r_3} > 0$. 
From Lemma \ref{lem:lcar-pars-strictly-increasing}, \eqref{eq:lcar-pars-increasing-derivative} must be positive. 
Therefore, Lemma \ref{lem:lcar-pars-strictly-increasing2} holds. 
\end{proof}

\begin{lemma}\label{lem:lcar-pars-linind}
Consider any $a_1, a_2, a_3\in\mathbb{R}$ and $\lambda_1,\lambda_2,\lambda_3 \in [0,1)$. If $\lambda_1, \lambda_2,\lambda_3$ are all distinct, and for $\omega=0$ and at least two distinct positive values of $\omega$, 
\begin{equation}\label{eq:lcar-pars-linind}
\frac{a_1}{\sqrt{1-\lambda_1+\lambda_1\omega}} + \frac{a_2}{\sqrt{1-\lambda_2+\lambda_2\omega}} + \frac{a_3}{\sqrt{1-\lambda_3+\lambda_3\omega}} = 0,
\end{equation}
then $a_1=a_2=a_3=0$.
\end{lemma}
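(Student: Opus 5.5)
The plan is to turn the three equations into a $3\times 3$ linear system and show it has only the trivial solution, using the monotonicity result in Lemma \ref{lem:lcar-pars-strictly-increasing2}.

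\textbf{Normalization.} First I would write $1-\lambda_i+\lambda_i\omega=(1-\lambda_i)(1+r_i\omega)$ with $r_i=\lambda_i/(1-\lambda_i)\in[0,\infty)$. The $r_i$ are distinct because $\lambda\mapsto\lambda/(1-\lambda)$ is strictly increasing on $[0,1)$. Set $b_i=a_i/\sqrt{1-\lambda_i}$ and $x_i(\omega)=(1+r_i\omega)^{-1/2}$. Then \eqref{eq:lcar-pars-linind} reads $\sum_i b_i x_i(\omega)=0$, and it suffices to show $b_1=b_2=b_3=0$.

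Lemma \ref{lem:lcar-pars-strictly-increasing2} requires parameters in $[0,1)$. I would handle this by rescaling: pick $c>\max_i r_i$ and replace $(r_i,\omega)$ by $(r_i/c,\,c\omega)$. This leaves every $x_i$ unchanged, preserves monotonicity in $\omega$, and keeps two distinct positive evaluation points. So I may assume $r_i\in[0,1)$. By relabeling I would also assume $r_2>r_1$.

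\textbf{Reduction to a single ratio.} At $\omega=0$ every $x_i=1$, so $b_1+b_2+b_3=0$. Eliminating $b_3$ gives
$$b_1\{x_1(\omega)-x_3(\omega)\}+b_2\{x_2(\omega)-x_3(\omega)\}=0$$
at the two positive points $\omega_1\ne\omega_2$. Rationalizing gives
$$x_i-x_3=\frac{(r_3-r_i)\,\omega}{\sqrt{1+r_i\omega}\,\sqrt{1+r_3\omega}\,\bigl(\sqrt{1+r_i\omega}+\sqrt{1+r_3\omega}\bigr)}.$$
For $\omega>0$ this is nonzero whenever $i\neq 3$, since the $r$'s are distinct. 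Define
$$R(\omega)=\frac{x_1-x_3}{x_2-x_3}=\frac{r_3-r_1}{r_3-r_2}\cdot\frac{\sqrt{1+r_2\omega}}{\sqrt{1+r_1\omega}}\cdot\frac{\sqrt{1+r_2\omega}+\sqrt{1+r_3\omega}}{\sqrt{1+r_1\omega}+\sqrt{1+r_3\omega}}.$$
Because $r_2>r_1$, the middle factor is strictly increasing in $\omega>0$. By Lemma \ref{lem:lcar-pars-strictly-increasing2}, the last factor is positive and strictly increasing. Their product is therefore positive and strictly increasing, and multiplying by the nonzero constant $(r_3-r_1)/(r_3-r_2)$ gives a function that is strictly monotone. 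Hence $R(\omega_1)\neq R(\omega_2)$.

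\textbf{Conclusion.} Dividing the reduced equation by $x_2-x_3\neq0$ gives $b_1R(\omega)+b_2=0$ at both points. Subtracting the two equations gives $b_1\{R(\omega_1)-R(\omega_2)\}=0$, so $b_1=0$. Then $b_2=0$, and $b_3=-b_1-b_2=0$. Therefore $a_1=a_2=a_3=0$.

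The main obstacle is establishing strict monotonicity of $R$. The algebraic factorization of $x_i-x_3$ is what splits $R$ into a trivially monotone factor and the factor handled by Lemma \ref{lem:lcar-pars-strictly-increasing2}. The rescaling to $r_i<1$ is needed only because of how that lemma is stated.
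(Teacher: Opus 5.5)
Your proposal is correct and follows the paper's proof essentially step for step. Both arguments use the normalization $b_k=a_k/\sqrt{1-\lambda_k}$, $r_k=\lambda_k/(1-\lambda_k)$, eliminate $b_3$ via $\omega=0$, rationalize $x_i-x_3$, and get strict monotonicity of the resulting ratio from Lemma~\ref{lem:lcar-pars-strictly-increasing2}. Your rescaling to $r_i\in[0,1)$ is a small tidy-up, since the paper applies that lemma directly to $r_k$ that may exceed $1$; this is harmless, because that lemma's proof in fact works for all $r_k\ge 0$.
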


\begin{proof}[Proof of Lemma \ref{lem:lcar-pars-linind}]
Without loss of generality, we assume $\lambda_1 < \lambda_2$. Let $b_k = a_k/\sqrt{1-\lambda_k}$ and $r_k = \lambda_k/(1-\lambda_k)$ for $k=1,2,3$. Since $\lambda_1,\lambda_2,\lambda_3$ are all distinct, $r_1,r_2$ and $r_3$ must also be all distinct. Because \eqref{eq:lcar-pars-linind} holds for $\omega=0$, we have $b_1+b_2+b_3=0$, and consequently $b_3 = -(b_1+b_2)$. For any positive $\omega$ such that \eqref{eq:lcar-pars-linind} holds, we then have 
\begin{align}
&\frac{b_1}{\sqrt{1+r_1\omega}} + \frac{b_2}{\sqrt{1+r_2\omega}} + \frac{-(b_1+b_2)}{\sqrt{1+r_3\omega}} =0 \nonumber\\
\iff & b_1\left(\frac{1}{\sqrt{1+r_1\omega}} - \frac{1}{\sqrt{1+r_3\omega}} \right) + b_2\left( \frac{1}{\sqrt{1+r_2\omega}} - \frac{1}{\sqrt{1+r_3\omega}}\right) =0. \label{eq:lcar-pars-interm1} 
\end{align}
Note that
\begin{align*}
\frac{1}{\sqrt{1+r_1\omega}} - \frac{1}{\sqrt{1+r_3\omega}} &= \frac{\sqrt{1+r_3\omega} - \sqrt{1+r_1\omega}}{\sqrt{1+r_1\omega}\sqrt{1+r_3\omega}} \\
&= \frac{(1+r_3\omega) - (1+r_1\omega)}{\sqrt{1+r_1\omega}\sqrt{1+r_3\omega}(\sqrt{1+r_1\omega}+\sqrt{1+r_3\omega})}\\
&= \frac{(r_3-r_1)\omega}{\sqrt{1+r_1\omega}\sqrt{1+r_3\omega}(\sqrt{1+r_1\omega}+\sqrt{1+r_3\omega})}
\end{align*} 
and by analogous calculation,
\[\frac{1}{\sqrt{1+r_2\omega}} - \frac{1}{\sqrt{1+r_3\omega}} = \frac{(r_3-r_2)\omega}{\sqrt{1+r_2\omega}\sqrt{1+r_3\omega}(\sqrt{1+r_2\omega}+\sqrt{1+r_3\omega})}. \]
From \eqref{eq:lcar-pars-interm1}, for any positive $\omega$ such that \eqref{eq:lcar-pars-linind} holds, we have
\begin{align}\label{eq:b1_b2_re}
&\frac{(r_3-r_1)\omega b_1}{\sqrt{1+r_1\omega}\sqrt{1+r_3\omega}(\sqrt{1+r_1\omega}+\sqrt{1+r_3\omega})} + \frac{(r_3-r_2)\omega b_2}{\sqrt{1+r_2\omega}\sqrt{1+r_3\omega}(\sqrt{1+r_2\omega}+\sqrt{1+r_3\omega})} =0 
\nonumber
\\
\implies & \frac{b_1(r_3-r_1)}{\sqrt{1+r_1\omega}(\sqrt{1+r_1\omega} + \sqrt{1+r_3\omega})} = \frac{-b_2(r_3-r_2)}{\sqrt{1+r_2\omega}(\sqrt{1+r_2\omega} + \sqrt{1+r_3\omega})}. 
\end{align}

Now suppose that $b_1\neq 0$. Then $b_2$ is also nonzero, and consequently  
\begin{equation}\label{eq:lcar-pars-interm2}
\frac{\sqrt{1+r_2\omega}(\sqrt{1+r_2\omega}+\sqrt{1+r_3\omega})}{\sqrt{1+r_1\omega}(\sqrt{1+r_1\omega}+\sqrt{1+r_3\omega})} = \frac{-b_2(r_3-r_2)}{b_1(r_3-r_1)}
\end{equation} is a constant for all positive $\omega$ such that \eqref{eq:lcar-pars-linind}  holds. Recall that $\lambda_2>\lambda_1$ by our assumption, which immediately implies that $r_2 > r_1$ by definition. We can then verify that  $(1+r_2\omega)/(1+r_1\omega)$ is strictly increasing in  $\omega \in (0, \infty)$.
From Lemma \ref{lem:lcar-pars-strictly-increasing2}, 
\[ \frac{\sqrt{1+r_2\omega} + \sqrt{1+r_3\omega}}{\sqrt{1+r_1\omega} + \sqrt{1+r_3\omega}} \]
is also strictly increasing in $\omega\in (0,\infty)$. Thus the LHS of \eqref{eq:lcar-pars-interm2} is strictly increasing in $\omega\in(0,\infty)$, leading to a contradiction that it is constant for at least two distinct positive values of $\omega$. 

From the above, we must have $b_1=0$, which further implies that $b_2 = 0$ from \eqref{eq:b1_b2_re} and consequently $b_3=0$. By definition,  $a_1=a_2=a_3=0$. Therefore, Lemma \ref{lem:lcar-pars-linind} holds. 
\end{proof}

\begin{proof}[Proof of Theorem \ref{thm:lcar-pars-identifiability}] 
By definition, 
\begin{align*}
    \Sigma_{UU}' & = \sigma_U^2 \{ (1-\lambda_U) I_n + \lambda_U \Omega \}^{-1}, 
    \qquad \quad 
    \Sigma_{ZZ}' = \sigma_Z^2 \{ (1-\lambda_Z) I_n + \lambda_Z \Omega \}^{-1}, \\
    \Sigma_{UZ}' & = \rho\sigma_U\sigma_Z \left[\{(1-\lambda_Z)I_n+\lambda_Z \Omega\}^{1/2}\{(1-\lambda_U)I_n + \lambda_U\Omega\}^{1/2}\right]^{-1}, 
\end{align*}
where $\Omega$ is a diagonal matrix with diagonal elements $\omega_1, \ldots, \omega_n$, and $\omega_1, \ldots, \omega_n$ are the eigenvalues of $D-W$. 
Because $D-W$ is diagonally dominant, the eigenvalues $\omega_1, \ldots, \omega_n$ are all nonnegative. 
By the same logic as \eqref{eq:condvar}, the observed data distribution is determined by 
\begin{align}\label{eq:lcar_obs-pars}
    \var(Z') & = \Sigma_{ZZ}'
    = \sigma_Z^2 \{ (1-\lambda_Z) I_n + \lambda_Z \Omega \}^{-1}, 
    \nonumber\\
    \cov(Y', Z') \var(Z')^{-1} & = \beta I_n + \Sigma_{UZ}' \Sigma'^{-1}_{ZZ}
    \nonumber
    \\
    & =\beta I_n + \rho \frac{\sigma_U}{ \sigma_Z} \{ (1-\lambda_{U} )I_n + \lambda_{U} \Omega \}^{-1/2} \{ (1-\lambda_Z) I_n + \lambda_Z \Omega \}^{1/2},
    \nonumber
    \\
    \var(Y' \mid Z') & = \Sigma_{UU}' - \Sigma_{UZ}' \Sigma'^{-1}_{ZZ} \Sigma'_{ZU} + \sigma_{\epsilon}^2 I_n
    \nonumber
    \\
    & = 
    (1-\rho^2) \sigma_U^2 \{ (1-\lambda_{U} )I_n + \lambda_{U} \Omega \}^{-1} 
    + \sigma_{\epsilon}^2 I_n. 
\end{align}
Let $(\sigma_Z^2, \lambda_Z, \sigma_U^2, \lambda_U, \rho, \beta, \sigma_\epsilon^2)$ be the true data generating parameters, 
and $(\tilde\sigma_Z^2, \tilde\lambda_Z, \tilde\sigma_U^2, \allowbreak\tilde\lambda_U, \tilde\rho, \tilde\beta, \tilde\sigma_\epsilon^2)$ be parameters leading to the same observable distribution of $(Y, Z)$. 
From the expression of $\var(Z')$ in \eqref{eq:lcar_obs-pars}, we have, for $1\le i \le n$, 
\begin{equation}\nonumber%
\frac{\sigma_Z^2}{1-\lambda_Z+\lambda_Z\omega_i} = \frac{\tilde\sigma_Z^2}{1-\tilde\lambda_Z+\tilde\lambda_Z\omega_i}. 
\end{equation} From Lemma \ref{lem:lcar-varz}, because there are at least two distinct $\omega_i$s, we have $(\tilde\sigma_Z^2, \tilde\lambda_Z) = (\sigma_Z^2, \lambda_Z)$. From the expression of  $\cov(Y', Z') \var(Z')^{-1}$ and $\var(Y'|Z')$ in \eqref{eq:lcar_obs-pars}, we then have
\begin{align}
\label{eq:lcar-pars-expectation}
\beta + \rho \cdot \frac{\sigma_U}{\sigma_Z} \sqrt{\frac{1-\lambda_Z + \lambda_Z \omega_i}{1-\lambda_U + \lambda_U \omega_i}} & = \tilde\beta + \tilde\rho \cdot \frac{\tilde\sigma_U}{\sigma_Z} \sqrt{\frac{1-\lambda_Z + \lambda_Z \omega_i}{1-\tilde\lambda_U + \tilde\lambda_U \omega_i}}
\\
\label{eq:lcar-pars-condvar}
\frac{(1-\rho^2)\sigma_U^2}{1-\lambda_U + \lambda_U\omega_i} + \sigma_\epsilon^2 & = \frac{(1-\tilde\rho^2)\tilde\sigma_U^2}{1-\tilde\lambda_U + \tilde\lambda_U\omega_i} + \tilde\sigma_\epsilon^2.
\end{align}

We first consider case (i) where
$\lambda_U\neq \lambda_Z$ and $\rho\neq 0$. From the discussion before, $(\tilde\sigma_Z^2, \tilde\lambda_Z) = (\sigma_Z^2, \lambda_Z)$. From \eqref{eq:lcar-pars-expectation}, we have 
\[\frac{\beta-\tilde\beta}{\sqrt{1-\lambda_Z+\lambda_Z\omega_i}} + \frac{\rho\sigma_U/\sigma_Z}{\sqrt{1-\lambda_U + \lambda_U \omega_i}} - \frac{\tilde\rho \tilde\sigma_U/\sigma_Z}{\sqrt{1-\tilde\lambda_U + \tilde\lambda_U \omega_i}} = 0.\]
Let $v = (1,\ldots,1)^\top \in \mathbb{R}^n$. We can verify that that $(D-W)v=0$. Thus, $0$ is an eigenvalue of $D-W$, i.e., one of the $\omega_i$s is zero. Because $\{\omega_1, \ldots, \omega_n\}$ contains contains at least three distinct values, from Lemma \ref{lem:lcar-pars-linind}, if $\lambda_Z, \lambda_U, \tilde\lambda_U$ are all distinct, then we must have $\rho\sigma_U/\sigma_Z=0$ and consequently $\rho =0$, leading to a contradiction. Thus, $\lambda_Z, \lambda_U, \tilde\lambda_U$ cannot be all distinct. Because $\lambda_Z\neq \lambda_U$ in this case, we have  either $\tilde\lambda_U = \lambda_Z$ or $\tilde\lambda_U = \lambda_U$. If $\tilde\lambda_U=\lambda_Z$, then the RHS of \eqref{eq:lcar-pars-expectation} takes a constant value for all $\omega_i$s, while the LHS is strictly monotone in $\omega_i$s, leading to a contradiction. Thus, we must have $\tilde\lambda_U=\lambda_U$. 
From \eqref{eq:lcar-pars-expectation}, for all $1\le i \le n$, we then have 
\begin{equation}\label{eq:lcar-pars-beta-interm2}
\beta-\tilde\beta = \frac{\tilde\rho\tilde\sigma_U - \rho\sigma_U}{\sigma_Z}\cdot \sqrt{\frac{1-\lambda_Z + \lambda_Z \omega_i}{1-\lambda_U + \lambda_U \omega_i}}.
\end{equation} 
Recall that $\lambda_Z\neq\lambda_U$. If $\tilde\rho\tilde\sigma_U \neq \rho\sigma_U$, then the RHS of \eqref{eq:lcar-pars-beta-interm2} is strictly monotone in $\omega_i$, leading to a contradiction. Thus, we must have $\tilde\rho \sigma_U = \rho\sigma_U$ and $\tilde\beta=\beta$.
In sum, in this case, we have 
$(\tilde\sigma_Z^2, \tilde\lambda_Z, \tilde\lambda_U, \tilde\beta, \tilde\rho \tilde\sigma_U) = (\sigma_Z^2, \lambda_Z, \lambda_U, \beta, \rho \sigma_U)$.

We consider then the case where 
$\lambda_U\neq \lambda_Z, \lambda_U \neq 0$ and $\rho^2\neq 1$. 
From the conditions in Theorem \ref{thm:lcar-pars-identifiability}, 
$\{\omega_1, \ldots, \omega_n\}$ contains at least three distinct values. From \eqref{eq:lcar-pars-condvar} and using Lemma \ref{lem:lcar-expression-linear-independence} with $K=1$ or $K=2$ (depending on whether $\tilde\lambda_U=0$), if $\tilde\lambda_U \neq \lambda_U$, then $(1-\rho^2)\sigma_U^2 =0$, leading to a contradiction. Thus, we have $\tilde\lambda_U = \lambda_U \neq 0$. Using Lemma \ref{lem:lcar-expression-linear-independence} again with $K=1$, we then have $(1-\tilde\rho^2)\tilde\sigma_U^2 = (1-\rho^2)\sigma_U^2$ and $\tilde\sigma_\epsilon^2 = \sigma_\epsilon^2$. In addition, \eqref{eq:lcar-pars-expectation} becomes equivalent to 
\begin{equation}\label{eq:lcar-pars-exp-interm}
(\beta-\tilde\beta) + \frac{\rho \sigma_U - \tilde\rho \tilde\sigma_U}{\sigma_Z}\sqrt{\frac{1-\lambda_Z + \lambda_Z \omega_i}{1-\lambda_U+\lambda_U\omega_i}} =0.
\end{equation}
Because $\lambda_U\neq \lambda_Z$, $(1-\lambda_Z+\lambda_Z\omega_i) /(1-\lambda_U+\lambda_U \omega_i)$ is strictly in monotone in $\omega_i$. From \eqref{eq:lcar-pars-exp-interm}, we must have $\tilde\rho\tilde\sigma_U = \rho\sigma_U$ and $\beta = \tilde\beta$. Because $(1-\tilde\rho^2)\tilde\sigma_U^2 = (1-\rho^2)\sigma_U^2$, we further have $\tilde\sigma_U^2 = \sigma_U^2$ and $\tilde\rho = \rho$. 
Thus, in this case, 
we have 
$(\tilde\sigma_Z^2, \tilde\lambda_Z, \tilde\sigma_U^2, \tilde\lambda_U, \tilde\rho, \tilde\beta, \tilde\sigma_\epsilon^2) = (\sigma_Z^2, \lambda_Z, \sigma_U^2, \lambda_U, \rho, \beta, \sigma_\epsilon^2)$. 

Finally, we consider the case where $\lambda_U\neq \lambda_Z$, $\lambda_U\neq 0$ and $\rho\neq 0$. 
Following the logic of case (i), 
we have $(\tilde\sigma_Z^2, \tilde\lambda_Z, \tilde\lambda_U, \tilde\beta, \tilde\rho \tilde\sigma_U) = (\sigma_Z^2, \lambda_Z, \lambda_U, \beta, \rho \sigma_U)$. 
From \eqref{eq:lcar-pars-condvar}, we then have, for all $1\le i\le n$,  
\[ \frac{(1-\rho^2)\sigma_U^2 - (1-\tilde\rho^2)\tilde\sigma_U^2}{1-\lambda_U + \lambda_U \omega_i} +\sigma_\epsilon^2 - \tilde\sigma_\epsilon^2  =0.\]
Because $\lambda_U\neq 0$, we can apply Lemma \ref{lem:lcar-expression-linear-independence} with $K=1$, and derive that $(1-\tilde\rho^2)\tilde\sigma_U^2 = (1-\rho^2)\sigma_U^2$ and $\tilde\sigma_\epsilon^2 = \sigma_\epsilon^2$. Since $\tilde\rho\sigma_U = \rho\sigma_U$, we further have 
$\tilde\sigma_U^2 = \sigma_U^2$ and $\tilde\rho =\rho$.
Thus, in this case, we have 
$(\tilde\sigma_Z^2, \tilde\lambda_Z, \tilde\sigma_U^2, \tilde\lambda_U, \tilde\rho, \tilde\beta, \tilde\sigma_\epsilon^2) = (\sigma_Z^2, \lambda_Z, \sigma_U^2, \lambda_U, \rho, \beta, \sigma_\epsilon^2)$.

Finally, we prove Theorem \ref{thm:lcar-pars-identifiability}. 
Case (i) in Theorem \ref{thm:lcar-pars-identifiability} follows from the first part above, 
and case (ii) in Theorem \ref{thm:lcar-pars-identifiability} follows from the second and third parts. 
Therefore, we derive Theorem \ref{thm:lcar-pars-identifiability}. 
\end{proof}

\subsection{A remark on the necessity of $\lambda_{U} \neq \lambda_Z$} 

\begin{proposition}\label{prop:lcar-pars-lambdaU-lambdaZ-equal}
Consider $n$ spatial locations with a proximity matrix $W$, and assume that the data generating process follows \eqref{eq:simp-model}, \eqref{eq:leroux_car} and \eqref{eq:lcar-crosscov-pars}. Suppose that $\lambda_U=\lambda_Z$. 
Then the treatment effect $\beta$ is not identifiable.
\end{proposition}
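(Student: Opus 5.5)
The plan is to construct an explicit alternative parameter vector that reproduces the observed-data quantities in \eqref{eq:lcar_obs-pars} but has a different $\beta$. From the proof of Theorem \ref{thm:lcar-pars-identifiability}, the distribution of $(Y,Z)$ is determined by $\var(Z')$, $\cov(Y',Z')\var(Z')^{-1}$ and $\var(Y'\mid Z')$. Once $\lambda_U=\lambda_Z$, the ratio
\[
\{(1-\lambda_Z+\lambda_Z\omega_i)/(1-\lambda_U+\lambda_U\omega_i)\}^{1/2}
\]
equals $1$ for every $i$. Hence \eqref{eq:lcar-pars-expectation} reduces to the scalar identity $\beta+\rho\sigma_U/\sigma_Z=\tilde\beta+\tilde\rho\tilde\sigma_U/\sigma_Z$, provided we also keep $\tilde\lambda_U=\lambda_U$. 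This is exactly the situation in which $Z$ and $E(U\mid Z)$ are proportional, so they cannot be separated.

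Concretely, set $(\tilde\sigma_Z^2,\tilde\lambda_Z,\tilde\lambda_U,\tilde\sigma_\epsilon^2)=(\sigma_Z^2,\lambda_Z,\lambda_U,\sigma_\epsilon^2)$. Then $\var(Z')$ matches immediately. The remaining requirements are the two scalar constraints
\[
(1-\tilde\rho^2)\tilde\sigma_U^2=(1-\rho^2)\sigma_U^2,\qquad \tilde\beta=\beta+(\rho\sigma_U-\tilde\rho\tilde\sigma_U)/\sigma_Z,
\]
where the first makes \eqref{eq:lcar-pars-condvar} hold for every $i$.

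Write $c=(1-\rho^2)\sigma_U^2$. Consider first the case $|\rho|<1$, so $c>0$. For any $\tilde\rho\in(-1,1)$, set $\tilde\sigma_U^2=c/(1-\tilde\rho^2)$. Then $\tilde\rho\tilde\sigma_U=\tilde\rho\sqrt{c}/\sqrt{1-\tilde\rho^2}$, which is a strictly increasing bijection from $(-1,1)$ onto $\mathbb{R}$. So we can choose $\tilde\rho$ with $\tilde\rho\tilde\sigma_U\neq\rho\sigma_U$, for example $\tilde\rho=-\rho$ when $\rho\neq0$, or any nonzero $\tilde\rho$ when $\rho=0$. This gives $\tilde\beta\neq\beta$, and in fact $\tilde\beta$ can be any real number.

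In the boundary case $|\rho|=1$, we have $c=0$. Take $\tilde\rho=\pm1$ and any $\tilde\sigma_U>0$; the product $\tilde\rho\tilde\sigma_U$ can then take any nonzero value, which again lets us make $\tilde\beta\neq\beta$.

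It remains to check that the alternative parameters define a valid model, i.e.\ that the joint covariance is positive (semi)definite. With $\tilde\lambda_U=\tilde\lambda_Z$, the spectral-domain covariance is block-diagonal with $2\times2$ blocks equal to $(1-\lambda_Z+\lambda_Z\omega_i)^{-1}$ times the matrix with diagonal entries $\tilde\sigma_U^2,\sigma_Z^2$ and off-diagonal entries $\tilde\rho\tilde\sigma_U\sigma_Z$. This is positive definite exactly when $|\tilde\rho|<1$, and merely nonnegative definite when $|\tilde\rho|=1$. So the interior choice is valid whenever the original model was.

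I do not expect any real obstacle. The only point needing care is the degenerate $|\rho|=1$ case, and that is handled by the same parameter space that the paper already allows, namely $\rho\in[-1,1]$.
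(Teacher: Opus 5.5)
Your proposal is correct and is essentially the paper's argument: both fix $(\tilde\sigma_Z^2,\tilde\lambda_Z,\tilde\lambda_U,\tilde\sigma_\epsilon^2)=(\sigma_Z^2,\lambda_Z,\lambda_U,\sigma_\epsilon^2)$ and reduce matching the observed-data distribution to the two scalar constraints $\beta+\rho\sigma_U/\sigma_Z=\tilde\beta+\tilde\rho\tilde\sigma_U/\sigma_Z$ and $(1-\tilde\rho^2)\tilde\sigma_U^2=(1-\rho^2)\sigma_U^2$, which leave a free direction in $(\tilde\sigma_U^2,\tilde\rho,\tilde\beta)$. Your explicit choice (e.g.\ $\tilde\rho=-\rho$, $\tilde\sigma_U=\sigma_U$, $\tilde\beta=\beta+2\rho\sigma_U/\sigma_Z$ when $\rho\neq 0$) and your block-wise positive-definiteness check fill in what the paper leaves as ``we can verify''; your $|\rho|=1$ case is harmless but vacuous, because the paper requires $\Sigma$ to be positive definite, and in this model that forces $|\rho|<1$.
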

\begin{proof}[Proof of Proposition \ref{prop:lcar-pars-lambdaU-lambdaZ-equal}]

Let $(\sigma_Z^2, \lambda_Z, \sigma_U^2, \lambda_U, \rho, \beta, \sigma_\epsilon^2)$ be the true data generating parameters, 
where $\lambda_U=\lambda_Z$. 
From the proof of Theorem \ref{thm:lcar-pars-identifiability}, it suffices to find another set of parameters $(\tilde\sigma_Z^2, \tilde\lambda_Z, \tilde\sigma_U^2, \tilde\lambda_U, \tilde\rho, \tilde\beta, \tilde\sigma_\epsilon^2)$ such that 
$\tilde\beta \ne \beta$ and, for $1\le i \le n$, 
\begin{align}\label{eq:parsi_non_iden}
    \frac{\sigma_Z^2}{1-\lambda_Z+\lambda_Z\omega_i} & = \frac{\tilde\sigma_Z^2}{1-\tilde\lambda_Z+\tilde\lambda_Z\omega_i}, 
    \nonumber
    \\
\beta + \rho \cdot \frac{\sigma_U}{\sigma_Z} \sqrt{\frac{1-\lambda_Z + \lambda_Z \omega_i}{1-\lambda_U + \lambda_U \omega_i}} & = \tilde\beta + \tilde\rho \cdot \frac{\tilde\sigma_U}{\tilde \sigma_Z} \sqrt{\frac{1-\tilde \lambda_Z + \tilde \lambda_Z \omega_i}{1-\tilde\lambda_U + \tilde\lambda_U \omega_i}}, 
\nonumber
\\
\frac{(1-\rho^2)\sigma_U^2}{1-\lambda_U + \lambda_U\omega_i} + \sigma_\epsilon^2 & = \frac{(1-\tilde\rho^2)\tilde\sigma_U^2}{1-\tilde\lambda_U + \tilde\lambda_U\omega_i} + \tilde\sigma_\epsilon^2.
\end{align}
Let $(\tilde{\lambda}_Z, \tilde\sigma_Z, \tilde\sigma_\epsilon^2, \tilde\lambda_U) = (\lambda_Z, \sigma_Z, \sigma_\epsilon^2, \lambda_U)$, 
with the values of $(\tilde\sigma_U^2, \tilde\rho, \tilde\beta)$ to be specified later. 
Then the first equation in \eqref{eq:parsi_non_iden} holds obviously, and the latter two equations hold as long as  
\begin{align}\label{eq:parsi_non_iden2}
    \beta + \rho\sigma_U/\sigma_Z = \tilde\beta + \tilde\rho\tilde\sigma_U /\sigma_Z,
    \qquad 
    (1-\rho^2)\sigma_U^2 = (1-\tilde\rho^2)\tilde\sigma_U^2.
\end{align}
Note that we have three parameter values to specify, while they only need to satisfy two constraints from \eqref{eq:parsi_non_iden2}. 
We can verify that there exists $(\tilde\sigma_U^2, \tilde\rho, \tilde\beta)$ such that \eqref{eq:parsi_non_iden2} holds and $\tilde\beta \ne \beta$. 
Therefore, Proposition \ref{prop:lcar-pars-lambdaU-lambdaZ-equal} holds. 
\end{proof}

\subsection{A remark on the necessity of 
$\rho \ne 0$ or $\lambda_U \ne 0$
when $\lambda_{U} \neq \lambda_Z$} 

\begin{proposition}\label{prop:lcar-pars-lambdaU-lambdaZ-unequal-rho-lambdaU-zero}
Consider $n$ spatial locations with a proximity matrix $W$, and assume that the data generating process follows \eqref{eq:simp-model}, \eqref{eq:leroux_car} and \eqref{eq:lcar-crosscov-pars}. Suppose that $\lambda_U\ne \lambda_Z$ and $\rho = \lambda_U = 0$. 
Then the treatment effect $\beta$ is not identifiable.
\end{proposition}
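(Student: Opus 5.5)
Under $\rho=0$ and $\lambda_U=0$ the true observables in \eqref{eq:lcar_obs-pars} are very simple, so I will construct an alternative parameter vector directly, in the same way as in the proofs of Propositions \ref{prop:nonid-lcar_rho0} and \ref{prop:lcar-pars-lambdaU-lambdaZ-equal}. From \eqref{eq:lcar_obs-pars}, with $\rho=\lambda_U=0$ the truth gives
\[
\var(Z')=\sigma_Z^2\{(1-\lambda_Z)I_n+\lambda_Z\Omega\}^{-1},\qquad
\cov(Y',Z')\var(Z')^{-1}=\beta I_n,\qquad
\var(Y'\mid Z')=(\sigma_U^2+\sigma_\epsilon^2)I_n .
\]
Keeping $(\tilde\sigma_Z^2,\tilde\lambda_Z)=(\sigma_Z^2,\lambda_Z)$ fixes $\var(Z')$. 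It then suffices to choose $(\tilde\sigma_U^2,\tilde\lambda_U,\tilde\rho,\tilde\beta,\tilde\sigma_\epsilon^2)$ so that the analogues of \eqref{eq:lcar-pars-expectation} and \eqref{eq:lcar-pars-condvar} (with the true side as above) hold for all $\omega_i$, while $\tilde\beta\neq\beta$.

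The key choice is $\tilde\lambda_U=\lambda_Z$. The mean coefficient then becomes
\[
\tilde\beta+\tilde\rho\,\frac{\tilde\sigma_U}{\sigma_Z}\sqrt{\frac{1-\lambda_Z+\lambda_Z\omega_i}{1-\lambda_Z+\lambda_Z\omega_i}}=\tilde\beta+\tilde\rho\tilde\sigma_U/\sigma_Z,
\]
which is constant in $\omega_i$. We set it equal to $\beta$ by taking $\tilde\beta=\beta-\tilde\rho\tilde\sigma_U/\sigma_Z$.

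The conditional variance requires
\[
\frac{(1-\tilde\rho^2)\tilde\sigma_U^2}{1-\lambda_Z+\lambda_Z\omega_i}+\tilde\sigma_\epsilon^2=\sigma_U^2+\sigma_\epsilon^2\qquad\text{for all } i .
\]
Because $\lambda_U\neq\lambda_Z=\tilde\lambda_U$ with $\lambda_U=0$, we have $\lambda_Z\neq0$. The first term therefore varies with $\omega_i$ (the spectrum contains $0$ and at least one positive value), so we must kill it. Take $\tilde\rho\in\{-1,1\}$, so that $1-\tilde\rho^2=0$. Set $\tilde\sigma_\epsilon^2=\sigma_U^2+\sigma_\epsilon^2$, and let $\tilde\sigma_U>0$ be arbitrary. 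Then $\tilde\beta=\beta\mp\tilde\sigma_U/\sigma_Z\neq\beta$, and $\tilde\beta$ can in fact take any value, mirroring the third case in the proof of Proposition \ref{prop:nonid-lcar_rho0}.

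The main obstacle is admissibility. The theorem allows $\rho\in[-1,1]$, but with $|\tilde\rho|=1$ and $\tilde\lambda_U=\lambda_Z$ we get $\tilde\Sigma'_{UU}\tilde\Sigma'_{ZZ}=(\tilde\Sigma'_{UZ})^2$ componentwise. The joint covariance is then only positive semidefinite, which conflicts with the standing positive-definiteness convention. If that convention is enforced, I would fall back to a perturbative construction. Take $\tilde\lambda_U$ close to but different from $\lambda_Z$, and $|\tilde\rho|<1$ small. This cannot match $\var(Y'\mid Z')$ exactly, however, since it reintroduces an $\omega$-dependent term. So I expect the intended argument uses $|\tilde\rho|=1$, which is allowed by $\rho\in[-1,1]$ in \eqref{eq:lcar-crosscov-pars}. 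Here the joint law of $(Y,Z)$ is still a valid Gaussian because $\tilde\sigma_\epsilon^2>0$ and $\var(Z)$ is positive definite. I would state this explicitly, noting that $\var(Y,Z)$ remains positive definite even though $(U,Z)$ is degenerate.
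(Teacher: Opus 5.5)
Your proposal is correct and is the paper's construction: $(\tilde\sigma_Z,\tilde\lambda_Z)=(\sigma_Z,\lambda_Z)$, $\tilde\lambda_U=\lambda_Z$, $\tilde\rho=\pm1$, $\tilde\sigma_\epsilon^2=\sigma_U^2+\sigma_\epsilon^2$, any $\tilde\sigma_U>0$, and $\tilde\beta=\beta-\tilde\rho\tilde\sigma_U/\sigma_Z$. Your admissibility remark also applies to the paper's own proof, which uses the boundary value $|\tilde\rho|=1$ without comment, and this boundary is essential rather than convenient: if $|\tilde\rho|<1$ and $D-W$ has two distinct eigenvalues, the conditional-variance equation forces $\tilde\lambda_U=0$, and the mean equation then forces $\tilde\rho=0$ and $\tilde\beta=\beta$.
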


\begin{proof}[Proof of Proposition \ref{prop:lcar-pars-lambdaU-lambdaZ-unequal-rho-lambdaU-zero}]

Let $(\sigma_Z^2, \lambda_Z, \sigma_U^2, \lambda_U, \rho, \beta, \sigma_\epsilon^2)$ be the true data generating parameters, 
where $\lambda_U\ne \lambda_Z$ and $\rho = \lambda_U = 0$.  
From the proof of Theorem \ref{thm:lcar-pars-identifiability}, it suffices to find another set of parameters $(\tilde\sigma_Z^2, \tilde\lambda_Z, \tilde\sigma_U^2, \tilde\lambda_U, \tilde\rho, \tilde\beta, \tilde\sigma_\epsilon^2)$ such that 
$\tilde\beta \ne \beta$ and, for $1\le i \le n$, 
the three equations in \eqref{eq:parsi_non_iden} hold. 
Let $(\tilde{\lambda}_Z, \tilde\sigma_Z) = (\lambda_Z, \sigma_Z)$, 
$\tilde \lambda_U = \lambda_Z$, $\tilde \rho$ equal $1$ or $-1$,
$\tilde \sigma_\epsilon = (1-\rho^2) \sigma^2_U + \sigma^2_\epsilon$
$\tilde{\beta} = \beta - \tilde{\rho} \tilde{\sigma}_U/\sigma_Z$, with the value of $\tilde\sigma_U^2$ to be specified later.
We can verify that the three equations in \eqref{eq:parsi_non_iden} hold. 
Moreover, $\tilde{\beta}$ is different from $\beta$. 
Therefore, Proposition \ref{prop:lcar-pars-lambdaU-lambdaZ-unequal-rho-lambdaU-zero} holds. 
\end{proof}

\section{Proof for Linear Model of Coregionalization}\label{sec:proofs-lmcoreg}

\subsection{Proof of Theorem \ref{thm:lmcoreg-nonid}} 

\begin{proof}[Proof of Theorem \ref{thm:lmcoreg-nonid}]
We first characterize the observable distribution of $(Y,Z)$. 
From the model assumptions in  \eqref{eq:simp-model} and \eqref{eq:LMC}, we have  
\begin{align*}
Y &= \beta\left(\sum_{t=1}^T a_t L_t\right) + \left(\sum_{t=1}^T b_t L_t\right)+ \epsilon = \sum_{t=1}^T (\beta a_t + b_t)L_t + \epsilon, \\
Z & = \sum_{t=1}^T a_t L_t. 
\end{align*}
By some algebra,  $(Y,Z)$ jointly follows a multivariate Gaussian distribution:
\begin{align*}
\begin{pmatrix}
Y \\
Z\end{pmatrix} &= \mathcal{N} \left(\begin{pmatrix} 0 \\0\end{pmatrix}, \begin{pmatrix} \Sigma_{YY} & \Sigma_{YZ} \\ \Sigma_{ZY} & \Sigma_{ZZ} \end{pmatrix}\right), 
\end{align*}
and their covariance matrices have the following forms:
\begin{align}\label{eq:lmcoreg}
\Sigma_{YY} &= \sigma_\epsilon^2 I_n + \sum_{t=1}^T (\beta a_t+b_t)^2 \Sigma_t, 
\quad 
 \Sigma_{YZ} = \sum_{t=1}^T (\beta a_t + b_t)a_t \Sigma_t,
\quad 
\Sigma_{ZZ} = \sum_{t=1}^T a_t^2 \Sigma_t.
\end{align}

We then prove the nonidentifiability of the treatment effect $\beta$. 
Let $(a_1,\ldots,a_T, b_1,\ldots,b_T, \beta, \phi_1,\ldots,\phi_T,$ $\sigma_\epsilon^2)$ be the true data generating parameters.
Below we find an alternative set of parameters  $(\tilde{a}_1,\ldots,\tilde{a}_T, \tilde{b}_1,\ldots,\tilde{b}_T, \allowbreak \tilde{\beta}, \allowbreak \tilde{\phi}_1,\ldots,\tilde{\phi}_T, \allowbreak \tilde{\sigma}_\epsilon^2)$ with $\tilde\beta \neq \beta$ 
that leads  to the same observable distribution of $(Y,Z)$. 
Specifically, for any $\delta \in \mathbb{R}$, we define 
\begin{align*}
    \tilde{a}_t = a_t, \quad \tilde{\phi}_t = \phi_t, \quad 
    \tilde\sigma^2_\epsilon = \sigma^2_\epsilon, \quad 
    \tilde\beta = \beta - \delta, \quad 
    \tilde{b}_t = b_t + \delta a_t. 
\end{align*}
We can then verify that 
\begin{align*}
\tilde\beta \tilde a_t + \tilde{b}_t &= (\beta-\delta)a_t + b_t + \delta a_t = \beta a_t + b_t, 
\end{align*} 
and thus all the equations in \eqref{eq:lmcoreg} hold.
Moreover, once we choose a nonzero $\delta$, $\tilde{\beta}$ will be different from $\beta$.

From the above, Theorem \ref{thm:lmcoreg-nonid} holds. 
\end{proof}

\section{Proofs for Parametric Bivariate Covariance Models}\label{sec:proofs-continuous-bivariate}

\subsection{Proof of Proposition \ref{prop:K-linind}}

\begin{proposition}\label{prop:K-linind}
 The exponential, Gaussian, and powered exponential families of covariance functions are $K$-linearly independent with respect to any $K$ distinct elements in $\mathbb{R}_{+}$. 
\end{proposition}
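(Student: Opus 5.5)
Fix the parametrizations: exponential $C(\phi,w)=e^{-w/\phi}$, Gaussian $C(\phi,w)=e^{-(w/\phi)^2}$, and powered exponential $C(\phi,w)=e^{-(w/\phi)^\alpha}$ with a known power $\alpha\in(0,2]$. The first two are the special cases $\alpha=1,2$ of the third. In every case the parameter enters only through a positive scale $\theta=\phi^{-\alpha}$, and distinct $\phi$ give distinct $\theta$. So we may write $C(\psi,w)=\exp(-\theta\, w^\alpha)$ and prove the statement once for this form. Throughout, $\alpha$ is a fixed, known constant of the family.

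Take $K$ distinct parameters $\theta_1,\dots,\theta_K>0$ and $K$ distinct points $w_1,\dots,w_K>0$, and suppose $\sum_k c_k \exp(-\theta_k w_j^\alpha)=0$ for every $j$. Substitute $x_j=w_j^\alpha$; these are distinct and positive because $w\mapsto w^\alpha$ is strictly increasing on $(0,\infty)$. The condition becomes $\sum_k c_k e^{-\theta_k x_j}=0$ for $j=1,\dots,K$, that is, $Mc=0$ with $M_{jk}=e^{-\theta_k x_j}$. It therefore suffices to show that $M$ is nonsingular. The ``if'' direction of the definition is trivial.

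For nonsingularity, order $\theta_1<\dots<\theta_K$ and $x_1<\dots<x_K$, and set $t_j=x_j$ and $u_k=-\theta_k$, so that $M_{jk}=e^{t_j u_k}$. The kernel $e^{tu}$ is strictly totally positive, so $\det M>0$. I would not cite this as a black box but prove it by the classical Descartes-rule argument. The key claim is that a nonzero exponential sum $f(x)=\sum_{k=1}^K c_k e^{-\theta_k x}$ with distinct exponents has at most $K-1$ real zeros. I would prove this by induction on $K$. For $K=1$, $f$ never vanishes. For the inductive step, multiply by $e^{\theta_1 x}$, which leaves the zeros unchanged, and differentiate. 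The term $c_1$ disappears, leaving an exponential sum with $K-1$ distinct nonzero exponents $-(\theta_k-\theta_1)$, whose coefficients $-(\theta_k-\theta_1)c_k$ are not all zero unless $c_2=\dots=c_K=0$. Rolle's theorem then says that if $f$ had $K$ zeros, this derivative would have $K-1$ zeros, contradicting the induction hypothesis. In the degenerate case $c_2=\dots=c_K=0$, we get $f=c_1e^{-\theta_1x}$, which has no zeros unless $c_1=0$.

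Applying the claim, if $c\neq 0$ then $f$ would vanish at the $K$ distinct points $x_1,\dots,x_K$, which is impossible. Hence $c=0$, and the family is $K$-linearly independent with respect to any $K$ distinct positive values. The same argument works for any set containing at least $K$ distinct positive values, since we only need $K$ of them.

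The main obstacle is the inductive Rolle step: one must check that the derivative is again a genuine exponential sum with distinct exponents and that the nonzero-coefficient condition carries through. I would also state the dependence on the family carefully. The argument relies on $\alpha$ being common to all $K$ functions. If $\alpha$ were also a free parameter, the reduction to a single variable $x=w^\alpha$ would fail, and I would restrict the claim to the fixed-power family.
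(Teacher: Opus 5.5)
Your proposal is correct and is essentially the paper's proof. The paper also inducts on $K$, divides by the first term and applies Rolle's theorem, but it works directly in $w$ for the family $f(w)^{\psi}$ with $f(w)=e^{-w^c}$, so it must cancel a chain-rule factor $f'(w)$ (nonzero here, though strict monotonicity alone does not guarantee that); your substitution $x=w^\alpha$ avoids this factor entirely. One small slip that does not affect the argument: your $u_k=-\theta_k$ are decreasing, so total positivity gives $\det M$ the sign $(-1)^{K(K-1)/2}$ rather than $\det M>0$, but you only use nonsingularity.
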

To prove Proposition \ref{prop:K-linind}, we need the following lemma. 
\begin{lemma}\label{lem:k-linind-generalized}
Let $f: \mathbb{R}_+\to \mathbb{R}_+$ be a strictly monotonic and differentiable function. 
Consider the family of function of form $f(w)^{\psi}$ for  $\psi \in \mathbb{R}$, i.e., 
$\mathcal{F} = \{ f(\cdot)^{\psi}: \psi \in \mathbb{R} \}$.
Then $\mathcal{F}$ is $K$-linearly independent with respect to any collection of $K$ distinct values in $\mathbb{R}_+$, for any positive integer $K$.  
\end{lemma}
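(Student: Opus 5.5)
Set $u_k = f(w_k)$ for the $K$ distinct points $w_1,\ldots,w_K\in\mathbb{R}_+$. Because $f$ is strictly monotone and takes values in $\mathbb{R}_+$, the numbers $u_1,\ldots,u_K$ are distinct and positive. Now take $K$ distinct exponents $\psi_1<\cdots<\psi_K$ and suppose $\sum_{j=1}^K c_j u^{\psi_j}=0$ for every $u\in\{u_1,\ldots,u_K\}$. The goal is to show $c_1=\cdots=c_K=0$. In matrix form this says the generalized Vandermonde matrix $V=(u_k^{\psi_j})_{k,j}$ is nonsingular. So the plan is to show that the function $h(u)=\sum_j c_j u^{\psi_j}$, unless it is identically zero, has at most $K-1$ distinct positive roots. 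Identically zero then forces all $c_j=0$, because distinct powers of $u$ on $(0,\infty)$ are linearly independent functions.

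The main tool is a Descartes-type rule of signs, proved by induction on the number of nonzero terms using Rolle's theorem. Suppose $h$ has $m\ge 1$ nonzero coefficients and at least $m$ distinct positive zeros. Multiply by $u^{-\psi_1}$, where $\psi_1$ is the smallest exponent with $c_1\neq 0$. This does not change the positive zeros, and the new function is $c_1+\sum_{j\ge2}c_j u^{\psi_j-\psi_1}$. By Rolle's theorem, its derivative $\sum_{j\ge 2}c_j(\psi_j-\psi_1)u^{\psi_j-\psi_1-1}$ has at least $m-1$ distinct positive zeros. The derivative has $m-1$ nonzero coefficients, since $\psi_j\neq\psi_1$. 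The induction hypothesis then makes it identically zero, which is impossible because its coefficients are nonzero. The base case $m=1$ holds because $c_1u^{\psi_1}$ never vanishes on $(0,\infty)$. Therefore a function with $m\le K$ nonzero terms has fewer than $m\le K$ positive zeros. Since $h$ vanishes at the $K$ distinct points $u_k$, every $c_j$ must be zero.

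To recover Proposition~\ref{prop:K-linind}, write each family in this form. For the exponential family, $C(\psi,w)=e^{-w/\psi}=(e^{-w})^{1/\psi}$. For the Gaussian family, $C(\psi,w)=(e^{-w^2})^{1/\psi}$. For the powered exponential family with the power $\alpha$ fixed, $C(\psi,w)=(e^{-w^\alpha})^{1/\psi}$. In each case $f$ is strictly monotone and positive on $\mathbb{R}_+$, and the map $\psi\mapsto 1/\psi$ sends distinct parameters to distinct exponents.

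The differentiability assumption on $f$ is not needed, since Rolle's theorem is applied in the variable $u$. The one delicate point is the normalization step $u^{-\psi_1}$. It is needed so that differentiating removes exactly one term and the term count drops by one. I expect this induction step to be the crux. It is also where the counting must be done with care: zeros of the derivative must lie strictly between consecutive distinct zeros of the function.
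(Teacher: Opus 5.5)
Your proof is correct. Its engine is the same as the paper's: normalize by the smallest power, apply Rolle's theorem, and induct so the number of terms drops by one. The difference is the variable in which Rolle is applied.

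The paper works directly in $w$. It differentiates $a_1+\sum_{k\ge 2}a_k f(w)^{b_k}$ to get $f'(w)\sum_{k\ge2}a_kb_kf(w)^{b_k-1}$, then cancels $f'(\tilde w_k)$ at the Rolle points ``because $f$ is strictly monotone''. That step uses the differentiability hypothesis. It also needs $f'(\tilde w_k)\neq 0$, which strict monotonicity does not guarantee. For example, $w\mapsto 2+(w-1)^3$ is positive, strictly increasing and differentiable on $\mathbb{R}_+$, yet $f'(1)=0$. The gap is harmless for the intended application, since $f(w)=e^{-w^c}$ has $f'(w)\neq 0$ for $w>0$.

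Your substitution $u_k=f(w_k)$ uses only injectivity and positivity of $f$. It moves the argument to generalized polynomials $\sum_j c_j u^{\psi_j}$ on $(0,\infty)$, whose derivatives are explicit. This removes the differentiability assumption, and a vanishing $f'$ never arises. What you prove is the classical Descartes-type bound: $m$ nonzero terms give at most $m-1$ positive zeros. That is a cleaner and more general statement than the paper's induction on $K$. It also covers $w=0$ whenever $f$ is defined and positive there, as $e^{-w^c}$ is.

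Two small remarks. First, you do not need to invoke linear independence of distinct powers as functions. It suffices to state the induction claim directly as ``at most $m-1$ distinct positive zeros,'' as your final summary effectively does. Second, under the paper's parametrization $e^{-(w/\psi)^c}$, the exponent is $\psi^{-c}$, not $1/\psi$; likewise it is $\psi^{-2}$ for the Gaussian. Both maps are still injective on $\psi>0$, so the conclusion is unaffected.
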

\begin{proof}[Proof of Lemma \ref{lem:k-linind-generalized}]
We prove this lemma by induction.

Consider first the case where $K=1$. 
For any given $w \in \mathbb{R}_+$ and $\psi \in \mathbb{R}$, suppose that $a f(w)^{\psi} = 0$ for some $a\in \mathbb{R}$. 
Because $f(w)>0$ by definition, we must have $a = 0$. 
Thus, $\mathcal{F}$ is $1$-linearly independent with respect to any collection of $1$ value in $\mathbb{R}_+$.

Now assume 
that $\mathcal{F}$ is $(n-1)$-linearly independent with respect to any collection of $(n-1)$ distinct values in $\mathbb{R}_+$, for some $n\ge 2$. 
Consider any $w_1, w_2, \ldots, w_n\in \mathbb{R}_+$ that are mutually distinct, and any nonzero $\psi_1, \psi_2, \ldots, \psi_n$ that are mutually distinct. 
Without loss of generality, we assume $\psi_1< \psi_2 < \ldots < \psi_n$
and 
$w_1 < w_2 < \ldots < w_n$. 
Suppose that, for some $a_1, a_2, \ldots, a_n \in \mathbb{R}$, 
\begin{align}\label{eq:linear_depend_n}
a_1 f(w)^{\psi_1} + a_2 f(w)^{\psi_2} + 
\ldots + 
a_n f(w)^{\psi_n} = 0 
\quad \text{for } 
w \in \{ w_1, w_2, \ldots, w_n \}. 
\end{align}
This then implies that, for $w \in \{ w_1, w_2, \ldots, w_n \}$, 
\begin{align*}
0 & = f(w)^{-\psi_1} \big\{ a_1 f(w)^{\psi_1} + 
\ldots + 
a_n f(w)^{\psi_n} \big\}
= a_1 + a_2 f(w)^{\psi_2-\psi_1} + 
\ldots + 
a_n f(w)^{\psi_n-\psi_1}\\
& = a_1 + a_2 f(w)^{b_2} + 
\ldots + 
a_n f(w)^{b_n}, 
\end{align*}
where $b_k = \psi_k-\psi_1 > 0$ for $2\le k \le n$. 
By Rolle’s Theorem, there exist $\tilde{w}_1 < \tilde{w}_2 < \ldots \tilde{w}_{n-1}$ such that, 
$w_k < \tilde{w}_k < w_{k+1}$ for $1\le k \le n-1$, and, 
for $w \in \{\tilde{w}_1, \tilde{w}_2, \ldots, \tilde{w}_{n-1}\}$
\begin{align*}
0 
& = a_2b_2 f(w)^{b_2-1} f'(w) + 
\ldots + 
a_n b_n f(w)^{b_n-1} f'(w) \\
&= 
f'(w) \big\{ 
a_2 b_2 f(w)^{b_2-1} + 
\ldots + 
a_n b_n f(w)^{b_n-1} 
\big\}, 
\end{align*}
where $f'$ denotes the derivative of $f$. 
Because $f$ is strictly monotone, this then implies that, for $w \in \{\tilde{w}_1, \tilde{w}_2, \ldots, \tilde{w}_{n-1}\}$, 
\begin{align*}
0 
& =
a_2 b_2 f(w)^{b_2-1} + 
\ldots + 
a_n b_n f(w)^{b_n-1}.  
\end{align*}
By construction, $\tilde{w}_1, \tilde{w}_2, \ldots, \tilde{w}_{n-1}\in \mathbb{R}_+$ are mutually independent, and $b_2-1 < b_3-1< \ldots < b_n-1$ are mutually independent. 
Because $\mathcal{F}$ is $(n-1)$-linearly independent with respect to any collection of $(n-1)$ distinct values in $\mathbb{R}_+$, we must have 
$a_2b_2 = \ldots = a_nb_n = 0$. 
Because, by construction, $b_k$ is positive for $2\le k \le n$, we then have $a_2 = \ldots = a_n = 0$. 
From \eqref{eq:linear_depend_n}, we then have $a_1 = 0$. 
Therefore, $\mathcal{F}$ is also $n$-linearly independent with respect to any collection of $n$ distinct values in $\mathbb{R}_+$. 

From the above, Lemma \ref{lem:k-linind-generalized} holds. 
\end{proof}

\begin{proof}[Proof of Proposition \ref{prop:K-linind}]

The powered exponential family of covariance functions is $\mathcal{F} = \{C(\psi, w) = e^{-(w/\psi)^c}: \psi > 0\}$, where $c$ is some fixed positive real number. 
Note that 
$
C(\psi, w)  = e^{-(w/\psi)^c} = (e^{-w^c})^{1/\psi^c} = f(w)^{\psi}, 
$
where
$f(w) = e^{-w^c}$ is strictly monotone in $w\in \mathbb{R}_+$, 
and $\psi = 1/\psi^c$ is also strictly monotone in $\psi\in \mathbb{R}_+$. 
From Lemma \ref{lem:k-linind-generalized}, 
we can know that $\mathcal{F}$ is $K$-linearly independent with respect to any $K$ distinct positive elements.
Note that both exponential and Gaussian families of covariance functions are special cases of the powered exponential family. 
We can then derive Proposition \ref{prop:K-linind}. 
\end{proof}

\subsection{Proof of Theorem \ref{thm:flexible-bivariate-id}}

Before we give the proof of this Theorem, we expound on the remark given in the Theorem. We will also need an additional lemma which we prove.

\begin{proposition}\label{prop:bivariate-scaled-identity}
Consider $n$ spatial locations with a distance matrix $W$, 
and assume that the data generating process follows \eqref{eq:simp-model} and \eqref{eq:bivariate-cov-framework}. If $\cov(Y,Z)\var(Z)^{-1}$ is not a scaled identity matrix, then $\rho \neq 0$ and $\psi_{UZ} \neq \psi_Z$.
\end{proposition}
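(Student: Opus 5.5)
We argue by contraposition: assuming $\rho = 0$ or $\psi_{UZ} = \psi_Z$, we show that $\cov(Y,Z)\var(Z)^{-1}$ must be a scaled identity matrix. The starting point is the representation in \eqref{eq:condvar}, valid under \eqref{eq:simp-model}:
\[
\cov(Y,Z)\var(Z)^{-1} = \beta I_n + \Sigma_{UZ}\Sigma_{ZZ}^{-1}.
\]
Since $\beta I_n$ is already a scaled identity, it suffices to show that $\Sigma_{UZ}\Sigma_{ZZ}^{-1}$ is a scaled identity in each of the two cases.

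In the first case, $\rho = 0$. By \eqref{eq:bivariate-cov-framework}, every entry $(\Sigma_{UZ})_{ij} = \rho\sigma_U\sigma_Z C(\psi_{UZ}, W_{ij})$ vanishes. Hence $\Sigma_{UZ} = 0$, and $\cov(Y,Z)\var(Z)^{-1} = \beta I_n$.

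In the second case, $\psi_{UZ} = \psi_Z$. Then entrywise
\[
(\Sigma_{UZ})_{ij} = \rho\sigma_U\sigma_Z C(\psi_Z, W_{ij}) = \frac{\rho\sigma_U}{\sigma_Z}\,\sigma_Z^2 C(\psi_Z, W_{ij}) = \frac{\rho\sigma_U}{\sigma_Z}(\Sigma_{ZZ})_{ij},
\]
so $\Sigma_{UZ} = (\rho\sigma_U/\sigma_Z)\Sigma_{ZZ}$. The matrix $\Sigma_{ZZ}$ is invertible because $\Sigma$ is positive definite, so
\[
\Sigma_{UZ}\Sigma_{ZZ}^{-1} = \frac{\rho\sigma_U}{\sigma_Z} I_n,
\]
and $\cov(Y,Z)\var(Z)^{-1} = (\beta + \rho\sigma_U/\sigma_Z) I_n$, again a scaled identity.

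Together the two cases give the contrapositive. There is no real obstacle here; the only points needing care are that the cross-covariance uses the same function $C$ evaluated at the same distances $W_{ij}$, which is what makes the second case collapse to a scalar multiple of $\Sigma_{ZZ}$, and that $\Sigma_{ZZ}$ is invertible.
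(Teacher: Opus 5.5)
Your proof is correct and matches the paper's argument: both prove the contrapositive from $\cov(Y,Z)\var(Z)^{-1} = \beta I_n + \Sigma_{UZ}\Sigma_{ZZ}^{-1}$, with $\rho = 0$ giving $\beta I_n$ and $\psi_{UZ} = \psi_Z$ giving $\Sigma_{UZ} = (\rho\sigma_U/\sigma_Z)\Sigma_{ZZ}$. Your version states this last identity correctly (the paper has a typo, writing $C_{UZ} = C_Z^{-1}$ where it means $C_{UZ} = C_Z$) and also records why $\Sigma_{ZZ}$ is invertible.
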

\begin{proof}[Proof of Proposition \ref{prop:bivariate-scaled-identity}]
We prove the contrapositive - if either $\rho=0$ or $\psi_{UZ} = \psi_Z$, then our parameter must be a scaled identity. For notation's sake write $C_{Z}$ for the matrix with $ij$th entry given by $C(\psi_{Z}, W_{ij})$ and $C_{UZ}$ for the matrix with $ij$th entry given by $C(\psi_{UZ}, W_{ij})$. Then \[ \cov(Y,Z)\var(Z)^{-1} = \beta I_n + \Sigma_{UZ}\Sigma_{ZZ}^{-1} = \beta I_n + \frac{\rho \sigma_U}{\sigma_Z} C_{UZ}C_Z^{-1}.\]
It is easy to see then that if $\rho=0$ that $\cov(Y,Z)\var(Z)^{-1}=\beta I_n$, and similarly if $\psi_{UZ} = \psi_Z$ then $C_{UZ} = C_Z^{-1}$ and $\cov(Y,Z)\var(Z)^{-1} = (\beta + \rho\sigma_U\sigma_Z^{-1})I_n$, again a scaled identity.
\end{proof}

\begin{lemma}\label{lemma:linear_indep_less_K}
    If $\mathcal{F} = \{ f(\psi, \cdot): \psi \in \psi \}$ is $K$-linearly independent with respect to $\mathcal{S}$, 
    then it is also $J$-linearly independent with respect to $\mathcal{S}$ for all $1\le J\le K$.  
\end{lemma}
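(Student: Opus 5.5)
The argument is a padding argument: a vanishing combination of $J$ functions can be extended to $K$ functions by adding $K-J$ extra members with zero coefficients, and then the $K$-linear independence forces every coefficient to vanish. Fix $1\le J\le K$, distinct parameters $\psi_1,\ldots,\psi_J\in\Psi$, and coefficients $c_1,\ldots,c_J$ with $\sum_{k=1}^J c_k f(\psi_k,w)=0$ for all $w\in\mathcal{S}$. The reverse implication in Definition~\ref{defn:k-linear-independence} is trivial: if all $c_k=0$ the sum vanishes. So only the forward direction needs proof.

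First, choose additional parameters $\psi_{J+1},\ldots,\psi_K\in\Psi$ so that $\psi_1,\ldots,\psi_K$ are mutually distinct. Set $c_{J+1}=\cdots=c_K=0$. Then $\sum_{k=1}^K c_k f(\psi_k,w)=0$ for every $w\in\mathcal{S}$. Applying $K$-linear independence to these $K$ distinct parameters gives $c_1=\cdots=c_K=0$, and in particular $c_1=\cdots=c_J=0$, which is $J$-linear independence.

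The only real obstacle is the first step, which needs $\Psi$ to contain at least $K$ distinct elements. If $\Psi$ has fewer than $K$ elements, the $K$-linear independence condition is vacuous, because no $K$ distinct parameters exist, and the padding fails. I would handle this by stating explicitly the mild convention that $|\Psi|\ge K$. This holds in every application in the paper, since the parameter spaces there are intervals of $\mathbb{R}$. With this assumption, the padding construction goes through.
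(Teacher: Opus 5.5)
Your padding argument is correct and is essentially the paper's proof: the paper pads with a single extra parameter $\psi_K$ carrying coefficient $0$ to pass from $K$ to $K-1$ (and iterates), whereas you pad with all $K-J$ extra parameters at once. Your caveat that $\Psi$ must contain at least $K$ distinct elements is a genuine hidden assumption. The paper's proof uses it tacitly when it picks $\psi_K$ distinct from $\psi_1,\ldots,\psi_{K-1}$, and it holds in all of the paper's applications.
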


\begin{proof}[Proof of Lemma \ref{lemma:linear_indep_less_K}]
It suffices to show that if $\mathcal{F}$ is $K$-linearly independent with respect to $S$, then it must also be $K-1$-linearly independent with respect to $\mathcal{S}$. 
Consider any mutually distinct $\psi_1, \ldots, \psi_{K-1}\in \psi$. 
Suppose that, for some $a_1, \ldots, a_{K-1} \in \mathbb{R}$, we have 
\begin{align*}
    a_1 f(\psi_1, w) + \ldots + a_{K-1} f(\psi_{K-1},ws) = 0 \quad \text{for all } w \in \mathcal{S}. 
\end{align*}
Let $\psi_K \in \psi$ by any value that is distinct from $\psi_1, \ldots, \psi_{K-1}$, and $a_K = 0$. We then have 
\begin{align*}
    a_1 f(\psi_1, w) + \ldots + a_{K-1} f(\psi_{K-1}, w) + a_{K} f(\psi_{K}, w) = 0 \quad \text{for all } w \in \mathcal{S}. 
\end{align*}
Because $\mathcal{F}$ is $K$-linearly independent with respect to $\mathcal{S}$, we must have $a_1 = \ldots = a_{K-1} = 0$. 
Therefore, $\mathcal{F}$ is also $(K-1)$-linearly independent with respect to $\mathcal{S}$.
\end{proof}

\begin{proof}[Proof of Theorem \ref{thm:flexible-bivariate-id}]
First note that if the function family is $3$-linearly independent with respect to the off-diagonal entries of $W$, then it is $3$-linearly independent with respect to $W$ and $2$-linearly independent with respect to the off-diagonal values of $W$. Now, from \eqref{eq:joint_ZY}, the observed data distribution is determined by: 
\begin{align*}
[\var(Z)]_{ij} &= \sigma_Z^2 C(\psi_Z, W_{ij}),\\
[\var(Y)]_{ij} &= \beta^2 \sigma_Z^2 C(\psi_{Z}, W_{ij}) + 2\beta\rho \sigma_U\sigma_ZC(\psi_{UZ}, W_{ij}) +  \sigma_U^2 C(\psi_{U}, W_{ij}) + \sigma_\epsilon^2(I_n)_{ij} , \\
[\cov (Y,Z)_{ij}] &= \beta \sigma_Z^2 C(\psi_{Z}, W_{ij}) + \rho\sigma_U\sigma_Z C(\psi_{UZ}, W_{ij}),
\end{align*} 
where $1\le i, j \le n.$
Consider any alternative set of parameters $(\tilde\beta, \tilde\sigma_Z, \tilde\sigma_U, \tilde\rho, \tilde\psi_Z, \tilde\psi_U, \tilde\psi_{UZ}, \tilde\sigma_\epsilon)$ that yield the same observed data distribution. 
Below we will prove that $(\tilde\beta, \tilde\sigma_Z, \tilde\sigma_U, \tilde\rho, \tilde\psi_Z, \tilde\psi_U, \tilde\psi_{UZ}, \tilde\sigma_\epsilon^2) = (\beta, \sigma_Z, \sigma_U, \rho, \psi_Z, \psi_U, \psi_{UZ}, \sigma_\epsilon^2)$. 

First, we prove $\psi_Z = \tilde \psi_Z$ and $\sigma_Z^2 = \tilde\sigma_Z^2$. 
From the expression of $\var(Z)$, we have
\[\sigma_Z^2 C(\psi_{Z}, W_{ij}) = \tilde\sigma_Z^2 C(\tilde{\psi}_{Z}, W_{ij})
\quad 
\text{for all $1\le i,j \le n$}.
\]   
From Lemma \ref{lemma:linear_indep_less_K} and the condition in the theorem, 
if $\psi_Z \neq \tilde\psi_Z$, then we must have $\sigma_Z^2 = \tilde\sigma_Z^2=0$, which leads to a contradiction. 
Thus, we must have $\psi_Z = \tilde \psi_Z$ and consequently $\sigma_Z^2 = \tilde\sigma_Z^2$.

Second, we prove $\psi_{UZ} = \tilde\psi_{UZ}$,  $\beta = \tilde\beta$ and $\rho\sigma_U = \tilde\rho\tilde\sigma_U$. 
From the expression of $\cov(Y,Z)$ and the first part, we have 
\[\beta\sigma_Z^2 C(\psi_{Z}, W_{ij}) + \rho \sigma_U\sigma_Z C(\psi_{UZ}, W_{ij}) = \tilde\beta\sigma_Z^2 C(\psi_{Z}, W_{ij}) + \tilde\rho \tilde\sigma_U\sigma_Z  C(\tilde\psi_{UZ}, W_{ij})
\ \  
\text{for all $1\le i,j \le n$},
\]
which immediately implies that 
\begin{equation}\label{eq:flex-bivariate-exp}(\beta-\tilde\beta)\sigma_Z^2 C(\psi_{Z}, W_{ij}) + \rho \sigma_U\sigma_Z C(\psi_{UZ}, W_{ij}) - \tilde\rho \tilde\sigma_U\sigma_Z C(\tilde\psi_{UZ}, W_{ij}) = 0
\ \ 
\text{for $1\le i,j \le n$}.
\end{equation}
From the condition in the theorem and Proposition \ref{prop:bivariate-scaled-identity}, $\psi_{Z} \ne \psi_{UZ}$. 
If further $\psi_{UZ} \ne \tilde\psi_{UZ}$, then, no matter $\psi_{Z} = \tilde \psi_{UZ}$ or not, by linear independence, we have $\rho \sigma_U\sigma_Z = 0$ and consequently $\rho = 0$, leading to a contradiction. 
Thus, we must have $\psi_{UZ} = \tilde\psi_{UZ}$.
Equation \eqref{eq:flex-bivariate-exp} then simplifies to 
\[
(\beta-\tilde\beta)\sigma_Z^2 C(\psi_{Z}, W_{ij}) + \sigma_Z (\rho \sigma_U - \tilde\rho \tilde\sigma_U )C(\psi_{UZ}, W_{ij}) = 0
\ \ 
\text{for $1\le i,j \le n$}.
\]
By linear independence, this implies that 
$\beta = \tilde\beta$ and $\rho\sigma_U = \tilde\rho\tilde\sigma_U$.

Finally, we prove $\psi_U=\tilde\psi_U, \sigma_U = \tilde\sigma_U$, $\rho = \tilde\rho$ and $\sigma_\epsilon^2 = \tilde\sigma_\epsilon^2$. 
From the expression of $\var(Y)$ and the previous two parts, we have  
\begin{align}\label{eq:varY_para}
    \sigma_U^2 C(\psi_U, W_{ij}) + \sigma_\epsilon^2 (I_n)_{ij} = \tilde\sigma_U^2 C(\tilde\psi_U, W_{ij})+ \tilde\sigma_\epsilon^2(I_n)_{ij}
\ \ 
\text{for $1\le i,j \le n$}, 
\end{align}
which immediately implies 
\[\sigma_U^2 C(\psi_U, W_{ij})  = \tilde\sigma_U^2 C(\tilde\psi_U, W_{ij})
\ \ 
\text{for $1\le i \ne j \le n$}. 
\] 
By linear independence, if $\psi_U \neq \tilde\psi_U$, 
then we have $\sigma_U^2 = \tilde\sigma_U^2=0$, leading to a contradiction. 
Thus, we must have $\psi_U = \tilde\psi_U$. Consequently, 
$
(\sigma_U^2 - \tilde\sigma_U^2) C(\psi_U, W_{ij}) = 0
$
for $1\le i \ne j \le n$. 
From Lemma \ref{lemma:linear_indep_less_K} and the linear independence, we must have $\sigma_U^2 = \tilde\sigma_U^2$. 
From \eqref{eq:varY_para} with $i=j$, we then have $\sigma_\epsilon^2 = \tilde\sigma_\epsilon^2$. 
Note that, from the second part, $\rho\sigma_U=\tilde\rho \tilde\sigma_U$. 
We can then derive that $\rho = \tilde\rho$. 

as well, and all parameters are identified.
\end{proof}

\subsection{A remark on the necessity of $\rho \neq 0$} 
We must assume the domain restriction that $\rho \neq 0$ or we can not be able to distinguish from observable quantities the case when $\rho \neq 0$ but $C_Z = C_{UZ}$ and there is meaningful confounding from the case when $\rho = 0$ and there is no confounding.

\begin{proposition}\label{prop:bivariate-flexible-rho-zero-nonid}
Consider $n$ spatial locations with a distance matrix $W$, 
and assume that the data generating process follows \eqref{eq:simp-model} and \eqref{eq:bivariate-cov-framework}. If it is possible that $\rho = 0$, then the parameters are not identifiable.
\end{proposition}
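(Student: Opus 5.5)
The plan is to construct, for a true parameter with $\rho=0$, an alternative parameter that gives the same observed-data distribution but has $\tilde\beta\neq\beta$. As in the proof of Theorem \ref{thm:flexible-bivariate-id}, the distribution of $(Y,Z)$ is determined by $\var(Z)$, $\cov(Y,Z)$ and $\var(Y)$ in \eqref{eq:joint_ZY}. When $\rho=0$ these reduce to
\[
\var(Z)=\sigma_Z^2C_Z,\qquad \cov(Y,Z)=\beta\sigma_Z^2C_Z,\qquad \var(Y)=\beta^2\sigma_Z^2C_Z+\sigma_U^2C_U+\sigma_\epsilon^2I_n,
\]
where $C_Z$ and $C_U$ denote the matrices with entries $C(\psi_Z,W_{ij})$ and $C(\psi_U,W_{ij})$.

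The alternative will be a confounded model in which the cross-covariance shares the spatial structure of $Z$. Set
\[
\tilde\psi_Z=\tilde\psi_{UZ}=\psi_Z,\qquad \tilde\sigma_Z=\sigma_Z,\qquad \tilde\sigma_\epsilon^2=\sigma_\epsilon^2,\qquad \tilde\psi_U=\psi_U,
\]
and take $\tilde\rho\neq0$, $\tilde\sigma_U$ and $\tilde\beta$ free for the moment. The covariance $\cov(Y,Z)$ is matched if $\tilde\beta\sigma_Z+\tilde\rho\tilde\sigma_U=\beta\sigma_Z$, so we put $\tilde\beta=\beta-\tilde\rho\tilde\sigma_U/\sigma_Z$. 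For $\var(Y)$, the alternative gives
\[
\tilde\beta^2\sigma_Z^2C_Z+2\tilde\beta\tilde\rho\tilde\sigma_U\sigma_ZC_Z+\tilde\sigma_U^2\tilde C_U+\sigma_\epsilon^2I_n .
\]
By the previous relation, the $C_Z$ coefficient equals $(\tilde\beta\sigma_Z+\tilde\rho\tilde\sigma_U)^2-\tilde\rho^2\tilde\sigma_U^2=\beta^2\sigma_Z^2-\tilde\rho^2\tilde\sigma_U^2$. Matching $\var(Y)$ therefore requires $\tilde\sigma_U^2\tilde C_U=\sigma_U^2C_U+\tilde\rho^2\tilde\sigma_U^2C_Z$.

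The matching condition on $\tilde C_U$ is the main obstacle, because $\sigma_U^2C_U+\tilde\rho^2\tilde\sigma_U^2C_Z$ need not be a scaled member of the parametric family. I would resolve it by specialising the true parameter to $\psi_U=\psi_Z$. The statement only says it is \emph{possible} that $\rho=0$, so exhibiting one true parameter with two preimages suffices. With $\psi_U=\psi_Z$, take $\tilde\psi_U=\psi_Z$ and require $\tilde\sigma_U^2(1-\tilde\rho^2)=\sigma_U^2$.

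Concretely, pick any $\tilde\rho\in(0,1)$ and set $\tilde\sigma_U^2=\sigma_U^2/(1-\tilde\rho^2)$. Then $\tilde\beta-\beta=-\tilde\rho\tilde\sigma_U/\sigma_Z\neq0$. It remains to check positive definiteness of the new joint covariance of $(U,Z)$. This matrix is $\begin{pmatrix}\tilde\sigma_U^2&\tilde\rho\tilde\sigma_U\sigma_Z\\ \tilde\rho\tilde\sigma_U\sigma_Z&\sigma_Z^2\end{pmatrix}\otimes C_Z$, and it is positive definite because $|\tilde\rho|<1$ and $C_Z$ is positive definite, the latter following from the original model being valid. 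Hence both $\beta$ and $\rho$ fail to be identifiable.

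If a construction with general $\psi_U$ is wanted instead, an alternative is to swap roles, letting $\tilde\psi_{UZ}$ equal $\psi_Z$ and relying on a family where $\psi=\infty$ gives the constant function (as in Proposition \ref{prop:nonid-lcar_rho0}). The specialised construction above is the one I would commit to.
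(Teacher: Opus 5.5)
Your proposal is correct and is essentially the paper's argument. Both choose a true parameter at which every covariance block collapses onto $C_Z$: the paper takes $\psi_U=\psi_Z=\psi_{UZ}$ with arbitrary $\rho$, while you take $\rho=0$ and $\psi_U=\psi_Z$, which makes $\psi_{UZ}$ irrelevant and ties the construction directly to the hypothesis. The observed distribution then fixes only two scalar combinations of $(\tilde\beta,\tilde\rho,\tilde\sigma_U)$, and the remaining freedom yields $\tilde\beta\neq\beta$.

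Your matching condition $(1-\tilde\rho^2)\tilde\sigma_U^2=\sigma_U^2$ and your Kronecker-product positive-definiteness check are in fact more reliable than the paper's write-up. There, the choice $\tilde\rho\tilde\sigma_U=(\tilde\beta-\beta)\sigma_Z-\rho\sigma_U$ has the opposite sign from what \eqref{eq:bivariate-rho-zero-cov} requires. The paper's reduction to \eqref{eq:bivariate-rho-zero-var} also discards $\beta\sigma_Z(\beta\sigma_Z+\rho\sigma_U)$ and $\tilde\beta\sigma_Z(\tilde\beta\sigma_Z+\tilde\rho\tilde\sigma_U)$, and these terms differ whenever $\tilde\beta\neq\beta$ and $\beta\sigma_Z+\rho\sigma_U\neq 0$.
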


\begin{proof}[Proof of Proposition \ref{prop:bivariate-flexible-rho-zero-nonid}]
We exhibit a class of counterexamples. Consider data generating parameters $(\psi_U, \psi_Z, \psi_{UZ}, \beta, \sigma_Z, \sigma_U, \rho, \sigma_\epsilon)$ with $\psi_U=\psi_Z=\psi_{UZ}$. Then consider alternative parameters $(\tilde\psi_U, \tilde\psi_Z, \tilde\psi_{UZ}, \allowbreak \tilde\beta, \allowbreak \tilde\sigma_Z, \allowbreak\tilde\sigma_U, \tilde\rho, \tilde\sigma_\epsilon)$  where: $\tilde\sigma_Z = \sigma_Z, \tilde\psi_Z=\psi_Z, \tilde\psi_{UZ}=\psi_Z, \tilde\psi_U=\psi_U, \tilde\sigma_\epsilon=\sigma_\epsilon$, and $\tilde\beta$ is chosen so that $\tilde\sigma^2_U = \beta\rho\sigma_U\sigma_Z + \sigma_U^2 - \tilde\beta\sigma_Z\{(\tilde\beta-\beta)\sigma_Z-\rho\sigma_U\}$ is nonzero, and finally $\tilde\rho= \{(\tilde\beta-\beta)\sigma_Z-\rho\sigma_U\}/\tilde\sigma_U$. 

For notation's sake, let $C_Z$ denote the matrix whose $ij$th entry is given by $C(\psi_Z, W_{ij})$, and analogously for $C_U, C_{UZ}$; similarly let $\tilde C_Z$ denote matrix whose $ij$th entry is given by $C(\tilde\psi_Z, W_{ij})$, and analogously for $\tilde{C}_U, \tilde{C}_{UZ}$. Then we it remains to show that the following equations hold:

\[\sigma_Z^2 C_Z= \tilde \sigma_Z \tilde C_Z,\]
\[\beta \sigma_Z^2 C_Z + \rho \sigma_U \sigma_Z C_{UZ} =\tilde\beta \tilde\sigma_Z^2 \tilde C_Z +\tilde\rho \tilde\sigma_U \tilde\sigma_Z \tilde C_{UZ}, \]
\[\beta^2 \sigma_Z^2 C_Z + 2\beta\rho\sigma_U\sigma_Z C_{UZ} + \sigma_U^2 C_U + \sigma_\epsilon^2 I_n = \tilde\beta^2 \tilde\sigma_Z^2 \tilde C_Z + 2\tilde\beta\tilde\rho\tilde\sigma_U\tilde\sigma_Z \tilde C_{UZ} + \tilde\sigma_U^2\tilde C_U + \tilde\sigma_\epsilon^2 I_n.  \]

The first equation corresponding to $\var(Z)$ is automatically satisfied. The remaining equations simplify to
\begin{equation}\label{eq:bivariate-rho-zero-cov}
\beta \sigma_Z +\rho\sigma_U= \tilde\beta \sigma_Z + \tilde\rho \tilde\sigma_U ,
\end{equation}
\begin{equation}\label{eq:bivariate-rho-zero-var}
\beta\rho\sigma_U\sigma_Z+\sigma_U^2 = \tilde\beta\tilde\rho\tilde\sigma_U \sigma_Z + \tilde\sigma_U^2.
\end{equation}

Note that our selections for $\tilde\rho, \tilde\sigma_U$ give $\tilde\rho\tilde\sigma_U = (\tilde\beta-\beta)\sigma_Z -\rho\sigma_U$, and thus \eqref{eq:bivariate-rho-zero-cov} is satisfied. As for \eqref{eq:bivariate-rho-zero-var}, it suffices to show that:
\[\tilde\rho = \frac{\beta\rho\sigma_U\sigma_Z + \sigma_U^2-\tilde\sigma_U^2}{\tilde\beta\tilde\sigma_U \sigma_Z}.\] But this holds so long as $\beta\rho\sigma_U\sigma_Z+\sigma_U^2-\tilde\sigma_U^2 = \tilde\beta\sigma_Z\{(\tilde\beta-\beta)\sigma_Z-\rho\sigma_U\}$, which holds by definition.
\end{proof}

\section{Proofs for Parsimonious Mat\'ern model}\label{sec:proofs-matern}

In this case, we consider the case in which the smoothness parameters are all known. 
The corollary below shows the identifiability of the parsimonious Mat\'ern model. 

\begin{theorem}\label{cor:bivariate-matern-id-known-smoothness}
Consider $n$ spatial locations with a distance matrix $W$, and assume that the data generating process follows \eqref{eq:simp-model} and \eqref{eq:pars-matern}.
If the smoothness parameters $\nu_U$ and $\nu_Z$ are known and different, $E(YZ^\top)E(ZZ^\top)^{-1}$ is not proportional to the identity matrix, and the distance matrix $W$ has at least one nonzero element, then 
all the model parameters, including the treatment effect $\beta$,
are identifiable.
\end{theorem}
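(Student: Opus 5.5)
The plan is to work directly with the joint-moment representation \eqref{eq:joint_ZY} and peel off the parameters one block at a time. Because $\nu_U$ and $\nu_Z$ are known, the only unknown in the correlation structure is the common range $\phi$. Write $C_Z$, $C_{UZ}$ and $C_U$ for the $n\times n$ matrices with entries $C(\phi,\nu_Z;W_{ij})$, $C(\phi,(\nu_U+\nu_Z)/2;W_{ij})$ and $C(\phi,\nu_U;W_{ij})$. The observed distribution determines
\begin{align*}
\var(Z) &= \sigma_Z^2 C_Z, \qquad \cov(Y,Z) = \beta\sigma_Z^2 C_Z + \rho\sigma_U\sigma_Z C_{UZ},\\
\var(Y) &= \beta^2\sigma_Z^2 C_Z + 2\beta\rho\sigma_U\sigma_Z C_{UZ} + \sigma_U^2 C_U + \sigma_\epsilon^2 I_n .
\end{align*}
Throughout, let $\tilde\theta$ be any alternative parameter yielding the same distribution.

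\textbf{Identify $\sigma_Z^2$ and $\phi$ from $\var(Z)$.} Since $C(\phi,\nu;0)=1$, every diagonal entry of $\var(Z)$ equals $\sigma_Z^2$, so $\tilde\sigma_Z^2=\sigma_Z^2$.

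Next pick an off-diagonal pair with $W_{ij}=w>0$, which exists by assumption. Then $C(\phi,\nu_Z;w)=C(\tilde\phi,\nu_Z;w)$. It suffices to show that $\phi\mapsto C(\phi,\nu_Z;w)$ is strictly monotone. Write $C=h(w/\phi)$ with $h(t)=2^{1-\nu}\Gamma(\nu)^{-1}t^\nu K_\nu(t)$. The standard Bessel identity $\frac{d}{dt}\{t^\nu K_\nu(t)\}=-t^\nu K_{\nu-1}(t)$, together with $K_{\nu-1}>0$ on $(0,\infty)$, shows that $h$ is strictly decreasing. Hence $C$ is strictly increasing in $\phi$, and $\tilde\phi=\phi$.

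This monotonicity is the only analytic input, and the step I expect to need the most care. It requires the Bessel derivative identity and positivity of $K_{\nu-1}$ for every real order; the latter holds because $K_{-\mu}=K_\mu$. Once $\phi$ is pinned down, $C_Z$, $C_{UZ}$ and $C_U$ are all known matrices.

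\textbf{Identify $\beta$ and $\rho\sigma_U$ from $\cov(Y,Z)$.} Now $(\beta-\tilde\beta)\sigma_Z^2C_Z+\sigma_Z(\rho\sigma_U-\tilde\rho\tilde\sigma_U)C_{UZ}=0$, so it is enough that $C_Z$ and $C_{UZ}$ are linearly independent. Here I use the hypothesis rather than fighting with the Bessel functions in $\nu$. The matrix $E(YZ^\top)E(ZZ^\top)^{-1}=\beta I_n+\rho(\sigma_U/\sigma_Z)C_{UZ}C_Z^{-1}$ is not a scaled identity. This forces $\rho\neq0$ and $C_{UZ}\neq cC_Z$ for every scalar $c$.

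Suppose $aC_Z+bC_{UZ}=0$. The diagonal entries give $a+b=0$, so $a(C_Z-C_{UZ})=0$. Since $C_Z\neq C_{UZ}$, this forces $a=b=0$. Therefore $\tilde\beta=\beta$ and $\tilde\rho\tilde\sigma_U=\rho\sigma_U$.

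\textbf{Identify $\sigma_U^2$, $\sigma_\epsilon^2$ and $\rho$ from $\var(Y)$.} Subtracting the now-known terms from $\var(Y)$ gives $\sigma_U^2C_U+\sigma_\epsilon^2I_n=\tilde\sigma_U^2C_U+\tilde\sigma_\epsilon^2I_n$. Look at an off-diagonal entry with $W_{ij}=w>0$. Its value $C(\phi,\nu_U;w)=h(w/\phi)$ is strictly positive because $K_{\nu_U}>0$, so $\tilde\sigma_U^2=\sigma_U^2$. The diagonal entries then give $\tilde\sigma_\epsilon^2=\sigma_\epsilon^2$. Finally, $\tilde\rho=\rho$ follows from $\tilde\rho\tilde\sigma_U=\rho\sigma_U$ and $\sigma_U>0$, so all parameters are identified.
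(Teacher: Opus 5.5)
Your proposal is correct and follows the paper's proof essentially step for step. The steps are: the diagonal of $\var(Z)$ gives $\sigma_Z^2$, and strict monotonicity of the Mat\'ern function in $\phi$ (from $\frac{d}{dz}\{z^\nu K_\nu(z)\}=-z^\nu K_{\nu-1}(z)$ and $K_{\nu-1}>0$) gives $\phi$. Next, the diagonal of $\cov(Y,Z)$ together with one entry where $C_Z$ and $C_{UZ}$ differ (guaranteed by the non-scaled-identity hypothesis) gives $\beta$ and $\rho\sigma_U$. Finally, an off-diagonal entry of $\var(Y)$ with positivity of $K_{\nu_U}$ gives the rest; your ``linear independence of $C_Z$ and $C_{UZ}$'' phrasing is just a repackaging of the paper's ``multiply the diagonal equation by $b$ and subtract'' step.
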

Here we impose a slightly different form of condition from theorems in the main paper. 
Instead of restricting the true data generating parameters, we require that the observable quantity $E(YZ^\top)E(ZZ^\top)^{-1}$ is not proportional to the identity matrix. This implies that $\rho\neq 0$ and $\\nu_U\neq \nu_Z$, but is stronger and is invoked due to the complicated form of the Mat\'ern function.

To prove Theorem \ref{cor:bivariate-matern-id-known-smoothness}, we need the following lemma about the properties of the Bessel function of the second kind.

\begin{lemma}\label{lem:bessel-properties}
Let $K_\nu$ denote the modified Bessel function of the second kind of order $\nu$. Then the following equations and inequalities hold:\\
(i) For any $z, \nu \in \mathbb{R}$,
\[\lim_{z\to\infty} \frac{K_\nu(z)}{\sqrt{\pi} e^{-z}/(2z)^{1/2}} = 1;\]
(ii) For $\nu_1 > \nu_2$ and any $z \in \mathbb{R}^+$, \[K_{\nu_1}(z) > K_{\nu_2}(z);\]
(iii) For any $z, \nu \in \mathbb{R}$, \[\frac{\partial}{\partial z} z^\nu K_\nu(z) = -z^\nu K_{\nu-1}(z);\]
(iv) For positive real $z$ and real $\nu$, \[K_\nu(z) > 0.\]

\end{lemma}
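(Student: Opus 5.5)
The plan is to derive all four properties from the integral representation
\[
K_\nu(z)=\int_0^\infty e^{-z\cosh t}\cosh(\nu t)\,dt,\qquad z>0,\ \nu\in\mathbb{R},
\]
which I take as the definition of $K_\nu$ on the positive half-line. All four statements only matter for $z>0$, which is how they are used in the Mat\'ern function, so I will prove them there.

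Part (iv) is immediate. The integrand is strictly positive and the integral converges because $e^{-z\cosh t}$ decays doubly exponentially. The same representation shows that $K_\nu=K_{-\nu}$.

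For part (ii), differentiate under the integral sign in $\nu$, which is justified by dominated convergence, to get
\[
\partial_\nu K_\nu(z)=\int_0^\infty e^{-z\cosh t}\,t\sinh(\nu t)\,dt .
\]
This is strictly positive for $\nu>0$. Hence $\nu\mapsto K_\nu(z)$ is strictly increasing on $[0,\infty)$. Because $K_\nu$ is even in $\nu$, the inequality $K_{\nu_1}(z)>K_{\nu_2}(z)$ holds whenever $\nu_1>|\nu_2|$. I will state (ii) in that form, or equivalently for $\nu_1>\nu_2\ge 0$. Taken literally for negative orders the claim fails (for example $K_{-1}=K_1>K_0$), so I will restrict the statement accordingly. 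This is harmless for the Mat\'ern model, where $\nu>0$ is the relevant range.

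For part (iii), I will first establish the standard recurrence $K_\nu'(z)=-K_{\nu-1}(z)-(\nu/z)K_\nu(z)$. The route is:
\begin{itemize}
\item Differentiate the integral in $z$, which gives $K_\nu'(z)=-\int_0^\infty e^{-z\cosh t}\cosh t\cosh(\nu t)\,dt$.
\item Use $\cosh t\cosh(\nu t)=\tfrac12[\cosh((\nu+1)t)+\cosh((\nu-1)t)]$ to obtain $K_\nu'=-\tfrac12(K_{\nu+1}+K_{\nu-1})$.
\item Integrate by parts with $\sinh t\sinh(\nu t)$ to obtain $K_{\nu+1}-K_{\nu-1}=(2\nu/z)K_\nu$.
\end{itemize}
Combining the last two identities gives the recurrence. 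The product rule then yields
\[
\partial_z\bigl(z^\nu K_\nu(z)\bigr)=\nu z^{\nu-1}K_\nu+z^\nu K_\nu'=-z^\nu K_{\nu-1}(z).
\]

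Part (i) is the main obstacle, and I will handle it with a Laplace-type argument. Substitute $s=\cosh t-1$, so that
\[
K_\nu(z)=e^{-z}\int_0^\infty e^{-zs}\,\frac{\cosh(\nu\,\mathrm{arccosh}(1+s))}{\sqrt{s(s+2)}}\,ds .
\]
Set $g(s)=\cosh(\nu\,\mathrm{arccosh}(1+s))/\sqrt{s+2}$, which satisfies $g(0)=1/\sqrt2$ and is continuous. Then $K_\nu(z)e^{z}=\int_0^\infty e^{-zs}s^{-1/2}g(s)\,ds$. By Watson's lemma, or directly by rescaling $s=u/z$ and applying dominated convergence, this equals $g(0)\Gamma(1/2)z^{-1/2}(1+o(1))=\sqrt{\pi/(2z)}\,(1+o(1))$.

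The delicate point is domination. The function $g$ grows at most like $s^{|\nu|}$, so for $z\ge1$ one has $e^{-u}\,g(u/z)\le C e^{-u}(1+u)^{|\nu|}$, which is integrable against $u^{-1/2}$. This gives (i).
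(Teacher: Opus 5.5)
Your proof is correct, but it takes a different route from the paper. The paper proves none of these facts; it cites the DLMF item by item: the large-argument asymptotics, monotonicity in the order, the derivative identity, and $K_\nu=K_{-\nu}$ together with positivity.

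You instead derive all four items from the single integral representation $K_\nu(z)=\int_0^\infty e^{-z\cosh t}\cosh(\nu t)\,dt$:
\begin{itemize}
\item positivity and evenness in $\nu$ are read directly off the integrand;
\item the recurrence $K_\nu'=-K_{\nu-1}-(\nu/z)K_\nu$ comes from a product-to-sum identity and one integration by parts;
\item item (i) follows from the substitution $s=\cosh t-1$, the rescaling $s=u/z$, and dominated convergence using the polynomial bound on $g$.
\end{itemize}

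This gives a self-contained proof resting on one classical identity instead of four citations. It also exposes a genuine misstatement: item (ii) as written is false for negative orders. Your example $K_{-1}=K_1>K_0$ is correct, and the DLMF monotonicity result the paper cites is itself stated only for nonnegative orders.

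Your corrected version ($\nu_1>|\nu_2|$) is all the paper needs. Item (ii) is only invoked in the proof of Lemma~\ref{lem:matern-asymptotic-3-linind}, and there the smoothness parameters are positive. Items (iii) and (iv), by contrast, are used in Lemma~\ref{lem:matern-fixed-smoothness-distinguishable} with the possibly negative order $\nu-1$. Your proofs of those cover every real $\nu$, as required. Restricting to $z>0$ throughout is likewise harmless.

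One small simplification: in (ii) you need not differentiate in $\nu$. Since $\cosh(\nu_1 t)>\cosh(\nu_2 t)$ for every $t>0$ whenever $\nu_1>|\nu_2|$, the strict inequality follows directly from the integral.
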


\begin{proof}[Proof of Lemmas \ref{lem:bessel-properties}]
The limit in (i) can be found at \cite[\href{https://dlmf.nist.gov/10.25\#E3}{(10.25.3)}]{NIST:DLMF}. The inequality in (ii) can be found on \cite[\href{https://dlmf.nist.gov/10.37\#E1}{(10.37)}]{NIST:DLMF}. The derivative expression in (iii) is given at \cite[\href{https://dlmf.nist.gov/10.29\#E4}{(10.29.4)}]{NIST:DLMF}. 
Note that $K_\nu(z) = K_{-\nu}(z)$ for all positive real $z$, as shown by \cite[\href{https://dlmf.nist.gov/10.27\#E3}{Section 10.27}]{NIST:DLMF}. 
The inequality in (iv) can then be shown by  \cite[\href{https://dlmf.nist.gov/10.37}{Section 10.37}]{NIST:DLMF}. 
\end{proof}

\begin{lemma}\label{lem:matern-fixed-smoothness-distinguishable}
Consider the Mat\'ern covariance function: 
$$
C(\phi, \nu; w) = \frac{2^{1-\nu}}{\Gamma(\nu)}\left( \frac{w}{\phi} \right)^\nu  K_\nu \left(\frac{w}{\phi}\right),
$$
where $\phi, \nu$ are positive and $w$ is nonnegative. 
For any given $\nu>0$ and $w>0$, the function $C(\phi, \nu; w)$ is strictly increasing in $\phi$, in the sense that $C(\phi_1, \nu; w) > C(\phi_2, \nu; w)$ for any $\phi_1 > \phi_2 > 0$. 
\end{lemma}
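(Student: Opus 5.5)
We will prove monotonicity in $\phi$ by reparametrizing through $z = w/\phi$. For fixed $\nu>0$ and $w>0$, set $h_\nu(z) = z^\nu K_\nu(z)$ for $z>0$, so that
\[
C(\phi,\nu;w) = \frac{2^{1-\nu}}{\Gamma(\nu)}\, h_\nu(w/\phi).
\]
Since $\phi \mapsto w/\phi$ is strictly decreasing on $(0,\infty)$ and the prefactor $2^{1-\nu}/\Gamma(\nu)$ is positive, it suffices to show that $h_\nu$ is strictly decreasing on $(0,\infty)$. Then $\phi_1>\phi_2>0$ gives $w/\phi_1 < w/\phi_2$, hence $h_\nu(w/\phi_1) > h_\nu(w/\phi_2)$, which is the claim.

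To show $h_\nu$ is strictly decreasing, invoke Lemma \ref{lem:bessel-properties}(iii):
\[
\frac{d}{dz}\, z^\nu K_\nu(z) = -z^\nu K_{\nu-1}(z).
\]
By Lemma \ref{lem:bessel-properties}(iv), $K_{\nu-1}(z)>0$ for every real order $\nu-1$ (including negative orders, via $K_{\nu-1}=K_{1-\nu}$) and every $z>0$. Moreover $z^\nu>0$. So $h_\nu'(z)<0$ on $(0,\infty)$. Since $h_\nu$ is differentiable there, the mean value theorem gives strict decrease: for $0<z_1<z_2$ we have $h_\nu(z_2)-h_\nu(z_1) = h_\nu'(\xi)(z_2-z_1)<0$ for some $\xi\in(z_1,z_2)$.

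The main point requiring care is the positivity of $K_{\nu-1}$ when $0<\nu<1$, where the order is negative. This is exactly why the remark in the proof of Lemma \ref{lem:bessel-properties} about $K_{-\nu}=K_\nu$ matters. Beyond that, the only check is that the formula in (iii) holds for all real orders on $z>0$, as cited from DLMF (10.29.4). The endpoint $w=0$ is excluded by hypothesis, so $w/\phi$ always lies in the open interval $(0,\infty)$, where these properties apply.
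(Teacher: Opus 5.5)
Your proposal is correct and follows the same approach as the paper: reduce to showing $z^\nu K_\nu(z)$ is strictly decreasing on $(0,\infty)$, then use the derivative identity $\frac{d}{dz} z^\nu K_\nu(z) = -z^\nu K_{\nu-1}(z)$ together with positivity of $K_{\nu-1}$ from Lemma~\ref{lem:bessel-properties}. Your additional remarks on negative orders and the mean value theorem just make explicit details the paper leaves implicit.
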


\begin{proof}[Proof of Lemma \ref{lem:matern-fixed-smoothness-distinguishable}]
It suffices to prove that $z^\nu K_\nu(z)$ is strictly decreasing in $z \in (0, \infty)$. 
From Lemma \ref{lem:bessel-properties}(iii) and (iv), we have 
\[
\frac{\partial}{\partial z} z^\nu K_\nu(z) = -z^\nu K_{\nu-1}(z) < 0.
\] 
Therefore, Lemma \ref{lem:matern-fixed-smoothness-distinguishable} holds. 
\end{proof}

\begin{proof}[Proof of Theorem \ref{cor:bivariate-matern-id-known-smoothness}]
First, from \eqref{eq:joint_ZY}, we have, for $1\le i,j\le n$, 
\begin{align*}\var(Z)_{ij} &= \sigma_Z^2 C(\phi, \nu_Z; W_{ij}),\\
\var(Y)_{ij} &= \beta^2 \sigma_Z^2 C(\phi, \nu_Z; W_{ij}) + 2\beta\rho \sigma_U\sigma_ZC(\phi, 2^{-1}(\nu_U+\nu_Z); W_{ij}) +  \sigma_U^2 C(\phi, \nu_U; W_{ij}) + \sigma_\epsilon^2(I_n)_{ij} , \\
\cov (Y,Z)_{ij} &= \beta \sigma_Z^2 C(\phi, \nu_Z; W_{ij}) + \rho\sigma_U\sigma_Z C(\phi, 2^{-1}(\nu_U+\nu_Z); W_{ij}).\end{align*} 
Recall that that $(\nu_U, \nu_Z)$ are known here. 
Let $(\beta, \phi,\sigma_Z^2, \sigma_U^2, \sigma_\epsilon^2, \rho)$ denote the true data generating parameters, and suppose that $(\tilde\beta, \tilde\phi,\tilde\sigma_Z^2, \tilde\sigma_U^2, \tilde\sigma_\epsilon^2, \tilde\rho)$ yield the same observed data distribution. 
Below we prove that $(\tilde\beta, \tilde\phi,\tilde\sigma_Z^2, \tilde\sigma_U^2, \tilde\sigma_\epsilon^2, \tilde\rho)$ must be the same as $(\beta, \phi,\sigma_Z^2, \sigma_U^2, \sigma_\epsilon^2, \rho)$. 

First, from the expression of $\var(Z)$, we must have for each $i,j$:
\[\sigma_Z^2 C(\phi,\nu_Z;W_{ij}) = \tilde\sigma_Z^2 C(\tilde\phi,\nu_Z;W_{ij}).\]
Because $C(\tilde\phi,\nu_Z; 0) =C(\phi,\nu_Z;0)= 1$ and $W_{ii}=0$ for all $i$, we must have $\sigma_Z^2 = \tilde\sigma_Z^2$, which further implies that for all $i,j$,  
\[ C(\phi,\nu_Z;W_{ij}) = C(\tilde\phi,\nu_Z;W_{ij}).\] 
Because $W$ contains at least one positive element, from Lemma \ref{lem:matern-fixed-smoothness-distinguishable}, 
the above equation cannot hold if $\phi \neq \tilde \phi$. Thus, we must have $\phi = \tilde \phi$. 

Second, we consider the expression of $\cov(Y,Z)$. 
From the diagonal entries of $\cov(Y,Z)$, we obtain
\begin{equation}\label{eq:bivariate-matern-fixed-smooth-diag}\beta \sigma_Z^2 + \rho \sigma_Z\sigma_U = \tilde\beta \sigma_Z^2 + \tilde\rho\sigma_Z\tilde\sigma_U.\end{equation} 
Next, our assumption that $\cov(Y,Z)\var(Z)^{-1}$ is not proportional to the identity matrix implies that 
\[ \cov(Y,Z)\var(Z)^{-1} - \beta I_n =\Sigma_{UZ}\Sigma_{ZZ}^{-1} \]
is not proportional to the identity matrix, where $\Sigma_{UZ}$ and $\Sigma_{ZZ}$ are defined in \eqref{eq:pars-matern}.
If $C(\phi, \nu_Z; W_{ij}) = C((\phi, 2^{-1}(\nu_Z+\nu_U), W_{ij})$ for all $i,j$, then $\Sigma_{UZ}\Sigma_{ZZ}^{-1} = \rho \sigma_U/\sigma_Z \cdot I_n$, leading to a contradiction. 
Thus, there must exist $(i', j')$ such that $W_{i'j'}>0$ and $C(\phi, \nu_Z; W_{i'j'}) \neq C((\phi, 2^{-1}(\nu_Z+\nu_U), W_{i'j'})$. 
Let $a=C(\phi, \nu_Z; W_{i'j'})$ and $b=C(\phi, 2^{-1}(\nu_Z+\nu_U); W_{i'j'})$. By construction, $a\neq b$. Moreover, from Lemma  \ref{lem:bessel-properties}(iv), both $a$ and $b$ are positive. 
From $\cov(Y,Z)_{i', j'}$, we then have  
\begin{equation}\label{eq:bivariate-matern-fixed-smooth-offdiag}a\beta \sigma_Z^2 + b\rho \sigma_Z\sigma_U = a \tilde\beta \sigma_Z^2 +b \tilde\rho \sigma_Z\tilde\sigma_U. \end{equation}
Multiplying \eqref{eq:bivariate-matern-fixed-smooth-diag} by $b$ and subtracting from \eqref{eq:bivariate-matern-fixed-smooth-offdiag}, we have
$
(a-b)\beta \sigma_Z^2 = (a-b)\tilde\beta \sigma_Z^2
$. Since both $\sigma_Z$ and $a-b$ are nonzero, we must have $\beta = \tilde\beta$. From \eqref{eq:bivariate-matern-fixed-smooth-diag}, we further have $\rho\sigma_U = \tilde\rho\tilde\sigma_U$. 

Third, we consider the expression of $\var(Y)$. From the above, for each $i,j$, we have 
\begin{align*} &\sigma_U^2 C(\phi, \nu_U; W_{ij}) + \sigma_\epsilon^2(I_n)_{ij}  = \tilde\sigma_U^2 C(\phi, \nu_U; W_{ij}) +\tilde\sigma_\epsilon^2(I_n)_{ij}  \\
&\implies (\sigma_U^2 - \tilde\sigma_U^2)C(\phi, \nu_U; W_{ij}) = (\sigma_\epsilon^2 - \tilde\sigma_\epsilon^2)(I_n)_{ij}.\end{align*}
Because there exist $i \ne j$ such that $W_{ij}>0$, we must have $(\sigma_U^2 - \tilde\sigma_U^2)C(\phi, \nu_U; W_{ij}) =0$, which further implies that $\sigma_U^2 = \tilde\sigma_U^2$ due to Lemma  \ref{lem:bessel-properties}(iv). 
Subsequently, we have $\sigma_\epsilon^2 = \tilde\sigma_\epsilon^2$, which further implies that $\rho = \tilde\rho$.

From the above, we derive Theorem \ref{cor:bivariate-matern-id-known-smoothness}. 
\end{proof}

\subsection{Proof of Theorem \ref{thm:bivariate-matern-id-full-asymptotic}}

To prove Theorem \ref{thm:bivariate-matern-id-full-asymptotic}, we need the following two lemmas.

\begin{lemma}\label{lem:matern-asymptotic-2-linind-two-parameter}
Let $S$ be a set with $\sup S = \infty$, that is, for every finite $L$, $S$ contains some $w$ with $w>L$. Then the family of Mat\'ern covariance functions for all $\phi>0$ and $\nu >0$ is 2-linearly independent with respect to $S$. That is, for any $(\nu_1, \phi_1) \neq (\nu_2, \phi_2)$ with $\nu_1, \phi_1,\nu_2, \phi_2 > 0$, 
\begin{equation}\label{eq:matern-2-linind-asymptotic-1} 
a_1 C(\phi_1, \nu_1; w) + a_2 C(\phi_2, \nu_2; w)=0
\end{equation} 
for all $w\in S$ if and only if $a_1=a_2=0$.
\end{lemma}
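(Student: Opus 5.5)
The ``if'' direction is trivial, so I will focus on the converse. I will argue by comparing tail behaviour as $w\to\infty$ along $S$, using the asymptotic expansion in Lemma~\ref{lem:bessel-properties}(i).

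By that lemma, for fixed $(\phi,\nu)$,
\[
C(\phi,\nu;w) \sim \frac{2^{1-\nu}\sqrt{\pi}}{\Gamma(\nu)\sqrt{2}}\,(w/\phi)^{\nu-1/2} e^{-w/\phi}, \qquad w\to\infty .
\]
Thus each Matérn function is asymptotically a positive constant times $w^{\nu-1/2}e^{-w/\phi}$. Since $\sup S=\infty$, I can pick a sequence $w_k\in S$ with $w_k\to\infty$.

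Suppose \eqref{eq:matern-2-linind-asymptotic-1} holds on $S$ with $(a_1,a_2)\neq(0,0)$. Because $C(\phi,\nu;w)>0$ for $w>0$ by Lemma~\ref{lem:bessel-properties}(iv), neither coefficient can vanish alone. So both $a_1,a_2\neq 0$, and
\[
\frac{C(\phi_1,\nu_1;w_k)}{C(\phi_2,\nu_2;w_k)} = -\frac{a_2}{a_1}
\]
is a nonzero constant for all $k$. By the asymptotics, the ratio is asymptotic to $c\, w_k^{\nu_1-\nu_2} e^{-w_k(1/\phi_1-1/\phi_2)}$ with $c>0$. I will treat two cases.
\begin{itemize}
\item If $\phi_1\neq\phi_2$, the exponential factor dominates any power of $w_k$. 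The ratio therefore tends to $0$ or $\infty$, contradicting constancy.
\item If $\phi_1=\phi_2$, then $\nu_1\neq\nu_2$, and the ratio behaves like $c\,w_k^{\nu_1-\nu_2}$. This again tends to $0$ or $\infty$, a contradiction.
\end{itemize}
In either case the ratio cannot stay equal to a finite nonzero constant. Hence $a_1=a_2=0$.

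The only delicate step is making the asymptotic equivalence rigorous when dividing two functions. Writing each $C$ as (explicit asymptotic form)$\times(1+o(1))$, the ratio equals the explicit ratio times a factor tending to $1$, so the limit argument goes through. I also need the sign observation that the constant $-a_2/a_1$ must be positive, since both $C$'s are positive; this is not essential, because divergence to $0$ or $\infty$ already contradicts any finite nonzero constant. I expect no real obstacle beyond bookkeeping of the constants in Lemma~\ref{lem:bessel-properties}(i), which is stated for fixed $\nu$ and so applies separately to $\nu_1$ and $\nu_2$.
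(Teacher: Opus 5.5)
Your proposal is correct and follows essentially the same route as the paper. The paper also uses positivity of the Mat\'ern function to force both coefficients to be nonzero, rewrites the relation as a constant ratio, and applies the large-argument asymptotic of $K_\nu$ from Lemma~\ref{lem:bessel-properties}(i) to show that the ratio diverges, in the same two cases ($\phi_1\neq\phi_2$, and $\phi_1=\phi_2$ with $\nu_1\neq\nu_2$).
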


\begin{proof}[Proof of Lemma \ref{lem:matern-asymptotic-2-linind-two-parameter}]
If either $a_1, a_2$ is zero, then both must be zero due to Lemma \ref{lem:bessel-properties}(iv). 
Below we suppose that both $a_1$ and $a_2$ are nonzero.
Without loss of generality, we further assume that one of the two cases holds: 
\begin{enumerate}
    \item[(i)] $\phi_1 > \phi_2$; 
    \item[(ii)]  $\phi_1 = \phi_2$, $\nu_1 > \nu_2$. 
\end{enumerate}

From \eqref{eq:matern-2-linind-asymptotic-1}, for any $w\in S$,  we have 
\begin{align*}
    a_1 \frac{2^{1-\nu_1}}{\Gamma(\nu_1)} (w/\phi_1)^{\nu_1} K_{\nu_1}(w/\phi_1) + a_2 \frac{2^{1-\nu_2}}{\Gamma(\nu_2)} (w/\phi_2)^{\nu_2} K_{\nu_2}(w/\phi_2) =0. 
\end{align*}
Using the fact that from Lemma \ref{lem:bessel-properties}(iv)
both terms are nonzero, this implies that 
\begin{align*}
    w^{\nu_1 - \nu_2} \frac{K_{\nu_1}(w/\phi_1)}{K_{\nu_2}(w/\phi_2)} = B
\end{align*}
for some constant $B$ that depends only on $(a_1, a_2, \nu_1, \nu_2, \phi_1, \phi_2)$ and does not depend on $w$. 
Consequently, for any $w\in S$, 
\begin{align*}
    B & = w^{\nu_1 - \nu_2} \frac{K_{\nu_1}(w/\phi_1)}{K_{\nu_2}(w/\phi_2)} \\
    &= w^{\nu_1 - \nu_2} \frac{K_{\nu_1}(w/\phi_1)}{\sqrt{\pi} e^{-(w/\phi_1)}/(2w/\phi_1)^{1/2}}
    \frac{\sqrt{\pi} e^{-(w/\phi_2)}/(2w/\phi_2)^{1/2}}{K_{\nu_2}(w/\phi_2)}
    \frac{\sqrt{\pi} e^{-(w/\phi_1)}/(2w/\phi_1)^{1/2}}{\sqrt{\pi} e^{-(w/\phi_2)}/(2w/\phi_2)^{1/2}}\\
    & =  \frac{K_{\nu_1}(w/\phi_1)}{\sqrt{\pi} e^{-(w/\phi_1)}/(2w/\phi_1)^{1/2}}
    \frac{\sqrt{\pi} e^{-(w/\phi_2)}/(2w/\phi_2)^{1/2}}{K_{\nu_2}(w/\phi_2)}
    \cdot
    \frac{\phi_1^{1/2}}{\phi_2^{1/2}}
    \cdot 
    w^{\nu_1 - \nu_2}
    \exp\big\{(1/\phi_2 - 1/\phi_1)w\big\} 
\end{align*}
From Lemma \ref{lem:bessel-properties}(i), by letting $w\in S$ go to infinity, we must have  
\begin{align}\label{eq:lim_h}
    B = \frac{\phi_1^{1/2}}{\phi_2^{1/2}} \cdot \lim_{w\in S, w\rightarrow \infty} 
    \left[ w^{\nu_1 - \nu_2}
    \exp\big\{(1/\phi_2 - 1/\phi_1)w\big\} \right]. 
\end{align}
In case (i), we have $1/\phi_2 - 1/\phi_1>0$, and consequently the right-hand side of \eqref{eq:lim_h} diverges to infinity. 
In case (ii), we have $w^{\nu_1 - \nu_2}
    \exp\{(1/\phi_2 - 1/\phi_1)w\} = w^{\nu_1 - \nu_2}$ with $\nu_1 - \nu_2>0$, 
and thus the right-hand side of \eqref{eq:lim_h} also diverges to infinity. 
Therefore, both cases lead to contradiction. 

From the above, we must have $a_1=a_2=0$. Therefore, Lemma \ref{lem:matern-asymptotic-2-linind-two-parameter} holds.
\end{proof}

\begin{lemma}\label{lem:matern-asymptotic-3-linind}
Let $S$ be a set with $\sup S = \infty$, that is, for every finite $L$, $S$ contains some $w$ with $w>L$. Then the family of Mat\'ern covariance functions for a fixed $\phi>0$ and all $\nu>0$ is 3-linearly independent with respect to $S$. That is, for mutually distinct $\nu_1, \nu_2, \nu_3$, 
\begin{equation}\label{eq:matern-4}  a_1 C(\phi, \nu_1; w) + a_2 C(\phi, \nu_2; w) + a_3C(\phi, \nu_3;w) = 0\end{equation} for all $w\in S$ if and only if $a_1=a_2=a_3=0$.
\end{lemma}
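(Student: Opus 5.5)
We follow the template of Lemma \ref{lem:matern-asymptotic-2-linind-two-parameter}: an asymptotic comparison as $w\to\infty$ along $S$, using the tail behavior in Lemma \ref{lem:bessel-properties}(i) and positivity in Lemma \ref{lem:bessel-properties}(iv). The "if" direction is trivial. For the converse, order the smoothness parameters so that $\nu_1>\nu_2>\nu_3$, and write
\[
C(\phi,\nu_k;w)=c_k\,(w/\phi)^{\nu_k}K_{\nu_k}(w/\phi),\qquad c_k=2^{1-\nu_k}/\Gamma(\nu_k)>0 .
\]
By Lemma \ref{lem:bessel-properties}(i), with $z=w/\phi$, each term satisfies
\[
C(\phi,\nu_k;w)=c_k\sqrt{\pi/2}\,z^{\nu_k-1/2}e^{-z}\{1+o(1)\}
\]
as $z\to\infty$. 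Since all three functions share the same range parameter, they share the same exponential factor $e^{-z}$ and differ only in the polynomial factors $z^{\nu_k-1/2}$. This is the mechanism that separates them.

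\textbf{Step 1: show $a_1=0$.} Suppose $a_1\neq0$. Divide \eqref{eq:matern-4} by $C(\phi,\nu_1;w)$, which is positive by Lemma \ref{lem:bessel-properties}(iv). This gives, for all $w\in S$,
\[
a_1+a_2\frac{C(\phi,\nu_2;w)}{C(\phi,\nu_1;w)}+a_3\frac{C(\phi,\nu_3;w)}{C(\phi,\nu_1;w)}=0 .
\]
By the asymptotics above, each ratio behaves like $(c_k/c_1)\,z^{\nu_k-\nu_1}\{1+o(1)\}$ with $\nu_k-\nu_1<0$, so both ratios tend to $0$ along $w\in S$, $w\to\infty$. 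This is possible because $\sup S=\infty$. Taking the limit yields $a_1=0$, a contradiction.

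\textbf{Step 2: reduce to two terms.} With $a_1=0$, the relation becomes $a_2C(\phi,\nu_2;w)+a_3C(\phi,\nu_3;w)=0$ on $S$. Lemma \ref{lem:matern-asymptotic-2-linind-two-parameter} applies directly here, taking $\phi_1=\phi_2=\phi$ and $\nu_2\neq\nu_3$, and gives $a_2=a_3=0$. Alternatively, repeating the division argument once more gives the same conclusion.

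\textbf{Main obstacle.} The only delicate point is justifying the limits of the ratios. Lemma \ref{lem:bessel-properties}(i) gives $K_{\nu}(z)\big/\{\sqrt{\pi}e^{-z}/(2z)^{1/2}\}\to1$ for each fixed $\nu$. Hence the ratio $K_{\nu_k}(z)/K_{\nu_1}(z)\to1$, and the ratio of Matérn terms equals $(c_k/c_1)z^{\nu_k-\nu_1}$ times a factor tending to $1$. No uniformity in $\nu$ is needed, since only three fixed orders appear. The limit must also be taken along a sequence in $S$ diverging to infinity; such a sequence exists by the hypothesis $\sup S=\infty$.
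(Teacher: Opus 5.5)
Your proof is correct. It rests on the same mechanism as the paper's: with a common range parameter, the three Mat\'ern tails differ only through the polynomial factors $z^{\nu_k}$, and the two-term Lemma~\ref{lem:matern-asymptotic-2-linind-two-parameter} handles degenerate cases. The execution, however, is different.

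The paper first uses the two-term lemma to reduce to the case in which all $a_k$ are nonzero, and normalizes by $a_1$. It then divides by the \emph{least} dominant term $C(\phi,\nu_3;w)$. Using only the order monotonicity $K_{\nu_1}(z)>K_{\nu_2}(z)>K_{\nu_3}(z)$ from Lemma~\ref{lem:bessel-properties}(ii), it shows that $-a_3/a_1$ would have to be at least a quantity growing like $z^{\nu_1-\nu_3}$, which diverges along $S$. The sharp tail asymptotic (i) is not used in that step.

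You instead divide by the \emph{most} dominant term and use (i) to send both ratios to zero. This identifies $a_1=0$ directly, and the two-term case is applied afterwards rather than up front. Your peeling argument is arguably more transparent, and it extends by induction to $K$-linear independence for every $K$ at fixed $\phi$. The paper's nested lower bound, by contrast, is written specifically for three terms.

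Note also that you do not actually need (i). Since $\nu_1>\nu_k>0$, Lemma~\ref{lem:bessel-properties}(ii) gives $0<C(\phi,\nu_k;w)/C(\phi,\nu_1;w)<(c_k/c_1)z^{\nu_k-\nu_1}\to 0$. Combined with your option of repeating the division once more, and positivity (iv) for the last coefficient, the argument then needs neither the asymptotic nor the two-term lemma.
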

\begin{proof}[Proof of Lemma \ref{lem:matern-asymptotic-3-linind}] 
Assume without loss of generality that $\nu_1 > \nu_2 > \nu_3$. From Lemma \ref{lem:matern-asymptotic-2-linind-two-parameter}, if any of $a_1, a_2$ and $a_3$ is $0$, then $a_1=a_2=a_3=0$. 
Below we suppose that 
$a_1, a_2, a_3$ are all nonzero.

We can rewrite \eqref{eq:matern-4} as:
\[
C(\phi, \nu_1; w)  + b_2 C(\phi, \nu_2; w)  + b_3 C(\phi, \nu_3; w)  = 0,
\] 
where $b_2 = a_2/a_1, b_3=a_3/a_1$.
From Lemma \ref{lem:bessel-properties}(iv), both $C(\phi, \nu_2; w)$ and $C(\phi, \nu_3; w)$ are positive for all $w$.  We can then rewrite the above equation as 
\begin{equation}\label{eq:matern-expression}
\frac{C(\phi, \nu_2; w) }{C(\phi, \nu_3; w) } 
\left\{ \frac{C(\phi, \nu_1; w) }{C(\phi, \nu_2; w) } + b_2 \right\}  = -b_3.
\end{equation} 
Letting $z = w/\phi$, we then have 
\begin{align*}
    & \frac{2^{1-\nu_2}/\Gamma(\nu_2)}{2^{1-\nu_3}/\Gamma(\nu_3)}z^{\nu_2-\nu_3} \frac{K_{\nu_2}(z)}{K_{\nu_3}(z)}
\left\{ \frac{2^{1-\nu_1}/\Gamma(\nu_1)}{2^{1-\nu_2}/\Gamma(\nu_2)}z^{\nu_1-\nu_2} \frac{K_{\nu_1}(z)}{K_{\nu_2}(z)} + b_2 \right\}  = -b_3\\
\Longrightarrow \  &
c_1 \cdot z^{\nu_2-\nu_3} \frac{K_{\nu_2}(z)}{K_{\nu_3}(z)}
\left\{ c_2 \cdot z^{\nu_1-\nu_2} \frac{K_{\nu_1}(z)}{K_{\nu_2}(z)} + b_2 \right\}  = -b_3, 
\end{align*}
where $c_1 = {2^{1-\nu_2}/\Gamma(\nu_2)}/\{2^{1-\nu_3}/\Gamma(\nu_3)\}>0$ and 
$c_2 = {2^{1-\nu_1}/\Gamma(\nu_1)}/\{2^{1-\nu_2}/\Gamma(\nu_2)\}>0$ do not depend on $z$. 
From Lemma \ref{lem:bessel-properties}(ii), for any $w\in S$ and $z=w/\phi$, we have 
\begin{align*}
    -b_3 = c_1 \cdot z^{\nu_2-\nu_3} \frac{K_{\nu_2}(z)}{K_{\nu_3}(z)}
\left\{ c_2 \cdot z^{\nu_1-\nu_2} \frac{K_{\nu_1}(z)}{K_{\nu_2}(z)} + b_2 \right\}  
\ge c_1 \cdot z^{\nu_2-\nu_3} \frac{K_{\nu_2}(z)}{K_{\nu_3}(z)}
\left( c_2 \cdot z^{\nu_1-\nu_2} + b_2 \right).   
\end{align*}
For sufficiently large $w\in S$ and $z=w/\phi$, we then have, $c_2 \cdot z^{\nu_1-\nu_2} + b_2>0$. Consequently, from Lemma \ref{lem:bessel-properties}(ii), for sufficiently large $w\in S$ and $z=w/\phi$, 
\begin{align*}
    -b_3 \ge c_1 \cdot z^{\nu_2-\nu_3} \frac{K_{\nu_2}(z)}{K_{\nu_3}(z)}
\left( c_2 \cdot z^{\nu_1-\nu_2} + b_2 \right)
\ge  c_1 \cdot z^{\nu_2-\nu_3}
\left( c_2 \cdot z^{\nu_1-\nu_2} + b_2 \right), 
\end{align*}
which must diverge to infinity as $w\rightarrow \infty$. This leads to a contradiction. 

From the above, Lemma \ref{lem:matern-asymptotic-3-linind} holds.
\end{proof}

\begin{proof}[Proof of Theorem \ref{thm:bivariate-matern-id-full-asymptotic}]
By the same logic as Theorem \ref{cor:bivariate-matern-id-known-smoothness}, for all $1\le i, j\le n$, 
\begin{align*}\var(Z)_{ij} &= \sigma_Z^2 C(\phi, \nu_Z; W_{ij}),\\
\var(Y)_{ij} &= \beta^2 \sigma_Z^2 C(\phi, \nu_Z; W_{ij}) + 2\beta\rho \sigma_U\sigma_ZC(\phi, 2^{-1}(\nu_U+\nu_Z); W_{ij}) +  \sigma_U^2 C(\phi, \nu_U; W_{ij}) + \sigma_\epsilon^2(I_n)_{ij} , \\
\cov (Y,Z)_{ij} &= \beta \sigma_Z^2 C(\phi, \nu_Z; W_{ij}) + \rho\sigma_U\sigma_Z C(\phi, 2^{-1}(\nu_U+\nu_Z); W_{ij}).\end{align*} 
Let $(\beta, \rho, \sigma_U, \sigma_Z, \phi, \nu_U, \nu_Z, \sigma_\epsilon)$ denote the true data generating parameters, and suppose that $(\tilde\beta, \tilde\rho, \tilde\sigma_U, \tilde\sigma_Z, \tilde\phi, \tilde\nu_U, \tilde\nu_Z, \tilde\sigma_\epsilon)$  yield the same observed data distribution. 
Below we prove that $(\tilde\beta, \tilde\rho, \tilde\sigma_U, \tilde\sigma_Z, \tilde\phi, \tilde\nu_U, \tilde\nu_Z, \tilde\sigma_\epsilon)$ must be the same as $(\beta, \rho, \sigma_U, \sigma_Z, \phi, \nu_U, \nu_Z, \sigma_\epsilon)$. 

First, we considering $\var(Z)$. For every $i,j$, we have 
\[\sigma_Z^2 C(\phi, \nu_Z; W_{ij}) = \tilde\sigma_Z^2 C(\tilde\phi, \tilde\nu_Z; W_{ij}). \]
Note that both $\sigma_Z^2$ and $\tilde\sigma_Z^2$ are positive. 
From Lemma \ref{lem:matern-asymptotic-2-linind-two-parameter}, we must have $\nu_Z=\tilde\nu_Z$ and $\tilde\phi=\phi$.
Considering further the case where $i=j$, we can immediately obtain $\tilde\sigma_Z^2 = \sigma_Z^2$, since both covariance functions take the value 1 when $W_{ii}=0$.

We then consider $\cov (Y,Z)$. After some arrangement and noting our identified parameters above, we have, for all $i,j,$
\begin{equation}\label{eq:pars-covyz} (\beta-\tilde{\beta}) \sigma_Z^2 
C(\phi, \nu_Z; W_{ij}) + \rho\sigma_U\sigma_Z 
C(\phi, 2^{-1}(\nu_U+\nu_Z); W_{ij}) -\tilde\rho\tilde\sigma_U\sigma_Z  C(\phi, 2^{-1}(\tilde\nu_U+\nu_Z); W_{ij}) = 0.\end{equation} By assumption, $\nu_U\neq \nu_Z=\tilde\nu_Z \neq \tilde\nu_U$. If further $\nu_U \neq\tilde \nu_U$, then $\nu_Z, 2^{-1}(\nu_U+\nu_Z)$ and $2^{-1}(\tilde\nu_U+\nu_Z)$ are mutually distinct, which, by Lemma \ref{lem:matern-asymptotic-3-linind}, implies that $\beta = \tilde\beta$ and $\rho \sigma_U = \tilde\rho \tilde\sigma_U =0$.  
If $\nu_U = \tilde\nu_U$, then we also have $\beta = \tilde\beta$ and $\rho\sigma_U = \tilde\rho \tilde\sigma_U$. 
This is because 3-linear independence implies 2-linear independence. 
Thus, we must have $\beta = \tilde\beta$ and $\rho\sigma_U = \tilde\rho \tilde\sigma_U$.

Because $\beta = \tilde\beta$, from \eqref{eq:pars-covyz}, we then have $\rho\sigma_U\sigma_Z C(\phi, 2^{-1}(\tilde\nu_U+\nu_Z); W_{ij}) = \tilde\rho\tilde\sigma_U \sigma_Z C(\phi, 2^{-1}(\nu_U+\nu_Z); W_{ij})$ for all $i,j$. 
From $\var(Y)$,  we can then obtain that, for all $i,j$, 
\[\sigma_U^2 C(\phi, \nu_U; W_{ij}) - \tilde\sigma_U^2 C(\phi, \tilde\nu_U; W_{ij}) + (\sigma_\epsilon^2 - \tilde\sigma_\epsilon^2) (I_n)_{ij} =0.\]
For $i \neq j$, this simplifies to $\sigma_U^2 C(\phi, \nu_U; W_{ij}) - \tilde\sigma_U^2 C(\phi, \tilde\nu_U; W_{ij})=0$. If $\nu_U\neq \tilde\nu_U$, then, by Lemma \ref{lem:matern-asymptotic-2-linind-two-parameter} we have $\sigma_U=\tilde\sigma_U=0$, leading to a contradiction.  Thus, we must have $\nu_U=\tilde\nu_U$ and consequently  $\sigma^2_U = \tilde{\sigma}^2_U$. 
Subsequently we can derive that $\sigma_\epsilon^2 = \tilde\sigma_\epsilon^2$. 
Since  $\rho\sigma_U = \tilde\rho \tilde\sigma_U$, we also have $\rho = \tilde\rho$. 

From the above, Theorem \ref{thm:bivariate-matern-id-full-asymptotic} holds. 
\end{proof}

\section{More about linear independence of some other covariance functions}\label{sec:linear-non-ind}

\subsection{Spherical Covariance}

The spherical covariance function is parametrized as 
$$
f_\phi(w) = \left\{ 1 - 1.5\left(\frac{w}{\phi}\right) + 0.5\left(\frac{w}{\phi}\right)^3\right\} \cdot \mathbf{1}_{w\le\phi}, 
\quad \text{where } \phi > 0.
$$ 
The $K$-linear independence of the family is complicated in general to state for the full domain of $\phi$, 
due to the indicator function component. In particular, it is obvious that linear independence does not occur if the given values of $w$ were all sufficiently large.  
For convenience, we 
restrict $\phi$ from below so that there are sufficiently many pairwise distances smaller than or equal to the minimum possible $\phi$.

\begin{proposition}\label{spherical-3-lin-ind}
Consider the family of spherical covariance functions with range bounded from below by some fixed $c>0$, i.e.,  
$$
\mathcal{F}_{c}= \left\{f_\phi: f_\phi(w) = \left\{ 1 - 1.5\left(\frac{w}{\phi}\right) + 0.5\left(\frac{w}{\phi}\right)^3\right\} \cdot \mathbf{1}_{w\le\phi}, \ \phi \ge c\right\}. 
$$
This family is 3-linearly independent with respect to any set $\mathcal{S}$ containing at least four values smaller than or equal to $c$.
\end{proposition}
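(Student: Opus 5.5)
The plan is to reduce everything to a polynomial identity on $[0,c]$. For $w\in[0,c]$ and every $\phi\ge c$ we have $w\le\phi$, so the indicator in $f_\phi$ equals one. Hence on $\mathcal{S}\cap[0,c]$,
\[
f_\phi(w)=1-\tfrac{3}{2}\phi^{-1}w+\tfrac12\phi^{-3}w^3,
\]
a cubic polynomial in $w$ with no $w^2$ term.

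Now take distinct $\phi_1,\phi_2,\phi_3\ge c$ and suppose $\sum_{k=1}^3 a_k f_{\phi_k}(w)=0$ for all $w\in\mathcal{S}$. In particular the identity holds at the (at least) four distinct points of $\mathcal{S}$ lying in $[0,c]$. The left side, restricted to these points, equals the polynomial
\[
P(w)=\Big(\sum_k a_k\Big)-\tfrac32\Big(\sum_k a_k\phi_k^{-1}\Big)w+\tfrac12\Big(\sum_k a_k\phi_k^{-3}\Big)w^3 .
\]
This polynomial has degree at most three and vanishes at four distinct points, so it is identically zero. Reading off the coefficients of $w^0$, $w^1$ and $w^3$ gives the linear system
\[
\sum_k a_k=0,\qquad \sum_k a_k x_k=0,\qquad \sum_k a_k x_k^3=0,\qquad x_k=\phi_k^{-1}.
\]
Since the $\phi_k$ are distinct and positive, the $x_k$ are distinct and positive.

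It remains to show that the coefficient matrix, with rows $(1,1,1)$, $(x_1,x_2,x_3)$ and $(x_1^3,x_2^3,x_3^3)$, is nonsingular. This is the only real step, and I expect it to be the main obstacle. Its determinant is a generalized Vandermonde determinant, equal to
\[
(x_1+x_2+x_3)\prod_{i<j}(x_j-x_i).
\]
I would verify this directly, for example by noting that the determinant is alternating (hence divisible by the Vandermonde product) and matching degrees, or by using the Schur polynomial $s_{(1,0,0)}=x_1+x_2+x_3$. The sum $x_1+x_2+x_3$ is positive because each $x_k>0$, and the Vandermonde factor is nonzero because the $x_k$ are distinct. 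So the determinant is nonzero and $a_1=a_2=a_3=0$.

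The converse direction (that $a_k=0$ gives the zero combination) is trivial, so this completes the argument. The same computation also explains why only three functions can be handled this way: a fourth function would need a fourth independent coefficient, but the cubic has only three nonzero coefficients available. That matches the paper's remark that the spherical family fails $K$-linear independence for $K\ge4$ on such sets.
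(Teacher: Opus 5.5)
Your proposal is correct and follows the paper's proof essentially step for step: on $w\le c$ the indicator equals one, the combination becomes a cubic with no $w^2$ term, four roots force it to vanish identically, and the resulting $3\times 3$ coefficient system is nonsingular. Your determinant $(x_1+x_2+x_3)\prod_{i<j}(x_j-x_i)$ with $x_k=\phi_k^{-1}$ is exactly the paper's expression $(\phi_1-\phi_2)(\phi_1-\phi_3)(\phi_2-\phi_3)(\phi_1\phi_2+\phi_1\phi_3+\phi_2\phi_3)/(\phi_1^3\phi_2^3\phi_3^3)$, just rewritten in the variables $x_k$.
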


\begin{proof}[Proof of Proposition \ref{spherical-3-lin-ind}]
Consider any mutually distinct $\phi_1, \phi_2, \phi_3\in [c, \infty)$. 
Suppose 
that 
for $w\in \mathcal{S}$ and $a_1, a_2, a_3\in \mathbb{R}$,  
\begin{equation}\label{eq:spherical-3-lind-zero}
a_1f_{\phi_1}(w) + a_2f_{\phi_2}(w) + a_3f_{\phi_3}(w)=0.
\end{equation} 
By definition, this implies that, for $w\in \mathcal{S}$ and $w\le c$,  
\[(a_1+a_2+a_3) -1.5\left(\frac{a_1}{\phi_1} + \frac{a_2}{\phi_2}+\frac{a_3}{\phi_3} \right)w + 0.5 \left(\frac{a_1}{\phi_1^3} + \frac{a_2}{\phi_2^3}
+\frac{a_3}{\phi_3^3} \right)w^3 =0\]
Note that the expression on the left-hand side is a cubic polynomial in $w$, and thus cannot have four distinct roots. Thus, we must have 
\[\begin{pmatrix} 1 & 1 & 1 \\
\frac{1}{\phi_1} & \frac{1}{\phi_2} & \frac{1}{\phi_3} \\
\frac{1}{\phi_1^3} & \frac{1}{\phi_2^3} & \frac{1}{\phi_3^3}\end{pmatrix} \begin{pmatrix} a_1 \\ a_2 \\ a_3 \end{pmatrix} = \textbf{0}.\]
We can verify that the determinant of the matrix on the left-hand side can be computed as 
\[
\frac{(\phi_1-\phi_2)(\phi_1-\phi_3)(\phi_2-\phi_3)\,(\phi_1\phi_2+\phi_1\phi_3+\phi_2\phi_3)}
{\phi_1^3\phi_2^3\phi_3^3},
\] which must be nonzero since $\phi_1, \phi_2, \phi_3$ are all distinct. Thus, we must have $a_1 = a_2 = a_3 = 0$. 
From the above, we can then derive Proposition \ref{spherical-3-lin-ind}. 
\end{proof}

\begin{proposition}\label{spherical-4-non-lin-ind}
Consider the family of spherical covariance functions with range bounded from below by some fixed $c>0$, defined as in Proposition \ref{spherical-3-lin-ind}. 
For $K \ge 4$, the family is not $K$-linearly independent with respect to any set $\mathcal{S}\subset [0, \infty)$ is not dense in $[0, \infty)$.

\end{proposition}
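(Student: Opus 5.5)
The plan is to exploit the fact that, below its range, every spherical covariance function is a polynomial in the three-dimensional span of $\{1, w, w^3\}$, and above its range it is identically zero. Suppose $\mathcal{S}$ is not dense. Then there is a nonempty open interval $(p,q)$ that contains no point of $\mathcal{S}$, and hence no point of its closure.

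\textbf{Choice of ranges.} Choose the $K$ distinct ranges inside this gap, $\max(p,c) < \phi_1 < \phi_2 < \cdots < \phi_K < q$. This places every point of $\mathcal{S}$ on one side of all the ranges simultaneously:
\begin{itemize}
\item either $w \le p < \phi_k$ for all $k$, in which case $f_{\phi_k}(w) = 1 - 1.5\, w/\phi_k + 0.5\, w^3/\phi_k^3$;
\item or $w \ge q > \phi_k$ for all $k$, in which case $f_{\phi_k}(w) = 0$ for every $k$.
\end{itemize}
Points $w$ of the second kind impose no constraint on the coefficients. For points of the first kind, $\sum_k a_k f_{\phi_k}(w) = 0$ holds for every such $w$ as soon as the coefficients of $1$, $w$ and $w^3$ all vanish, that is, as soon as
\[
\sum_{k=1}^K a_k = 0, \qquad \sum_{k=1}^K a_k/\phi_k = 0, \qquad \sum_{k=1}^K a_k/\phi_k^3 = 0 .
\]

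\textbf{Nontrivial solution.} This is a homogeneous linear system with $3$ equations and $K \ge 4$ unknowns. It therefore has a nonzero solution $(a_1,\ldots,a_K)$, which gives a nontrivial vanishing combination on all of $\mathcal{S}$. This contradicts $K$-linear independence with respect to $\mathcal{S}$.

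\textbf{Main obstacle.} The delicate step is ensuring that the gap $(p,q)$ can be taken to meet the admissible range region $[c,\infty)$, i.e.\ that $q > c$. Only then is the interval $(\max(p,c), q)$ nonempty and the choice of $\phi_k \ge c$ possible. If $\mathcal{S}$ fails to be dense in $[c,\infty)$, an open gap above $c$ exists and the argument goes through directly. If the only gaps lie below $c$, the construction above does not apply, and I would first try to read the hypothesis as non-density of $\mathcal{S}$ within $[c,\infty)$, which is the setting relevant to this family. The remaining steps are routine.
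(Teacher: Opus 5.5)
Your proof is correct and follows the paper's argument: put all $K$ ranges inside a gap of $\mathcal{S}$, so points above the gap contribute nothing and points below impose only the three linear constraints from $1$, $w$, $w^3$, which have a nonzero solution when $K\ge 4$. The caveat you flag is real, and the paper's proof silently ignores it (it only requires $\phi_k>0$): the statement as written fails, e.g.\ for $\mathcal{S}=\{0\}\cup[c,\infty)$, where on each interval $(\phi_{k-1},\phi_k)\subset\mathcal{S}$ only $f_{\phi_k}$ is nonzero, forcing $a_K=\cdots=a_2=0$, and then $a_1=0$ from $w=0$; so reading the hypothesis as non-density in $[c,\infty)$ (which every finite $\mathcal{S}$ satisfies) is necessary, not just convenient.
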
 
\begin{proof}[Proof of Proposition \ref{spherical-4-non-lin-ind}]
It suffices to consider the case with $K=4$. 
Below we 
construct distinct $\phi_1,\phi_2,\phi_3, \phi_4 >0$ and $(a_1, a_2, a_3, a_4) \neq (0,0,0,0)$ such that for all $w \in S$,
\begin{align}\label{eq:sphe_4}
    a_1f_{\phi_1}(w) + a_2f_{\phi_2}(w) + a_3f_{\phi_3}(w) + a_4f_{\phi_4}(w) = 0
\end{align}
where $f_{\phi_j}$, $1\le j \le 4$, is defined as in Proposition \ref{spherical-3-lin-ind}.

Consider any set $\mathcal{S}\subset [0, \infty)$ that is not dense in $[0, \infty)$.
We can find $\phi_1 > \phi_2> \phi_3> \phi_4$ such that $S\cap [\phi_4, \phi_1] = \emptyset$, i.e., 
every element $s\in S$ is either larger than $\phi_1$ or smaller than $\phi_4$. For all elements of $\mathcal{S}$ that are larger than $\phi_1$, the equation in \eqref{eq:sphe_4} holds for any choice of $(a_1, a_2, a_3, a_4)$. For the other elements of $s$, 
the equation in \eqref{eq:sphe_4} holds as long as 
\[\begin{pmatrix} 1 & 1 & 1 & 1\\
\phi_1^{-1} & \phi_2^{-1} & \phi_3^{-1} & \phi_4^{-1}\\
\phi_1^{-3} & \phi_2^{-3} & \phi_3^{-3} & \phi_4^{-3}\end{pmatrix} \begin{pmatrix}a_1 \\ a_2 \\ a_3 \\ a_4\end{pmatrix} = \begin{pmatrix} 0 \\ 0 \\ 0 \\ 0 \end{pmatrix},   \]
which must admit a nontrivial solution $(a_1, a_2, a_3, a_4) \neq (0,0,0,0)$.

From the above, Proposition \ref{spherical-4-non-lin-ind} holds. 
\end{proof}

\subsection{Wave Covariance}

\begin{proposition}\label{prop:wave-cov-not-k-distinguishable} 
Consider the family of wave covariance functions:
\[\mathcal{F} =\left\{ f_\phi(w) = \begin{cases} \frac{\sin (w/\phi)}{w/\phi}, & w > 0\\
1, & w = 0
\end{cases}, \quad \phi > 0\right\}\]
For any two distinct and positive $\phi\neq \phi'$, $f_{\phi}(w) = f_{\phi'}(w)$ for infinitely many $w\ge 0$. Consequently, for any finite integer $M\ge 1$, the family may fail to be 2-linearly independent with respect to a set of size $M$.
\end{proposition}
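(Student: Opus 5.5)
The equality $f_\phi(w)=f_{\phi'}(w)$ for infinitely many $w$ can be exhibited directly through the zeros of the sine function. Fix distinct $\phi,\phi'>0$ and write $r=\phi/\phi'$. For $w>0$ the equation $f_\phi(w)=f_{\phi'}(w)$ reads
\[
\phi\sin(w/\phi)=\phi'\sin(w/\phi').
\]
The first step is to look for common zeros of the two sides, meaning $w$ with $w/\phi\in\pi\mathbb{Z}$ and $w/\phi'\in\pi\mathbb{Z}$. This set is infinite exactly when $r$ is rational. If $r=p/q$ in lowest terms, take $w=k\pi p\phi'=k\pi q\phi$ for $k\in\mathbb{N}$; both sines vanish, so $f_\phi(w)=f_{\phi'}(w)=0$ for infinitely many $w$.

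The main obstacle is the case of irrational $r$, where common zeros no longer exist. Here I plan to use an intermediate value argument instead. Consider
\[
h(w)=\phi\sin(w/\phi)-\phi'\sin(w/\phi').
\]
Suppose without loss of generality that $\phi>\phi'$. At $w=\phi(\pi/2+2\pi k)$ the first term equals $\phi$ and the second is at least $-\phi'$ in sign-adjusted terms, so $h\ge\phi-\phi'>0$. At $w=\phi(3\pi/2+2\pi k)$ we have $h\le-\phi+\phi'<0$. The function $h$ is continuous, so it has a zero between each such pair of points. This produces infinitely many positive zeros of $h$, and at each of them $f_\phi=f_{\phi'}$. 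This argument does not need the rationality of $r$, so it alone covers every case, and I would present it as the main proof, keeping the rational case only as a concrete illustration.

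For the consequence, fix $M\ge1$ and choose $M$ of these common values $w_1,\dots,w_M$ to form $\mathcal{S}$. Then $f_\phi(w)-f_{\phi'}(w)=0$ holds for all $w\in\mathcal{S}$ with coefficients $(1,-1)\neq(0,0)$. This violates Definition \ref{defn:k-linear-independence} with $K=2$, so the family is not 2-linearly independent with respect to $\mathcal{S}$. Adding $w=0$ is harmless, since both functions equal $1$ there.
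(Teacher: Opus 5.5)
Your proof is correct, but it takes a different and somewhat simpler route than the paper's.

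The paper rescales to $z=w/\phi$, which reduces the equation to $\sin z=\alpha^{-1}\sin(\alpha z)$ with $\alpha=\phi/\phi'>1$, and then splits into two cases.
\begin{itemize}
\item For rational $\alpha=p/q$, it uses the common zeros $z=nq\pi$, as in your illustration.
\item For irrational $\alpha$, it brackets $n\pi$ between consecutive zeros $z_1=k\pi/\alpha<n\pi<z_2=(k+1)\pi/\alpha$ of $\sin(\alpha z)$. Since $z_2-z_1=\pi/\alpha<\pi$, the values $\sin z_1$ and $\sin z_2$ are nonzero with opposite signs, and the intermediate value theorem is applied to $\sin z-\alpha^{-1}\sin(\alpha z)$.
\end{itemize}

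You instead evaluate $h$ at the extrema of the larger-amplitude term $\phi\sin(w/\phi)$ and use only the amplitude gap $\phi>\phi'$. A single intermediate-value step therefore covers rational and irrational ratios alike. The disjoint intervals $\bigl(\phi(\pi/2+2\pi k),\,\phi(3\pi/2+2\pi k)\bigr)$ then give infinitely many distinct roots directly, with no bookkeeping about overlapping brackets.

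What the paper's case split buys is the sharper fact that, for rational ratios, both functions vanish at the same points. The statement does not need this.

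You also spell out the ``consequently'' part, using the coefficients $(1,-1)$ on $M$ common values. The paper leaves that step implicit.
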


\begin{proof}[Proof of Proposition \ref{prop:wave-cov-not-k-distinguishable} ]
Without loss of generality, we assume $\phi > \phi'$ and let $\alpha = \phi/\phi' >1$.
For any $w>0$, 

\begin{align*}
& f_{\phi}(w) = f_{\phi'}(w) \\ \iff &
\frac{\sin(w/\phi)}{w/\phi} = \frac{ \sin(w/\phi')}{w/\phi'}\\ \iff &\sin(w/\phi) = \frac{1}{\alpha} \sin(w/\phi')\\
\iff &
\sin z = \frac{1}{\alpha}\sin (\alpha z),  
\end{align*}
where $z=w/\phi$. 
Thus, it suffices to find infinitely many $z>0$ such that 
\begin{align}\label{eq:sinz}
    \sin z = \sin (\alpha z)/\alpha. 
\end{align}

Consider first the case where $\alpha$ is a rational number. 
In this case, we can write $\alpha = p/q$ for two positive integer $p, q$. 
Equation \eqref{eq:sinz} will then hold  with $z = nq\pi$, for any integer $n$.

Consider then the case where $\alpha$ is an irrational number. 
Consider any positive integer $n$. 
Let $k= \lfloor \alpha n \rfloor$ be the largest integer smaller than or equal to $\alpha n$. 
Because $\alpha$ is an irrational number, we must have $k < \alpha n < k+1$. 
Let $z_1 = k\pi/\alpha$ and  $z_2 = (k+1)\pi/\alpha$. 
We then have (i) $z_1 < n \pi < z_2$, 
(ii) $\sin (\alpha z_1) = \sin (\alpha z_2) = 0$. 
Moreover, since $z_2 - z_1 = \pi/\alpha < \pi$, then $\sin(z_1)$ and $\sin(z_2)$ must both be nonzero and have different signs. 
Let $g(z) =  \sin z - \sin (\alpha z)/\alpha$. 
We then have 
$g(z_1) \cdot g(z_2) = \sin(z_1) \cdot \sin(z_2) < 0$. Thus $g(z_1)$ and $g(z_2)$ must both be nonzero and have different signs.
Because $g$ is a continuous function, there must exist $\tilde{z} \in (z_1, z_2) \subset ((n-1)\pi, (n+1)\pi)$ such that $g(\tilde{z})=0$, i.e., equation \eqref{eq:sinz} holds with $z=\tilde{z}$.

From the above, Proposition \ref{prop:wave-cov-not-k-distinguishable} holds. 
\end{proof}

\section{Positive Definiteness}\label{sec:pd}

In this section, we briefly discuss the conditions for  positive definiteness of the covariance matrix for $(Z,U)$ under various model assumptions in the main text. 
We first present a useful lemma below.

\begin{lemma}\label{lem:pd-schur}
Let $\Sigma$ be a symmetric matrix defined as 
\[
\Sigma \equiv \begin{pmatrix}
\Sigma_{UU} & \Sigma_{UZ} \\
\Sigma_{ZU} & \Sigma_{ZZ}
\end{pmatrix}, 
\text{ where } \Sigma_{UU}, \Sigma_{ZZ}, \Sigma_{UZ} = \Sigma_{ZU}^\top \in \mathbb{R}^{n\times n}. 
\]
\begin{itemize}
    \item[(i)] $\Sigma$ is positive definite if and only if both $\Sigma_{ZZ}$ and $\Sigma_{UU} - \Sigma_{UZ} \Sigma_{ZZ}^{-1} \Sigma_{ZU}$ is positive definite. 

    \item[(ii)] If $\lambda_{\min}(\Sigma_{ZZ})> 0$ and  $\lambda_{\min}(\Sigma_{UU}) \cdot \lambda_{\min}(\Sigma_{ZZ}) > \sigma_{\max}\{ (\Sigma_{UZ})\}^2$, then $\Sigma$ is positive definite, where $\lambda_{\min}(\cdot)$ and $\sigma_{\max}(\cdot)$ denote, respectively, the smallest eigenvalue and the largest singular value of a matrix. 
\end{itemize}
\end{lemma}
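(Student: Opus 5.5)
For part (i), I will use the Schur complement factorization. Write
\[
\Sigma = \begin{pmatrix} I & \Sigma_{UZ}\Sigma_{ZZ}^{-1} \\ 0 & I \end{pmatrix}
\begin{pmatrix} \Sigma_{UU} - \Sigma_{UZ}\Sigma_{ZZ}^{-1}\Sigma_{ZU} & 0 \\ 0 & \Sigma_{ZZ} \end{pmatrix}
\begin{pmatrix} I & 0 \\ \Sigma_{ZZ}^{-1}\Sigma_{ZU} & I \end{pmatrix}.
\]
This is a congruence $\Sigma = L M L^\top$ with $L$ invertible (unit block-triangular), and it requires only that $\Sigma_{ZZ}$ be invertible. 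Since congruence preserves positive definiteness, $\Sigma \succ 0$ if and only if the block-diagonal middle factor is positive definite, that is, if and only if both diagonal blocks are.

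For the ``only if'' direction, I will note that $\Sigma_{ZZ}$ is a principal submatrix of $\Sigma$, so it is positive definite and hence invertible. The factorization therefore applies, and it shows the Schur complement is positive definite as well. The ``if'' direction follows directly from the same factorization.

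For part (ii), I will reduce to part (i) by bounding the Schur complement from below. Since $\Sigma_{ZZ} \succeq \lambda_{\min}(\Sigma_{ZZ}) I$ with $\lambda_{\min}(\Sigma_{ZZ})>0$, we have $\Sigma_{ZZ}^{-1} \preceq \lambda_{\min}(\Sigma_{ZZ})^{-1} I$. Hence for any $x \ne 0$,
\[
x^\top \Sigma_{UZ}\Sigma_{ZZ}^{-1}\Sigma_{ZU}x \le \lambda_{\min}(\Sigma_{ZZ})^{-1}\|\Sigma_{ZU}x\|^2 \le \lambda_{\min}(\Sigma_{ZZ})^{-1}\sigma_{\max}(\Sigma_{UZ})^2\|x\|^2,
\]
using that $\Sigma_{ZU}=\Sigma_{UZ}^\top$ has the same largest singular value as $\Sigma_{UZ}$. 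Combining with $x^\top\Sigma_{UU}x \ge \lambda_{\min}(\Sigma_{UU})\|x\|^2$ gives
\[
x^\top(\Sigma_{UU}-\Sigma_{UZ}\Sigma_{ZZ}^{-1}\Sigma_{ZU})x \ge \bigl\{\lambda_{\min}(\Sigma_{UU}) - \sigma_{\max}(\Sigma_{UZ})^2/\lambda_{\min}(\Sigma_{ZZ})\bigr\}\|x\|^2.
\]
The assumed inequality, after dividing by $\lambda_{\min}(\Sigma_{ZZ})>0$, makes the bracket positive, so the Schur complement is positive definite. Part (i) then gives $\Sigma \succ 0$.

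The argument has no real obstacle. The only points needing care are invoking invertibility of $\Sigma_{ZZ}$ before using the factorization in the ``only if'' direction of (i), and noting that the hypothesis of (ii) forces $\lambda_{\min}(\Sigma_{UU})>0$.
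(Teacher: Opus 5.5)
Your proposal is correct and follows the paper's proof: part (ii) is the same quadratic-form bound on the Schur complement, followed by an appeal to part (i). The only difference is that for (i) the paper simply cites Zhang (2006, Theorem 1.12(a)), whereas you prove it with the block-congruence factorization $\Sigma = LML^\top$, which is the standard argument behind that cited result.
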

\begin{proof}[Proof of Lemma \ref{lem:pd-schur}]
(i) follows from \citet[Theorem 1.12(a)]{zhang2006schur}. 
Below we prove (ii). 
Suppose that  $\lambda_{\min}(\Sigma_{ZZ})> 0$ and   $\lambda_{\min}(\Sigma_{UU}) \cdot \lambda_{\min}(\Sigma_{ZZ}) > \{\sigma_{\max} (\Sigma_{UZ})\}^2$. Then $\Sigma_{ZZ}$ must be positive definite. 
Consider any $x \in \mathbb{R}^{n}$ and $x \ne 0$. We have 
\begin{align*}
    x^\top (\Sigma_{UU} - \Sigma_{UZ} \Sigma_{ZZ}^{-1} \Sigma_{ZU}) x & = x^\top \Sigma_{UU} x - x^\top \Sigma_{UZ} \Sigma_{ZZ}^{-1} \Sigma_{ZU} x \\
    & \ge \lambda_{\min}(\Sigma_{UU}) \cdot \|x\|_2^2 - 1/\lambda_{\min}(\Sigma_{ZZ}) \cdot \|\Sigma_{ZU} x \|_2^2\\
    & \ge \lambda_{\min}(\Sigma_{UU}) \cdot \|x\|_2^2 - \{\sigma_{\max} (\Sigma_{UZ})\}^2/\lambda_{\min}(\Sigma_{ZZ}) \cdot \|x\|_2^2\\
    & = \frac{\lambda_{\min}(\Sigma_{UU}) \cdot \lambda_{\min}(\Sigma_{ZZ})- \{\sigma_{\max} (\Sigma_{UZ})\}^2}{\lambda_{\min}(\Sigma_{ZZ})}  \cdot \|x\|_2^2,
\end{align*}
which must be positive. 
Thus, $\Sigma_{UU} - \Sigma_{UZ} \Sigma_{ZZ}^{-1} \Sigma_{ZU}$ must be positive definite. (ii) then follows immediately from (i). 
\end{proof}
 
The covariance or precision matrices for $(U,Z)$ in Sections 
\ref{sec:car-papadogeorgu}, \ref{sec:leroux_car}, 
\ref{sec:bivariate_spatial}, and \ref{sec:matern} 
all share a structure similar to that in Lemma~\ref{lem:pd-schur}. 
We can then use Lemma~\ref{lem:pd-schur} to derive sufficient conditions for their positive definiteness.

Moreover, the positive definiteness of the general (not necessarily parsimonious) 
Mat\'ern covariance structure has been studied in 
\citet{gneiting2010matern} and \citet{onnen2021estimation}. 
Since the exponential covariance function is a special case of the 
Mat\'ern covariance function with smoothness parameter $\nu = 1/2$, 
these results also imply positive definiteness for the covariance structure in 
Section~\ref{sec:bivariate_spatial} when the exponential covariance 
function is used.

\bibliographystyle{plainnat}
\bibliography{paper-ref}

\end{document}